\newcolumntype{P}[1]{>{\centering\arraybackslash}p{#1}}
\newcolumntype{M}[1]{>{\centering\arraybackslash}m{#1}}
\newtheorem{theorem}{Theorem}
\newcommand{\tr}[1]{\mathrm{tr}({#1})}
\theoremstyle{plain}
\newtheorem{thm}{Theorem}
\newtheorem{lem}[thm]{Lemma}
\newtheorem{cor}[thm]{Corollary}
\theoremstyle{definition}
\newtheorem{rmk}[thm]{Remark}
\newtheorem{ex}{Example}
\renewcommand{\v}[1]{\boldsymbol{#1}}
\newcommand{\ketbra}[2]{\ket{#1}\!\!\bra{#2}}
\definecolor{warning_bgcol}{RGB}{252,248,229}
\definecolor{warning_textcol}{RGB}{111,89,54}
\definecolor{warning_linecol}{RGB}{252,248,229}
\definecolor{danger_bgcol}{RGB}{239,223,222}
\definecolor{danger_textcol}{RGB}{128,60,57}
\definecolor{danger_linecol}{RGB}{239,223,222}
\definecolor{success_bgcol}{RGB}{224,237,216}
\definecolor{success_textcol}{RGB}{68,104,60}
\definecolor{success_linecol}{RGB}{224,237,216}
\definecolor{info_bgcol}{RGB}{220,237,246}
\definecolor{info_textcol}{RGB}{58,100,126}
\definecolor{info_linecol}{RGB}{220,237,246}
\begin{document}

\title{The cost of solving linear differential equations on a quantum computer: fast-forwarding to explicit resource counts}

	\author{David Jennings}
	\affiliation{PsiQuantum, 700 Hansen Way, Palo Alto, CA 94304, USA}.
	\author{Matteo Lostaglio}
		\email{Lead author, mlostaglio@psiquantum.com}
	\affiliation{PsiQuantum, 700 Hansen Way, Palo Alto, CA 94304, USA}

\author{Robert B. Lowrie}
\affiliation{Computational Physics and Methods Group (CCS-2),
Computer, Computational and Statistical Sciences Division,
Los Alamos National Laboratory, Los Alamos, NM 87545, USA}
	
	\author{Sam Pallister}
	\affiliation{PsiQuantum, 700 Hansen Way, Palo Alto, CA 94304, USA}.
	\author{Andrew T. Sornborger}
	\affiliation{Information Sciences (CCS-3), Los Alamos National Laboratory, Los Alamos, New Mexico 87545, USA}
 

\begin{abstract}
How well can quantum computers simulate classical dynamical systems?
There is increasing effort in developing quantum algorithms to efficiently simulate dynamics beyond Hamiltonian simulation, but so far exact resource estimates are not known. In this work, we provide two significant contributions. First, 
we give the first non-asymptotic computation of the cost of encoding the solution to general linear ordinary differential equations into  quantum states -- either the solution at a final time, or an encoding of the whole history within a time interval. 
Second, we show that the stability properties of a large class of classical dynamics allow their fast-forwarding, making their quantum simulation much more time-efficient. From this point of view, quantum Hamiltonian dynamics is a boundary case that does not allow this form of stability-induced fast-forwarding. In particular, we find that the history state can always be output with complexity $O(T^{1/2})$ for any stable linear system. We present a range of asymptotic improvements over state-of-the-art in various regimes. We illustrate our results with a family of dynamics including linearized collisional plasma problems, coupled, damped, forced harmonic oscillators and dissipative nonlinear problems. In this case the scaling is quadratically improved, and leads to significant reductions in the query counts after inclusion of all relevant constant prefactors. 
\end{abstract}

\maketitle

\section{Introduction}
\label{sec:introduction}
\subsection{Setting the stage}
One of the original motivations for quantum computing was that simulating quantum dynamics efficiently requires quantum systems. However, a substantial fraction of today’s high-performance computing resources are used to simulate classical physics, such as fluid dynamics and high-temperature plasmas, which may be represented as dynamical systems of coupled degrees of freedom.\footnote{For example, the DOE INCITE 2023 projects allocated over $55 \times 10^6$ core-hours over 56 projects, of which over 40\% focused on plasma and fluid-dynamics problems, see https://www.doeleadershipcomputing.org/awardees/.
} It is then crucial to investigate to what extent these simulations may be sped up by quantum computing.

Assessing the potential of quantum computing in this computational space requires moving from the current asymptotic counts analysis to detailed resource estimates of quantum algorithms for differential equations. It also requires looking for asymptotic improvements to make up for the overheads of transforming the ODE into a problem suitable for a quantum computer. This is what we set out to do in this work, see~Fig.~\ref{fig:detailedODEcounts} for an illustration. We leverage recent non-asymptotic resource estimates for quantum linear solvers~\cite{costa2022optimal, jennings2023efficient} to compute for the first time general, rigorous upper bounds on the non-asymptotic cost of a quantum algorithm for solving dynamical problems beyond Hamiltonian simulation. By `solving' here we mean outputting a quantum encoding of the solution to the dynamical equations. The problem of loading data about initial and boundary conditions, and the problem of extracting information from the output, need to be carefully assessed on a case-by-case basis. 

What is more, we introduce analysis showing that the stability properties of a large class of classical dynamical systems give rise to costs \emph{sublinear} (at best, square-root) in the target simulation time. Our results are based on the seminal work in Ref.~\cite{berry2017quantum}. Its scope is extended and its scaling improved by a combination of a new condition number analysis and a range of algorithmic modifications.  

For example, for damped harmonic oscillators with constant forcing  we find in some regimes where the original analysis in Ref.~\cite{berry2017quantum} does not apply, the algorithm in Ref.~\cite{krovi2022improved} gives a scaling $\sim T^{5/2}$ when the simulation time $T$ is sufficiently prior to equilibration, our improved analysis of Ref.~\cite{berry2017quantum} gives a scaling $\sim T^{3/2}$, and a modification of the algorithm lowers this further to $\sim T^{1/2}$. In terms of query counts, we illustrate our improved analysis in the context of a natural class of classical dynamics, which includes the collisional linearized Vlasov-Poisson equation describing plasma equilibration as a special case. The asymptotic scaling with $T$ in this case is quadratically improved compared to state-of-the-art, and we find orders of magnitude improvement in the query count over best available analysis due to the sublinear scaling, see Fig.~\ref{fig:detailedODEcounts}.\footnote{Note that these are counts needed for a rigorous analytical guarantee o convergence, but  numerical evidence suggests much better performance is possible from the underlying linear solver~\cite{costa2023discrete}.}  The cost is cut by  up to $7$ orders of magnitude for $T \leq 10^{15}$, depending on the target simulation time and the equilibration rate. Hamiltonian dynamics is a boundary case in our stability analysis, hence preventing this form of fast-forwarding, in agreement with the no-fast-forwarding theorem. 

  \begin{figure}[t]
\includegraphics[width=\columnwidth]{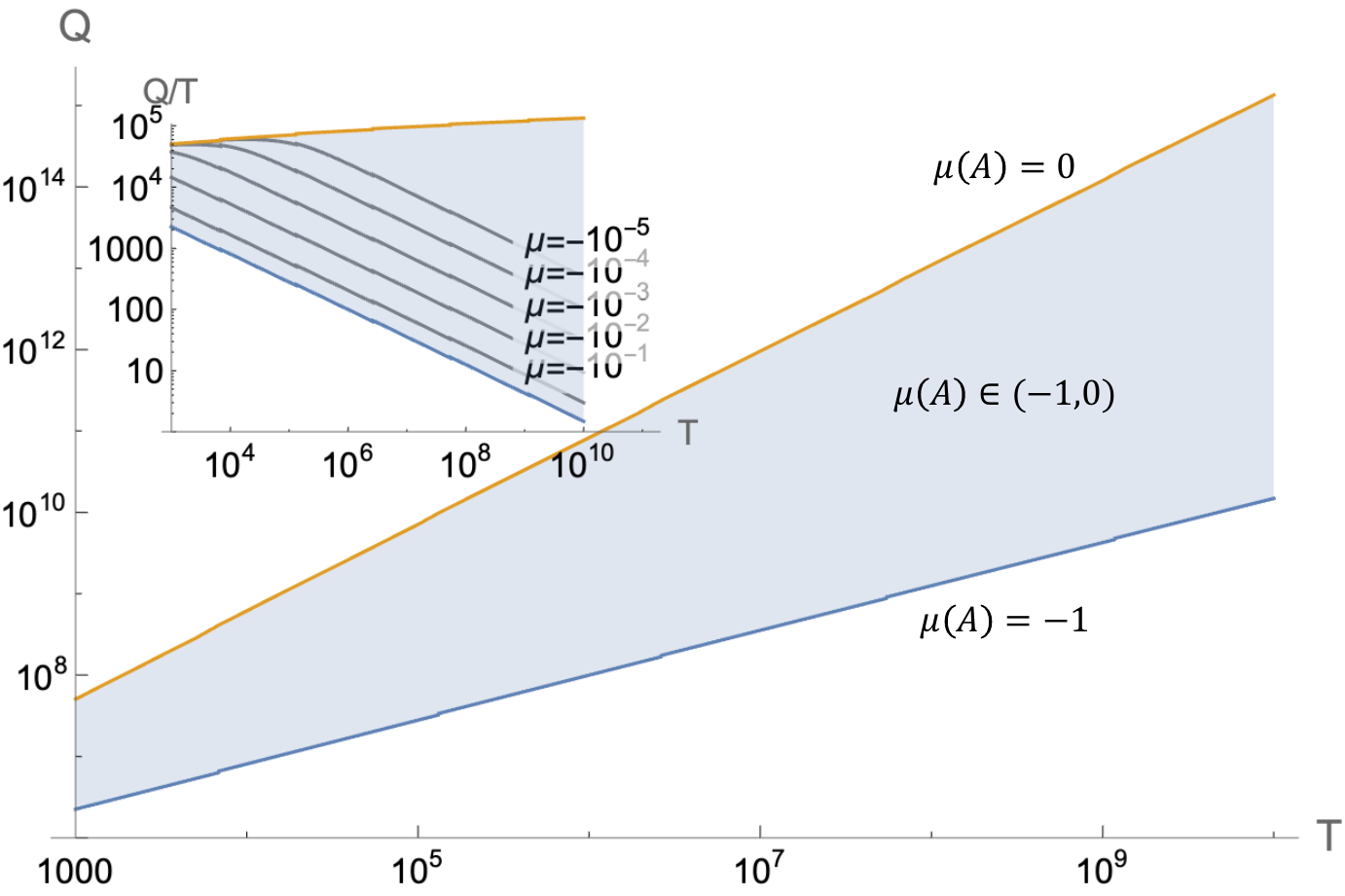}
 \caption{\textbf{Query counts for negative log-norm systems, plotted in log-log scale.} 
 We consider homogeneous ODE systems $\dot{\v{x}} = A \v{x}$ for which $A$ has a log-norm $\mu = \lambda_{\mathrm{max}}(A/2 + A^\dag/2)\leq 0$,  $\|A\|=h=1$. We fix the error tolerance in $1-$norm to $\epsilon=10^{-10}$ and set the block-encoding scaling in Eq.~\eqref{eq:UAblockencoding} to $\omega=1$. (Main figure): For all such systems, the number of times $\mathcal{Q}$ we need to query the block-encoding of $A$ to output the history state in Eq.~\eqref{eq:historystate} lies in the shaded region shown, along a curve that depends on $\mu \in [-1,0]$. The scaling in $T$ is sublinear for all $\mu<0$, in contrast to prior bounds, whose scaling is linear. Prior analysis would associate the upper curve bound to all such systems (in prior analysis~\cite{krovi2022improved}, $\mu \leq 0$ is only used to infer that $C_{\mathrm{max}}=1$ in Eq.~\eqref{eq:uniformbound}, which corresponds to the upper curve in the plot). Hence, the shaded area also corresponds to the maximum query counts saving from the present analysis. Note that at $T=10^{10}$ the cost can be cut by up to $90480\times$ by the sublinear behaviour. (Inset): Same figure for the cost per unit time $\mathcal{Q}/T$, and overlaid curves for different values $\mu$, showcasing the sublinear scaling. Note that while one can obtain $\mathcal{Q}/T = O(1)$ for the top curve ($\mu=0$), the slight increase is due to us using a QLSA with complexity $\kappa_L \log \kappa_L$~\cite{jennings2023efficient}. This is done so as to provide the smallest query counts once we include all constant prefactors.}
 \label{fig:detailedODEcounts}
	  \end{figure}

We study systems of linear ordinary differential equations (ODEs). Their main interest is as a component of a computational pipeline. Linear ODE systems are obtained after spatial discretization of linear partial differential equations (PDEs) -- e.g., the Schr\"odinger, heat, Fokker-Planck or linearized Vlasov  equations, among many others. Linear ODEs are also obtained after applying linear representation techniques such as Carleman linearization~\cite{liu2021efficient}, level-sets~\cite{jin2022quantum}, Koopman von Neumann/Liouville embeddings~\cite{koopman1931hamiltonian,kowalski1997nonlinear,joseph2020koopman,jin2023time, lin2022koopman,gonzalez2023mixed}, and perturbative homotopy methods~\cite{xue2021quantum} to nonlinear problems.

There are different approaches that could be employed to encode the solution of an ODE in a quantum computer. For example, one such method is to find a mapping to a Hamiltonian simulation problem~\cite{costa2019quantum, babbush2023exponential} or to a linear combination of Hamiltonian simulation problems~\cite{jin2022quantum2, an2023linear,apers2022quadratic, an2023quantum, an2023quantum2}.  Another approach is to encode the problem into a matrix equation of the form $L\v{y} = \v{c}$, which can be solved with a quantum linear solver algorithm~\cite{harrow2009quantum,berry2014high, berry2017quantum, costa2019quantum,childs2020quantum, linden2022quantum,berry2022quantum, krovi2022improved, bagherimehrab2023fast}. More recent developments include solving the system of equations via a time-marching algorithm~\cite{fang2023time} and Trotter methods~\cite{an2023quantum}.  Here we shall follow the most widely-studied approach of embedding the dynamical problem into a linear system of equations. This has been used to analyze the heat equation~\cite{linden2022quantum}, the wave equation~\cite{costa2019quantum}, the linearized Vlasov equation~\cite{ameri2023quantum} and nonlinear systems such as the Burgers' equation~\cite{liu2021efficient}, the reaction-diffusion equation~\cite{an2022efficient} and the lattice Boltzmann equation~\cite{li2023potential}, among others. Only for the wave equation numerical studies of the non-asymptotic complexity are available~\cite{suau2021practical}. Our work is significant in that it provides general, ready-to-use, analytical estimates that can be leveraged to evaluate the cost for all these use-cases, black-boxing the common components, with the extra advantage that our analysis obtains better scaling in the simulation time $T$ compared to state-of-the-art in certain regimes.

\subsection{Summary of the main results}

The main results that can be found in this manuscript are as follows:

\begin{enumerate}
    \item Ref.~\cite{krovi2022improved}
 presented an ODE algorithm improving over Ref.~\cite{berry2017quantum} by (a) extending its scope -- it drops the assumption that the generator of the dynamics is diagonalizable -- and (b) improving its scaling -- in some cases exponential speedups in the matrix dimension are achieved. We match and slightly outperform the same improvements within the original algorithm~\cite{berry2017quantum}, via a strengthened analysis of its performance. As an extra advantage, note that the original~\cite{berry2017quantum} algorithm is also simpler, as it avoids the nested matrix inversions of Ref.~\cite{krovi2022improved}. 
 \item State-of-the-art algorithms for ODEs scale linearly with time under some norm promises, barring  exceptional cases. Our strengthened condition number analysis introduces stability information in the estimation of the algorithm complexity, showing that, for all stable dynamics and under the same norm promises, the scaling is sublinear with time.  We present examples both in the context of linear and nonlinear systems of ODEs. We expect this stability analysis can be applied more widely to quantum algorithms for differential equations.
 \item We introduce an alternative version of Ref.~\cite{berry2017quantum}, whose complexity does not depend on the minimal solution norm if we output a quantum state encoding the full trajectory between time $0$ and some time $T$. This can lead to a polynomial speedup in $T$ for dissipative systems. We give an example of this phenomenon for damped harmonic oscillators sufficiently prior to equilibration. 
 \item We compute non-asymptotic query counts for the algorithm in Ref.~\cite{berry2017quantum} and its variation, under our improved condition number analysis, for general ODE dynamics. It is the first time that an algorithm for general ODEs is costed at this level of detail. We find that the previously mentioned asymptotic speedups lead to orders of magnitude improvements over state-of-the-art even after all constant prefactors are included.  Also, one can readily update our ODE cost analysis by plugging in further improvements to quantum linear solver algorithms worst-case estimates, as these are introduced.
\end{enumerate}

Let us now present the problem in more detail.

\subsection{The problem}

We are given a system of ODEs
\begin{equation}
\label{eq:ODE}
    \dot{\v{x}}(t) = A\v{x}(t) + \v{b}, \quad \mathrm{ with } \; \v{x}(0) = \v{x}^0,
\end{equation}
where $\v{x}(t) \in \mathbb{C}^N$ is the $N$--dimensional solution vector, \mbox{$A \in \mathbb{C}^{N\times N}$} is an $N \times N$ time-independent matrix which generates the dynamics, $\v{b} \in \mathbb{C}^N$ is an $N$-dimensional vector, also taken to be time-independent, which can be interpreted as a forcing term, and $t \in [0,T]$. We leave the extension to the time-dependent case to future work. When \mbox{$A = - iH$}, with $H$ Hermitian, and $\v{b} =\v{0}$, the problem in Eq.~\eqref{eq:ODE} reduces to quantum Hamiltonian dynamics. 

The time $t$ in the interval $[0,T]$ is discretized in equal steps of size \mbox{$h\leq1/\|A\|$}. More precisely, we consider the sequence of points $t=mh$ with $m\in \{ 0, \dots, M\}$, and assume that $T=Mh$ for some integer $M$.

In what follows we consider quantum states that encode dynamical vectors in a normalized form. For example the initial conditions $\v{x}^0 \in \mathbb{R}^N$ is associated with a normalized quantum state as follows
\begin{equation}
\ket{x^0} = \frac{1}{\|\v{x}^0\|} \sum_{n=0}^{N-1} x_n^0 \ket{n},
\end{equation}
and so the classical data is encoded in the computational basis up to an overall normalization constant. We adopt a similar notational convention for terms like $\ket{x(t)}$ (associated to the solution $\v{x}(t)$ to Eq.~\eqref{eq:ODE}), $\ket{b}$ (associated to $\v{b}$), and so on.

We now assume access to:
\begin{itemize}
  \item[O1] A \emph{block-encoding} of $A$, meaning a unitary $U_A$ with
\begin{equation}
\label{eq:UAblockencoding}
    U_{A} \ket{\psi} \ket{0^a} = \frac{A}{\omega} \ket{\psi} \ket{0^a} + \ket{\perp},
\end{equation}
where $\ket{0^a} := \ket{0}^{\otimes a}$, $\ket{\psi}$ is an arbitrary vector, $\omega$ is a rescaling factor required to encode $A$ into a unitary (we take $\omega \geq 1$), and \mbox{$(I \otimes \bra{0^a}) \ket{\perp} =0$}. The values $(\omega,a)$ determine the quality of the block-encoding and depend on the specific form of~$A$. Constructions for sparse and certain dense matrices achieving $\omega = \tilde{O}(\|A\|)$ can be found in Refs~\cite{lin2022lecture, camps2022explicit, sunderhauf2023block,nguyen2022block, li2023efficient}.
\item[O2] A unitary $U_0$ encoding the initial condition $\v{x}^0$ as a quantum state when applied to a reference state $\ket{0}$:
\begin{equation}
\label{eq:U0}
    U_0 \ket{0} = \ket{x^0}.
\end{equation} 

\item[O3] (If $\v{b} \neq \v{0}$): A unitary $U_b$ encoding the forcing $\v{b}$ up to normalization:
\begin{equation}
\label{eq:Ub}
    U_b \ket{0} = \ket{b}.
\end{equation}
\end{itemize}

Given the above, we distinguish two possible tasks. We want the algorithm to output a quantum state $\epsilon$-close in $1$-norm to either 

\begin{itemize}
    \item The \emph{solution state} \begin{equation}
\label{eq:finalstate}
   \ket{x_T} = \ket{x(T)},
\end{equation}
where $\v{x}(T)$ is the solution to Eq.~\eqref{eq:ODE} at time $T$. This is useful if, for instance, we want to study steady-state properties.
\item The \emph{history state}, which encodes a superposition of the solution along the entire trajectory at the discrete times $\{0, \dots, T\}$: 
\begin{equation}
\label{eq:historystate}
      \ket{x_H}:= \frac{1}{\|\v{x}_H\|} \sum_{m=0}^{M} \|\v{x}(mh)\| \ket{x(mh)} \otimes \ket{m},
\end{equation}
where $\ket{x(mh)}$ is a quantum state proportional to the solution of Eq.~\eqref{eq:ODE} at time $t=mh$, $\{\ket{m}\}_{m=0}^{M}$ are clock states that provide classical labels for the discrete times and $\v{x}_H = (\v{x}(0), \dots, \v{x}(T))$. This has not been studied in the same detail as the solution state problem, but history states are
potentially well-suited to time information analysis, e.g., time series analysis, spectral analysis, limit cycle analysis, etc. 

\end{itemize}

The cost of obtaining $\ket{x_T}$ or $\ket{x_H}$ is evaluated via the \emph{query count} parameter $\mathcal{Q}$, which gives the number of times we need to apply the above core unitaries. In particular we shall find that we require $\mathcal{Q}$ calls to the block-encoding of $U_A$ and $4\mathcal{Q}$ calls the state preparation unitaries. One of the main task of this work is to obtain stronger bounds on $\mathcal{Q}$ as a function of the properties of the dynamics.

\subsection{Incorporating classical Lyapunov theory}
\label{sec:classifying}

The problem under consideration is a wide generalization of quantum Hamiltonian simulation. Our aim is then to distill a few core parameters that allow us to compute the resource estimates for such quantum algorithms in this broad setting.
We find that stability analysis is particularly well-suited for this task (see also Appendix~\ref{sec:stabilitylyapunov}).

Every ODE system in Eq.~\eqref{eq:ODE} can be split according to a fundamental dichotomy, depending on the sign of the \emph{spectral abscissa}, 
which is the largest real part of the eigenvalues $\lambda_i$ of~$A$: \begin{equation}
    \alpha(A):= \max_i \mathrm{Re} \, \lambda_i(A).
\end{equation} 
Dynamical systems satisfying $\alpha(A)<0$ are called \emph{stable}, because small perturbations of $A$ cannot cause the norm of the solution of $\dot{\v{x}} = A \v{x}$ to shoot to infinity. Dynamical systems with $\alpha(A)>0$ are instead called \emph{unstable}, since they do exactly that for certain initial conditions, and are sometimes ruled out on physical grounds.

Stability theory tells us that for stable linear dynamics one can always find a positive matrix $P>0$ such that the \emph{Lyapunov inequality} is satisfied
 \begin{equation}
     P A + A^\dag P < 0. 
 \end{equation} 
Then the dynamical parameters of interest are the condition number $\kappa_P$ of $P$ and the $P$-log-norm of $A$:
 
 \begin{align}
 \label{eq:stabilityparameters}
	   \kappa_P& = \| P \| \|P^{-1}\| \nonumber \\
       \mu_P(A) &= \max_{\|\v{x}\| \ne 0} \mathrm{Re} \frac{  \langle A \v{x}, \v{x}\rangle_P }{\langle \v{x}, \v{x}\rangle_P},
	  \end{align}
   where $\langle \v{x}, \v{y} \rangle_P := \v{x}^\dag P \v{y}$ defines a valid inner product as $P>0$. The quantities $(\kappa_P, \mu_P(A))$ will be core parameters in determining the cost of the quantum algorithm. It can be proven (Theorem 3.35 in~\cite{plischke2005transient}) that the following inequality holds for all times $t >0$
	  \begin{equation}
	    \| e^{At} \| \leq \sqrt{\kappa_P} e^{\mu_P(A) t}.
	  \end{equation}
  It is worth emphasizing that this is not an additional assumption, but a direct consequence of the assumption of stability. Connecting with the vast literature on classical stability allows us to include physically relevant features of the dynamics in the computation of the resources required to run the quantum algorithm.
   
   For example, in the linearized collisional Vlasov-Poisson equation in plasma physics one finds that the Lyapunov inequality is satisfied with $P=I$, so $\kappa_P =1$. Also, $\mu_{P=I}(A)$ coincides with the so-called Euclidean log-norm of $A$,
   \begin{equation}
\mu_{P=I}(A) = \mu(A):= \lambda_{\mathrm{max}}(A+A^\dag)/2<0,   
\end{equation} 
where $\lambda_{\mathrm{max}}(X)$ is the largest eigenvalue of~$X$~\cite{soderlind2006logarithmic}. In fact, $\mu(A) <0$, with the numerical value depending on the strength of the collision term~\cite{ameri2023quantum}. So we have $(\kappa_P, \mu_P(A)) = (1, \mu(A)<0)$.

	  The case where \mbox{$\alpha(A) \geq 0$} includes systems for which the solution may grow arbitrarily large as we increase $T$, but it also contains cases where this does not occur. Notably, quantum Hamiltonian dynamics is a boundary case, for which $\alpha(A)=\alpha(-iH) = 0$. As in Ref.~\cite{krovi2022improved}, in the case $\alpha (A) \ge 0$ we shall bound the dynamics  with $C_{\mathrm{max}}(T)$ satisfying
 \begin{equation}
 \label{eq:uniformbound}
 \| e^{A t}\| \leq C_{\mathrm{max}}(T), \quad
 \end{equation}
 for all $t \in [0,T]$. Note that $C_{\mathrm{max}}(T)\equiv 1$ for quantum Hamiltonian dynamics.

In terms of the mathematical conditions, the $\alpha(A) \ge 0$ case can be understood as the formal limit $\mu_P(A) \rightarrow 0^-$ and $\kappa_P \rightarrow C_{\mathrm{max}}^2(T)$. This allows for a unified treatment, where we associate to every equation \eqref{eq:ODE} the two parameters $(\kappa_P, \mu_P(A))$, or a bound thereof, and obtain the classification sketched in Fig.~\ref{fig:classification}.

 \begin{figure}
     \centering
     \includegraphics[width=\linewidth]{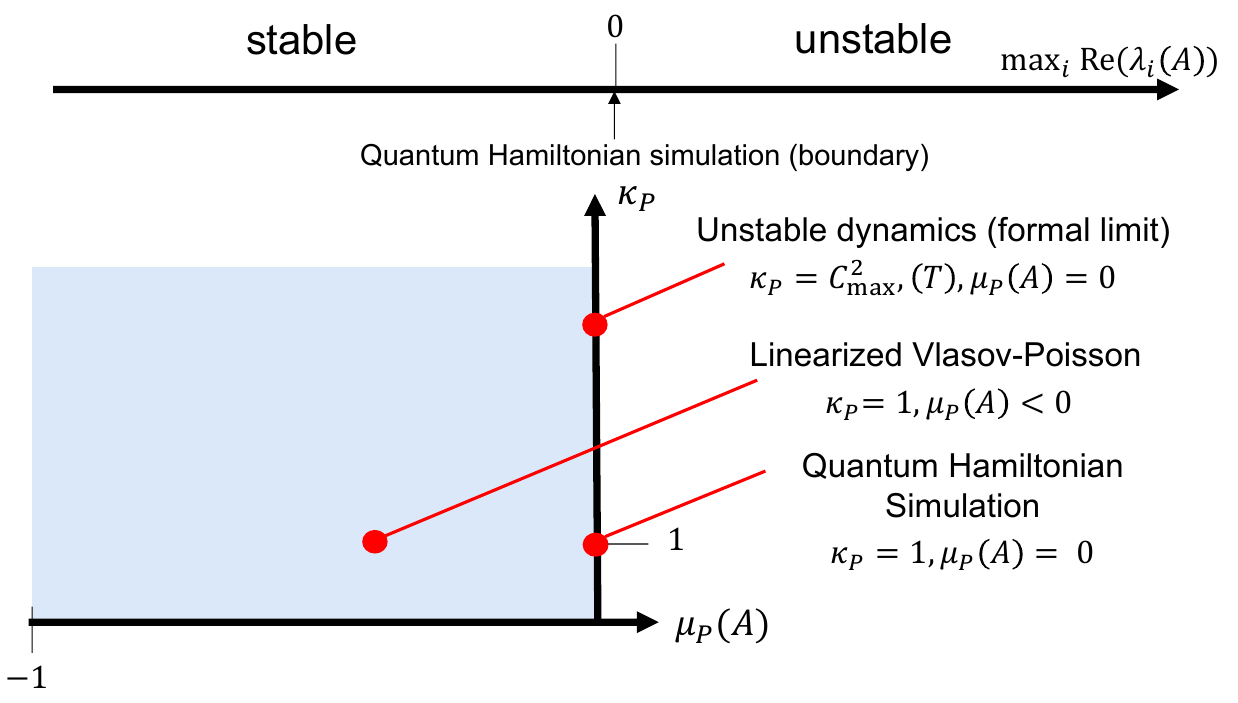}
     \caption{\textbf{Classifying dynamics.} (Top): A fundamental dichotomy is whether the generator of the dynamics $A$ is stable, when the largest real part of its eigenvalues is negative, or unstable, when it is positive. Quantum Hamiltonian dynamics is a boundary case between the two. (Bottom): To each dynamics we associate a point in the $2$-dimensional parameter space $(\kappa_P, \mu_P(A))$, given in Eq.~\eqref{eq:stabilityparameters}. These are fundamental parameters that will determine the cost of the algorithm.}
     \label{fig:classification}
 \end{figure}

 \subsection{Norm bounds and summary of relevant parameters}

 The classification picture we have just given needs to be augmented by further information due the fact that, differently from quantum dynamics, the norm of the solution can change from the initial to the final time. Depending on how we run the algorithm and the output we target, we need a certain set of norm bounds. The relevant quantities are bounds on the smallest solution norm, $ x_{\min} $, \emph{or} the largest solution norm, $ x_{\max} $ ($ x_{\min}  \leq \|\v{x}(t)\| \leq  x_{\max} $ for all $t \in [0,T]$), the root mean square of the solution norm, $ x_{\mathrm{rms}} $, and the ratio quantities $\bar{g}_{\times, +}(T)$. For their definition, and a summary of all potential input and user parameters, see Table~\ref{table:parameters}.

\begin{table*}[t]
\resizebox{\textwidth}{!}{
    \begin{tabular}{ | c | l | p{4.4cm} |}
    \hline
    \textbf{Parameter} & \textbf{Description}  \\ \hline $A$, $\v{b} $, $\v{x}^0$ & $A$ is the generator of the dynamics, $\v{b}$ a forcing term, $\v{x}^0$ the initial condition. \\ \hline
$T$ & Target simulation time, normalized in units of the time-step $h \leq 1/\|A\|$. 
    \\ \hline
    $\epsilon$ & Precision: output is a quantum state $\epsilon$-close in $1$-norm to the solution \eqref{eq:finalstate} or history~\eqref{eq:historystate} state.
    \\ \hline
    $\omega$ & Rescaling factor of the block-encoding of the generator $A$, Eq.~\eqref{eq:UAblockencoding}. 
    \\ \hline
    $\kappa_P$, $\mu_P$ & Stability / Lyapunov parameters: the generator $A$ satisfies $\|e^{At}\| \leq \sqrt{\kappa_P} e^{\mu_P(A)t}$ for $t \in [0,T]$. \\ \hline 
     $C_{\mathrm{max}}(T)$ & Upper bound on $\|e^{At}\|$, $C_{\mathrm{max}}(T) \geq \max_{t \in [0,T]} \| e^{At}\|$. \\ \hline 
 $ x_{\max} $, $ x_{\min} $ &  Upper and lower bound constants obeying:  $ x_{\min}  \le \|\v{x}(t)\| \le  x_{\max} $ for all $t \in [0,T]$. 
 \\ \hline 
 $ x_{\mathrm{rms}} $ &  Lower bound to the root mean square of the solution in $[0,T]$: $ x_{\mathrm{rms}}  \leq \sqrt{
\sum_{m=0}^M \| \v{x}(mh)\|^2/M}$. 
    \\ \hline  
    $\bar{g}_\times (T) $, $\bar{g}_+ (T) $ & Average norm ratios: \\ & $
        \bar{g}_\times (T)^2 = \frac{1}{M+1}\sum_{m=0}^M \frac{\| \v{x}(mh) \|^2}{  \|\v{x}(T)\|^2}
   $, \quad $
        \bar{g}_+(T)^2 = \frac{1}{M+1}\sum_{m=0}^M \left(\frac{1+\epsilon'}{1-\epsilon'}\right)^2 \frac{(\| \v{x}(mh) \|+\epsilon')^2}{  (\|\v{x}(T)\|-\epsilon')^2}
   $, \\
   &  with $\epsilon' = \epsilon x_{\mathrm{rms}} /8$ or $\epsilon' = \epsilon \|\v{x}(T)\|/8$ for history state or solution state respectively.\\ \hline
    \end{tabular}
   }
    \caption{\textbf{Summary of the algorithmic parameters and their description.} Note that Hamiltonian simulation is the special case $A= - iH$, $\v{b} = 0$, $\kappa_P =1$, $\mu_P = 0$, $C_{\mathrm{max}} =1$, $ x_{\min}  =  x_{\max}  =1$, $\bar{g}_\times (T) =1$, $\bar{g}_+ (T) =1$. Here the subscript $\times$ refers to the choice of multiplicative scheme, and the subscript $+$ refers to the choice of the additive scheme, as described in the main text.}
     \label{table:parameters}
\end{table*}

 \section{Results}

\subsection{Sketch of the algorithm}
\label{sec:sketch}

The solution of Eq.~\eqref{eq:ODE} can be formally written as 
\begin{equation}
\label{eq:ODEsolution}
    \v{x}(t) = e^{A t} \v{x}^0 + \int_{0}^t dt' e^{A t'} \v{b}.
\end{equation}
By discretizing time, we can recursively approximate the solution at time $t=mh$ advanced from that at time $t=(m-1)h$ via a Taylor series truncated at degree $k$:
\begin{equation}
\label{eq:recursive}
    \v{x}(mh) \approx \v{x}^m := T_k(Ah) \v{x}^{m-1} + S_k(Ah) \v{b}, 
\end{equation}
where 
\begin{align}
\label{eq:truncation}
    T_k(x) = \sum_{j=0}^k x^j/j!, \quad S_k(x) = \sum_{j=1}^k x^{j-1}/j!.
\end{align}
Within the algorithm we also extend the dynamics beyond time $T$ with $p$ steps of trivial `idling' dynamics(the trivial evolution $A=0$, $\v{b}=\v{0}$). This is done so as to increase the norm
  of the solution at the final time in the output state, as discussed in Ref.~\cite{berry2017quantum}. This will be useful if we are trying to output the solution state~$\ket{x(T)}$.

When we perform the time-discretization within the algorithm we have two choices of internal error schemes: a multiplicative (relative) error scheme and an additive error scheme. It turns out that this choice of internal error scheme can affect the overall complexity of the algorithm depending on the particular kind of dynamics. Firstly, we choose a multiplicative error condition as in \cite{berry2017quantum, krovi2022improved}, which corresponds to finding a truncation scale $k$ such that 
\begin{equation}
\label{eq:multiplicativeerror}
    \| \v{x}^m - \v{x}(mh) \| \leq \epsilon_{\mbox{\tiny TD}} \| \v{x}(mh)\|, \quad m=0, \dots, M.
\end{equation}
We shall label any algorithm parameter that depends on this multiplicative scheme, rather than the below additive scheme, by a subscript or superscript $\times$ for the multiplicative error scheme. For example, $k_\times$ denotes a Taylor truncation order $k$ that leads to Eq.~(\ref{eq:multiplicativeerror}). For this case, using Theorem 3 in Ref.~\cite{krovi2022improved}, we find an exact value for a truncation degree $k_\times$ that suffices, see Appendix~\ref{sec:timediscretizationerror}. In particular, $k_\times = O(\log(T/( x_{\min}  \epsilon_{\mbox{\tiny TD}}))$. 
 
We also consider an alternative additive error scheme such that
 \begin{equation}
\label{eq:additiveerror}
    \| \v{x}^m - \v{x}(mh) \| \leq \epsilon_{\mbox{\tiny TD}}, \quad m=0, \dots, M.
\end{equation}
We shall label any algorithm parameter that depends on the choice of the additive scheme by a subscript or superscript $+$. For example, the Taylor truncation that is used in Eq.~\eqref{eq:truncation} to realize Eq.~\eqref{eq:additiveerror} is denoted by $k_+$, and we show that a sufficient truncation is \mbox{$k_+ = O(\log(T  x_{\max}  /\epsilon_{\mbox{\tiny TD}}))$}. 

We would like to handle both the multiplicative and additive schemes throughout. To avoid major repetition, we shall label a parameter under either of the error schemes with the subscript or superscript notation ``$+,\times $''. For example, $k_{+,\times}$ denotes the Taylor truncation order under either the multiplicative scheme or the additive scheme.

We will set up a quantum algorithm solving a linear system whose output is a quantum state $\epsilon$-close to either the solution state~\eqref{eq:finalstate} or the history state~\eqref{eq:historystate}. Hence, setting a multiplicative error (default in previous quantum algorithms~\cite{berry2017quantum, krovi2022improved}) or an additive error truncation is an algorithmic choice granting us some extra flexibility to further lower the  final complexity. It is not a change of the problem statement.
    
 The recursive equations \eqref{eq:recursive} are embedded, following Ref.~\cite{berry2017quantum}, into a linear system $L \v{y} = \v{c}$, where $\v{y}= L^{-1} \v{c}$ encodes the solution to the dynamical system at all times $[0,T]$. The blocks of $L$ only contain matrices proportional to $A$ or the identity. This is convenient, since the block-encoding of $L$ and a unitary preparing the vector of constants $\v{c}$ (the base unitaries required for QLSA~\cite{costa2022optimal,jennings2023efficient}) can be constructed from the unitaries O1, O2, O3 without large overheads, see the schematics in Fig.~\ref{fig:algorithm_schematics}. Note that we may rescale $L \rightarrow \tilde{L} = \Lambda L$ and $\v{c} \rightarrow \tilde{\v{c}} = \Lambda \v{c}$, for any constant $\Lambda \ne 0$, without changing the solution vector~$\v{y}$. We do this rescaling so that $\|\tilde{L} \| \le 1$, which puts the linear system into a canonical form for quantum linear solver algorithms.

 \begin{figure}[h]
	 \includegraphics[width=\columnwidth]{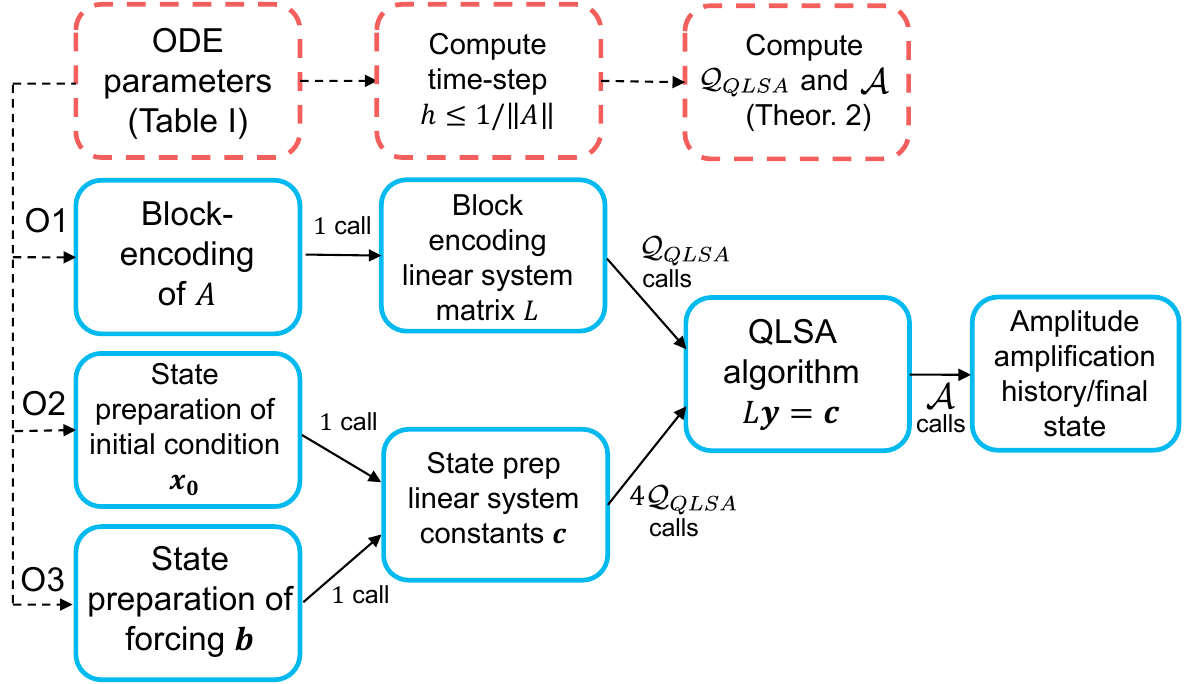}
 \caption{\textbf{Algorithm schematics.}  ODE algorithm schematics. (Red, dashed): classical precomputations. (Blue, solid): quantum operations. The algorithm reduces the ODE to a linear system of equations.  O1, O2, O3 are the unitaries in Eqs.~\eqref{eq:UAblockencoding}-\eqref{eq:Ub}. Our core analysis consists in determining the properties of the linear system (in particular, its condition number) as a function of the dynamical properties in Table~\ref{table:parameters}. Then the query count $\mathcal{Q}$ of the quantum ODE solver can be computed from the query cost $\mathcal{Q}_{QLSA}$ of the linear solver (given in Ref.~\cite{jennings2023efficient}) and the number $\mathcal{A}$ of amplitude amplification rounds, $\mathcal{Q} = \mathcal{A} \mathcal{Q}_{QLSA}$.}
 \label{fig:algorithm_schematics}
	  \end{figure}
 
 With these ingredients the linear system is solved via QLSA. The number of times the base unitaries need to be called in the QLSA algorithm is classically precomputed using the best non-asymptotic counts upper bounds in Ref.~\cite{jennings2023efficient}. Ref.~\cite{jennings2023efficient} gives an explicit formula upper bounding the query complexity of the QLSA, given 
 \begin{itemize}
     \item the rescaling prefactor of the linear system block-encoding, denoted by $\omega_{\tilde{L}}$, which is determined by the construction of the block-encoding of $L$
     and an upper bound on the norm $\|L\|$.
      \item the required QLSA precision $\epsilon_L$, which in our case needs to be chosen so that the overall error, including the time discretization errors $\epsilon_{\mbox{\tiny TD}}$, is below a target $\epsilon$.
      \item the condition number of $\kappa_L$ of $L$.
 \end{itemize} 

 The above counts guarantee convergence to the solution, but numerics suggests the quantum linear solver may well converge much faster than we can analytically prove~\cite{costa2023discrete}.
 
 Finally, the output solution is post-processed via amplitude amplification as needed, to get either a quantum state encoding of (an $\epsilon$-approximation of) the solution state~\eqref{eq:finalstate} or of the history state~\eqref{eq:historystate}. We compute a success probability which determines the number of amplitude amplification rounds required, strengthening previous bounds~\cite{berry2017quantum}.

The main technical contribution of this work is a novel analysis of the linear system solved by the algorithm of Ref.~\cite{berry2017quantum}. The improved analysis shows that this algorithm can be applied to a wider class of ODEs than previously known. In fact, similarly to the more recent algorithm in Ref.~\cite{krovi2022improved}, in the analysis of $L$ we will not assume that $A$ is diagonalizable. As described in the next section, we match and go beyond the improved asymptotics found in~\cite{krovi2022improved}, without introducing the more complex encoding of the new algorithm (involving a matrix inversion). In other words, we prove that the original algorithm of Ref.~\cite{berry2017quantum} has wider scope and beyond state-of-the-art performance. These improvements, discussed in the next section and summarized in Table~\ref{table:comparisonssimplified}, follow from our stability-based analysis and  the alternative error scheme. We expect both ideas to be of broader interest, since they can be applied in the context of other quantum algorithms for differential equations, such as that of Ref.~\cite{krovi2022improved}. This is left to future work.

\begin{table*}
      \begin{tabular}{ | p{0.7cm} | p{6.3cm}  | p{5.3cm} | p{2.39cm} |}
    \hline 
    \textbf{Ref.} & \textbf{Solution state} & \textbf{History state} & \textbf{Assumptions} \\ \hline
    \cite{berry2017quantum} & $\tilde{O}( \kappa_V g T \log \frac{T}{\epsilon  x_{\min} } \log\frac{g T }{\epsilon})$ & $\tilde{O}( \kappa_V T \log \frac{T}{\epsilon  x_{\min} } \log\frac{1}{\epsilon})$ &  $\alpha(A) \le 0$  \\ 
     \cite{krovi2022improved} & $\tilde{O}( \kappa_V g  T \log^{3/2} \frac{T}{\epsilon  x_{\min} } \log\frac{g T}{\epsilon})$ & $\tilde{O}( \kappa_V T \log^{3/2} \frac{T}{\epsilon  x_{\min} } \log\frac{1}{\epsilon})$ & $A$ diag. via~$V$ \\
    Here &  $O( \kappa_V \v{\bar{g}_\times} T \log \frac{T}{\epsilon  x_{\min} }\log\frac{\bar{g}_\times T}{\epsilon})$ &  $O( \kappa_V  T \log \frac{T}{\epsilon  x_{\min} } \log\frac{1}{\epsilon})$ &  \small $\kappa_V \! \! =\! \|V\| \|V^{-1}\|$ \normalsize  \\ 
    \hline \hline 
     
     \cite{berry2017quantum} & beyond scope & beyond scope &  $A$ stable, \\
     
     \cite{krovi2022improved} & $\tilde{O}( C_{\mathrm{max}} g  T \log^{3/2} \frac{T}{\epsilon  x_{\min} } \log\frac{g T}{\epsilon})$ & $\tilde{O}( C_{\mathrm{max}} T \log^{3/2} \frac{T}{\epsilon  x_{\min} }\log\frac{1}{\epsilon})$ &  i.e., $\alpha(A) <0$
     \\
    Here & $O(\v{\sqrt{\kappa_P} \bar{g}_\times   T^{3/4} \log \frac{T}{\epsilon  x_{\min} } \log\frac{\bar{g}_\times T} {\epsilon}})$ & $O(\v{\sqrt{\kappa_P}  \sqrt{T} \log \frac{T}{\epsilon  x_{\min} } \log\frac{1 }{\epsilon}})$ & 
    \\ 
       & $O(\v{\sqrt{\kappa_P} \bar{g}_+   T^{3/4} \log \frac{ T    }{\epsilon \| \v{x}(T)\|} \log \frac{\bar{g}_+ T}{\epsilon}})$ & $O(\v{\sqrt{\kappa_P}  \sqrt{T} \log \frac{ T   }{\epsilon} \log \frac{1}{\epsilon}})$ &
        \\ \hline 
    \cite{an2022theory} & $\tilde{O}( |\alpha(A)^{-1}| g \sqrt{T}  \log^2 \frac{1}{\epsilon})$ & - &
      $A <0$ \\ \hline \hline 
      \cite{berry2017quantum} & beyond scope & beyond scope &  $A$ general \\
\cite{krovi2022improved} & $\tilde{O}( C_{\mathrm{max}} g  T \log^{3/2} \frac{T}{\epsilon  x_{\min} } \log\frac{g T}{\epsilon})$ & $\tilde{O}( C_{\mathrm{max}} T \log^{3/2} \frac{T}{\epsilon  x_{\min} }\log\frac{1}{\epsilon})$ & 
     \\ 
Here &  $O  (C_{\mathrm{max}} \v{\bar{g}_{\times}} T\v{\log}\frac{T}{\epsilon  x_{\min} }\log \frac{\bar{g}_{\times}T}{\epsilon} )$ & $O (C_{\mathrm{max}}T \log \frac{1}{\epsilon} \frac{T}{ x_{\min} } \log\frac{1}{\epsilon} )$ & \\
&  $\v{O (C_{\mathrm{max}} \bar{g}_{+}  T\log \frac{ x_{\max}  T}{\| \v{x}(T)\|\epsilon }\log \frac{\bar{g}_{+} T}{\epsilon} )}$ & $\v{O (C_{\mathrm{max}}T \log \frac{ x_{\max}  T}{\epsilon}   \log\frac{1}{\epsilon} )}$   & \\
    \hline 
    \end{tabular}  
   \caption{\textbf{Asymptotic query complexities for linear ODEs with optimal quantum linear solvers when $\| \v{b} \| \neq 0$}. Comparison of asymptotic complexities of the algorithm in Ref.~\cite{berry2017quantum}, its new analysis (this work) and the recent improved algorithm in Ref.~\cite{krovi2022improved}.  The parameter $\omega$ enters linearly in all the above. Parameters are in Table~\ref{table:parameters} and $g(T) = \max_{t \in [0,T]} \| \v{x}(t)\|/\|\v{x}(T)\|$, so
    $\bar{g}_{\times} (T)\leq g(T)$. 
   Highlighted in bold are the improvements in complexity we obtained beyond state-of-the-art.The top block shows that we recover the results of~\cite{berry2017quantum}. The bottom block proves that we can apply the algorithm of~\cite{berry2017quantum} to the broader setting introduced in \cite{krovi2022improved}, and match the corresponding exponential improvements. The middle block is where our analysis introduces stability-induced fast-forwarding leading to sublinear scaling in $T$, and our additive error analysis giving further speedups
   (see applications  in Sec.~\ref{sec:applicationtoODEs}).  Also note that for the restricted scenario where $A$ is Hermitian and negative-definite, and we wish to output the solution state, then depending on the scaling of $g$ the algorithm presented in \cite{an2022theory} may be better suited. 
   A tighter, more general complexity table can be found in Table~\ref{table:comparisons} of Sec.~\ref{sec:fastforwarding}, from which we obtain the above by using $x_{{\rm rms}}\geq \|\v{x}(0)\|/\sqrt{T}$. \label{table:comparisonssimplified}} 
\end{table*}

 \subsection{Asymptotic complexity and fast-forwarding} \label{sec:asymptoticcomplexity} 
 
 In terms of scope, Ref.~\cite{berry2017quantum} assumes that $A$ has $\alpha (A) \le 0$ and is diagonalizable via a matrix $V$ whose condition number $\kappa_V$ is known. In this special case our framework returns $(\kappa_P, \mu_P) = (\kappa_V^2, 0)$ and hence recover these results (first block of Table~\ref{table:comparisons}). Our new analysis however can be deployed to generic settings (second and third block of Table~\ref{table:comparisons}). 

 In terms of asymptotic improvements, these are highlighted in bold in Table~\ref{table:comparisons}. If we drop the diagonalizability assumption and we only have a uniform bound as in Eq.~\eqref{eq:uniformbound}, we can take $(\kappa_P, \mu_P) = (C_{\mathrm{max}}^2(T),0)$ (third block of Table~\ref{table:comparisons}). This recovers the results in \cite{krovi2022improved} with slightly improved asymptotics and within the simpler encoding of Ref.~\cite{berry2017quantum}, whose linear system construction does not require matrix inversion~\cite{krovi2022improved} or exponentiation~\cite{berry2022quantum}.

However, it is for the broad class of stable systems, $\alpha(A)<0$, that we obtain significant improvements to the scaling with the target simulation time from our new analysis (second block of Table~\ref{table:comparisons}). In this case one has $(\kappa_P, \mu_P <0)$, which leads to a $\tilde{O}(T^{3/4})$ scaling for the solution state and a $\tilde{O}(T^{1/2})$ scaling for the history state, improving over $\tilde{O}(T)$. One might expect that simulating ODE systems could in general be harder than the Schr\"odinger equation, since one incurs various overheads in transforming the problem into a form suitable for a quantum computer. The above sharpens this relation by showing that quantum dynamics are at the boundary of stable dynamics, and that for the latter the condition number of the linear system associated to the ODE scales \emph{sublinearly} in the target simulation time~$T$, leading to sublinear in $T$ complexity:

 \begin{theorem}[Fast-forwarding of stable systems]
\label{thm:fast-forwarding}
    Let $\dot{\v{x}}(t) = A \v{x}(t) + \v{b}$ with $t\in[0,T]$, vectors $\v{x}(t),\v{b} \in \mathbb{C}^N$, and $A \in \mathbb{C}^{N\times N}$ an $N\times N$ matrix. Let the parameters $ \kappa_P$, $ x_{\max} $, $ x_{\min} $, and $\bar{g}_{+,\times} (T)$ be defined as in Table~\ref{table:parameters}. Assume that $A$ is a stable matrix, namely, the largest real part of its eigenvalues is negative. Then outputting a state $\epsilon$-close to $\ket{x_T}$ in Eq.~\eqref{eq:finalstate} can be realised with bounded probability with
    \begin{equation}
        \tilde{O}\left ( \omega \kappa_P^{1/2} \bar{g}_\times (T) T^{3/4} \log \frac{T}{  x_{\min}  \epsilon}\log \frac{1}{\epsilon} \right),
    \end{equation}
    or with
    \small
      \begin{equation}
       \tilde{O}\left ( \omega \kappa_P^{1/2} \bar{g}_+ (T)  T^{3/4} \log \frac{ T  x_{\max}  }{\epsilon \| \v{x}(T)\|} \log \frac{1}{\epsilon \| \v{x}(T)\|} \right),
    \end{equation}
    \normalsize
    queries to a unitary block-encoding of $A$ with scale-factor $\omega$, a unitary $U_0$ that prepares the normalized state~$\ket{x^0}$ and, if $\v{b} \neq \v{0}$, a unitary $U_b$ that prepares the normalized state~$\ket{b}$.

    In addition, outputting a state $\epsilon$-close to the history state $\ket{x_H}$ in Eq.~\eqref{eq:historystate} can be realised with bounded probability with
    \begin{equation}
        \tilde{O} \left ( \omega \kappa_P^{1/2}  T^{1/2} \log \frac{T}{ x_{\min}  \epsilon} \log\frac{1}{\epsilon} \right ),
    \end{equation}
    or with
    \small
     \begin{equation}
        \tilde{O} \left( \omega \kappa_P^{1/2}  T^{1/2} \log \frac{ T  x_{\max}  }{\epsilon} \log \frac{1}{\epsilon } \right)
    \end{equation}
    \normalsize
    queries to these unitaries.

\end{theorem}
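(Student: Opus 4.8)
The plan is to reduce the query count to the product $\mathcal{Q}=\mathcal{A}\,\mathcal{Q}_{\mathrm{QLSA}}$ identified in the algorithm sketch, and to bound each factor using the stability estimate $\|e^{At}\|\le\sqrt{\kappa_P}\,e^{\mu_P(A)t}$. Concretely, I would first instantiate the linear system $L\v{y}=\v{c}$ of Ref.~\cite{berry2017quantum} that embeds the truncated recursion \eqref{eq:recursive}, together with $p$ appended idling steps, rescale it to $\|\tilde L\|\le 1$, and read off its block-encoding scale $\omega_{\tilde L}=O(\omega)$ directly from O1--O3. The QLSA cost is then $\mathcal{Q}_{\mathrm{QLSA}}=O(\omega_{\tilde L}\,\kappa_L\log(\kappa_L/\epsilon_L))$ by the explicit formula of Ref.~\cite{jennings2023efficient}, so the whole argument rests on two quantities: the condition number $\kappa_L$ and the amplitude-amplification count $\mathcal{A}=O(1/\sqrt{p_{\mathrm{succ}}})$.

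The crux is a new bound on $\kappa_L$ valid without assuming $A$ diagonalizable. Writing $\v{w}=L^{-1}\v{z}$ and unrolling the recursion gives $\v{w}_m=\sum_{j\le m}T_k(Ah)^{m-j}\v{z}_j$, so $\|L^{-1}\|$ is governed by sums of propagator norms $\|T_k(Ah)^n\|$. I would replace the diagonalization constant $\kappa_V$ (or the uniform bound $C_{\max}$) of prior work by the Lyapunov estimate: choosing the truncation order $k$ large enough that $T_k(Ah)$ inherits the contractivity of $e^{Ah}$, one obtains $\|T_k(Ah)^n\|\lesssim\sqrt{\kappa_P}\,e^{\mu_P(A)nh}$. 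For stable $A$ ($\mu_P<0$) the resulting geometric sums are tamed, which is exactly what converts the $O(T)$ growth of $\kappa_L$ seen for Hamiltonian-like ($\mu_P=0$) dynamics into sublinear behaviour. The idling block, where the propagator is the identity, is the one place where the decay is absent, so its length $p$ controls how $\kappa_L$ grows; tracking this carefully is what forces the $\sqrt{\kappa_P}$ prefactor and the exponent of $T$.

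Next I would bound the success probability $p_{\mathrm{succ}}$ of projecting the QLSA output onto the flag subspace encoding the target state, strengthening the estimate of Ref.~\cite{berry2017quantum}. For the history state \eqref{eq:historystate} this probability is $\Theta(1)$ up to the Taylor-order redundancy, so $\mathcal{A}$ is polylogarithmic and $\mathcal{Q}\sim\kappa_L$, which yields the $T^{1/2}$ scaling once the condition-number bound is in hand. For the solution state the weight of the single slice $\v{x}(T)$ inside $\v{y}$ is $p_{\mathrm{succ}}\approx p/(M\bar{g}_\times^2+p)$, expressed through the average norm ratio $\bar{g}_\times(T)$ and boosted by the idling length $p$; jointly optimising $p$ against the growth of $\kappa_L$ produces the $T^{3/4}$ exponent. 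The parameters $ x_{\min} , x_{\max} $ and the ratios $\bar{g}_{+,\times}$ enter precisely here and through the choice of error scheme.

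Finally I would close the error budget by splitting the target $\epsilon$ between the time-discretisation error $\epsilon_{\mbox{\tiny TD}}$, which fixes the truncation order $k_\times=O(\log(T/( x_{\min} \epsilon_{\mbox{\tiny TD}})))$ or $k_+=O(\log(T x_{\max} /\epsilon_{\mbox{\tiny TD}}))$, and the linear-solver precision $\epsilon_L$, and then run the whole argument once under each scheme to obtain the multiplicative and additive bounds stated in the theorem. I expect the main obstacle to be the condition-number step: one must show that the stability-induced contraction genuinely survives the Taylor truncation and the composition over $M$ steps in the non-diagonalizable case, and then carry out the $p$-versus-$\kappa_L$ optimisation cleanly enough to recover the precise exponents $3/4$ and $1/2$ rather than a weaker sublinear power.
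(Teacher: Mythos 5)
Your proposal is correct and follows essentially the same route as the paper's proof: the same linear-system embedding with idling, the same Lyapunov-based bound on sums of $\|T_k(Ah)^n\|$ giving $\kappa_L = O\bigl(k\sqrt{\kappa_P T + p^2}\bigr)$ without diagonalizability, the same success-probability versus idling trade-off (choosing $p=\Theta(\sqrt{T})$ for the solution state, $p=0$ for the history state) that produces the exponents $3/4$ and $1/2$, and the same split of the error budget between $\epsilon_{\mbox{\tiny TD}}$ and $\epsilon_L$ under the multiplicative and additive schemes. The step you correctly flag as the main obstacle is handled in the paper exactly as your stated conclusion requires: one bounds the relative error of the whole power, $\|T_k(Ah)^m\| \le \|e^{Amh}\|(1+\epsilon_{\mbox{\tiny TD}}/c_{+,\times})$ (Lemma~\ref{lem:rel-Tm-error}, via the operator bounds of Lemma~\ref{lem:taylor-operator-bounds}), rather than a per-step contractivity bound, which would fail for non-normal stable $A$ since $\|e^{Ah}\|$ can exceed $1$ and compound over $M$ steps.
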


See Appendix~\ref{sec:fastforwarding} for a precise statement and the proof. Speedups have been previously obtained for the case when $A=-H^2$ for some Hermitian $H$~\cite{an2022theory}, or when we access a unitary that diagonalizes $A$; but beyond these special cases, the complexity of prior ODE solvers is at least linear in $T$. Our speedup applies instead to all stable dynamics.

We note that stable dynamics tend asymptotically in time to a fixed point $A^{-1}\v{b}$. Therefore, for any constant error threshold $\epsilon$ there is a time $T_\epsilon$ beyond which one does not need to simulate dynamics, but can simply implement a linear solver algorithm to output $A^{-1}\ket{b}$ that is guaranteed to be within the fixed error threshold. This would be $O(1)$ in time~\cite{dong2024private}. The fast forwarding presented here therefore applies to the regime where $T<T_\epsilon$. We also highlight that the choice of $\epsilon$ is also crucial to the scaling in time. For example, we can take Hamiltonian dynamics and redefine it with a constant exponential damping envelope that gives a stable system. However this clearly does not give a sublinear complexity in time, violating the No-Fast Forwarding Theorem, since the required accuracy must scale exponentially with time. See Appendix~\ref{sec:No-FF} for further discussion.
\subsection{Applications: improved scaling for linear and nonlinear systems of ODEs}
\label{sec:applicationtoODEs}

The fast-forwarding regime refers to when we obtain a sublinear scaling in $T$. The general conditions can be easily worked out from Table~\ref{table:comparisons}. Let us present several examples where this occurs.

\begin{ex}[\textbf{Quadratic speedup for plasma system}] \label{ex:vlasovasymptotic} Consider the linearized collisional Vlasov-Poisson equation, which in Hermite representation has $\v{b}=\v{0}$, $\alpha(A)<0$, specifically it has negative log-norm~\cite{ameri2023quantum}. As discussed in Sec.~\ref{sec:classifying}, this implies $\kappa_P =1$. Also, the generator of the dynamics is symmetric, so $\kappa_V =1$. Then we obtain the results in Table~\ref{table:VP}.

\begin{table}[h!]
    \centering
    \begin{tabular}{|c|c|}
    \hline 
        Ref.~\cite{berry2017quantum}   & $\tilde{O}(T)$  \\ \hline
       Ref.~\cite{krovi2022improved}  & $\tilde{O}(T)$ \\ \hline
      This work    & $\tilde{O}(T^{1/2})$ \\ \hline
    \end{tabular}
\caption{Asymptotic query complexities for the linearized, collisional Vlasov-Poisson equation, for the history state output.    \label{table:VP} }

\end{table}
\end{ex}

Improvements can also be obtained when forcing is present:
\begin{ex}[\textbf{`Quintic speedup' for damped oscillator system before equilibration}]
Take a set of coupled classical harmonic oscillators with damping and constant forcing. Absorbing masses into the variables for simplicity, we have that,
\begin{align}
    \ddot{\v{y}} = - W \v{y} + D \dot{\v{y}} + \v{F},  
\end{align}
where $\v{F}$ is a constant force term, such as a uniform gravitation field. For a purely oscillatory system we have that $W \geq 0$ and $D$ is a damping matrix that is taken to have negative log-norm, which describes the situation of energy being dissipated by friction. Note that when $D$ is diagonal we obtain the familiar classical damping, but here we also allow off-diagonal terms, also known as non-classical damping, which is required in some structural mechanics applications~\cite{gupta2017significance, mcfarland1990sources, roehner2009influence}. Let $R$ be any matrix such that $W = R R^\dag$. Let us define $\v{x}(t):= [\dot{\v{y}}(t), i R^\dag \v{y}(t)]^T$. We then obtain the system of equations
\begin{align*}
    \dot{\v{x}}(t) = A \v{x}(t) + \v{b}, \quad A = 
    \begin{bmatrix}
        D & - i R \\
        i R^\dag & 0
    \end{bmatrix}, \v{b} =
    \begin{bmatrix}
        \v{F} \\ \v{0}
    \end{bmatrix}.
\end{align*}
Note that $A = - iH + \mathrm{diag}(D,0)$, with $H$ Hermitian. So for $D=0$, $\v{F}=0$, we recover the setup in Ref.~\cite{babbush2023exponential}. Strategies to block-encode $H$ when $W$ encodes sparse couplings among the oscillators are discussed in~\cite{babbush2023exponential}. If $D$ can be efficiently block-encoded, e.g., it is sparse and locations and values of the nonzero entries are efficiently computable, then $A$ can be efficiently block-encoded via LCU of the block-encoding of $D$ and $H$. We see that $\mu(A) = \mu(D):= \mu <0$, so $A$ is a stable matrix with negative log-norm. 
 
For finite evolution times, we can use Gronwall's lemma to upper bound the solution norm as
\begin{align}
   \| \v{x}(t)\| \leq e^{-|\mu| t} \|\v{x}(0)\| + (1- e^{-|\mu| t} )\frac{\| \v{b}\|}{|\mu|}. 
\end{align}
Suppose that the initial condition satisfies \mbox{$\| \v{x}(0)\| \gg \|\v{b}\|/|\mu| = \| \v{F}\|/|\mu|$} and that we are interested in times $T$ before `equilibration time', \mbox{$T \leq 1/|\mu|$}. Then, for all \mbox{$t \in [0,T]$} and all \mbox{$T \leq 1/|\mu|$}, the decay of the solution norm is approximately exponential, \mbox{$\| \v{x}(t)\| \lessapprox e^{-|\mu| t} \|\v{x}(0)\|$}. We assume that we wish to output the history state for the system. In this regime, \mbox{$ x_{\min} ^{-1} = O(e^{T})$}. Also, \mbox{$ x_{\max}  = O(1)$}, $ x_{\mathrm{rms}}  = O(1/T)$. From Table~\ref{table:comparisons} we see that prior algorithms scale as $O(T^{5/2})$ in this regime. With our stability-based improved condition number and using the additive error form of the algorithm discussed in Sec.~\ref{sec:sketch}, we obtain a scaling $O(T^{1/2})$: 
\begin{table}[h!]
    \centering
    \begin{tabular}{|c|c|}
    \hline 
        Ref.~\cite{berry2017quantum}   &  \textrm{beyond scope}  \\ \hline
       Ref.~\cite{krovi2022improved}  & $\sim T^{5/2}$ \\ \hline
      This work ($\times$)   & $\sim T^{3/2}$ \\ \hline
      This work ($+$)   & $\sim T^{1/2}$ \\ \hline
    \end{tabular}
    \caption{Query complexities for classical damped harmonic oscillators with constant forcing, for $T$ smaller than the equilibration time, history state output. The labels ($\times$) and ($+$) denote two different versions of the algorithm.}
    \label{table:HO}
\end{table}
\end{ex}
So far we discussed improvements to algorithms for linear differential equations. However, as we discussed in the introduction, a technique often used in quantum algorithms consists of embedding nonlinear systems into a larger linear problem. This suggests that our results can lead to improvements also in quantum algorithms for nonlinear differential equations. The following example shows that this is indeed the case.

\begin{ex}[\textbf{Sublinear time scaling for nonlinear ODEs}]

Consider a system of quadratic ODEs:
\begin{align}
    \dot{\v{u}} = F_2 \v{u}^{\otimes 2} + F_1 \v{u} + \v{F}_0,
\end{align}
where $\v{u}$ and $\v{F}_0$ are $N$-dimensional vectors, $F_1$ a $N \times N$ matrix and $F_2$ a $N \times N^2$ matrix. The problem is rescaled so that $\| \v{u}(0)\| <1$. We only assume $\mu(F_1)<0$, i.e., $F_1$ has negative log-norm. This is the setup of Ref.~\cite{krovi2022improved}.

One defines the Reynold's-like quantity
\begin{align}
    R = \frac{1}{|\mu(F_1)|} \left( \frac{\|F_2\|}{\|\v{u}(0)\|} + \|\v{F}_0\| \|\v{u}_0\| \right). 
\end{align}
    Under the condition that $R<1$, the above problem can be well-approximated by~\cite{liu2021efficient, krovi2022improved}
    \begin{align}
        \dot{\v{x}} = A \v{x} + \v{b},
    \end{align}
   where $\v{x} = [ \v{u}, \v{u}^{\otimes 2}, \v{u}^{\otimes 3}, \dots, \v{u}^{\otimes N_{\mathrm{tr}}}]$,  for a truncation $N_{\mathrm{tr}}$ that scales logarithmically with the simulation time and the inverse target error. The matrix $A$ can be efficiently block-encoded and $\v{b}$ can be efficiently prepared, assuming the same properties hold for $F_2$, $F_1$, $\v{F}_0$ and $\v{u}(0)$. Furthermore, $\mu(A)<0$ (proof of Lemma~16 in Ref.~\cite{krovi2022improved}), so $\kappa_P =1$. The state-of-the-art algorithm to obtain the history state from this problem scales asymptotically as $O(T)$, while the effective scaling before equilibration may be higher, as we have seen in the previous example. One can prove that $\| \v{u}(t)\| < 1$ for all $t>0$, so $ x_{\max}  = O(1)$ and $ x_{\mathrm{rms}}  = \tilde{O}(1/T)$.  Using our results in additive form (Table~\ref{table:comparisons}, second block, last row), we find a scaling to extract the history state of $\tilde{O}(T^{1/2})$. We can similarly improve the complexity scaling in the extraction the solution state, which is linear in prior state-of-the-art (Theorem~8 of Ref.~\cite{krovi2022improved}), while it is $\tilde{O}(T^{3/4})$ with using our results in the multiplicative form (Table~\ref{table:comparisons}, second block, second to last row). 

\begin{table}[h!]
    \centering
    \begin{tabular}{|c|c|c|}
    \hline 
    Output & History state & Solution state \\
    \hline \hline 
        Ref.~\cite{berry2017quantum, an2022theory}   &  beyond scope & beyond scope \\ \hline
       Ref.~\cite{krovi2022improved}  & $\tilde{O}(T)$ & $\tilde{O}(T)$ \\ \hline
      This work & $\tilde{O}(T^{1/2})$  & $\tilde{O}(T^{3/4})$ \\ \hline
    \end{tabular}
    \caption{Asymptotic time complexities for nonlinear systems of quadratic ODEs with negative log-norm and $R<1$.}
\end{table}
\end{ex}

 \subsection{Detailed query counts of the ODE-solver} 
 \label{sec:detailedcounts}
 So far we have discussed asymptotics, but in fact we can compute analytic query complexity upper bounds. Since the final expression is cumbersome, we describe it via a sequence of straightforward replacements. Recall that $+, \times$ refer to two versions of the algorithm, and one can take the best counts among the two. 
\begin{theorem}[Explicit query counts for ODE-solver]
\label{thm:detailedODEcounts}
Consider any $N$--dimensional ODE system of the form
\begin{equation}
\dot{\v{x}} = A \v{x} + \v{b},\quad \v{x}(0) = \v{x}^0,
\end{equation}
 specified by a complex matrix $A\in \mathbb{C}^{N\times N}$, vectors $\v{b}, \v{x}^0 \in \mathbb{C}^N$ and $t\in [0,T]$. Choose a timestep $h$ such that $\|A\|h \le 1$ and $Mh=T$ for integer $M$. Let $(\kappa_P, \mu_P(A),  x_{\max} ,  x_{\min} , x_{\mathrm{rms}} )$ be the associated ODE parameters, as defined in Table~\ref{table:parameters}. 

We assume that we have oracle access to a unitary $U_A$, which is an $(\omega, a, 0)$ block-encoding of $A$, a unitary $U_0$ that prepares the normalized state $\ket{x^0}$ and, if $\v{b} \neq \v{0}$, a unitary $U_b$ that prepares the normalized state $\ket{b}$. Then a quantum state $\epsilon$-close in $1$-norm to the solution state $\ket{x_T}=\ket{x(T)}$, or to the history state $\ket{x_H}$ in Eq.~\eqref{eq:historystate}, 
can be outputted with bounded probability of success, using an expected number
$\mathcal{Q}_{H,T}$ queries to $U_A$ and $4\mathcal{Q}_{H,T}$ queries to $U_0$. In the case where we have $\v{b} \neq \v{0}$ then $4\mathcal{Q}_{H,T}$ queries to $U_b$ are also made. The query count parameter  is $\mathcal{Q}_{H,T}=\min_{+,\times} \mathcal{Q}_{H,T}^{+,\times}$ with a minimization over the choice of either the $+$ scheme or the $\times$ scheme. The values of $\mathcal{Q}_{H,T}^{+,\times}$ are  given as follows:  
\begin{enumerate}[label={[\arabic*]}]
 \item Set the time discretization error $\epsilon_{\mbox{\tiny TD}}=\epsilon/(8f^{+,\times}_{H,T})$, where $f^{\times}_{H,T} = 1$, $f^+_H = \frac{1}{ x_{\mathrm{rms}} }$, $f^+_T =\frac{1}{\|\v{x}(T)\|}$. 
 \item Compute the Taylor truncation order: 
        \begin{equation}
        \label{eq:k}
            k_{+,\times} = \left \lceil \frac{3\log(s_{+,\times})/2+1}{\log[1+ \log(s_{+,\times})/2]}-1 \right \rceil,
        \end{equation} 
        with
        \small
        \begin{align}
    s_\times \geq & \frac{M e^3}{\epsilon_{\mbox{\tiny TD}}} \left(1+ T e^2 \frac{\| \v{b}\|}{ x_{\min} }\right) \nonumber \\
    s_+ \geq & \frac{Me^3  x_{\max} }{\epsilon_{\mbox{\tiny TD}}} \left( 1 +  T e^2 \frac{\|\v{b}\|}{ x_{\max} } \right),
    \end{align}
    \normalsize
\item In the case of the history state output, set $p =0$. In the case of the solution state output and $A$ being stable, set 
\small
\begin{align*}
     p_{+,\times} =\lceil \sqrt{M}/(k_{+,\times}+1)\rceil (k_{+,\times}+1)
\end{align*}
\normalsize and otherwise
\begin{align}
    p_{+,\times} =\lceil M/(k_{+,\times}+1)\rceil (k_{+,\times}+1),
\end{align}
 in the case when $A$ is not stable.
 \item Compute the scaling parameter $\omega_{\tilde{L}} =  \frac{1+\sqrt{k_{+,\times}+1}+\omega h }{\sqrt{k_{+,\times}+1} +2}$.
 \item Compute the upper bound on the condition number of~$L$:
            \footnotesize
        \begin{align}
         \nonumber 
            \! \! \! \!   \! \! \! \!   \! \! \!
 \kappa_L \! = \! & \left[   (1+\frac{\epsilon_{\mbox{\tiny TD}}}{ c_{+,\times}})^2 \left( 1  + g(k) \right ) \kappa_P  \left(p \frac{1- e^{2 T\mu_P(A) +2 h\mu_P(A) }}{ 1- e^{2 h\mu_P(A) }} \right. \right. \\ \nonumber & \left. \left. + I_0(2) \frac{e^{2\mu_P(A)(T+2h)} + M +1 - e^{2h\mu_P(A) } (2+M)}{(1-e^{2  h\mu_P(A)})^2 } \right) \right. \\ & \left. + \frac{p(p+1)}{2} \! + (p + M k) (I_0(2)-1) \right]^{\frac{1}{2}} \! \!\! (\sqrt{k+1}+2)   \label{eq:kappaLbound}
        \end{align}  
        \normalsize
        with $k = k_{+,\times}$, $p= p_{+,\times}$ and where  $g(k) = \sum_{s=1}^{k} \left(s! \sum_{j=s}^{k} 1/j!\right)^2 \leq ek$, $I_0(2) \approx 2.2796$ ($I_0$ is the order-zero modified Bessel function of the first kind), $c_{\times}=1$, $c_+ = x_{\mathrm{max}}$.
          \item Fix $K=(3-e)^2$ for the inhomogeneous case ($\v{b} \neq \v{0})$ and $K=1$ for the homogeneous case $(\v{b} = \v{0})$. For outputting the history state compute the success probability
\begin{equation}
\label{eq:prH}
  \! \! \!  \!\mathrm{Pr}^{+,\times}_H \ge \frac{K}{K-1+I_0(2)} \geq \begin{cases}
    29/500 \quad \v{b}\neq \v{0} \\
     219/500 \quad \v{b} = \v{0},
    \end{cases}
\end{equation}
and for outputting the solution state compute
\footnotesize
\begin{align}
  \! \! \!  \! \mathrm{Pr}^{+,\times}_T 
	 \geq \frac{1 }{ (1-\frac{I_0(2)-1}{(p+1)K})  + (M+1) \frac{I_0(2)-1}{(p+1)K} \left(\frac{1+\epsilon_{\mbox{\tiny TD}}}{1-\epsilon_{\mbox{\tiny TD}}}\right)^2 \bar{g}_{+,\times}^2},
  \end{align}
  \normalsize
  with $p= p_{+,\times}$. 
 \item Compute the error parameter $\epsilon_L = \epsilon\frac{\mathrm{Pr}^{+,\times}_{H,T} }{4+\epsilon}$.
  \item Compute
  \begin{align}
      \mathcal{Q}_{QLSA} = \mathcal{Q}_{QLSA}(\omega_{\tilde{L}}, \kappa_L, \epsilon_L),
  \end{align} using the best available query counts for QLSA given the parameters $\omega_{\tilde{L}}$, $\kappa_L$, $\epsilon_L$. In this work we use the closed formula for $\mathcal{Q}_{QLSA}$ from Ref.~\cite{jennings2023efficient},
 $\mathcal{Q}_{QLSA} = \mathcal{Q}^*/{(0.39-0.204 \epsilon_L)}$, where $\mathcal{Q}^*(\omega_{\tilde{L}}, \kappa_L, \epsilon_L) $ equals 
		\scriptsize
 \begin{align*}
  \frac{ 581\omega_{\tilde{L}} e}{250}  \sqrt{\kappa_L^2 +1} \left(  \left[\frac{133}{125} + \frac{4}{25\kappa_L^{1/3}}\right]  \pi \log (2\kappa_L +3) +1 \right)   \\ +  \frac{ 117}{50} \log ^2(2 \kappa_L +3) 
 \left(\log \left(\frac{ 451 \log ^2(2 \kappa_L +3)}{\epsilon_L }\right)+1\right) \\ + \omega_{\tilde{L}} \kappa_L \log \frac{32}{\epsilon_L}.
	\end{align*}
  \normalsize
  \item Compute $\mathcal{A}(\mathrm{Pr}^{+,\times}_{H,T})$, the query counts for amplitude amplification from success probability $\mathrm{Pr}_{H,T}$ (see, e.g., Ref.~\cite{yoder2014fixed}), or $\mathcal{A}(\mathrm{Pr}^{+,\times}_{H,T}))= 1/\mathrm{Pr}^{+,\times}_{H,T})$ if we sample till success.
  \item Finally, the query count parameter for any choice of $+$ versus $\times$, or $H$ versus $T$, is given by $\mathcal{Q}^{+,\times}_{H,T} = \mathcal{A}\mathcal{Q}_{QLSA} $.
\end{enumerate}
 The total logical qubit count for the algorithm is $a + 1 3 + \lceil \log_2 [((M+1)(k_{+,\times}+1) +p )N] \rceil$.
\end{theorem}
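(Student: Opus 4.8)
The plan is to reduce the ODE to the linear system $L\v{y}=\v{c}$ of Ref.~\cite{berry2017quantum}, whose solution $\v{y}=L^{-1}\v{c}$ encodes the discretized trajectory $(\v{x}^0,\dots,\v{x}^M)$ together with the $p$ idling copies, and then to track how every quantity entering the cost of the quantum linear solver depends on the dynamical parameters of Table~\ref{table:parameters}. Because the post-processed output is produced by running the QLSA on the rescaled $\tilde{L}=\Lambda L$ and then amplifying the block encoding the desired (history or final-time) component, the query count factorizes as $\mathcal{Q}=\mathcal{A}\,\mathcal{Q}_{QLSA}$, where $\mathcal{Q}_{QLSA}$ is the closed-form cost of Ref.~\cite{jennings2023efficient} evaluated at $(\omega_{\tilde{L}},\kappa_L,\epsilon_L)$ and $\mathcal{A}$ is the number of amplitude-amplification rounds set by the success probability. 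Steps [1]--[11] are exactly the computation of these ingredients, so the proof amounts to establishing each in turn, and I would present them in that order.

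First I would control the time discretization. Writing the exact one-step map $\v{x}(mh)=e^{Ah}\v{x}((m-1)h)+\int_0^h e^{As}\,\md s\,\v{b}$ and comparing with the truncated recursion~\eqref{eq:recursive}, the per-step error is the tail of the exponential series, which I would bound using $\|(Ah)^j\|\le1$ (since $h\le1/\|A\|$) and the norm promise $\|e^{At}\|\le\sqrt{\kappa_P}e^{\mu_P(A)t}$. Summing the $M$ per-step contributions and demanding the multiplicative bound~\eqref{eq:multiplicativeerror} or the additive bound~\eqref{eq:additiveerror} yields the explicit order $k_{+,\times}$ of Eq.~\eqref{eq:k} via the estimate of Theorem~3 in Ref.~\cite{krovi2022improved}; the two schemes differ only in whether the error is measured against $x_{\min}$ or $x_{\max}$, which is the origin of the $s_\times$ versus $s_+$ prescription and of the factors $f^{+,\times}_{H,T}$ that convert the unnormalized $\epsilon_{\mbox{\tiny TD}}$ into the target $1$-norm error $\epsilon$ after normalization of the output state.

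The heart of the argument, and the step I expect to be the main obstacle, is the condition number bound~\eqref{eq:kappaLbound}. Here I would use the explicit block structure: $L$ is block lower-bidiagonal with identity blocks on the diagonal and $-T_k(Ah)$ (or identity on the idling rungs) on the subdiagonal, so $L^{-1}$ is block lower-triangular with entries given by ordered products $T_k(Ah)^{r}$. The bound $\|L\|\le\sqrt{k+1}+2$ is elementary and produces both the $(\sqrt{k+1}+2)$ prefactor in $\kappa_L$ and, after the rescaling $\Lambda$ enforcing $\|\tilde{L}\|\le1$, the block-encoding scale $\omega_{\tilde{L}}$ of step [4]. The difficult part is $\|L^{-1}\|$: each column of $L^{-1}$ is a vector of products $T_k(Ah)^{r}$, and I would bound $\|T_k(Ah)^{r}\|$ by $\|e^{Arh}\|$ up to the truncation correction $g(k)\le ek$, then insert $\|e^{Arh}\|\le\sqrt{\kappa_P}e^{\mu_P rh}$. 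The decisive observation is that for stable dynamics $\mu_P<0$ the geometric series $\sum_{r\ge0}e^{2\mu_P rh}$ converges to a value independent of $T$, so each column has $O(1/(h|\mu_P|))$ norm rather than a norm growing with its index; summing over the $M+1\sim T/h$ columns then gives $\|L^{-1}\|=\tilde{O}(\kappa_P^{1/2}T^{1/2})$, whereas at the Hamiltonian boundary $\mu_P\to0^-$ the series diverges, the off-diagonal blocks stop decaying, and one recovers the familiar linear-in-$T$ growth. Resumming the double sum over time indices into the closed geometric forms, with the $p$ idling rungs contributing the term proportional to $p$ and the $p(p+1)/2$ piece, and making $g(k)$ and the Bessel constant $I_0(2)$ explicit, produces Eq.~\eqref{eq:kappaLbound}; checking that the limit $\mu_P\to0^-,\ \kappa_P\to C_{\mathrm{max}}^2$ reproduces the non-stable bound closes the step.

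Finally I would lower-bound the success probability. For the history state this is the ratio of the squared norm of the genuine-trajectory block of $\v{y}$ to $\|\v{y}\|^2$; bounding the spurious $S_k(Ah)\v{b}$ contributions by $I_0(2)$ gives the constant Eq.~\eqref{eq:prH}, with $K=(3-e)^2$ and $K=1$ distinguishing forced and homogeneous cases, so $\mathcal{A}=O(1)$ and $\mathcal{Q}\sim\kappa_L=\tilde{O}(\kappa_P^{1/2}T^{1/2})$. For the solution state the $p$ idling rungs concentrate weight on the final time, and the ratio is controlled by $\bar{g}_{+,\times}$, giving $\mathrm{Pr}_T^{-1}=\tilde{O}((M/p)\bar{g}^2)$; with the tuning $p\sim\sqrt{M}$ of step [3] one finds $\mathcal{A}\sim\mathrm{Pr}_T^{-1/2}=\tilde{O}(\bar{g}T^{1/4})$, and since $p\sim\sqrt{M}$ keeps $\kappa_L=\tilde{O}(\kappa_P^{1/2}T^{1/2})$, the product $\mathcal{Q}\sim\mathcal{A}\,\kappa_L=\tilde{O}(\bar{g}\,T^{3/4})$ is minimized, which is precisely the balance that yields the $T^{3/4}$ solution-state scaling. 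From the success probability I then fix $\epsilon_L=\epsilon\,\mathrm{Pr}_{H,T}/(4+\epsilon)$ so that the amplified QLSA error together with $\epsilon_{\mbox{\tiny TD}}$ stays below $\epsilon$ in $1$-norm, substitute $(\omega_{\tilde{L}},\kappa_L,\epsilon_L)$ into the formula of Ref.~\cite{jennings2023efficient}, and multiply by $\mathcal{A}$ to obtain $\mathcal{Q}^{+,\times}_{H,T}=\mathcal{A}\,\mathcal{Q}_{QLSA}$, taking the better scheme. The qubit count follows by adding the $a$ block-encoding ancillas, the QLSA workspace, and the $\lceil\log_2[((M+1)(k_{+,\times}+1)+p)N]\rceil$ qubits indexing the linear system.
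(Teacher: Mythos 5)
Your proposal follows essentially the same route as the paper's own proof: embed the discretized dynamics into the Taylor-ladder linear system of Ref.~\cite{berry2017quantum}, bound $\|L\|\le\sqrt{k+1}+2$ and $\|L^{-1}\|$ by a Cauchy--Schwarz/Frobenius-type argument using $\|T_k(Ah)^r\|\lesssim(1+\epsilon_{\mbox{\tiny TD}}/c_{+,\times})\sqrt{\kappa_P}\,e^{\mu_P rh}$ so that the geometric sums give the sublinear-in-$T$ condition number, then assemble the success probabilities, the error split $\epsilon_{\mbox{\tiny TD}}=\epsilon/(8f^{+,\times}_{H,T})$ and $\epsilon_L=\epsilon\,\mathrm{Pr}_{H,T}/(4+\epsilon)$, the QLSA formula, and the amplitude-amplification overhead, exactly as in the paper. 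Your description of $L$ as block-bidiagonal with $-T_k(Ah)$ subdiagonal blocks is only the coarse-grained picture (the actual matrix has $-Ah/j$ ladder blocks and a summing row, with $T_k(Ah)$ emerging after eliminating the Taylor register), but since you correctly track the resulting corrections ($g(k)$, $I_0(2)$, the junk sectors) this is cosmetic rather than a gap.
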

This result gives the first detailed counts for solving ODEs on a quantum computer, a crucial step to determining what problems of interest could be run on early generations of fault-tolerant quantum computers. These results also provide explicit targets for future work to improve on.

The flow of dependencies for computing $\mathcal{Q}$ is shown explicitly in Fig.~\ref{fig:cost_diagram}.

\begin{figure}[h!] 
\centering
	 \includegraphics[width=0.85\columnwidth]{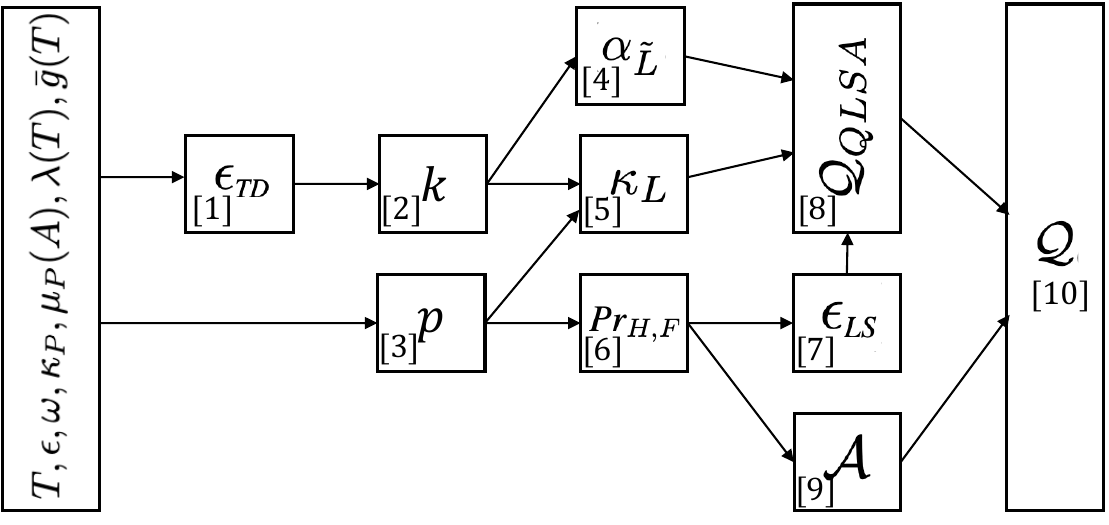}
 \caption{Flow-chart for computing the query-count for a given set of ODE parameters.}\label{fig:cost_diagram}
	  \end{figure}
The intermediary parameters that arise in this flow-chart play the following roles within the algorithm:
\begin{itemize}
    \item The parameter $\epsilon_{\mbox{\tiny TD}}$ is the target error associated to the time discretization of the dynamics via a finite Taylor truncation order. This parameter can be chosen to ensure additive errors (labeled by $+$) or multiplicative errors (labeled by $\times$). As we discussed in Sec.~\ref{sec:sketch}, this does not affect the output of the quantum algorithm. We make the choice that minimizes the overall query complexity of the quantum algorithm.
    \item The parameter $k$ is the Taylor truncation order.
    \item The parameter $p$ is an `idling' parameter required to magnify the success probability of the algorithm we when want to output the solution state.
    \item The parameter $\omega_{\tilde{L}}$ is a block-encoding scale factor for the linear-solver component, where the dynamics is encoded into a matrix $L$, rescaled to $\tilde{L}$ such that $\|\tilde{L}\|\leq 1$. 
    \item The parameter $\kappa_L$ is the condition number of the linear-solver matrix $\tilde{L}$.
    \item The parameter $\epsilon_{L}$ is the target error for the linear-solver step.
    \item The parameter $\mathcal{A}$ is the overhead in  repeating-till-success, or using amplitude amplification, that arises due to the finite success probability.
\end{itemize}
 
\begin{ex}[Query counts for history state of homogeneous ODEs]
   To illustrate the result, we apply Theorem~\ref{thm:detailedODEcounts} to analytically evaluate the non-asymptotic cost of outputting the history state for a homogeneous system of ODEs with negative (Euclidean) log-norm, an example being the  collisional linearized Vlasov-Poisson equation discussed in Sec.~\ref{sec:applicationtoODEs}.

   Choosing the multiplicative error scheme, we have 
\footnotesize
\begin{align}
	\nonumber
    \v{b}=0, \; \kappa_P =1, \; \mu_P(A) = \mu(A)<0, \;  \mathrm{Pr}^\times_H \geq I_0(2)^{-1}.
\end{align}
\normalsize
We shall set the block-encoding scale factor $\omega=1$ (the total cost will be increased by about a factor of $\omega$ if $\omega>1$).  As discussed in Example~\ref{ex:vlasovasymptotic}, the scaling with $T$ is $\tilde{O}(T^{1/2})$, but we now compute a formula for the non-asymptotic query count. Recall that $T=Mh$ is measured in units of $h$, where $h\leq 1/\|A\|$. 
Set
\begin{equation}
   k= k_\times =\frac{\frac{3}{2} \log \left(\frac{8 e^3 M}{\epsilon }\right)+1}{\log \left(\frac{1}{2} \log \left(\frac{8 e^3 M}{\epsilon }\right)+1\right)},
\end{equation}
\begin{equation}
    g(k) = \sum_{s=1}^k \left(s! \sum_{j=s}^k 1/j!\right)^2,
\end{equation}
\begin{equation}
  \xi_\mu(M) := \frac{\left(e^{2 \mu h(M+2)}-e^{2 \mu h} (M+2)+M+1\right)}{(1-e^{2\mu h})^2}.
\end{equation}
Theorem~\ref{thm:detailedODEcounts} then returns the analytical query count upper bound    

\begin{widetext}
 \tiny
\begin{align}
\! \! \! \! \! \! \! \! \! \! \! \!  \mathcal{Q} =\frac{117}{50} \frac{I_0(2)}{0.39\, -\frac{0.08949 \epsilon }{\epsilon +4}}  \left\{\log \! \! \left[ 451 \frac{(\epsilon +4)}{\epsilon} I_0(2) \log ^2\left[2 \left(\sqrt{k+1}+2\right) \sqrt{\left(\frac{\epsilon }{8}+1\right)^2 I_0(2) (g(k)+1) \xi_\mu(M)+k T/h (I_0(2)-1)}+3\right]\right] \! \! +1 \right\} \nonumber \\
\log ^2\left[2 \left(\sqrt{k+1}+2\right) \sqrt{\left(\frac{\epsilon }{8}+1\right)^2 I_0(2) (g(k)+1) \xi_\mu(M)+k M (I_0(2)-1)}+3\right] \nonumber \\ +\frac{581}{250} e \sqrt{\left(\sqrt{k+1}+2\right)^2 \left(\left(\frac{\epsilon }{8}+1\right)^2 I_0(2) (g(k)+1) \xi_\mu(M)+k M (I_0(2)-1)\right)+1} \nonumber \\
   \left\{\! \!\frac{4\pi/25}{\! \sqrt[3]{\! \left(\!\sqrt{k\! +\!1}+\!2\right) \! \!  \sqrt{\! \! \left(\! \frac{\epsilon }{8}\!+\!1\!\right)^2 \! I_0(2) (g(k)\!+\!1) \xi_\mu(M)+k M (I_0(2)\! -\!1)}}}\! +\! \frac{ 133\pi}{125} \log \! \! \left[\! 2 \! \left(\! \! \sqrt{k\!+\!1}+2\! \right) \! \! \sqrt{\! \!\left(\frac{\epsilon }{8}\!+\!1\right)^2 \! \! \! I_0(2) (g(k)+1) \xi_\mu(M)+k M (I_0(2)\! -\! 1)}+3\! \right]\! \! +1\! \! \right\} \nonumber \\ +\left(\sqrt{k+1}+2\right) \log \left[\frac{32 (\epsilon +4) I_0(2)}{\epsilon }\right]\sqrt{\left(\frac{\epsilon }{8}+1\right)^2 I_0(2) (g(k)+1) \xi_\mu(M)+k M (I_0(2)-1)},
\label{eq:explicitquerylognegative}
\end{align}
\normalsize
\end{widetext}
as a function of the target error $\epsilon$, simulation time $T$, for each choice of the log-norm parameter $\mu h \in [-1,0)$. While clearly unwieldy, the main point here is to illustrate that Theorem~\ref{thm:detailedODEcounts} can be used to analytically compute closed, non-asymptotic query cost upper bounds of ODE solvers by simple sequential replacements.

In Fig.~\ref{fig:detailedODEcounts} we set $\epsilon=10^{-10}$, $\|A\|=h = 1$ and plot $\mathcal{Q}$ as a function of $T$ for all possible $\mu(A)$, i.e. $\mu(A) \in [-1,0]$, with $\mu(A) \rightarrow 0^-$ corresponding to the $\alpha (A) \ge 0$ case with $C_{\mathrm{max}}=1$.  We computed $\mathcal{Q}$ under the assumption that we do not perform amplitude amplification, but simply repeat until success. We obtain the shaded region in Fig.~\ref{fig:detailedODEcounts}, which interpolates from the linear in $T$ scaling to the best possible $\sqrt{T}$ scaling. The upper curve in Fig.~\ref{fig:detailedODEcounts} can be upper bounded for $T \in [10^6, 10^{15}]$ and \mbox{$\epsilon = 10^{-10}$} as $\mathcal{Q} \leq  6133 T \log(T)$. 
For the lower curve, \mbox{$Q \leq 7260 \sqrt{T} \log(T)$} with the same specifications. 
We see that incorporating knowledge about stability can lead to orders of magnitude improvements in the query counts (up to $5$ orders of magnitude at $T= 10^{10}$). The best prior algorithm \cite{krovi2022improved} simply inferred $C_{\mathrm{max}}=1$ from $\mu<0$ and hence obtained linear in $T$ scaling corresponding to the upper curve in Fig.~\ref{fig:detailedODEcounts}. However, as suggested before, we expect that a similar analysis to the one we performed here for the algorithm in Ref.~\cite{berry2017quantum} can also be performed for the algorithm in Ref.~\cite{krovi2022improved} to extract an improved scaling.
\end{ex}

\section{Outlook}

Quantum computers are expected to be a disruptive technology for the simulation of quantum mechanical problems. We presently do not know if quantum computing will have a similar impact on important classes of classical dynamical problems. What we do know, however, is that there is no scarcity of computational problems of broad relevance, such as in fluid dynamics and plasma physics, which are often under-resolved on today's largest computers and will remain so in the foreseeable future. If we could identify such use-cases and provide a fully compiled quantum algorithm with reasonable resource estimates, then the scientific and technological value of quantum computers would be substantially increased.
If this development follows a similar line to that for quantum chemistry, increasingly better algorithms and resource estimates will be required, likely starting from very high resource counts and then pushing numbers down for specific problem instances. 

We see the present work as a step in this direction. In giving the first detailed query count analysis for the general problem of solving linear differential equations on a quantum computer, we hope to set a signpost for improved resource estimates in the future, and also set the stage for a detailed analysis of the cost of solving \emph{nonlinear} differential equations. The fast-forwarding results of Theorem~\ref{thm:fast-forwarding}, and the results of the additive error treatment, furthermore, suggest these algorithms may be significantly less costly than previously thought in certain regimes. We illustrated these results presenting polynomial improvements in the $T$ scaling for the collisional Vlasov-Poisson equation, systems of damped, forced harmonic oscillators and dissipative nonlinear systems.

In conclusion, these results will make end-to-end analysis of specific use-cases considerably simpler, since given an equation $\dot{\v{x}} = A \v{x} + \v{b}$ one can focus on the construction of the unitary $U_A$ encoding the ODE matrix $A$, the unitary $U_b$ encoding the forcing $\v{b}$ and the unitary $U_0$ encoding the initial condition $\v{x}(0)$, as well as the circuits required for information extraction from the solution~\eqref{eq:finalstate} or history \eqref{eq:historystate} state. These are questions that can only be analyzed in the context of specific use-cases. The exact number of times these unitaries need to be applied, given the ODE parameters in Table~\ref{table:parameters}, has instead been analytically computed for general ODE systems in this work via Theorem~\ref{thm:detailedODEcounts}.

	\bigskip

  	{\textbf{Authors contributions:} ML conceived the core algorithm analysis and derived an early version of the results. ML and DJ derived the new condition number upper bound, obtained the fast-forwarding results via the stability analysis, computed the detailed cost upper bound for the ODE solver,  derived the success probabilities lower bounds and the error propagation analysis. RBL and ATS contributed to the stability and discretization methods. ML and DJ wrote the paper and RBL, SP, ATS contributed to structuring and reviewing the article. SP and ATS coordinated the collaboration.
	
	\bigskip
	
	{\textbf{Acknowledgements:} RBL and ATS are supported by the U.S. Department of Energy through the Los Alamos National Laboratory, under the ASC Beyond Moore’s Law project. Los Alamos National Laboratory is operated by Triad National Security, LLC, for the National Nuclear Security Administration of U.S. Department of Energy (Contract No. 89233218CNA000001). Many thanks to Yi\u{g}it Suba\c{s}\i \, and Dong An for useful discussions on these topics and Micha\l{} St\k{e}ch\l{}y, Mark Steudtner and Kamil Korzekwa for useful comments on a draft of this work. }

		 \bibliography{Bibliography.bib}

	\onecolumngrid
		\newpage

  	\section*{Appendices}
   
		\appendix
  
 \section{Notation and definitions}
 For reference, we first state the definitions of key quantities used throughout our analysis.
\begin{itemize}
    \item The time-interval $[0,T]$ is uniformly divided up into $M$ subintervals of size $h$, and we index the endpoints via $t=mh$ with $m=0,1,\dots, M$.
    \item We denote by $\v{x}(mh)$ the exact solution under continuous dynamics, at any time $t=mh$, and denote by $\v{x}^m$ the exact solution under the discretized dynamics, obtained by truncating to Taylor order $k$.
    \item We define constant upper and lower bounds on the solution norm as
    \begin{equation}
         x_{\min}  \le \|\v{x}(t)\| \le  x_{\max} , 
    \end{equation}
    for all $t \in [0,T]$. We also define the quantity $ x_{\mathrm{rms}} $ as
    \begin{equation}
         x_{\mathrm{rms}}  := \sqrt{\frac{1}{M} \sum_{m=0}^M \|\v{x}(mh)\|^2},
    \end{equation}
    which is effectively an rms-estimate of the norm of the solution over the time-interval.
    \item (Discretization error-conditions) For the choice of errors in the time-discretization section we consider both a \emph{multiplicative} error condition
    \begin{equation}
        \|\v{x}(mh) - \v{x}^m\| \le \epsilon_{\mbox{\tiny TD}} \|\v{x}(mh)\| \mbox{ for all } m=0,1,\dots M,
    \end{equation}
    as well as an \emph{additive} error condition
    \begin{equation}
        \|\v{x}(mh) - \v{x}^m\| \le \epsilon_{\mbox{\tiny TD}}  \mbox{ for all } m=0,1,\dots M.
    \end{equation}
    In what follows, we will use subscripts of $\times$ and $+$ to refer to the multiplicative and additive choice, respectively.
    \item We define the following functions for the sufficient Taylor truncation order analysis:
    \begin{align}
        \lambda_\times(T) &:= \max \left\{ \| \v{b}\| / x_{\min} , T\right\}\nonumber \\
          \lambda_+(T) &:= \max \left\{ \| \v{b}\| , T  x_{\max}  \right\},
    \end{align}
    with $\times$ referring to the multiplicative error condition and $+$ the additive error condition.
    \item We deal with two choices of quantum outputs: a coherent encoding of the solution state $\v{x}(T)$ at the final time $t=T$ (Eq.~\eqref{eq:finalstate}), and a coherent encoding of the history state data $\v{x}_H$ over the $M$ timesteps (Eq.~\eqref{eq:historystate}). Quantities that depend on this choice of output are labelled with subscripts $T$ and $H$ respectively.
    \item We define
    \begin{align}
        \bar{g}_\times (T)^2 &:= \frac{1}{M+1}\sum_{m=0}^M \frac{\| \v{x}(mh) \|^2}{  \|\v{x}(Mh)\|^2} \nonumber \\
         \bar{g}_+^2 &:= \frac{1}{M+1}\sum_{m=0}^M \left(\frac{1-\epsilon'}{1+\epsilon'}\right)^2 \frac{(\| \v{x}(mh) \|+\epsilon')^2}{  (\|\v{x}(Mh)\|-\epsilon')^2},
    \end{align}
    where $\epsilon' = \epsilon  x_{\mathrm{rms}} /8$ in the case of the history state output and $\epsilon' = \epsilon \|\v{x}(T)\|/8$ in the case of the solution state output. 
    \item To quantify errors in normalized quantum state expressions, and their dependence on the choice of additive/multiplicative time-discretization error and choice of history/solution state we define the quantitites
    \begin{align}
        f^\times_{H} &:= f^\times_{T} := 1 \nonumber \\
        f^+_H &:= \frac{1}{ x_{\mathrm{rms}} }, \quad 
        f^+_T :=\frac{1}{\|\v{x}(Mh)\|}.
    \end{align}
    
    \item Subscript and superscript convention: Note that in a number of places we have options of assuming either the additive error scheme $(+)$ or the multiplicative error scheme $(\times)$. Or to output the history state $(H)$ or solution state $(F)$. To handle all of these possibilities in tandem we shall use a comma notation to denote a choice of one or another option. For example,
    \begin{equation}
        f^{+,\times}_{H,T} = \mbox{$f^+_H$ or $f^+_T$ or $f^\times_H$ or $f^\times_T$. }
    \end{equation}
    The purpose of the notation is to streamline the analysis for the different cases that arise.
\end{itemize}

\section{Roadmap of the analysis}
\label{sec:algorithm}
Before we get into proving technical results, we provide a roadmap of what the analysis will entail. 
 
As discussed above, the recursive equations~\eqref{eq:recursive} are encoded into a linear system following Ref.~\cite{berry2017quantum}. Specifically, we introduce $3$ registers $\ket{m,j,n}$. Register $m$ is a clock degree of freedom, labeling the solution at each time-step $t=mh$ within $[0,T]$, plus potentially further idling steps to increase the amplitude of the solution at time $T$, if necessary. It takes values $m=0, \dots,  M+p/(k+1)$, where $p$ is a multiple of $k+1$ to be fixed. The $j$ register is used to compute Taylor sums, $j=0,\dots,k$. Finally, $n$ labels the components of the vector $\v{x}$, with $n=1,\dots,N$. We encode data about the solution to the ODE problem into the solution $\v{y} := L^{-1} \v{c}$ of a linear system (Appendix~\ref{sec:linearsystemembedding}): 
\begin{equation}
\label{eq:linearsystem}
    L \v{y} = \v{c},
\end{equation}
where $L = L_{1} + L_{2} + L_{3}$, and
\small
\begin{align}
L_1 := & I \otimes I \otimes I, \nonumber\\
L_{2} := & - \sum_{m=0}^{M-1} \sum_{j=1}^{k}  \ketbra{m,j}{m,j-1} \otimes \frac{Ah}{j}  - \theta(p) \sum_{m=M}^{ M+ \frac{p}{k+1}-1} \sum_{j=1}^{k} \ketbra{m,j}{m,j-1} \otimes I \nonumber\\
L_{3} := & - \sum_{m=0}^{M-1} \sum_{j=0}^k   \ketbra{m+1,0}{m,j} \otimes I,
\end{align}
\normalsize
with the Heaviside function $\theta(p) = 0$ for $p=0$ and $\theta(p) =1$ for $p>0$.
The term $L_{2}$ generates Taylor components, which are summed at each $m$ by $L_{3}$ to generate a truncated Taylor series that approximates the dynamics up to $m=M$. If \mbox{$p\geq k+1$}, from $m=M$ to $m=M+p/(k+1)-1$ and for all $j=0,\dots, k$, so $p$ steps overall, the term $L_2$ generates `idling', which allows for an increase of the amplitude of the solution state component. In Eq.~\eqref{eq:linearsystem} we set

 \begin{equation}
    \v{c} = [ \overbrace{\v{x}^0, \underbrace{\v{b}, \v{0}, \dots, \v{0}}_{k}}^{m=0}, \dots, \overbrace{\v{0},\underbrace{\v{b}, \v{0}, \dots, \v{0}}_{k}}^{m=M-1},\underbrace{\v{0}, \dots, \v{0}}_{1+p}].
\end{equation}

It can be shown that the solution to this system can be split up as $\v{y} =  \v{x}_H^{\epsilon_{\mbox{\tiny TD}}} + \v{x}_{\mathrm{junk}}$, where 
\begin{equation}
    \v{x}_H^{\epsilon_{\mbox{\tiny TD}}} = (\v{x}^0, \underbrace{\v{0}, \dots, \v{0}}_{k}, \v{x}^1,\underbrace{\v{0}, \dots, \v{0}}_k,\underbrace{\v{x}^M, \dots, \v{x}^M}_{1+p})
\end{equation} 
and $\v{x}_{\mathrm{junk}}$ is an orthogonal `junk' component, that is unwanted, and can be removed by post-selection or amplitude amplification (specifically, $\v{x}_{\mathrm{junk}}$ is nonzero only where $  \v{x}_H^{\epsilon_{\mbox{\tiny TD}}}$ is zero).

As we discussed, there are two ways of choosing the truncation scale $k$. We can either follow Ref.~\cite{berry2017quantum} and choose $k$ such that multiplicative error is achieved 
\begin{equation}
    \| \v{x}^m - \v{x}(mh) \| \leq \epsilon_{\mbox{\tiny TD}} \| \v{x}(mh)\|, \quad m=0, \dots, M.
\end{equation}
or we can choose it such that additive error is achieved
 \begin{equation}
    \| \v{x}^m - \v{x}(mh) \| \leq \epsilon_{\mbox{\tiny TD}}, \quad m=0, \dots, M.
\end{equation}
The problem statement remains the same, i.e., preparing a state $\epsilon$ close to the solution~\eqref{eq:finalstate} or the history state~\eqref{eq:historystate}. The multiplicative (labeled by $\times$) or additive (label by $+$) choice gives two different algorithms with distinct query complexities. 

When we output the history state, we set $p=0$ and  $\v{x}_H^{\epsilon_{\mbox{\tiny TD}}}$, as a normalized quantum state $\ket{x^{\epsilon_{\mbox{\tiny TD}}}_{H}}$,
provides a discrete approximation to the exact history state encoded in Eq.~\eqref{eq:historystate}. Instead, when we output the solution state, we set $p>0$ and target the components $\v{x}^T$ with amplitude amplification. The time-discretization error analysis can be found in Appendix~~\ref{sec:timediscretizationerror}. 

Eq.~\eqref{eq:linearsystem} is tackled using both the best available asymptotic~\cite{costa2022optimal} and non-asymptotic~\cite{jennings2023efficient} query complexity upper bounds for QLSA. The resource estimate depends on three parameters:
\begin{enumerate}
    \item[(a)] The cost of constructing a block-encoding of $L$ from that of $A$.
    \item[(b)] An upper bound on the condition number of $L$, i.e., on $\kappa_{L}:= \| L\| \| L^{-1}\|$. 
    \item[(c)] The QLSA does not output the ideal solution $\ket{y}$, but a quantum state $\rho$ which is $\epsilon_L$-close to it in $1-$norm. The cost depends on $\epsilon_L$.
\end{enumerate}
All three will be determined from the ODE parameters:

\emph{(a) Constructing the linear system block-encoding.}
We prove that \mbox{$\| L\| \leq \omega_k := \sqrt{k+1} +2$}. Hence we define $\tilde{L} = L/\omega_k$, $\tilde{\v{c}} = \v{c}/\omega_k$ and solve the problem $\tilde{L} \v{y} = \tilde{\v{c}}$. This has $\| \tilde{L} \| \leq 1$, as normally required by QLSA. The detailed resource estimate of a QLSA is in terms of calls to a unitary block-encoding $U_{\tilde{L}}$ of $\tilde{L}$ and a state preparation unitary $U_{\tilde{c}}$ of $\tilde{\v{c}}$. These need to be constructed from the available unitaries $U_A$, $U_b$, $U_0$ (details in Appendix~\ref{sec:block-encoding}). Concerning $U_{\tilde{c}}$, it can be constructed with a single call to both $U_0$ and~$U_b$. As for $U_{\tilde{L}}$, 
 we construct an $( \omega_{\tilde{L}} ,a+ 6,0)$-block-encoding of $\tilde{L}$ with a single call to $U_A$, where $\omega_{\tilde{L}} = (1+\sqrt{k+1}+\omega h)/\omega_k$, as given in Theorem~\ref{thm:detailedODEcounts}. The QLSA requires $6$ further ancilla qubits, as well as $\lceil \log_2[ ((M+1)(k+1) + p) N] \rceil$ qubits to encode~$\tilde{L}$.

\emph{(b) Condition number bound.} Recall that the Lyapunov analysis leads to stability parameters $(\kappa_P, \mu_P(A))$. In Appendix~\ref{sec:conditionnumberanalysis} we find the upper bound on $\kappa_L$ in terms of the stability properties as given in Theorem~\ref{thm:detailedODEcounts}, inequality~\eqref{eq:kappaLbound}. The core asymptotic and non-asymptotic savings in the query count of our analysis are due to this tightened bound, as well as to the flexibility in choosing additive or multiplicative error truncation.

	    We set $p=0$ for the case where we output the history state.  For the case where we output the solution state, we take \mbox{$p=\lceil M^{1/2}/(k+1)\rceil (k+1)$} in the case that $A$ is stable, and \mbox{$p=\lceil M/(k+1)\rceil (k+1)$} 
 otherwise. For both final and history state, we then have $\kappa_L = O(T^{1/2} k)$ in the stable case and $\kappa_L = O(T k)$ otherwise (recall that $T=Mh$). The better scaling of this bound leads to the improvements in Table~\ref{table:comparisons}.
    
    Also note that in Appendices~\ref{sec:linearsystemembedding} and \ref{sec:conditionnumberanalysis} we provide a broader analysis of idling, where we allow varying amplifications at each discrete time-step, specified by integers $\{p_m : m =0, 1, \dots, M\}$. This allows us to increase the amplitude of selected terms $\ket{x(t_m)}$ in the history state \eqref{eq:historystate}, which can be understood as applying a filter to the dynamical data. 

\emph{(c) Required precision.} To determine the required $\epsilon_L$, we first need to analyze the output processing, as we shall now discuss.

We first sketch the analysis assuming we access the idealized linear system solution $\ket{y}$, and then consider the effect of finite precision.

\emph{Idealized analysis.} The ideal solution of the linear system encodes ODE data in the sense that we recover a state close to $\ket{x_{H,T}}$ by performing a projective measurement \mbox{$\{ \Pi_{H,T}, I - \Pi_{H,T}\}$} on the auxiliary/clock registers. Specifically, if 
\begin{equation*}
    \Pi_H = I \otimes \ketbra{0}{0} \otimes I, \quad \Pi_T = \sum_{m=M}^{ M+p/(k+1)} \ketbra{m}{m} \otimes I \otimes I,
\end{equation*}
we have
\begin{equation}
  \Pi_{H} \ketbra{y}{y} \Pi_{H}=   \mathrm{Pr}_{H} \ketbra{x^{\epsilon_{\mbox{\tiny TD}}}_{H}}{x_{H}^{\epsilon_{\mbox{\tiny TD}}}},
\end{equation}
where $\ket{x^{\epsilon_{\mbox{\tiny TD}}}_{H}}$  approximates $\ket{x_H}$ and
\begin{equation}
 \mathrm{Tr}_{CA} [\Pi_{T} \ketbra{y}{y} \Pi_{T}]=   \mathrm{Pr}_{T} \ketbra{x^M}{x^M},
\end{equation}
where the trace discards clock and ancilla qubits and $\ket{x^M}$ approximates $\ket{x_M}$ (Appendix~\ref{sec:timediscretizationerror}).

The values $\mathrm{Pr}_{H,T}$ are success probabilities of the corresponding measurements. 
For outputting the history state we obtain the success probability in \eqref{eq:prH},
so the overhead to repeat the measurement until success or via amplitude amplification is $O(1)$.
To output the solution state we have the expressions given in Theorem~\ref{thm:detailedODEcounts}, which imply
\begin{equation}
\label{eq:prF}
    \mathrm{Pr}_T = \begin{cases}
     \Omega(\bar{g}^{-2}_{+,\times} T^{-1/2}) \quad \textrm{for $A$ stable,} \\
    \Omega(\bar{g}^{-2}_{+,\times}) \quad \mathrm{otherwise}.
    \end{cases}
\end{equation}
 Using amplitude amplification we have an overhead of  $O(1/\sqrt{\mathrm{Pr}_T})$ repetitions.

 \emph{Error analysis.} We then consider the effect of the finite precision $\epsilon_L$ of the QLSA and go back to point~(c) of the previous section, i.e. answer the question of what linear solver error $ \epsilon_L $ suffices, as well as what discretization error $\epsilon_{\mbox{\tiny TD}}$ we need to target, given a total error budget of $\epsilon$. This analysis is performed in Appendix~\ref{sec:errorpropagation}. The errors $ \epsilon_L $ and $\epsilon_{\mbox{\tiny TD}}$ are then chosen as to ensure total error $\epsilon$ in $1$-norm. More specifically, the algorithm outputs a mixed state $\epsilon$-close in $1$-norm to the solution or the history state.

\section{Stability and Lyapunov relations}
\label{sec:stabilitylyapunov} 
For this discussion we refer to Ref.~\cite{plischke2005transient, van2006study} for further details, but many more sources can be found on the topic. In what follows we assume all matrices lie in $\mathbb{C}^{N\times N}$ and all vectors lie in $\mathbb{C}^N$, unless otherwise stated.

Consider the dynamics generated by the matrix $A$, i.e., the homogeneous linear system $\dot{\v{x}}(t) = A \v{x}(t)$, whose formal solution is $\v{x}(t) = e^{At} \v{x}(0)$. A sufficient condition for the norm of $\v{x}(t)$ not to blow-up as $t \rightarrow +\infty$ is that $\|e^{At}\|$ is bounded, since $\|\v{x}(t)\| \leq \|e^{At} \| \| \v{x}(0)\|$. This motivates the definition $\alpha(A) := \max_i \{ \mathrm{Re}(\lambda_i(A))\}$, where $\lambda_i(A)$ are the eigenvalues of $A$. If $\alpha(A) < 0 $, then $A$ is said to be \emph{stable}. We have that 
\begin{equation}
\label{eq:stabledecay}
    \lim_{t \rightarrow \infty} \| e^{At} \| = 0,
\end{equation} 
if and only if $A$ is stable. A necessary and sufficient condition for stability is that there is a matrix $P>0$ (strictly positive) such that 
\begin{equation}\label{eqn:Lyapunov}
    P A + A^\dag P = - Q,
\end{equation}
for some $Q>0$. This is known as the \emph{Lyapunov equation}. This condition is related to the existence of quadratic Lyapunov functions for the  dynamics generated by $A$. Indeed, it can be shown that for the homogeneous linear system $\dot{\v{x}}(t)= A \v{x}(t)$,  Eq.~(\ref{eqn:Lyapunov}) holds if and only if the quadratic form $V(\v{x}) := \v{x}^\dagger P \v{x}$ obeys
 \begin{equation}
    \frac{d}{dt} V(\v{x}(t)) = -\v{x}^\dagger (t) Q \v{x}(t) < 0 ,
\end{equation}
along all trajectories $\v{x}(t)$. The non-negative quadratic form $V(\v{x})$ can be viewed as a form of energy function, and from the fact that  $\dot{V}(\v{x}) =-\v{x}^\dagger (t) Q \v{x}(t) \le -c V(\v{x})$, for some constant $c>0$. This condition ensures stability for the dynamics. 
 Note that the above discussion still holds in the $\alpha(A) = 0$ case, but the strict inequality $c>0$ is replaced with the non-strict one $c\geq 0$, allowing $c=0$ and $C$ not constant.

While, asymptotically, for stable systems the norm of $e^{At}$ decays to zero (Eq.~\eqref{eq:stabledecay}), it can still display transient increases at finite times. One way to capture this is via the \emph{logarithmic norm} (or log-norm for short) of $A$, denoted $\mu(A)$ and defined as the initial growth rate
\begin{equation}
    \mu(A) := \frac{d}{dt^+} \|e^{At} \|_{t=0} = \lim_{h \rightarrow 0^+} \frac{1}{h} ( \| I +Ah \| - 1).
\end{equation}
A significant property of the log-norm is that it is the smallest $\beta \in \mathbb{R}$ such that $\|e^{At}\| \le e^{\beta t}$ for all $t \ge 0$,
\begin{equation}
\label{eq:euclideanlognormbound}
    \| e^{At}\| \leq e^{\mu(A) t}.
\end{equation}
In the case of the operator norm, its value can also be calculated from
\begin{equation}
    \mu(A) = \frac{1}{2} \lambda_{\mathrm{max}} (A+A^\dagger)=\max_{\|\v{x}\| \ne 0} \mathrm{Re} \,\frac{  \langle  A \v{x}, \v{x}\rangle }{\langle  \v{x}, \v{x}\rangle},
\end{equation}
where $\lambda_{\mathrm{max}} (X)$ denotes the largest eigenvalue of any Hermitian matrix $X$. If  $\mu(A) < 0$ then the system is  stable, but even if $\mu(A) >0$ stability can still occur. However, in this case the bound \eqref{eq:euclideanlognormbound} is no longer useful. To account for this regime one can introduce a family of log-norms that depend on $P$:
\begin{equation}
	      \mu_P(A) = \max_{\|\v{x}\| \ne 0} \mathrm{Re} \,\frac{  \langle P A \v{x}, \v{x}\rangle }{\langle P \v{x}, \v{x}\rangle},
	  \end{equation}
	  the \emph{log-norm with respect to the elliptical norm induced by} $P$. This can be seen as an initial growth rate with respect to the norm induced by $P$ [Lemma 3.31, Ref.~\cite{plischke2005transient}], and it recovers the log-norm bound when $P=I$.

   Note that stability implies $\mu_P(A)<0$ for some $P>0$. In particular, denote the condition number of $P$ by $\kappa_P$. One then has [Theorem 3.35, Ref.~\cite{plischke2005transient}]
\begin{equation}
\label{eq:stablebound}
    \| e^{At}\| \leq \sqrt{\kappa_P} e^{\mu_{P}(A) t},
\end{equation}
 which accounts for potential increases of  $ \| e^{At}\|$ in a transient regime, and recovers Eq.~\eqref{eq:euclideanlognormbound} for $P=I$.

Finally, we note that in the case of Schr\"{o}dinger evolution under $A=-iH$ for some Hamiltonian $H$,  we have a constant norm for the dynamics, and so the system is not stable in the above sense; in the literature this case is called ``marginally stable''. Quantum dynamics, or any other linear dynamics with $A$ having eigenvalues with vanishing real parts, are not stable under arbitrarily small perturbations in $A$, while stable systems do allow small perturbations while still remaining stable.

\section{Time discretization error analysis}
\label{sec:timediscretizationerror}
We now describe the formal solution to the ODE, and the discretized solution that our algorithm will return.
The solution of Eq.~\eqref{eq:ODE} can be formally written as 
\begin{equation}
    \v{x}(t) = e^{At} \v{x}(0) + \int_{0}^t dt' e^{A t'} \v{b},
\end{equation}
for any $t \in [0,T]$. 

We proceed by splitting the time interval up into $M$ equal-sized sub-intervals of width $h$, so that we can approximate the dynamics at each of these time points. The exact solution data at these points are $\v{x}(0), \v{x}(h), \dots, \v{x}(mh), \dots, \v{x}(Mh) = \v{x}(T)$. We shall approximate $\v{x}(mh)$ with $\v{x}^m$, which is a vector obtained from a discrete sequence of evolutions obtained from a truncated Taylor series for the exponential $e^{At}$. In particular we use
\begin{align}
    T_k( Ah) & := \sum_{j=0}^k \frac{1}{j!} (Ah)^j \nonumber \\
    S_k(Ah) &:= \sum_{j=1}^k \frac{1}{j!}(Ah)^{j-1},
\end{align}
to approximate the exponentials needed to order $k$.
The discretized dynamics is now defined recursively by $\v{x}^0 := \v{x}(0)$, and
\begin{equation}
\label{eq:truncatedtaylorserieapprox}
     \v{x}^m := T_k(Ah) \v{x}^{m-1} + S_k(Ah) \v{b}, 
\end{equation}
for all $m=1, \dots, M$. The following result gives the condition on the truncation order $k$ so as to obtain a target relative error~$\epsilon_{\mbox{\tiny TD}}$.

\begin{lem}[Time-discretization multiplicative error, Theorem 3 of Ref.~\cite{krovi2022improved}]
\label{lem:timediscretization}
Let \mbox{$ x_{\min}  \leq \min_{t \in [0,T]} \| \v{x}(t)\|$} and take $\|A\| h \leq 1$ for a time-step $h>0$.  Let $\v{x}(t)$ be the solution to $\dot{\v{x}}(t) = A\v{x}(t) + \v{b}$ and $\v{x}^m$ its approximation at time $t= mh$ defined in Eq.~\eqref{eq:truncatedtaylorserieapprox}. If we choose the integer $k$ such that
		\begin{equation}
		\label{eq:krequirement}
		    (k+1)! \geq \frac{M e^3}{\epsilon_{\mbox{\tiny TD}}} \left (1+ T e^2 \frac{\| \v{b}\|}{ x_{\min} } \right ),
		\end{equation}
		then it follows that the relative time-discretization error obeys
		\begin{equation}
		\label{eq:timediscretizationerror}
			\| \v{x}(m h) -  \v{x}^{m} \| \leq \epsilon_{\mbox{\tiny TD}} \| \v{x}(m h)\| , \end{equation}
			for all $m =0,1, \dots M$.
\end{lem}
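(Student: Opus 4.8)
The plan is to organize everything around the one-step recursion, isolate the per-step truncation error, and then control its accumulation over the $M$ steps in a way that is \emph{relative} to $\|\v{x}(mh)\|$ and therefore immune to the exponential-in-$M$ growth a naive bound would produce. First I would record the single-step maps: the exact advance is $\v{x}(mh)=e^{Ah}\v{x}((m-1)h)+\left(\int_0^h e^{As}\,ds\right)\v{b}$, whose degree-$k$ Taylor truncations are exactly $T_k(Ah)$ and the forcing operator built from $S_k(Ah)$. Using the hypothesis $\|A\|h\le 1$, the one-step remainder $R_k:=e^{Ah}-T_k(Ah)=\sum_{j>k}(Ah)^j/j!$ obeys $\|R_k\|\le\sum_{j>k}1/j!\le e/(k+1)!$, and $\|e^{As}\|\le e$ for $s\le h$ controls the forcing remainder analogously. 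These single-step estimates are the only place the factorial enters, and being single-step they contribute bounded ($\sim e$) constants rather than powers of $M$.

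The key step, which I would carry out first for the homogeneous case $\v{b}=\v{0}$, exploits that $e^{Ah}$, $T_k(Ah)$ and $R_k$ are all functions of $A$ and hence commute. Writing $T_k(Ah)=e^{Ah}(I-G)$ with $G:=e^{-Ah}R_k$ gives the identity $\v{x}(mh)-\v{x}^m=(e^{Amh}-T_k(Ah)^m)\v{x}^0=[I-(I-G)^m]\,\v{x}(mh)$, so that $\|\v{x}(mh)-\v{x}^m\|\le[(1+\|G\|)^m-1]\,\|\v{x}(mh)\|$. This is already a relative bound, and crucially it does \emph{not} contain $\|T_k(Ah)\|^m\sim e^m$: what is raised to the $m$-th power is the single-step \emph{ratio} $e^{-Ah}T_k(Ah)=I-G$, which stays near the identity whether the true dynamics grows or decays. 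With $\|G\|\le e\,\|R_k\|\le e^2/(k+1)!$, and the hypothesis ensuring $m\|G\|\lesssim 1$, the elementary estimate $(1+x)^m-1\le e^{mx}-1\le (e-1)\,mx$ for $mx\le 1$ yields a homogeneous relative error $\le Me^3/(k+1)!$, which is precisely the ``$1$'' inside the parenthesis of the hypothesis.

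For the forcing I would reduce to the same mechanism by homogenising: the inhomogeneous system is the homogeneous one generated by $\tilde A=\left(\begin{smallmatrix}A & \v{b}\\ 0 & 0\end{smallmatrix}\right)$ acting on $(\v{x},1)$, whose degree-$k$ Taylor step reproduces exactly the recursion $(T_k(Ah)\v{x}^{m-1}+hS_k(Ah)\v{b},\,1)$. Applying the commuting argument to $\tilde A$ controls the error relative to the extended norm $\sqrt{\|\v{x}(mh)\|^2+1}$; converting back to a bound relative to $\|\v{x}(mh)\|$ costs a factor $\le 1+1/x_{\min}$ via $\|\v{x}(mh)\|\ge x_{\min}$, and the $\v{b}$-block of $\tilde A$ (together with $\|\tilde A\|h$ no longer being bounded by $1$) is what forces the appearance of the $T\|\v{b}\|/x_{\min}$ factor. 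Equivalently, and more transparently for bookkeeping, I would use the error recursion $\v{e}^m=T_k(Ah)\v{e}^{m-1}+\v{\delta}^m$ with local error $\v{\delta}^m=R_k\,\v{x}((m-1)h)+(\text{forcing remainder})\,\v{b}$, solve it as $\v{e}^m=\sum_{l=1}^m T_k(Ah)^{m-l}\v{\delta}^l$, route the homogeneous piece of each $\v{\delta}^l$ through the relative identity above, and bound the $\le M$ forcing pieces (each $\lesssim \tfrac{e}{(k+1)!}\,h\|\v{b}\|$) by $\lesssim \tfrac{T\|\v{b}\|}{(k+1)!}$ up to bounded constants before dividing by $x_{\min}$.

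Finally I would combine the homogeneous and forcing contributions into a relative error of the form $\tfrac{Me^3}{(k+1)!}\left(1+Te^2\|\v{b}\|/x_{\min}\right)$ and invoke the hypothesis $(k+1)!\ge \tfrac{Me^3}{\epsilon_{\mbox{\tiny TD}}}\left(1+Te^2\|\v{b}\|/x_{\min}\right)$ to conclude $\|\v{x}(mh)-\v{x}^m\|\le\epsilon_{\mbox{\tiny TD}}\|\v{x}(mh)\|$ for every $m$. I expect the hard part to be not any individual estimate but the accumulation itself: a direct bound through $\|T_k(Ah)^{m-l}\|\le e^{m-l}$ blows up like $e^M$ and is useless without a stability assumption, so the whole argument hinges on phrasing the accumulated error multiplicatively, as a ratio of discrete to continuous propagators, so that it stays proportional to $\|\v{x}(mh)\|$. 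Making the forcing term cooperate with this relative formulation, and pinning down the exact $e^2,e^3$ constants, is where the real care is required.
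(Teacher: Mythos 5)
Your homogeneous argument is sound, and it is in fact the same mechanism that the paper (following Ref.~\cite{krovi2022improved}) relies on: writing $T_k(Ah)=e^{Ah}(I-G)$ with $G=R_k e^{-Ah}$ and using commutativity, the quantity $I-(I-G)^m=(e^{Amh}-T_k(Ah)^m)e^{-Amh}$ is bounded by $(1+\|G\|)^m-1\le (e-1)m\|G\|\le (e-1)me^2/(k+1)!$, which is exactly the first operator bound of Lemma~\ref{lem:taylor-operator-bounds}. Note that the paper does not prove the present lemma itself (it imports it as Theorem 3 of Ref.~\cite{krovi2022improved}), but it does prove the additive analogue, Lemma~\ref{lem:timediscretizationadditive}, and that proof shows what your proposal is missing.

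The gap is in the forcing term, which is precisely where the $Te^2\|\v{b}\|/x_{\min}$ factor must come from. In your Route B the accumulated forcing error is $\sum_{l=1}^m T_k(Ah)^{m-l}\,h\,(S_\infty-S_k)(Ah)\,\v{b}$, and your per-piece estimate ``$\lesssim e\,h\|\v{b}\|/(k+1)!$'' silently discards the propagator $T_k(Ah)^{m-l}$, whose norm can be of order $e^{m-l}$ under the sole hypothesis $\|A\|h\le 1$; this is exactly the exponential blow-up you warn against in your closing paragraph, and unlike the homogeneous pieces there is no way to absorb $e^{A(m-l)h}\v{b}$ into $\|\v{x}(mh)\|$, because $\v{b}$ is not a propagated solution vector. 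Your ``homogeneous pieces'' also break when $\v{b}\neq\v{0}$: $e^{A(m-l)h}\v{x}((l-1)h)$ differs from a time-$mh$ quantity by $\int_0^{(m-l+1)h}e^{As}\v{b}\,ds$, which is again exponentially large in general. Route A fares no better as stated: $\|\tilde{A}\|h\le 1$ fails, and even though the block structure rescues the remainder bounds (powers $\tilde{A}^j$ contain only one factor of $\v{b}$), carrying the argument through yields a scale-dependent bound of the shape $Me^3(1+h\|\v{b}\|)^2(1+1/x_{\min})/(k+1)!$, quadratic in $\|\v{b}\|$ and with $1/x_{\min}$ multiplying even the $\v{b}$-independent term --- not the claimed inequality. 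The missing idea, used both in Krovi's proof and in the paper's proof of Lemma~\ref{lem:timediscretizationadditive}, is to pass to the variable $\tilde{\v{x}}(t):=A\v{x}(t)+\v{b}$, which evolves \emph{homogeneously}, $\tilde{\v{x}}(t)=e^{At}\tilde{\v{x}}(0)$, and to use the algebraic identity $L_kA=T_k(Ah)^m-I$ (Eq.~\eqref{eq:LAusefulrelation}) to collapse the entire error into
\begin{equation*}
\v{x}(mh)-\v{x}^m=(L_\infty-L_k)\,\tilde{\v{x}}(0)=(L_\infty-L_k)\,e^{-Amh}\,\bigl(A\v{x}(mh)+\v{b}\bigr).
\end{equation*}
This one step makes the inhomogeneous accumulation ride on the same backward-propagator pairing as your homogeneous argument, with the factor $\|\v{b}\|/x_{\min}$ appearing only at the very end through $\|\v{b}\|\le(\|\v{b}\|/x_{\min})\|\v{x}(mh)\|$. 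Without it, your proposal establishes the lemma only in the case $\v{b}=\v{0}$.
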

However there are situations where we might want an additive error of the form
\begin{equation}
    \| \v{x}(m h) -  \v{x}^{m} \| \leq \epsilon_{\mbox{\tiny TD}}.
\end{equation}
To establish a sufficient Taylor truncation order in this setting we introduce some notation that relate the continuous time dynamics and the discrete time dynamics. 

 From Eq.~\eqref{eq:ODEsolution}, a time-increment of $h$ gives the following recursion relation
 
    \begin{align}  
    \v{x}(mh) &= e^{Ah} \v{x}((m-1)h) +  \int_0^{h} ds e^{A s} \v{b}  \nonumber \\ &  
        = e^{Ah} \v{x}((m-1)h) + h \sum_{j=0}^\infty \frac{(Ah)^j}{(j+1)!} \v{b} \nonumber \\ & = T_\infty(Ah) \v{x}((m-1)h) + h S_\infty(Ah) \v{b},
        \label{eq:exactrecursion}
    \end{align}
    for any $m=1,2,\dots M$, where
    \begin{align}
    \label{eq:definitionsTandS}
T_k(Ah) = \sum_{j=0}^{k} \frac{(Ah)^j}{j!}, \quad S_k(Ah) = \sum_{j=0}^{k} \frac{(Ah)^j}{j+1!}.
    \end{align}
    We shall bound errors obtained by approximating this recursion relation via
    \begin{equation}
        \v{x}^{m} = T_k(Ah) \v{x}^{m-1} + hS_k(Ah)\v{b},
    \end{equation}
    and provide estimates on $k$ such that $T_k(Ah)$ sufficiently approximates $T_\infty(Ah)$ and $S_k(Ah)$ sufficiently approximates $S_\infty(Ah)$. 

Now let us look at the integrated relation, giving the solution at time $mh$ from that at time $0$. For the exact solution up to time $t=mh$
    \begin{align}
        \v{x}(mh) &= e^{Amh} \v{x}(0) + \int_0^{mh}\!\! ds \,\, e^{As} \v{b} \nonumber \\
         &= \left (e^{Ah}\right )^m \v{x}(0) + \sum_{j=0}^{m-1} \int_{0}^h ds e^{A (jh+s)}  \v{b}\nonumber \\
        &= \left (e^{Ah}\right )^m \v{x}(0) + h\sum_{j=0}^{m-1}\left (e^{Ah} \right)^j S_\infty(Ah) \v{b}\nonumber \\
                &=e^{Amh} \v{x}(0) + L_\infty \v{b},
                \label{eq:integratedexact}
    \end{align}
    where
    \begin{align}
    \label{eq:definitionL}
        L_k := h\sum_{j=0}^{m-1} T_k(Ah)^j S_k(Ah),
    \end{align}
    for any $k=1,2,\dots, \infty$. We shall then approximate                 $\v{x}(mh)$ with  $\v{x}^m$, where 
    \begin{align}
    \label{eq:integratedapprox}
        \v{x}^m &= T_k(Ah) \v{x}^0 + h\sum_{j=0}^{m-1} T_k(Ah)^j S_k(Ah) \v{b}\nonumber \\
        &= T_k(Ah)\v{x}^0 + L_k \v{b}.
    \end{align}

 Finally, we can derive the following useful relation:
    \begin{align}
        L_k A = AL_k &= Ah\sum_{j=0}^{m-1} T_k(Ah)^j S_k(Ah) \nonumber \\
        &= \sum_{j=0}^{m-1} T_k(Ah)^j \sum_{j=0}^{k-1} \frac{(Ah)^{j+1}}{(j+1)!} \nonumber \\
            &= \sum_{j=0}^{m-1} T_k(Ah)^j (T_k(Ah)-I)\nonumber \\
            &= \sum_{j=0}^{m-1} (T_k(Ah)^{j+1}-T_k(Ah)^j)\nonumber \\
            &=T_k(Ah)^m -I,
            \label{eq:LAusefulrelation}
    \end{align}
    for any choice of truncation order $k=1,2,\dots, \infty$.  
    
The following lemma from~\cite{krovi2022improved} gives the truncation errors for the relevant operators of the dynamics.

\begin{lem}[From \cite{krovi2022improved}] \label{lem:taylor-operator-bounds}
 Assume $\|Ah\| \le 1$, and choose any $m=1,2,\dots, M$. With the definitions in Eq.~\eqref{eq:definitionsTandS} and Eq.~\eqref{eq:definitionL}, if $k$ obeys \mbox{$me^2/(k+1)! \le 1$} it follows that
    \begin{equation}
        \| (e^{Amh} - T_k(Ah)^m)e^{-Amh} \| \le \frac{(e-1)me^2}{(k+1)!},
    \end{equation}
    and also
    \begin{equation}
        \|(L_\infty - L_k)e^{-Amh}\| \le \frac{mTe^5}{(k+1)!}.
    \end{equation}
    
\end{lem}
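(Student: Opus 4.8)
The plan is to reduce everything to functions of the single matrix $B:=Ah$ with $\|B\|\le 1$, so that all operators in play ($e^{B}$, $T_k(B)$, $S_k(B)$, $L_k$, $L_\infty$, $e^{-mB}$) commute and can be manipulated as convergent power series in $B$. The two estimates then follow from one-step factorial tail bounds plus a single geometric-series estimate, together with a careful factorization for the second inequality.

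For the first inequality I would introduce the one-step ``relative error'' operator $\epsilon_{\mathrm{op}}:=(e^{B}-T_k(B))e^{-B}$ and set $G:=T_k(B)e^{-B}=I-\epsilon_{\mathrm{op}}$. Since $e^{Amh}=(e^{B})^m$ and $T_k(Ah)^m=T_k(B)^m$ commute, the quantity to bound is exactly $I-G^m$. First I would bound $\|e^{B}-T_k(B)\|\le\sum_{j\ge k+1}\|B\|^j/j!\le\sum_{j\ge k+1}1/j!\le e/(k+1)!$, using $(k+1+i)!\ge (k+1)!\,i!$, together with $\|e^{-B}\|\le e^{\|B\|}\le e$, so that $\|\epsilon_{\mathrm{op}}\|\le\eta:=e^2/(k+1)!$. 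Telescoping gives $I-G^m=\epsilon_{\mathrm{op}}\sum_{l=0}^{m-1}G^l$, hence $\|I-G^m\|\le\eta\sum_{l=0}^{m-1}(1+\eta)^l=(1+\eta)^m-1$. The hypothesis $me^2/(k+1)!\le1$ is precisely $m\eta\le1$, so I finish with $(1+\eta)^m-1\le e^{m\eta}-1\le(e-1)m\eta$ (using $1+\eta\le e^{\eta}$ and the convexity bound $e^t-1\le(e-1)t$ on $[0,1]$), which is $(e-1)me^2/(k+1)!$.

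For the second inequality the naive route, expanding $(L_\infty-L_k)e^{-Amh}=h\sum_{j=0}^{m-1}e^{(j-m)Ah}\bigl(S_\infty-G^jS_k\bigr)$, is a trap: the factors $\|e^{(j-m)Ah}\|$ can be as large as $e^{m}$, and only a large cancellation across the sum rescues the claimed bound. The clean way to expose that cancellation is the relation~\eqref{eq:LAusefulrelation}, which at the level of the scalar symbol $\lambda$ reads $\lambda L_\infty(\lambda)=e^{\lambda mh}-1$ and $\lambda L_k(\lambda)=T_k(\lambda h)^m-1$. Therefore $(L_\infty(\lambda)-L_k(\lambda))e^{-\lambda mh}=(1-G(\lambda h)^m)/\lambda$, which has only a \emph{removable} singularity at $\lambda=0$ and factorizes as $R(\lambda)\sum_{l=0}^{m-1}G(\lambda h)^l$ with $R(\lambda):=(e^{\lambda h}-T_k(\lambda h))e^{-\lambda h}/\lambda$. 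Because $e^{\lambda h}-T_k(\lambda h)$ vanishes to order $k+1$ at the origin, $R$ is an entire function, i.e. the genuine power series $R=h\,e^{-B}\sum_{j\ge k+1}B^{j-1}/j!$ with no inverse powers of $A$. Matching Taylor coefficients then lifts the scalar identity to the operator identity $(L_\infty-L_k)e^{-Amh}=R\,W$, $W:=\sum_{l=0}^{m-1}G^l$, valid for arbitrary $A$ with $\|Ah\|\le1$ irrespective of invertibility or diagonalizability. It remains to bound the two commuting factors: $\|R\|\le h\cdot(e/(k+1)!)\cdot e=he^2/(k+1)!$ as above, and, reusing the Part-1 geometric estimate under $m\eta\le1$, $\|W\|\le\sum_{l=0}^{m-1}(1+\eta)^l\le(e-1)m$. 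Multiplying, then using $h\le T$ and the crude constant bound $(e-1)e^2<e^5$, yields $\|(L_\infty-L_k)e^{-Amh}\|\le(e-1)me^2h/(k+1)!\le mTe^5/(k+1)!$.

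The main obstacle is the second inequality, and specifically making the ``$R=\epsilon_{\mathrm{op}}/A$'' step rigorous for singular or non-diagonalizable $A$: one must recognize that dividing the order-$(k+1)$ quantity $\epsilon_{\mathrm{op}}$ by $A$ produces an honest power series rather than something requiring $A^{-1}$, and that this factorization is exactly what cancels the spurious $e^{m}$ growth visible term-by-term in the naive expansion. Everything else reduces to the factorial tail estimate $\sum_{j\ge k+1}1/j!\le e/(k+1)!$ and the geometric-series bound already established in Part~1.
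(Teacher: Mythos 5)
Your proof is correct, and the comparison here is unusual: the paper does not prove this lemma at all — it imports it verbatim, by citation, from Ref.~\cite{krovi2022improved} — so your argument has to be judged as a self-contained replacement rather than against an in-paper proof. Your first inequality is essentially the standard telescoping argument of the cited source: writing $I-G^m=\epsilon_{\mathrm{op}}\sum_{l=0}^{m-1}G^l$ with $G=T_k(B)e^{-B}$, $\|\epsilon_{\mathrm{op}}\|\le e^2/(k+1)!=:\eta$, and then $(1+\eta)^m-1\le e^{m\eta}-1\le(e-1)m\eta$ under the hypothesis $m\eta\le 1$; every step checks out. The genuinely nontrivial content is your second inequality. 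You correctly diagnose that a term-by-term bound on $h\sum_{j=0}^{m-1}\bigl(e^{Ajh}S_\infty-T_k(Ah)^jS_k\bigr)e^{-Amh}$ is doomed by the factors $\|e^{-A(m-j)h}\|\le e^{m-j}$, and your fix — using Eq.~\eqref{eq:LAusefulrelation} to write $(L_\infty-L_k)e^{-Amh}=R\,W$, where $R$ is the entire function obtained by dividing the order-$(k+1)$ quantity $(e^{\lambda h}-T_k(\lambda h))e^{-\lambda h}$ by $\lambda$ — is rigorous: the identity holds for every scalar $\lambda$, hence coefficient-wise as power series, hence for any bounded $A$ upon substitution, with no invertibility or diagonalizability assumption. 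I verified the factorization independently (for $m=2$, $k=1$, $h=1$ it reproduces $[1-(1+\lambda)^2e^{-2\lambda}]/\lambda$ exactly), and the resulting estimates $\|R\|\le he^2/(k+1)!$ and $\|W\|\le(e-1)m$ are sound.

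Two remarks. First, your route actually yields the strictly stronger bound $(e-1)me^2h/(k+1)!$, i.e.\ $h$ in place of $T$ and a constant well below $e^5$; the stated bound then follows with much room to spare, since $h\le T$ and $(e-1)e^2<e^5$. Second, your argument is pinned to the convention $S_k(x)=\sum_{j=1}^{k}x^{j-1}/j!$ of Eq.~\eqref{eq:truncation}, which is the convention under which Eq.~\eqref{eq:LAusefulrelation} is actually derived; the appendix definition in Eq.~\eqref{eq:definitionsTandS} has an off-by-one upper limit under which that relation acquires a correction term proportional to $\frac{(Ah)^{k+1}}{(k+1)!}\sum_{j}T_k(Ah)^j$. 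You were right to anchor the proof on Eq.~\eqref{eq:LAusefulrelation} rather than on the lemma's literal pointer to Eq.~\eqref{eq:definitionsTandS}, but this mismatch in the paper is worth flagging, since your key factorization is exact only in the former convention.
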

Therefore, the error grows linearly or quadratically in time and decreases exponentially with the truncation order. We now prove the following discretization error result, which parallels the bounds in~\cite{krovi2022improved} for multiplicative errors.

\begin{lem}[Time-discretization error with additive errors]
\label{lem:timediscretizationadditive}
Let $h>0$ be chosen so that $\|A\| h \leq 1$. Let $\v{x}(t)$ be the solution to $\dot{\v{x}}(t) = A\v{x}(t) + \v{b}$ and $\v{x}^m$ its approximation at time $t= mh$ defined in Eq.~\eqref{eq:truncatedtaylorserieapprox}.  Choose $\epsilon \in (0,1]$ and let $ x_{\max}  \geq \sup_{t \in [0, T]} \| \v{x}(t)\|$. If we choose an integer $k$ such that
	
  \begin{align}
  \label{eq:krequirementadditive}
    (k+1)! \geq  \frac{Me^3  x_{\max} }{\epsilon} \left( 1 +  T e^2 \frac{\|\v{b}\|}{ x_{\max} } \right),
\end{align}
then it follows that the time-discretization error obeys
		\begin{equation}
		\label{eq:timediscretizationerror}
			\| \v{x}(m h) -  \v{x}^{m} \| \leq \epsilon, 
   \end{equation}
			for all $m =0,1, \dots M$.
\end{lem}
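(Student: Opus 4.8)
The plan is to mirror the proof of the multiplicative bound in Lemma~\ref{lem:timediscretization}, but replacing its relative normalization by $ x_{\min} $ with an absolute bound by $ x_{\max} $, and to reduce everything to the two operator estimates already established in Lemma~\ref{lem:taylor-operator-bounds}. First I would subtract the integrated representation of the approximate solution, Eq.~\eqref{eq:integratedapprox}, from that of the exact solution, Eq.~\eqref{eq:integratedexact}, to write the error as a homogeneous contribution plus a forcing contribution,
\begin{equation*}
    \v{x}(mh) - \v{x}^m = \left(e^{Amh} - T_k(Ah)^m\right)\v{x}(0) + \left(L_\infty - L_k\right)\v{b}.
\end{equation*}
The point of this split is that each operator appearing here is exactly the object controlled, after a right multiplication by $e^{-Amh}$, in Lemma~\ref{lem:taylor-operator-bounds}; a fully combined treatment via $\v{x}(mh)-\v{x}^m=(L_\infty-L_k)(A\v{x}(0)+\v{b})$ would instead lose a factor of $T$ and is too lossy to recover the claimed threshold.

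Next I would factor $e^{-Amh}e^{Amh}=\id$ to the right of each operator, apply the triangle inequality, and insert the two bounds of Lemma~\ref{lem:taylor-operator-bounds}, obtaining
\begin{equation*}
    \|\v{x}(mh) - \v{x}^m\| \le \frac{(e-1)me^2}{(k+1)!}\,\|e^{Amh}\v{x}(0)\| + \frac{mTe^5}{(k+1)!}\,\|e^{Amh}\v{b}\|,
\end{equation*}
valid once the precondition $me^2/(k+1)!\le 1$ is checked to hold under the stated threshold on $(k+1)!$. Using $m\le M$, the elementary inequality $(e-1)e^2 = e^3-e^2\le e^3$ to absorb the homogeneous constant, and $e^5=e^3e^2$ in the forcing term, this collapses to $\|\v{x}(mh)-\v{x}^m\| \le \frac{Me^3}{(k+1)!}\left(\|e^{Amh}\v{x}(0)\| + Te^2\,\|e^{Amh}\v{b}\|\right)$. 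Comparing with the hypothesis, rewritten as $(k+1)! \ge \frac{Me^3}{\epsilon}\left( x_{\max}  + Te^2\|\v{b}\|\right)$, the conclusion $\|\v{x}(mh)-\v{x}^m\|\le\epsilon$ then follows immediately provided one has $\|e^{Amh}\v{x}(0)\|\le  x_{\max} $ and $\|e^{Amh}\v{b}\|\le\|\v{b}\|$.

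The main obstacle is therefore this last reduction: converting the semigroup-weighted norms $\|e^{Amh}\v{x}(0)\|$ and $\|e^{Amh}\v{b}\|$ into the a~priori data $ x_{\max} $ and $\|\v{b}\|$. The homogeneous factor is the cleanest case: setting $\v{b}=\v{0}$ in Eq.~\eqref{eq:integratedexact} gives $e^{Amh}\v{x}(0)=\v{x}(mh)$ exactly, so $\|e^{Amh}\v{x}(0)\|\le  x_{\max} $ directly from the definition of $ x_{\max} $; equivalently, passing to the augmented homogeneous generator $\left(\begin{smallmatrix}A & \v{b}\\ 0 & 0\end{smallmatrix}\right)$, which turns the forcing into homogeneous evolution of a constant extra coordinate, identifies this weighted norm with the norm of the genuine solution. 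The forcing factor $\|e^{Amh}\v{b}\|\le\|\v{b}\|$ is the delicate one, as it is precisely here that the additive scheme trades the relative $ x_{\min} $-normalization of Lemma~\ref{lem:timediscretization} for an absolute bound by $ x_{\max} $, and it is the step that must be handled with care in the regime under consideration. Once both norm conversions are in place the factorial bookkeeping is routine, and the bound holds uniformly over $m=0,1,\dots,M$, the case $m=0$ being trivial since $\v{x}^0=\v{x}(0)$.
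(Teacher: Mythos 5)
There is a genuine gap: the two ``norm conversions'' on which your argument ultimately rests are both false in general, and they cannot be repaired within your route. Since the lemma makes no stability assumption on $A$, the weighted norms $\|e^{Amh}\v{x}(0)\|$ and $\|e^{Amh}\v{b}\|$ need not be controlled by $ x_{\max} $ and $\|\v{b}\|$. A one-dimensional counterexample: take $A=a>0$ with $ah\le 1$, $b\neq 0$ and $x(0)=-b/a$. Then $x(t)\equiv -b/a$, so $ x_{\max} =|b|/a$, while $e^{amh}x(0)$ and $e^{amh}b$ grow exponentially in $mh$; your intermediate bound is therefore exponentially weaker than the claim and cannot yield Eq.~\eqref{eq:timediscretizationerror} under the stated threshold on $(k+1)!$. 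Your justification of the first conversion --- ``setting $\v{b}=\v{0}$ in Eq.~\eqref{eq:integratedexact} gives $e^{Amh}\v{x}(0)=\v{x}(mh)$'' --- changes the ODE: $ x_{\max} $ bounds the inhomogeneous solution $\v{x}(mh)=e^{Amh}\v{x}(0)+L_\infty\v{b}$, not the homogeneous flow of the initial data. The augmented-generator remark does not rescue it: the top block of the augmented homogeneous evolution is again $e^{Amh}\v{x}(0)+L_\infty\v{b}$, not $e^{Amh}\v{x}(0)$ alone. The second conversion, $\|e^{Amh}\v{b}\|\le\|\v{b}\|$, which you flag as ``delicate'' but never prove, would require $\|e^{At}\|\le 1$ (e.g.\ non-positive log-norm), an assumption absent from the lemma.

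The missing idea is exactly the step you discard as ``too lossy.'' The paper's proof first combines the two terms into $(L_\infty-L_k)(A\v{x}(0)+\v{b})$ --- legitimate since $(L_\infty-L_k)A=e^{Amh}-T_k(Ah)^m$ by Eq.~\eqref{eq:LAusefulrelation} --- and then exploits that $\tilde{\v{x}}(t):=A\v{x}(t)+\v{b}$ evolves homogeneously, so that $A\v{x}(0)+\v{b}=e^{-Amh}\left(A\v{x}(mh)+\v{b}\right)$. Only after this transfer to time $mh$ does it re-split, again via $(L_\infty-L_k)A=e^{Amh}-T_k(Ah)^m$, obtaining
\begin{equation*}
\|\v{x}(mh)-\v{x}^m\| \le \left\|\left(e^{Amh}-T_k(Ah)^m\right)e^{-Amh}\right\|\,\|\v{x}(mh)\| + \left\|\left(L_\infty-L_k\right)e^{-Amh}\right\|\,\|\v{b}\|,
\end{equation*}
where Lemma~\ref{lem:taylor-operator-bounds} now applies directly and the vectors carry no $e^{Amh}$ weights: $\|\v{x}(mh)\|\le  x_{\max} $ by definition, and $\|\v{b}\|$ enters as is. No factor of $T$ is lost by the combination: it is not used as a terminal estimate but as a pivot to move the data from time $0$ to time $mh$, and that pivot is precisely what makes the additive bound hold for arbitrary (possibly unstable) $A$.
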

\begin{proof}
 Noting that $\v{x}(0) = \v{x}^0$, from Eq.~\eqref{eq:integratedexact}, \eqref{eq:integratedapprox}, as well as the relation in Eq.~\eqref{eq:LAusefulrelation},
    \begin{align}
        \|\v{x}(m h) -  \v{x}^{m} \| &= \| (e^{Amh} -T_k(Ah)^m) \v{x}(0) + (L_\infty - L_k)\v{b} \| \nonumber \\
        &= \| (L_\infty - L_k) A\v{x}(0) + (L_\infty - L_k)\v{b} \| \nonumber \\
                &= \| (L_\infty - L_k) (A\v{x}(0) + \v{b}) \|.
    \end{align}
    Using the recursion relation \eqref{eq:exactrecursion} and the useful relation in \eqref{eq:LAusefulrelation}, we see that the vector $\tilde{\v{x}}(mh):= A\v{x}(mh)+\v{b}$ evolves homogeneously:
    \begin{align}
     \tilde{\v{x}}(mh) & =  A\v{x}(mh)+\v{b} = A e^{A mh} \v{x}(0) + A L_\infty \v{b} +  \v{b} \nonumber \\    & =  e^{A mh} A\v{x}(0) + (e^{A mh} - I) \v{b} +  \v{b} = e^{A mh} (A\v{x}(0) + \v{b}) \nonumber \\ & = e^{A mh} \tilde{\v{x}}(0).
    \end{align}
    
 Therefore, we have
     \begin{align}
        \|\v{x}(m h) -  \v{x}^{m} \| &= \| (L_\infty - L_k) e^{-Amh}(A\v{x}(mh) + \v{b}) \| \nonumber \\
        & \le \| (L_\infty - L_k) A e^{-Amh}\v{x}(mh)\| + \|(L_\infty - L_k) e^{-Amh}\v{b} \| \nonumber \\
                & =\left \| \left [e^{Amh} - T_k(Ah)^m \right ]e^{-Amh}\v{x}(mh) \right\| + \left\|(L_\infty - L_k) e^{-Amh}\v{b} \right \| \nonumber \\
                & \le \left \| \left [e^{Amh} - T_k(Ah)^m \right ]e^{-Amh}\right \| \|\v{x}(mh) \| + \left\|(L_\infty - L_k) e^{-Amh}\right \| \|\v{b}\|.
    \end{align}
    We now assume that integer $k$ obeys $Me^2/(k+1)! \le 1$, and we use the operator bounds in Lemma~\ref{lem:taylor-operator-bounds} to deduce
      \begin{align}
        \|\v{x}(m h) -  \v{x}^{m} \| 
                & \le \frac{(e-1)me^2}{(k+1)!}\|\v{x}(mh) \| + \frac{mTe^5}{(k+1)!}\|\v{b}\|  \nonumber\\ & \le \frac{(e-1)Me^2}{(k+1)!} x_{\max}  + \frac{MTe^5}{(k+1)!}\|\v{b}\|
               \nonumber \\
                & \le \frac{Me^3}{(k+1)!} x_{\max}  + \frac{MTe^5}{(k+1)!}\|\v{b}\|
    \end{align}
    The result follows.
\end{proof}

We can provide an explicit upper bound on the $k$ required from the following lemma.
\begin{lem}[Sufficient truncation scale]\label{lem:sufficientk}
Under the conditions of  Lemma~\ref{lem:timediscretization} $(\times)$ and Lemma~\ref{lem:timediscretizationadditive} $(+)$, we respectively have that the choice
\begin{equation}
\label{eq:kconditiongeneral}
   k_{+,\times} = \left \lceil \frac{3\log(s_{\times,+})/2+1}{1+ \log( \log(s_{\times,+})/2e+1/e)}-1 \right \rceil
\end{equation}
suffices to ensure multiplicative ($\times$) error, $\|\v{x}(mh) - \v{x}^m\| \le \epsilon_{\mbox{\tiny TD}} \|\v{x}(mh)\|$, or additive ($+$) error, $\|\v{x}(mh) - \v{x}^m\| \le \epsilon_{\mbox{\tiny TD}}$, over all $M=T/h$ time-steps of the dynamics. 
Here
\begin{align}
    s_\times = \frac{M e^3}{\epsilon_{\mbox{\tiny TD}}} \left(1+ T e^2 \frac{\| \v{b}\|}{ x_{\min} }\right), \quad s_+ = \frac{Me^3  x_{\max} }{\epsilon_{\mbox{\tiny TD}}} \left( 1 +  T e^2 \frac{\|\v{b}\|}{ x_{\max} } \right)
\end{align}
Therefore, to obtain either an additive error or multiplicative error $\epsilon_{\mbox{\tiny TD}}$ it suffices to choose
\begin{equation}
    k_{+,\times} = \Theta \left ( \log \lambda_{+,\times} (T) /\epsilon_{\mbox{\tiny TD}} \right ),
\end{equation}
where $\lambda_\times(T) = \max \left\{ T^2 \| \v{b}\| / x_{\min} , T\right\}$, $\lambda_+(T) = \max \left\{ T^2 \| \v{b}\| ,  T  x_{\max}  \right\}$.
\end{lem}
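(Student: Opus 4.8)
The plan is to notice that Lemma~\ref{lem:timediscretization} and Lemma~\ref{lem:timediscretizationadditive} reduce the entire time-discretization guarantee to a single factorial inequality: the multiplicative error bound is secured as soon as $(k+1)!\ge s_\times$ and the additive one as soon as $(k+1)!\ge s_+$, where $s_\times$ and $s_+$ are precisely the right-hand sides of Eqs.~\eqref{eq:krequirement} and~\eqref{eq:krequirementadditive}. Thus the whole lemma is an exercise in \emph{inverting the factorial}: given a threshold $s$, exhibit a closed-form $k$ for which $(k+1)!\ge s$. Since $s_\times$ and $s_+$ enter identically, I would prove the statement once for a generic threshold $s$ and then substitute. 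Because $s\ge Me^3\ge e^3$, the relevant range is $L:=\log s\ge 3$ (all logarithms natural, consistent with the $e$'s in $s$ and the $1/e,2e$ in the formula for $k$), a range that also subsumes the side condition $Me^2/(k+1)!\le 1$ required by Lemma~\ref{lem:taylor-operator-bounds}.

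First I would simplify the stated denominator, using $\log(s)/2e + 1/e = \tfrac{1}{e}(L/2+1)$ to get $1+\log(\log(s)/2e+1/e)=\log(1+L/2)$, so that the prescription reads $k=\lceil n_0-1\rceil$ with $n_0:=(3L/2+1)/\log(1+L/2)$, and hence $n:=k+1=\lceil n_0\rceil\ge n_0>1$. Writing the goal as $\log(n!)\ge L$, I would replace the factorial by the elementary integral bound $\log(n!)\ge \int_1^n \log t\,\md t = n\log n-n+1=:f(n)$, valid for $n\ge 1$, and observe that $f$ is strictly increasing for $n>1$ since $f'(n)=\log n$. By monotonicity it then suffices to verify the single inequality $f(n_0)\ge L$, because that forces $\log(n!)\ge f(n)\ge f(n_0)\ge L$, i.e. $(k+1)!\ge s$.

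The technical core, and the step I expect to be the main obstacle, is verifying $f(n_0)\ge L$ \emph{uniformly} in $L\ge 3$. The defining identity $n_0\log(1+L/2)=3L/2+1$ lets me eliminate one logarithm: from $n_0\log n_0 = (3L/2+1)+n_0\log\!\big(n_0/(1+L/2)\big)$ the target collapses to $L/2+2 \ge n_0\big(1-\log\tfrac{n_0}{1+L/2}\big)$, which after the substitution $u=1+L/2\ge 5/2$ becomes a one-variable transcendental inequality. I would then close it by bounding the $\log\log$-type correction and checking that the slack deliberately built into the numerator (the factor $3/2$ rather than $1$, together with the $+1$) dominates that correction throughout the range --- concretely, by confirming $f(n_0(L))-L\ge 0$ at $L=3$ and controlling its derivative so the margin, though shrinking as $L\to\infty$, never changes sign. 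This is exactly where the tuned constants earn their keep.

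Finally, for the asymptotic claim I would relate the thresholds to $\lambda_{+,\times}(T)$. With $M=\Theta(T)$ one has $s_\times=\Theta(\lambda_\times(T)/\epsilon_{\mbox{\tiny TD}})$ and $s_+=\Theta(\lambda_+(T)/\epsilon_{\mbox{\tiny TD}})$, so $L=\Theta(\log(\lambda_{+,\times}(T)/\epsilon_{\mbox{\tiny TD}}))$. The crude estimate $\log(1+L/2)\ge 1$ (valid for $L\ge 2(e-1)$) gives $n_0\le 3L/2+1$, whence $k=O(L)=O(\log(\lambda_{+,\times}(T)/\epsilon_{\mbox{\tiny TD}}))$, which establishes the claimed logarithmic scaling; the explicit formula in fact exhibits the sharper growth $\Theta(\log s/\log\log s)$ coming from the $\log(1+L/2)$ in the denominator.
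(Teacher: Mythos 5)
Your reduction coincides with the paper's: both arguments observe that Lemmas~\ref{lem:timediscretization} and~\ref{lem:timediscretizationadditive} collapse the claim to the single condition $(k+1)!\ge s_{\times,+}$, and both then invert the factorial through a Stirling-type bound. After that you genuinely diverge. The paper \emph{derives} the stated formula: writing $f_S(n)=(n/e)^n$, the exact inverse is $f_S^{-1}(s)=\exp[W(\log s/e)+1]$ with $W$ the Lambert function, and the closed form in Eq.~\eqref{eq:kconditiongeneral} is precisely the bound $e^{W(x)}\le (x+y)/(1+\log y)$ of Ref.~\cite{hoorfar2008inequalities} evaluated at $x=\log(s)/e$, $y=1/e+\log(s)/2e$, so no verification step remains. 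You instead take the formula as given and try to \emph{verify} it against the integral bound $\log(n!)\ge f(n):=n\log n-n+1$, which requires proving the one-variable inequality $f(n_0)\ge L$ for $n_0=(3L/2+1)/\log(1+L/2)$ and all $L=\log s\ge 3$. (Your simplification of the denominator to $\log(1+L/2)$ is correct, as is the concluding asymptotic step $s_{+,\times}=\Theta(\lambda_{+,\times}(T)/\epsilon_{\mbox{\tiny TD}})$, hence $k=O(\log(\lambda_{+,\times}(T)/\epsilon_{\mbox{\tiny TD}}))$.)

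The gap is that this inequality --- the entire technical content of your route --- is never established, and the plan you sketch for it would not go through as described. You propose to check the margin $g(L):=f(n_0(L))-L$ at $L=3$ and then control its derivative, describing the margin as ``shrinking as $L\to\infty$'' while never changing sign. That picture is wrong: $g$ is non-monotone, decreasing from $\approx 2.75$ at $L=3$ to a minimum of $\approx 1$ near $L\approx 34$ (where $n_0(L)\approx 1+L/2$), and then increasing without bound. An endpoint check plus a one-signed derivative estimate therefore cannot close the argument; one must locate and lower-bound an interior minimum. The inequality is in fact true, and your route can be completed cleanly: setting $u=1+L/2$ and $r=n_0/u$, the defining identity $n_0\log u=3L/2+1$ gives
\begin{equation*}
    f(n_0)-L \;=\; u\bigl(1-r+r\log r\bigr)+1 \;\ge\; 1,
\end{equation*}
since $\phi(r)=1-r+r\log r\ge 0$ for all $r>0$ (note $\phi(1)=0$ and $\phi'(r)=\log r$). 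With this (or an equivalent) argument inserted, your verification strategy becomes a legitimate, self-contained alternative to the paper's Lambert-$W$ derivation; without it, the central step of your proof is an assertion resting on a mistaken description of how the slack behaves.
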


\begin{proof}
From Lemma~\ref{lem:timediscretizationadditive} we obtain the relevant $s_+$ for additive errors, while for purely multiplicative errors we use Lemma~\ref{lem:timediscretization} for $s_{\times}$. In either case,  it suffices to take $
    (k+1)! \geq s_{\times,+}$.
From the Stirling approximation we have that 
\begin{equation}
 n! \geq  \left ( \frac{n}{e} \right )^n,
\end{equation}
so it suffices to take $(\frac{k+1}{e})^{k+1} \geq s^*$, i.e., $k =  \lceil f^{-1}(s_*) \rceil -1$ where we define $f(s) = \left( \frac{s}{e} \right)^s$. However we have that for any $s >0$
\begin{equation}
    f^{-1} (s) = \exp \left [ W \left (\frac{\log s}{e} \right ) +1 \right ],
\end{equation}
where $W(s)$ is the principal branch of the Lambert W-function. We now use the upper bound
\begin{equation}
    e^{W(s)} \le \frac{s+y}{1 + \log y},
\end{equation}
which holds for any $s > -1/e$ and any $y> 1/e$~\cite{hoorfar2008inequalities}.

In our case this reads
\begin{equation}
     e^{W(\log(s)/e)} \leq \frac{\log(s)/e + y}{1+ \log y}
\end{equation}

We now take $y = 1/e+\log(s)/2e$, and therefore have that
  \begin{equation}
    f^{-1}(s)-1 \leq e^{W(\log(s)/e)+1}-1 \leq \frac{3\log(s)/2+1}{1+ \log( \log(s)/2e+1/e)}-1 = O(\log (s)).
\end{equation}
We now set $s=s_{\times,+}$ and round up to the nearest integer to obtain the result claimed.
\end{proof}

This expression shows that the truncation order in either the multiplicative or additive case grows at worst logarithmically in time, $k_{+,\times}= O (\log(T))$, for the homogeneous ODE case ($\v{b} =\v{0}$), while for the inhomogenous case ($\v{b} \ne \v{0}$) one can generally obtain $k_{+,\times}=O(T)$ if $\lambda_{+,\times}(T)$ decreases or increases exponentially in $T$, as we discuss later. 

For homogeneous systems  of interest the value of $k$ is quite mild. For example, taking $\lambda_\times(T) = 0$, $M=10^6$ and $\epsilon_{\mbox{\tiny TD}} = 10^{-9}$ the bound returns $k=19$ (rounding up $18.2$). Directly solving numerically the original bound in Lemma~\ref{lem:timediscretization} for  $k$ gives $k=18$, so the analytical bound is relatively tight.

The recursively defined dynamics generates terms such as $T_k(Ah)^m$, which approximate $(e^{Ah})^m = e^{Amh}$ for all $m=0,1,\dots ,M$. For the condition number analysis we need an estimate on how well the truncated expression $T_k(Ah)^m$ approximates the exact exponential $e^{Amh}$ for the continuous dynamics. The next result determines the relative error  for these matrices in the discrete dynamics. 

\begin{lem}[Truncation error for dynamics] \label{lem:rel-Tm-error}
Let $k=k_{+,\times}$ obey the time-discretization condition given by Eq.~\eqref{eq:kconditiongeneral} with error $\epsilon_{\mbox{\tiny TD}}$, and let $T_k(Ah)$ be defined as in Eq.~\eqref{eq:definitionsTandS}. Then for all $m=0,1,\dots M$ we have the following relative error for the norm $\|T_k(Ah)^m\|$ relative to $\|e^{Ahm}\|$.
\begin{equation}
\|T_k(Ah)^m\| \le \|e^{Ahm}\| (1+\epsilon_{\mbox{\tiny TD}}),
\end{equation}
for multiplicative error, and
 \begin{align}
	\|T_k(Ah)^m\| \le \|e^{Ahm}\| \left (1+\frac{\epsilon_{\mbox{\tiny TD}}}{  x_{\max} } \right ).
\end{align}
for additive error.
\end{lem}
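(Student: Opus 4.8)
The plan is to control the operator $T_k(Ah)^m$ by comparing it directly to the exact propagator $e^{Amh}$ and then factoring the latter's norm out of the error term. Concretely, I would write $T_k(Ah)^m = e^{Amh} + \left(T_k(Ah)^m - e^{Amh}\right)$ and apply the triangle inequality to obtain $\|T_k(Ah)^m\| \le \|e^{Amh}\| + \|T_k(Ah)^m - e^{Amh}\|$. The point is that the difference term should itself be proportional to $\|e^{Amh}\|$, so that the whole bound collapses into a multiple of $\|e^{Amh}\|$ as required. To expose this, I would insert the identity $e^{-Amh}e^{Amh}$ and use submultiplicativity of the operator norm, giving $\|T_k(Ah)^m - e^{Amh}\| = \|\left(T_k(Ah)^m - e^{Amh}\right)e^{-Amh}\,e^{Amh}\| \le \|\left(T_k(Ah)^m - e^{Amh}\right)e^{-Amh}\|\,\|e^{Amh}\|$, which is legitimate since $e^{Amh}$ is always invertible.

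The next step invokes Lemma~\ref{lem:taylor-operator-bounds}, which bounds exactly this right-multiplied difference: $\|(e^{Amh} - T_k(Ah)^m)e^{-Amh}\| \le (e-1)me^2/(k+1)!$, valid once $me^2/(k+1)! \le 1$. Combining with the previous display yields $\|T_k(Ah)^m\| \le \|e^{Amh}\|\left(1 + \frac{(e-1)me^2}{(k+1)!}\right)$. It then remains only to show that the coefficient $\frac{(e-1)me^2}{(k+1)!}$ is bounded by $\epsilon_{\mbox{\tiny TD}}$ in the multiplicative case and by $\epsilon_{\mbox{\tiny TD}}/x_{\max}$ in the additive case, uniformly over $m \le M$.

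For this I would discharge the $k$-condition. In the multiplicative case, Lemma~\ref{lem:timediscretization} gives $(k+1)! \ge Me^3/\epsilon_{\mbox{\tiny TD}}$ (dropping the nonnegative forcing term), hence $\frac{(e-1)me^2}{(k+1)!} \le \frac{(e-1)m}{Me}\epsilon_{\mbox{\tiny TD}} \le \frac{e-1}{e}\epsilon_{\mbox{\tiny TD}} < \epsilon_{\mbox{\tiny TD}}$, using $m \le M$. In the additive case, Lemma~\ref{lem:timediscretizationadditive} gives $(k+1)! \ge Me^3 x_{\max}/\epsilon_{\mbox{\tiny TD}}$, and the identical manipulation produces $\frac{(e-1)me^2}{(k+1)!} \le \frac{e-1}{e}\frac{\epsilon_{\mbox{\tiny TD}}}{x_{\max}} < \frac{\epsilon_{\mbox{\tiny TD}}}{x_{\max}}$, which is the claimed bound. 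Along the way I would verify that these same $k$-conditions, together with $\epsilon_{\mbox{\tiny TD}} \le 1$ (and, in the additive case, $x_{\max}$ not smaller than order $\epsilon_{\mbox{\tiny TD}}$), guarantee the hypothesis $me^2/(k+1)! \le 1$ needed by Lemma~\ref{lem:taylor-operator-bounds}.

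This argument is essentially bookkeeping, so I do not expect a serious obstacle; the only thing to be careful about is that the slack factor $\frac{e-1}{e}\approx 0.63$ is precisely what lets the raw bound of Lemma~\ref{lem:taylor-operator-bounds} be absorbed into the target $\epsilon_{\mbox{\tiny TD}}$ despite the extra $(e-1)e^2$ constants, and that the factoring step relies on the invertibility of $e^{Amh}$. The additive/multiplicative dichotomy enters only through which of the two $k$-conditions is substituted, so the two claimed inequalities follow in parallel.
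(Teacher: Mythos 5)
Your proposal is correct and follows essentially the same route as the paper's proof: triangle inequality, insertion of $e^{-Amh}e^{Amh}$ with submultiplicativity, the bound from Lemma~\ref{lem:taylor-operator-bounds}, and then discharging the constant via the multiplicative or additive $k$-condition. If anything, you are slightly more careful than the paper, which writes the submultiplicativity step as an equality and does not explicitly verify the hypothesis $me^2/(k+1)!\le 1$ of Lemma~\ref{lem:taylor-operator-bounds}.
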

\begin{proof}
    We have that
    \begin{align}
        \|T_k(Ah)^m\| &\le \|e^{Amh}\| + \|T_k(Ah)^m - e^{Amh}\| \nonumber\\
        &=\|e^{Amh}\| + \|(T_k(Ah)^m - e^{Amh})e^{-Amh} e^{Amh}\| \nonumber\\
        &=\left (1 + \|(T_k(Ah)^m - e^{Amh})e^{-Amh}\| \right) \|e^{Amh}\| \nonumber\\
                &=\left (1 + \frac{(e-1)me^2}{(k+1)!} \right) \| e^{Amh}\| \nonumber\\
                                &=\left (1 + \frac{Me^3}{(k+1)!} \right) \|e^{Amh}\|.
    \end{align}
For multiplicative error, $k = k_{\times}$,  Eq.~\eqref{eq:krequirement} implies
    	\begin{align}
  \|T_k(Ah)^m\| \le \|e^{Ahm}\| (1+\epsilon_{\mbox{\tiny TD}}).
    	\end{align}
    	 For additive error, $k=k_+$, from Eq.~\eqref{eq:krequirementadditive} we have 
    	 \begin{align}
    	   \|T_k(Ah)^m\| \le \|e^{Ahm}\| \left (1+\frac{\epsilon_{\mbox{\tiny TD}}}{  x_{\max} } \right ).
    	   \end{align}
\end{proof}
This extends a result given for multiplicative error in Ref.~\cite{krovi2022improved}.

Note that so far we discussed time-discretization errors at the level of unnormalized solution vectors, but these can be readily translated into errors on the normalized quantum states. 
\begin{lem}
\label{lem:relativetounitvectorerror}
  For any two vectors
    \begin{equation}
        \ket{z} = \frac{1}{\|\v{z}\|} (\v{z}^1,\dots,\v{z}^J), \quad \quad   \ket{\tilde{z}} = \frac{1}{\|\tilde{\v{z}}\|} (\tilde{\v{z}}^1,\dots,\tilde{\v{z}}^J),
    \end{equation}
    if the relative error of each component is upper bounded as $\| \v{z}^j - \tilde{\v{z}}^j\| \leq a\| \tilde{\v{z}}^j\| + b$ for all $j=1,\dots, J$, then it follows that for the corresponding normalized vectors we have
    \begin{equation}
 \|\ket{z} - \ket{\tilde{z}}\| \le 2 \sqrt{a^2 +2a b \sum_j \frac{\|\tilde{\v{z}}^j\|}{\|\tilde{\v{z}}\|^2} + J\frac{b^2}{\|\tilde{\v{z}}\|^2}}.
\end{equation}
\end{lem}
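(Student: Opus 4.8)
The plan is to factor the argument into two independent pieces: a generic inequality bounding the distance between the \emph{directions} of two nonzero vectors, and a blockwise accounting of the componentwise error hypothesis. Throughout, write $\v{z} = (\v{z}^1,\dots,\v{z}^J)$ and $\tilde{\v{z}} = (\tilde{\v{z}}^1,\dots,\tilde{\v{z}}^J)$ for the unnormalized vectors, so that $\ket{z} = \v{z}/\|\v{z}\|$ and $\ket{\tilde{z}} = \tilde{\v{z}}/\|\tilde{\v{z}}\|$, and note that the blocks are mutually orthogonal so norms add in quadrature.

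First I would establish the standard normalization bound
\begin{equation}
\left\| \frac{\v{z}}{\|\v{z}\|} - \frac{\tilde{\v{z}}}{\|\tilde{\v{z}}\|} \right\| \le \frac{2\,\|\v{z} - \tilde{\v{z}}\|}{\|\tilde{\v{z}}\|}.
\end{equation}
To obtain it, insert and subtract $\v{z}/\|\tilde{\v{z}}\|$ and apply the triangle inequality: this produces a term of norm $\|\v{z} - \tilde{\v{z}}\|/\|\tilde{\v{z}}\|$ together with a term $\|\v{z}\|\,\big|\,\|\tilde{\v{z}}\|^{-1} - \|\v{z}\|^{-1}\big|$, which simplifies to $\big|\,\|\v{z}\| - \|\tilde{\v{z}}\|\,\big|/\|\tilde{\v{z}}\|$ and is bounded by $\|\v{z} - \tilde{\v{z}}\|/\|\tilde{\v{z}}\|$ via the reverse triangle inequality. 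Summing the two contributions gives the factor $2$. The choice of $\|\tilde{\v{z}}\|$ (rather than $\|\v{z}\|$) in the denominator is deliberate, since the hypothesis is stated relative to the tilded blocks.

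Next I would bound the numerator using the block structure and the hypothesis $\|\v{z}^j - \tilde{\v{z}}^j\| \le a\|\tilde{\v{z}}^j\| + b$. Orthogonality gives $\|\v{z} - \tilde{\v{z}}\|^2 = \sum_j \|\v{z}^j - \tilde{\v{z}}^j\|^2$, and expanding the squared termwise bound yields
\begin{equation}
\|\v{z} - \tilde{\v{z}}\|^2 \le \sum_j \big( a^2 \|\tilde{\v{z}}^j\|^2 + 2ab\,\|\tilde{\v{z}}^j\| + b^2 \big) = a^2 \|\tilde{\v{z}}\|^2 + 2ab \sum_j \|\tilde{\v{z}}^j\| + J b^2,
\end{equation}
using $\sum_j \|\tilde{\v{z}}^j\|^2 = \|\tilde{\v{z}}\|^2$ and $\sum_j b^2 = J b^2$. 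Dividing by $\|\tilde{\v{z}}\|^2$, taking the square root, and multiplying by the factor $2$ from the first step reproduces the claimed inequality exactly.

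The computation is entirely routine and I do not anticipate a genuine obstacle; the whole content is the direction bound plus a Pythagorean sum. The only point requiring any care is selecting the normalization so that the denominator is $\|\tilde{\v{z}}\|$: this is what turns $\sum_j \|\tilde{\v{z}}^j\|$ and $J b^2$ into the stated $\|\tilde{\v{z}}\|^{-2}$-normalized quantities after division, whereas the symmetric version of the direction bound would leave the result in terms of $\|\v{z}\|$, which the hypothesis does not directly control.
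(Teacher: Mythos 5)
Your proposal is correct and follows essentially the same route as the paper's proof: inserting the intermediate vector $\v{z}/\|\tilde{\v{z}}\|$ to get the factor-$2$ direction bound with $\|\tilde{\v{z}}\|$ in the denominator, then summing the blockwise hypothesis in quadrature. The paper's argument is identical step for step, so there is nothing to add.
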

\begin{proof}
\begin{align}
\|\ket{z} - \ket{\tilde{z}}\| &= \left\| \frac{\v{z}}{\| \v{z}\|} - \frac{\tilde{\v{z}}}{\| \tilde{\v{z}}\|}   \right\| \leq \left\| \frac{\v{z}}{\| \v{z}\|} - \frac{\v{z}}{\| \tilde{\v{z}}\|}  \right\| + \left\| \frac{\v{z}}{\| \tilde{\v{z}}\|} - \frac{\tilde{\v{z}}}{\| \tilde{\v{z}}\|}\right\|  \nonumber\\ & =    \frac{\left|\|\v{z}\| - \| \tilde{\v{z}}\|\right|}{\| \tilde{\v{z}}\|} +  \frac{\|\v{z} - \tilde{\v{z}}\|}{\|\tilde{\v{z}}\|}\leq   2 \frac{\|\v{z} - \tilde{\v{z}}\|}{\|\tilde{\v{z}}\|}.
\end{align}
\begin{equation}
    \|\v{z} - \tilde{\v{z}}\| = \left(\sum_{j} \| \v{z}^j - \tilde{\v{z}}^j\|^2\right)^{1/2} \leq   \left(\sum_j (a\| \tilde{\v{z}}^j\|+b)^2\right)^{1/2} = \sqrt{a^2 \|\tilde{\v{z}}\|^2 +2ab \sum_j \|\tilde{\v{z}}^j\| + J b^2},
\end{equation}
and therefore,
\begin{equation}
\|\ket{z} - \ket{\tilde{z}}\| \le 2 \sqrt{a^2 +2a b \sum_j \frac{\|\tilde{\v{z}}^j\|}{\|\tilde{\v{z}}\|^2} + J\frac{b^2}{\|\tilde{\v{z}}\|^2}}.
\end{equation}
\end{proof}
Recall the definitions (we set $p=0$ for the history state)
\begin{equation}
\label{eq:definitionxHapprox}
\v{x}_H^{\epsilon_{\mbox{\tiny TD}}} = (\v{x}^0, 0, \dots, 0, \v{x}^1,0, \dots, 0,\v{x}^M),
\end{equation}
\begin{equation}
\label{eq:definitionxH}
\v{x}_H = (\v{x}(0), 0, \dots, 0, \v{x}(h),0, \dots, 0,\v{x}(Mh)),
\end{equation}
 and recall that we denote by kets the corresponding normalized states. Then, defining the corresponding normalized coherent encodings of these vectors we have the following corollary.
\begin{cor}
\label{cor:unitvectorerrors}
Let $\ket{x_H}$ and $\ket{x^{\epsilon_{\mbox{\tiny TD}}}_H}$ be the normalized states associated to the vectors in Eq.~\eqref{eq:definitionxH} and Eq.~\eqref{eq:definitionxHapprox}, respectively. Assume that $\|\v{x}(mh) - \v{x}^m\| \le \epsilon_{\mbox{\tiny TD}} \|\v{x}(mh)\|$ for all $m=1,\dots, M$ (multiplicative error). 
Then we have that
\begin{align}
    \| \ket{x^{\epsilon_{\mbox{\tiny TD}}}_{H}} - \ket{x_{H}} \| \leq 2 \epsilon_{\mbox{\tiny TD}} \quad  \quad
    \| \ket{x^M} - \ket{x(Mh)} \| \leq 2 \epsilon_{\mbox{\tiny TD}},
\end{align}

Assume that $\|\v{x}(mh) - \v{x}^m\| \le \epsilon_{\mbox{\tiny TD}}$ for all $m=1,\dots, M$ (additive error). 
Then we have that
\begin{align}
    \| \ket{x^{\epsilon_{\mbox{\tiny TD}}}_{H}} - \ket{x_{H}} \| &\leq 2\epsilon_{\mbox{\tiny TD}} \frac{\sqrt{M}}{\|\v{x}_H\|} = 2\epsilon_{\mbox{\tiny TD}} \frac{1}{\sqrt{\sum_m \|\v{x}(mh)\|^2/M}} \nonumber \\
    \| \ket{x^M} - \ket{x(Mh)} \| &\leq 2\epsilon_{\mbox{\tiny TD}} \frac{1}{\|\v{x}(Mh)\|} .
\end{align}
Therefore, 
\begin{align}
    \| \ket{x^{\epsilon_{\mbox{\tiny TD}}}_{H}} - \ket{x_{H}} \| \leq 2 \epsilon_{\mbox{\tiny TD}}f^{+,\times}_H \quad \quad
    \| \ket{x^M} - \ket{x(Mh)} \| \leq 2 \epsilon_{\mbox{\tiny TD}}f^{+,\times}_T,
\end{align}
where 
\begin{align}
           f^\times_{H} := f^\times_{T} := 1, \quad 
        f^+_H := \frac{1}{\sqrt{\sum_m \|\v{x}(mh)\|^2/M}}, \quad 
        f^+_T :=\frac{1}{\|\v{x}(Mh)\|}.
\end{align}
\end{cor}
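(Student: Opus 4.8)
The plan is to derive both error bounds as direct applications of Lemma~\ref{lem:relativetounitvectorerror}, identifying the approximate (discretized) vector with $\v{z}$ and the exact vector with $\tilde{\v{z}}$, so that the component-wise hypothesis $\|\v{z}^j - \tilde{\v{z}}^j\| \le a\|\tilde{\v{z}}^j\| + b$ becomes precisely the time-discretization bound once $a$ and $b$ are chosen correctly. In the history-state case the index $j$ runs over the full register $(m,j)$, but the only nonzero blocks of $\v{x}_H^{\epsilon_{\mbox{\tiny TD}}}$ and $\v{x}_H$ sit at the clock positions $m=0,\dots,M$; every $j\ge 1$ entry vanishes identically in both vectors. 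In the solution-state case there is a single relevant block, $\v{x}^M$ versus $\v{x}(Mh)$.

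First I would treat the multiplicative case. Here the hypothesis reads $\|\v{x}^m-\v{x}(mh)\|\le \epsilon_{\mbox{\tiny TD}}\|\v{x}(mh)\|$, so I set $a=\epsilon_{\mbox{\tiny TD}}$ and $b=0$. With $b=0$ the cross term $2ab\sum_j\|\tilde{\v{z}}^j\|/\|\tilde{\v{z}}\|^2$ and the additive term $Jb^2/\|\tilde{\v{z}}\|^2$ in Lemma~\ref{lem:relativetounitvectorerror} both vanish, and the bound collapses to $2\sqrt{a^2}=2\epsilon_{\mbox{\tiny TD}}$. Crucially, because $b=0$ the padding zeros contribute nothing at all, so the same value $2\epsilon_{\mbox{\tiny TD}}$ holds for both the history state and the solution state, which is exactly the statement $f^\times_H=f^\times_T=1$.

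The additive case is where the one genuine subtlety arises. For the solution state the vector has a single block, so I apply the lemma with $a=0$, $b=\epsilon_{\mbox{\tiny TD}}$, $J=1$ and $\|\tilde{\v{z}}\|=\|\v{x}(Mh)\|$, obtaining $2\epsilon_{\mbox{\tiny TD}}/\|\v{x}(Mh)\|$ at once. For the history state, however, I cannot feed the full ambient dimension into the $Jb^2$ term of the lemma: the zero blocks satisfy the hypothesis with error exactly $0$, yet counting them in $J$ would spuriously inflate the bound. The fix is to invoke only the intermediate inequality $\|\ket{z}-\ket{\tilde{z}}\|\le 2\|\v{z}-\tilde{\v{z}}\|/\|\tilde{\v{z}}\|$ established in the proof of Lemma~\ref{lem:relativetounitvectorerror}, and to bound $\|\v{z}-\tilde{\v{z}}\|$ by summing squared errors only over the genuinely time-labelled blocks. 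Since $\v{x}^0:=\v{x}(0)$ holds exactly, the $m=0$ term drops out and $\|\v{z}-\tilde{\v{z}}\|^2=\sum_{m=1}^M\|\v{x}^m-\v{x}(mh)\|^2\le M\,\epsilon_{\mbox{\tiny TD}}^2$, so that $\|\ket{x^{\epsilon_{\mbox{\tiny TD}}}_H}-\ket{x_H}\|\le 2\epsilon_{\mbox{\tiny TD}}\sqrt{M}/\|\v{x}_H\|$.

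Finally I would rewrite this in the advertised form. Using $\|\v{x}_H\|^2=\sum_{m=0}^M\|\v{x}(mh)\|^2=M\,x_{\mathrm{rms}}^2$, the factor $\sqrt{M}/\|\v{x}_H\|$ equals $1/x_{\mathrm{rms}}=1/\sqrt{\sum_m\|\v{x}(mh)\|^2/M}$, which identifies $f^+_H=1/x_{\mathrm{rms}}$ and, from the single-block estimate, $f^+_T=1/\|\v{x}(Mh)\|$. The only real obstacle is the bookkeeping of the padding zeros in the additive history-state estimate, where one must restrict the sum to the nonzero time blocks rather than apply the lemma with the full register dimension; everything else is a direct substitution into Lemma~\ref{lem:relativetounitvectorerror}.
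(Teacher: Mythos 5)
Your proof is correct and takes essentially the same route as the paper, whose entire proof is the one-line remark that the result follows by applying Lemma~\ref{lem:relativetounitvectorerror} to the time-discretization Lemmas~\ref{lem:timediscretization} and~\ref{lem:timediscretizationadditive}. Your extra care in the additive history-state case --- falling back on the intermediate inequality $\|\ket{z}-\ket{\tilde z}\| \le 2\|\v{z}-\tilde{\v{z}}\|/\|\tilde{\v{z}}\|$ from the lemma's proof and summing squared errors only over the $M$ blocks with nonzero error (the padding zeros and the exact $m=0$ block contribute nothing) --- is precisely the bookkeeping required to obtain the stated $\sqrt{M}$ factor rather than a spurious $\sqrt{(M+1)+Mk}$ from a naive application of the lemma's final bound, a detail the paper's terse proof leaves implicit.
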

\begin{proof}
    The result follows by applying Lemma~\ref{lem:relativetounitvectorerror} to the time-discretization error Lemmas~\ref{lem:timediscretization}, \ref{lem:timediscretizationadditive}.
\end{proof}

We next turn to embedding the ODE system into a linear system of equations.

\section{Linear system embedding of the dynamics}
\label{sec:linearsystemembedding}

In what follows we assume tunable idling parameters at each step $m=0,1,\dots, M$ so as to cover both the case where we want to output the history state,  the case where we want to output the solution state, as well as generalized settings, as anticipated in the main text. These idling parameters are specified by integers $p_0, p_1, \dots , p_M$. The case $p_m=0$ except for $p_M=p$ corresponds to the solution state output case, while $p_m=p$ for all $m$ corresponds to the history state output case.

The linear system reduction for the ODE protocol in Ref.~\cite{berry2017quantum}, here in a generalized version, has the form $L= L_1 + L_2 +L_3$ where
\begin{align}
L_1&:=  \sum_{m=0}^{M-1} \sum_{j=0}^{p_m+k} \ketbra{m,j}{m,j}\otimes I + \theta(p_M)\sum_{j=1}^{p_M} \ketbra{M,j}{M,j} \otimes I, \nonumber \\
L_2 &:=- \sum_{m=0}^{M} \theta(p_m) \sum_{j=1}^{p_m}  \ketbra{m,j}{m,j-1}\otimes I -\sum_{m=0}^{M-1} \sum_{j=1}^{k}  \ketbra{m,p_m+j}{m,p_m+j-1} \otimes \frac{Ah}{j},  \nonumber \\
L_3 &:= - \sum_{m=0}^{M-1}  \sum_{j=0}^k   \ketbra{m+1,0}{m,p_m+j}  \otimes I. 
\label{eq:L}
\end{align}
Here we use the Heaviside function $\theta(p) =1$ for $p>0$ and zero otherwise.
The integers $\{p_m\}$ govern the filtering mentioned, as they amplify the solution vector at specified times.  Note by setting $p_m=0$ for all $m=0,\dots,M-1$ and $p_M=p$ we recover the simplified form of the matrix $L$ given in Eq.~\eqref{eq:linearsystem}, but with a different indexing. The difference here is that the idling phase is entirely encoded in the $j$ index, whereas in  in Eq.~\eqref{eq:linearsystem} it was more efficiently encoded in all $m\geq M$ and $j=0,\dots,k$ indices. This amounts to a simple relabeling. To compute the condition number bound it is convenient to use the form above, whereas to construct the block-encodings it is convenient to adopt the form used in the main text.

It can be useful to visualize the matrix $L$ in the computational basis:
\begin{equation}
	\label{eq:berrymatrix}
	L =
	\begin{bmatrix} 
	I\\
				 -I & I\\
						   &  \ddots & \ddots \\
							&	  & -I & I \\
	&&&	- A h & I &							&							&  	&							&							&  	&							&							\\ 	&&&	
		& -\frac{Ah}{2} & I &							&							&  	&							&							&  	&							\\ 	&&&	
		& 							& \frac{- A h}{3} & I &							&							&  	&							&							&  	\\	&&&	
		&							&							&   \ddots & \ddots& &&&&   \\	&&&	
		&							&							&   			&-\frac{Ah}{k}& I   \\ 	&&&	
		-I			&			-I				&		-I				&		\dots & - I 				& - I & I &\\	 	&&&	
					&							&						&		 & 				&  & -I & I\\ 	&&&	
							&							&						&		 & 				&  &  &   \ddots & \ddots \\ 	&&&	
								&							&						&		 & 				& &  &  & -I & I\\ 	&&&	
		&	&				&					&		 &  				& & && - A h & I &&	\\ 	&&&		&	&					&					&		 &  &				& & && \ddots & \ddots &&	
	\end{bmatrix}. 
\end{equation}
$L$ has $M$ repeated blocks. The  $m^{th}$ block contains $p_m$ repetitions of the $-I, I$ `idling' and $k$ sub-blocks $-Ah/j$, $j=1,\dots,k$, generating the Taylor series components. Note that $A$ only appears to the first power, and hence \eqref{eq:berrymatrix} is a relatively simple encoding compared to alternative ones involving matrix exponentiation~\cite{berry2022quantum} or matrix inversion subroutines~\cite{krovi2022improved}.

A simple extension of Eq.~(16)-(24) in Ref.~\cite{berry2017quantum} shows that if we consider $L\v{y} = \v{c}$ with the vector $\v{c}$ given in terms of the following subvectors
\begin{align}
\label{eq:cdefinition}
  \v{c}^{(0,0)} &= \v{x}^0 \nonumber \\  
      \v{c}^{(m,p_m+1)}&=h\v{b} \mbox{ for all } m=0, \dots , M-1\nonumber \\
      \v{c}^{(m,r)} &=\v{0} \mbox{ otherwise},
\end{align}
then we get
\small
\begin{equation}
\label{eq:vectorberry2}
	\v{y} = L^{-1} \v{c} = [ \underbrace{\v{y}^{(0,0)}, \dots,  \v{y}^{(0,p_0)}}_{p_0+1}, \v{y}^{(0,p_0+1)}, \dots, \v{y}^{(0,p_0+k)},\underbrace{\v{y}^{(1,0)},  \dots, \v{y}^{(1,p_1)}}_{p_1+1}, \dots, \v{y}^{(M-1,p_{M+1}+k)}, \underbrace{\v{y}^{(M,0)}, \dots , \v{y}^{(M,p_M)}}_{p_M+1}],
\end{equation}
\normalsize
where for each $m$ we have $\v{y}^{(m,j)} = \v{x}^m$ for all $j=0,\dots, p_m$. Here $\v{x}^m$ are the components of the history state in the truncated Taylor series approach (Eq.~\eqref{eq:truncatedtaylorserieapprox}). Instead, $\v{y}^{(m,j)}$ for $j = p_m+1,\dots, p_m +k$ is, with the above labeling, junk data that can be removed by a post-selection measurement on the qubits labeling the indices $(m,j)$, or via amplitude amplification. We show that this form of $\v{y}$ is the result of the matrix inversion in Remark~\ref{rmk:solutionlinearsystem}. Note that for the choice $p_m = 0$ for all $m$ we recover the history state output for the $(m,0)$ labels. For the choice $p_m =0$ for $m<M$ and $p_m = p$ for $m=M$ we obtain a vector that can be post-processed to output the solution state with bounded probability by amplifying the $m= M$ component of the above state, if we take $p$ large enough. The exact value of $p$ required will be analyzed in more detail in Sec.~\ref{sec:successprobabilityanalysis}.

\section{Condition number upper bound of the linear embedding matrix}
\label{sec:conditionnumberanalysis}

We next turn to the condition number of the matrix $L$, which determines the time-complexity of the algorithm. Previously Ref.~\cite{berry2017quantum} presented upper bounds on the condition number of $L$ for a special choice of $\{p_m\}$ and under several assumptions about $A$, in particular: that $A$ has $\alpha(A) \le 0$; that $A$ is diagonalizable; and that one has a bound on the condition number $\kappa_{V}$ of the matrix $V$ that diagonalizes $A$. Here we extend the scope of that analysis and strengthen the bounds. This will give exponential savings in $\log N$ for certain classes of matrices (similarly to Ref.~\cite{krovi2022improved}, but in the context of a simpler linear embedding), as well as a potential sublinear scaling with time for the large class of stable dynamics.

\begin{theorem}[Condition number bound -- General case]
\label{thm:conditionnumbergeneral}
Given $A\in \mathbb{C}^{N\times N}$, let $C(t)$ be an upper bound to $\|e^{A t} \|$.  Assume a Taylor truncation scale $k=k_{+,\times}$ which gives rise to either a multiplicative or additive error $\epsilon_{\mbox{\tiny TD}}$ for the norm distance of the discretized dynamics $\v{x}^m$ to the exact dynamics $\v{x}(mh)$. Choose idling parameters $p_0, p_1, \dots, p_M$ and for convention set $p_{-1}=0$. Then the condition number of $L$ in Eq.~\eqref{eq:L} is upper bounded as 
\small
\begin{align}
\! \! \!   \kappa_L &\leq \left[\sum_{m=0}^{M} \left( \left (1+ \frac{\epsilon_{\mbox{\tiny TD}}}{c_{+,\times}} \right)^2 (p_m+I_0(2)) \sum_{l=0}^m C(l h)^2 \left( 1 +  p_{l-1} \left(1+ \frac{\epsilon_{\mbox{\tiny TD}}}{c_{+,\times}}\right )^2 C(h)^2  + g(k) \right ) + \frac{p_m(p_m+1)}{2} \right) + \right .\nonumber \\
&\left . +( \sum_{m=0}^{M}p_m + M k) (I_0(2)-1)\right]^{1/2}   (\sqrt{k+1}+2).
\end{align}
\normalsize
Here $I_0(2) \approx 2.2796$, with $I_0(x)$ the modified Bessel function of order zero, and	where $
	    g(k) :=  \sum_{s=1}^k \left(s! \sum_{j=s}^k 1/j!\right)^2 \le ek$,  $c_+ = x_{\mathrm{max}}$ and $c_\times = 1$.
\end{theorem}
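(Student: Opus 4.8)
The plan is to exploit the factorization $\kappa_L = \|L\|\,\|L^{-1}\|$ and treat the two factors separately. The factor $\sqrt{k+1}+2$ appearing in the claim is exactly the bound $\|L\|\le \sqrt{k+1}+2$ established when constructing the block-encoding of $L$ (Appendix~\ref{sec:block-encoding}), which follows from the triangle inequality applied to $L_1+L_2+L_3$ together with $\|Ah\|\le 1$ and $\sum_j 1/j!\le e$. Hence the entire content of the theorem is the bracketed factor, which must be an upper bound on $\|L^{-1}\|$.

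To bound $\|L^{-1}\|$ without introducing any dependence on $N$, I would view $L^{-1}$ as a matrix of $N\times N$ blocks $B_{(m,j),(l,j_0)}$ and use the standard block-matrix estimate $\|L^{-1}\|\le \|\hat B\|$, where $\hat B$ is the scalar matrix whose entries are the operator norms $\|B_{(m,j),(l,j_0)}\|$ (this follows from $\|L^{-1}\v{y}\|^2=\sum_{i}\|\sum_j B_{ij}\v{y}_j\|^2 \le \|\hat B\,\tilde{\v{y}}\|^2$ with $\tilde{\v{y}}_j=\|\v{y}_j\|$), followed by $\|\hat B\|\le \|\hat B\|_F = \big(\sum \|B_{(m,j),(l,j_0)}\|^2\big)^{1/2}$. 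This is precisely the square-root-of-a-sum-of-squares shape of the claim, and since each block norm (e.g.\ $\|T_k(Ah)^r\|$, or $\|I\|=1$) is $N$-independent, so is the resulting bound. Each block is read off by solving $L\v{y}=\v{e}_{(l,j_0)}$ for a standard basis seed: because $L$ is block lower-triangular with identity diagonal (Eq.~\eqref{eq:berrymatrix}), forward substitution produces a solution supported on output indices $(m,j)$ with $m\ge l$, propagated by powers $T_k(Ah)^{r}$ through the Taylor sub-blocks and summation rows, and by identity copies through the $p_m$ idling sub-blocks, exactly as in the derivation of Eq.~\eqref{eq:vectorberry2}.

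The norm of each block is then controlled by two elementary estimates: $\|(Ah)^j/j!\|\le 1/j!$ (from $\|Ah\|\le1$) and $\|T_k(Ah)^r\|\le (1+\epsilon_{\mbox{\tiny TD}}/c_{+,\times})\,C(rh)$, the latter being Lemma~\ref{lem:rel-Tm-error} combined with $\|e^{Arh}\|\le C(rh)$. Squaring and summing the columns then produces the terms in the claim through recognizable closed forms: summing $\sum_j (1/j!)^2$ over Taylor indices yields the modified Bessel value $I_0(2)=\sum_{j\ge0}1/(j!)^2$ (and $I_0(2)-1$ when the $j=0$ term is excluded); the junk Taylor blocks seeded at $(m,p_m+j)$ contribute the weights $\big(s!\sum_{j=s}^k 1/j!\big)^2$ that define $g(k)$; the purely idling blocks contribute the triangular count $p_m(p_m+1)/2$; and the propagation of a seed across the time blocks produces the factor $\sum_{l=0}^m C(lh)^2$, with the extra $p_{l-1}(1+\epsilon_{\mbox{\tiny TD}}/c_{+,\times})^2 C(h)^2$ recording the single-step evolution out of the preceding idling phase.

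The main obstacle is the bookkeeping in this last step: correctly enumerating which blocks occur in each column of $L^{-1}$, and with what multiplicity, once the general idling parameters $\{p_m\}$ are switched on, and then regrouping the resulting cross-terms so that they collapse into the compact combination of $I_0(2)$, $g(k)$, $p_m(p_m+1)/2$ and $\sum_l C(lh)^2$. Keeping the relative-error prefactor $(1+\epsilon_{\mbox{\tiny TD}}/c_{+,\times})$ attached to precisely the powers of $T_k(Ah)$ that arise from genuine time evolution — and to nothing else — is the delicate part; everything else is triangle-inequality estimation. Once the block norms are assembled into $\|\hat B\|_F$ and multiplied by $\sqrt{k+1}+2$, the stated bound follows.
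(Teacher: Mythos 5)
Your proposal is correct and follows essentially the same route as the paper: the paper bounds $\|L^{-1}\|$ by solving $L\v{\gamma}=\v{\beta}$ for an arbitrary unit vector via forward substitution and then applying a row-wise Cauchy--Schwarz estimate, which is exactly your block-norm Frobenius bound $\|L^{-1}\|\le\|\hat{B}\|_F$ written in different notation. All the substantive ingredients you identify --- the factor $\|L\|\le\sqrt{k+1}+2$, the use of Lemma~\ref{lem:rel-Tm-error} to control $\|T_k(Ah)^r\|$ by $(1+\epsilon_{\mbox{\tiny TD}}/c_{+,\times})C(rh)$, and the bookkeeping that produces $I_0(2)$, $g(k)$, $p_m(p_m+1)/2$ and $\sum_{l}C(lh)^2$ --- coincide with the paper's own proof.
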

\begin{proof}
A general vector will be divided up into sectors as follows
\begin{equation}
\v{\gamma} = [ \v{\gamma}^{(0,0)} \, \v{\gamma}^{(0,1)}\, \cdots \v{\gamma}^{(0,p_0)} \, \v{\gamma}^{(0,p_0+1)}\, \cdots \v{\gamma}^{(0,p_0+k)} \, \v{\gamma}^{(1,0)}\, \v{\gamma}^{(1,1)} \, \cdots \v{\gamma}^{(1,p_1)}\, \cdots \v{\gamma}^{(1,p_1+k)} \dots \v{\gamma}^{(M,p_M)} ],
\end{equation}
where we use the idling parameters $(p_0, p_1, \dots , p_M)$.
We can bound $\|L\|$ via $\|L \| \le \|L_1\| + \|L_2\| + \|L_3\|$. Firstly, it is clear that $\|L_1\| =1$. Secondly,
\begin{equation}
    L_2L_2^\dagger = \sum_{m=0}^{M} \theta(p_m) \sum_{r=1}^{p_m}  \ketbra{m,r}{m,r} \otimes I +\sum_{m=0}^{M-1} \sum_{j=1}^{k}  \ketbra{m,p_m+j}{m,p_m+j} \otimes \frac{(Ah)^2}{j^2},
\end{equation}
and so $\|L_2\| = \max\{ \|Ah\| , 1\} =1$. Similarly (see \cite{berry2017quantum}, proof of Lemma~4) we have $\|L_3\| = \sqrt{k+1}$. Therefore, we have $\|L\| \le \sqrt{k+1} + 2$.

We next estimate the norm $\|L^{-1}\|$ which is obtained from
\begin{equation}
\|L^{-1}\| = \max_{\|\v{\beta}\|=1} \| L^{-1} \v{\beta}\|.
\end{equation}
By writing $L \v{\gamma} = \v{\beta}$ we have that $\|L^{-1}\| = \max_{\|\v{\beta}\|=1} \| \v{\gamma}\|$, where $\v{\gamma}$ is implicitly a function of $\v{\beta}$. We now provide an upper bound on $\|L^{-1}\|$ via bounding $\|\v{\gamma}\|$. For $m$ obeying $0 \le m <M$ we have the following system of equations:
\begin{align}
\v{\gamma}^{(0,0)}&= \v{\beta}^{(0,0)} \nonumber \\
-\v{\gamma}^{(m,0)} + \v{\gamma}^{(m,1)}&= \v{\beta}^{(m,1)} \nonumber\\ 
-\v{\gamma}^{(m,1)} + \v{\gamma}^{(m,2)}&= \v{\beta}^{(m,2)} \nonumber\\ 
\vdots &  \nonumber \\
-\v{\gamma}^{(m,p_m-1)} + \v{\gamma}^{(m,p_m)}&= \v{\beta}^{(m,p_m)} \nonumber\\
-(Ah)\v{\gamma}^{(m,p_m)} + \v{\gamma}^{(m,p_m+1)} &= \v{\beta}^{(m,p_m+1)} \nonumber\\ 
-\frac{(Ah)}{2}\v{\gamma}^{(m,p_m+1)} + \v{\gamma}^{(m,p_m+2)} &= \v{\beta}^{(m,p_m+2)} \nonumber\\ 
-\frac{(Ah)}{3}\v{\gamma}^{(m,p_m+2)} + \v{\gamma}^{(m,p_m+3)} &= \v{\beta}^{(m,p_m+3)} \nonumber\\ 
\vdots &  \nonumber \\
-\frac{(Ah)}{k}\v{\gamma}^{(m,p_m+k-1)} + \v{\gamma}^{(m,p_m+k)} &= \v{\beta}^{(m,p_m+k)} \nonumber \\
-\v{\gamma}^{(m,p_m)} -\v{\gamma}^{(m,p_m+1)}\cdots -\v{\gamma}^{(m,p_m+k)} + \v{\gamma}^{(m+1,0)} &= \v{\beta}^{(m+1,0)}.
\end{align}
While for $m=M$ we only have the $p_M$ equations for idling parameters and no Taylor series terms.

We can formally solve for $\v{\gamma}$ in the above system of equations to get
\begin{align}
 \v{\gamma}^{(m,1)}&= \v{\beta}^{(m,1)} + \v{\gamma}^{(m,0)} \nonumber\\ 
 \v{\gamma}^{(m,2)}&= \v{\beta}^{(m,2)} + \v{\beta}^{(m,1)} + \v{\gamma}^{(m,0)} \nonumber\\ 
\vdots &  \nonumber \\
 \v{\gamma}^{(m,p_m)}&= \v{\beta}^{(m,p_m)}  + \v{\beta}^{(m,p_m-1)} + \cdots +\v{\beta}^{(m,2)} + \v{\beta}^{(m,1)} + \v{\gamma}^{(m,0)},
\end{align}
for the idling parts. Thus, for any $m=0,\dots , M$ and any $r=1,\dots , p_m$ we have
\begin{align}\label{eqn:idle-copying}
    \v{\gamma}^{(m,r)}&= \v{\gamma}^{(m,0)}+\sum_{s=1}^r\v{\beta}^{(m,s)}   .
\end{align}

For the Taylor series parts we have
\begin{align}
 \v{\gamma}^{(m,p_m+1)} &= \v{\beta}^{(m,p_m+1)}+(Ah)\v{\gamma}^{(m,p_m)} \nonumber\\ 
 \v{\gamma}^{(m,p_m+2)} &= \v{\beta}^{(m,p_m+2)} + \frac{(Ah)}{2}\v{\gamma}^{(m,p_m+1)} \nonumber\\ 
 \v{\gamma}^{(m,p_m+3)} &= \v{\beta}^{(m,p_m+3)} +\frac{(Ah)}{3}\v{\gamma}^{(m,p_m+2)} \nonumber\\ 
\vdots &  \nonumber \\
 \v{\gamma}^{(m,p_m+k)} &= \v{\beta}^{(m,p_m+k)} +\frac{(Ah)}{k}\v{\gamma}^{(m,p_m+k-1)} \nonumber \\
 \v{\gamma}^{(m+1,0)} &= \v{\beta}^{(m+1,0)} +\v{\gamma}^{(m,p_m)} +\v{\gamma}^{(m,p_m+1)}+ \cdots +\v{\gamma}^{(m,p_m+k)} .
 \label{eq:taylorsumproof}
\end{align}
Solving the previous expression in terms of $\v{\gamma}^{(m,p_m)}$ gives
\begin{align}
 \v{\gamma}^{(m,p_m+1)} &= \v{\beta}^{(m,p_m+1)}+(Ah)\v{\gamma}^{(m,p_m)} \nonumber\\ 
 \v{\gamma}^{(m,p_m+2)} &= \v{\beta}^{(m,p_m+2)} + \frac{(Ah)}{2} \v{\beta}^{(m,p_m+1)}+\frac{(Ah)^2}{2} \v{\gamma}^{(m,p_m)} \nonumber\\ 
 \v{\gamma}^{(m,p_m+3)} &= \v{\beta}^{(m,p_m+3)} +\frac{(Ah)}{3}\v{\beta}^{(m,p_m+2)} + \frac{(Ah)^2}{3\cdot2} \v{\beta}^{(m,p_m+1)}+\frac{(Ah)^3}{3\cdot2\cdot1} \v{\gamma}^{(m,p_m)} \nonumber\\ 
\vdots &  \nonumber \\
 \v{\gamma}^{(m,p_m+k)} &= \v{\beta}^{(m,p_m+k)} + \frac{(Ah)}{k}\v{\beta}^{(m,p_m+k-1)}\nonumber \\ & +\cdots +\frac{(Ah)^{k-2}}{k\cdot (k-1)\cdots 3}\v{\beta}^{(m,p_m+2)} +\frac{(Ah)^{k-1}}{k\cdot(k-1)\cdots 3\cdot 2}\v{\beta}^{(m,p_m+1)} +\frac{(Ah)^k}{k!}\v{\gamma}^{(m,p_m)}.
\end{align}
This implies that for any $m=0,1,\dots, M-1$ and any $j=1,\dots, k$ we have
\begin{equation}
\label{eq:jsumproof}
\v{\gamma}^{(m,p_m+j)} = \frac{1}{j!} (Ah)^j \v{\gamma}^{(m,p_m)} + \sum_{s=1}^j \frac{s!}{j!} (Ah)^{j-s} \v{\beta}^{(m,p_m+s)}.
\end{equation}

We now iteratively solve over the $m$ index as follows. Firstly,
\begin{align}
\v{\gamma}^{(m,0)} & \stackrel{\eqref{eq:taylorsumproof}}{=}  \sum_{j=0}^k \v{\gamma}^{(m-1,p_{m-1}+j)} + \v{\beta}^{(m,0)}\nonumber \\
&\stackrel{\eqref{eq:jsumproof}}{=}  \sum_{j=0}^k \frac{1}{j!} (Ah)^j \v{\gamma}^{(m-1,p_{m-1})} +  \sum_{j=1}^k\sum_{s=1}^j \frac{s!}{j!} (Ah)^{j-s} \v{\beta}^{(m,p_{m-1}+s)} + \v{\beta}^{(m,0)} \nonumber \\
&\stackrel{\eqref{eqn:idle-copying}}{=} T_k(Ah) \v{\gamma}^{(m-1,0)} + \sum_{r=1}^{p_{m-1}}T_k(Ah)  \v{\beta}^{(m-1,r)} +  \sum_{j=1}^k\sum_{s=1}^j \frac{s!}{j!} (Ah)^{j-s} \v{\beta}^{(m-1,p_{m-1}+s)} + \v{\beta}^{(m,0)} .
\end{align}
We define 
\begin{equation}
\label{eq:Bm-1proof}
    \v{B}^{(m-1)} :=\sum_{r=1}^{p_{m-1}}T_k(Ah)  \v{\beta}^{(m-1,r)} +\sum_{j=1}^k\sum_{s=1}^j \frac{s!}{j!} (Ah)^{j-s} \v{\beta}^{(m-1,p_{m-1}+s)},
\end{equation}
so that for any $m=1,2, \dots, M$ we have
\begin{equation}\label{eqn:m-iterate}
  \v{\gamma}^{(m,0)} =  T_k(Ah) \v{\gamma}^{(m-1,0)} +  \v{B}^{(m-1)}  + \v{\beta}^{(m,0)}.
\end{equation}
Since we have the condition $\v{\gamma}^{(0,0)} = \v{\beta}^{(0,0)}$ it follows that
\begin{align}
  \v{\gamma}^{(1,0)} &=  T_k(Ah) \v{\beta}^{(0,0)} +  \v{B}^{(0)}  + \v{\beta}^{(1,0)}\nonumber \\
    \v{\gamma}^{(2,0)} &=  T_k(Ah) (T_k(Ah) \v{\beta}^{(0,0)} +  \v{B}^{(0)}  + \v{\beta}^{(1,0)}) + \v{B}^{(1)} + \v{\beta}^{(2,0)} \nonumber \\
     \v{\gamma}^{(3,0)} &=  T_k(Ah) [T_k(Ah) (T_k(Ah) \v{\beta}^{(0,0)} +  \v{B}^{(0)}  + \v{\beta}^{(1,0)}) + \v{B}^{(1)} + \v{\beta}^{(2,0)} ] + \v{B}^{(2)} + \v{\beta}^{(3,0)},
\end{align}
and so iterating we find that
\begin{equation}
\v{\gamma}^{(m,0)}= \sum_{r=0}^m T_k(Ah)^{m-r} \left ( \v{\beta}^{(r,0)} + \v{B}^{(r-1)} \right ),
\end{equation}
where we set $\v{B}^{(-1)} = \v{0}$ by setting $\v{\beta}^{(-1,r)} \equiv 0$ for all $r$.

We now make use of Lemma~\ref{lem:rel-Tm-error} to estimate the following vector norm.
\begin{align}
\|\v{\gamma}^{(m,0)}\|&\le \sum_{r=0}^m \|T_k(Ah)^{m-r} \|\left ( \|\v{\beta}^{(r,0)}\| + \|\v{B}^{(r-1)}\| \right ) \nonumber \\
&\le(1+ \epsilon_{\mbox{\tiny TD}}/ c_{+,\times}) \sum_{r=0}^m \| e^{Ah(m-r)}\| \left ( \|\v{\beta}^{(r,0)}\| + \|\v{B}^{(r-1)}\| \right ) \nonumber \\
&\leq (1+ \epsilon_{\mbox{\tiny TD}}/ c_{+,\times}) \sum_{r=0}^m C(mh-rh)\left ( \|\v{\beta}^{(r,0)}\| + \|\v{B}^{(r-1)}\| \right ),
\end{align}
 where $c_{\times}=1$, $c_+ = x_{\mathrm{max}}$. Since $\|A h \| \leq 1$ we have that
	\small \begin{align}\label{eqn:m0-bound}
		\| \v{\gamma}^{(m,0)} \| &\leq  (1+ \epsilon_{\mbox{\tiny TD}}/ c_{+,\times})  \sum_{r=0}^m C(mh-rh) \left [ \|\v{\beta}^{(r,0)}\| +   \sum_{j=1}^k \sum_{s=1}^j\frac{s!}{j!} \| \v{\beta}^{(r-1,p_{r-1}+s)}\| +\sum_{w=1}^{p_{r-1}}\|T_k(Ah)\|  \|\v{\beta}^{(r-1,w)}\| \right ]  \nonumber\\
		&\leq  (1+ \epsilon_{\mbox{\tiny TD}}/ c_{+,\times})  \sum_{r=0}^m C(mh-rh) \left [ \|\v{\beta}^{(r,0)}\| +   \sum_{j=1}^k \sum_{s=1}^j\frac{s!}{j!} \| \v{\beta}^{(r-1,p_{r-1}+s)}\| +(1+ \epsilon_{\mbox{\tiny TD}}/c_{+,\times}) C(h) \sum_{w=1}^{p_{r-1}} \|\v{\beta}^{(r-1,w)}\| \right ] 
	\end{align}
	\normalsize
	where we used $\| T_k(Ah)\| \leq C(h) (1+ \epsilon_{\mbox{\tiny TD}}/ c_{+,\times}) $, again from Lemma~\ref{lem:rel-Tm-error}, which holds for either the multiplicative error or additive error case. Now,
 \begin{align}
 \left [   \sum_{j=1}^k \sum_{s=1}^j\frac{s!}{j!} \| \v{\beta}^{(r,p_r+s)}\|\right ] 	&	=  \left[ \sum_{s=1}^1 \frac{s!}{1!} \| \v{\beta}^{(r,p_r+s)} \| + \sum_{s=1}^2 \frac{s!}{2!} \| \v{\beta}^{(r,p_r+s)} \| + \dots + \sum_{s=1}^k \frac{s!}{k!} \| \v{\beta}^{(r,p_r+s)} \| \right] \nonumber \\
	& =\left[ 1! \left( \frac{1}{1!} + \frac{1}{2!} + \dots + \frac{1}{k!}\right)\|\v{\beta}^{(r,p_r+1)}  \| \right. \nonumber \\  & \left. + 2! \left( \frac{1}{2!} + \dots + \frac{1}{k!}\right) \|\v{\beta}^{(r,p_r+2)} \| + \dots + k! \frac{1}{k!} \|\v{\beta}^{(r,p_r+k)}  \| \right] \nonumber \\ 
	& =  \sum_{s=1}^k s! \sum_{j=s}^k \frac{1}{j!} \|\v{\beta}^{(r,p_r+s)} \| \nonumber \\
		& =  \sum_{s=1}^k f(s,k) \|\v{\beta}^{(r,p_r+s)} \|,
	\end{align}
where we define	
\begin{equation}
    f(s,k):= s! \sum_{j=s}^k \frac{1}{j!} = \sum_{j=s}^k \frac{1}{j(j-1) \cdots (s+1)}. 
\end{equation}
Thus, we have
\small
	\begin{align}
\frac{\|\v{\gamma}^{(m,0)}\|}{(1+ \epsilon_{\mbox{\tiny TD}}/ c_{+,\times}) } \leq  \sum_{r=0}^m C(mh-rh) \left [ \|\v{\beta}^{(r,0)}\| +   \sum_{s=1}^k f(s,k) \| \v{\beta}^{(r-1,p_{r-1}+s)}\| +(1+ \epsilon_{\mbox{\tiny TD}}/ c_{+,\times})  C(h) \sum_{w=1}^{p_{r-1}} \|\v{\beta}^{(r-1,w)}\| \right ]. 
	\end{align}
\normalsize
The RHS of this is in the form of an inner product $\v{V}^{(m)} \cdot \v{W}^{(m)}$, between vectors with components,
\begin{align}
\v{V}^{(m)}&:=\left (\oplus_{r=0}^m \|\v{\beta}^{(r,0)}\| \oplus_{r=0}^{m} \oplus_{w=1}^{p_{r-1}} \|\v{\beta}^{(r-1,w)}\| \oplus_{r=0}^m  \oplus_{s=1}^k\|\v{\beta}^{(r-1,p_{r-1}+s)} \| \right ) \nonumber\\
    \v{W}^{(m)}&:=\left (\oplus_{r=0}^m C(mh-rh) \oplus_{r=0}^{m} \oplus_{w=1}^{p_{r-1}}  (1+ \frac{\epsilon_{\mbox{\tiny TD}}}{ c_{+,\times}}) C(mh-rh)C(h) \oplus_{r=0}^m  \oplus_{s=1}^k \left[ C(mh-rh)f(s,k) \right ] \right ),
\end{align}
where we set $p_{-1} =0$, and use direct sum notation so that, for example, $(\oplus_{r=0}^3 z_r) = (z_0,z_1,z_2,z_3)$. Since $\|\v{\beta}\| =1$ it follows that $\|\v{V}^{(m)}\|\le 1$ for any $m=1,\dots , M$. We now  use Cauchy-Schwarz on $\v{V}^{(m)} \cdot \v{W}^{(m)}$ and get that for any $m=1, 2, \dots ,M$
	\begin{align}\label{eqn:m-bound}
		\| \v{\gamma}^{(m,0)}\|^2  &\leq \left (1+ \frac{\epsilon_{\mbox{\tiny TD}}}{c_{+,\times}} \right)^2 \sum_{r=0}^m C(r h)^2 \left( 1 +  p_{r-1} \left (1+ \frac{\epsilon_{\mbox{\tiny TD}}}{c_{+,\times}} \right)^2 C(h)^2  + g(k) \right )
	\end{align}
	where 
	\begin{equation}
	    g(k) = \sum_{s=1}^k f(s,k)^2 = \sum_{s=1}^k \left(s! \sum_{j=s}^k 1/j!\right)^2.
	\end{equation}
For $m=0$ we have that $\v{\gamma}^{(0,0)} = \v{\beta}^{(0,0)}$ and so $\|\v{\gamma}^{(0,0)}\| \le 1$. If we define $p_{-1}=0$ then we can alternatively use the bound given in~(\ref{eqn:m-bound}) with $m=0$, which is slightly weaker. Note that since $f(s,k)\leq e$ one has $g(k) \leq e k$,  but this upper bound overestimates $g(k)$ by almost a factor of $2$, so we shall not use it. 

We next bound $\|\v{\gamma}^{(m,r)}\|$ for $r=1,\dots , p_m$ and also  $\|\v{\gamma}^{(m,p_m+j)}\|$ for $j=1,\dots , k$.

Firstly, from Eq.~\eqref{eqn:idle-copying},
\begin{align}
    \|\v{\gamma}^{(m,r)}\| &\le \|\v{\gamma}^{(m,0)}\| + \sum_{s=1}^r \|\v{\beta}^{(m,s)}\|,
\end{align}
for $r=1,2, \dots , p_m$. We follow a similar line to before, and have from Eq.~\eqref{eqn:m0-bound}:
\small
	\begin{align}
\|\v{\gamma}^{(m,r)}\| & \leq  (1+ \frac{\epsilon_{\mbox{\tiny TD}}}{c_{+,\times}})\sum_{l=0}^m C(mh-lh) \left [ \|\v{\beta}^{(l,0)}\| +   \sum_{s=1}^k f(s,k) \| \v{\beta}^{(l-1,p_{l-1}+s)}\| +(1+ \frac{\epsilon_{\mbox{\tiny TD}}}{ c_{+,\times}}) C(h) \sum_{w=1}^{p_{l-1}} \|\v{\beta}^{(l-1,w)}\| \right ] \nonumber\\ 
&  + \sum_{s=1}^r \|\v{\beta}^{(m,s)}\|,
	\end{align}
	\normalsize
and using Cauchy-Schwarz in an analogous way to Eq.~\eqref{eqn:m-bound} we have
	\begin{align}\label{eqn:mr-norm}
		\| \v{\gamma}^{(m,r)}\|^2  &\leq (1+ \frac{\epsilon_{\mbox{\tiny TD}}}{c_{+,\times}})^2 \sum_{l=0}^m C(l h)^2 \left( 1 +  p_{l-1} (1+\epsilon_{\mbox{\tiny TD}})^2 C(h)^2  + g(k) \right ) + r,
	\end{align}
	for any $r=0,1,2, \dots, p_m$.
Finally, using Eq.~\eqref{eq:jsumproof} and Eq.~\eqref{eqn:idle-copying},
\begin{align}
\|\v{\gamma}^{(m,p_m+j)}\| &\le \frac{1}{j!}  \|\v{\gamma}^{(m,p_m)}\| + \sum_{s=1}^j \frac{s!}{j!}  \| \v{\beta}^{(m,p_m+s)}\| \nonumber \\
&\le \frac{1}{j!} ( \|\v{\gamma}^{(m,0)}\| + \sum_{r=1}^{p_m} \|\v{\beta}^{(m,r)}\|) + \sum_{s=1}^j \frac{s!}{j!}  \| \v{\beta}^{(m,p_m+s)}\|.
\end{align}
Again using Cauchy-Schwarz we have
	\begin{align}\label{eqn:mp-norm}
		\| \v{\gamma}^{(m,p_m+j)}\|^2  &\leq \frac{(1+\epsilon_{\mbox{\tiny TD}}/ c_{+,\times})^2}{j!^2} \sum_{l=0}^m C(l h)^2 \left( 1 +  p_{l-1} (1+\epsilon_{\mbox{\tiny TD}}/c_{+,\times})^2 C(h)^2  + g(k) \right ) + \frac{p_m}{j!^2} + \sum_{s=1}^j \left ( \frac{s!}{j!}\right)^2.
	\end{align}

	We now have that
	\begin{align}
	   \|\v{\gamma}\|^2 = \sum_{m=0}^{M-1}\left (\sum_{r=0}^{p_m} \|\v{\gamma}^{(m,r)}\|^2 + \sum_{j=1}^{k} \|\v{\gamma}^{(m,p_m+j)}\|^2\right) + \sum_{r=0}^{p_M} \|\v{\gamma}^{(M,r)}\|^2.
	\end{align}
	Using the above bounds in Eq.~\eqref{eqn:mr-norm} and Eq.~\eqref{eqn:mp-norm} we have that
\small
\begin{align}
	   \|\v{\gamma}\|^2 &\le \sum_{m=0}^{M} \left( \left (1+ \frac{\epsilon_{\mbox{\tiny TD}}}{c_{+,\times}} \right)^2 \sum_{r=0}^{p_m} \sum_{l=0}^m C(l h)^2 \left( 1 +  p_{l-1} \left (1+ \frac{\epsilon_{\mbox{\tiny TD}}}{c_{+,\times}} \right)^2 C(h)^2  + g(k) \right ) + \sum_{r=0}^{p_m}  r \right)  \nonumber\\
	   & + \sum_{m=0}^{M}\sum_{j=1}^k \left(  \frac{\left (1+ \frac{\epsilon_{\mbox{\tiny TD}}}{c_{+,\times}} \right)^2}{j!^2} \sum_{l=0}^m C(l h)^2 \left( 1 +  p_{l-1} \left (1+ \frac{\epsilon_{\mbox{\tiny TD}}}{c_{+,\times}} \right)^2 C(h)^2  + g(k) \right ) + \frac{p_m}{j!^2} + \sum_{s=1}^j \left ( \frac{s!}{j!}\right)^2 \right) \nonumber \\
	   & \leq  \sum_{m=0}^{M} \left( \left (1+ \frac{\epsilon_{\mbox{\tiny TD}}}{c_{+,\times}} \right)^2 (p_m+1) \sum_{l=0}^m C(l h)^2 \left( 1 +  p_{l-1} \left (1+ \frac{\epsilon_{\mbox{\tiny TD}}}{c_{+,\times}} \right)^2 C(h)^2  + g(k) \right ) + \frac{p_m(p_m+1)}{2} \right) \nonumber \\
	   & + \sum_{m=0}^{M} \left(  \left (1+ \frac{\epsilon_{\mbox{\tiny TD}}}{c_{+,\times}} \right)^2 (I_0(2)-1) \sum_{l=0}^m C(l h)^2 \left( 1 +  p_{l-1} \left (1+ \frac{\epsilon_{\mbox{\tiny TD}}}{c_{+,\times}} \right)^2 C(h)^2  + g(k) \right ) \right)  \!+\!\left( \sum_{m=0}^{M}p_m + M k\right) (I_0(2)\!-\!1) \nonumber\\
	   & \le \sum_{m=0}^{M} \left( \left (1+ \frac{\epsilon_{\mbox{\tiny TD}}}{c_{+,\times}} \right)^2 (p_m+I_0(2)) \sum_{l=0}^m C(l h)^2 \left( 1 +  p_{l-1} \left (1+ \frac{\epsilon_{\mbox{\tiny TD}}}{c_{+,\times}} \right)^2 C(h)^2  + g(k) \right ) + \frac{p_m(p_m+1)}{2} \right) \nonumber \\ 
    & + \left( \sum_{m=0}^{M}p_m + M k\right) (I_0(2)-1),
	\end{align}
	\normalsize
	where we used that
	\begin{align}
	    \sum_{j=1}^k \frac{1}{(j!)^2} \le \sum_{j=1}^\infty \frac{1}{(j!)^2} =I_0(2)-1,
	\end{align}
	where $I_0(x)$ is the order-zero modified Bessel function of the first kind, and also that
\begin{equation}
    \sum_{j=1}^{k} \sum_{s=1}^j \left ( \frac{s!}{j!}\right)^2 = \sum_{j=1}^{k} \sum_{s=1}^j \left ( \frac{1}{(j-s)!}\right)^2 \le \sum_{j=1}^k \sum_{w=1}^j \frac{1}{(w!)^2} \le  k (I_0(2)-1).
\end{equation}
From numerical tests, we find the latter inequalities cannot be substantially improved. This completes the proof.
\end{proof}

     \begin{rmk}
     \label{rmk:solutionlinearsystem}
        Using the expressions found in the previous proof we can show that Eq.~\eqref{eq:vectorberry2} holds, i.e., that the ideal inversion of $L$ on the claimed form for $\v{c}$ gives the claimed output. Specifically, we choose $\v{\beta} = \v{c}$ via the conditions (refer to Eq.~\eqref{eq:cdefinition})
\begin{align}
  \v{\beta}^{(0,0)} &= \v{x}^0 \nonumber \\  
      \v{\beta}^{(m,p_m+1)}&=h\v{b} \mbox{ for all } m=0, \dots , M-1\nonumber \\
      \v{\beta}^{(m,r)} &=\v{0} \mbox{ otherwise}.
\end{align}
The idling terms will produce copies of data with this choice of input. More precisely, from Eq.~(\ref{eqn:idle-copying}) we have that
\begin{equation}
    \v{\gamma}^{(m,r)} = \v{\gamma}^{(m,0)},
\end{equation}
for $r =0, \dots p_m$ since $\v{\beta}^{(m,s)}=\v{0}$ for all $m$ and all $s=1, \dots ,p_m$.

The conditions $\v{\beta}^{(m,p_m+1)} =h\v{b}$ and $\v{\beta}^{(m,r)} = \v{0}$ on other indices, imply that we have (Eq.~\eqref{eq:Bm-1proof})
\begin{align}
    \v{B}^{(m-1)} &:=\sum_{r=1}^{p_{m-1}}T_k(Ah)  \v{\beta}^{(m-1,r)} +\sum_{j=1}^k\sum_{s=1}^j \frac{s!}{j!} (Ah)^{j-s} \v{\beta}^{(m-1,p_{m-1}+s)} \nonumber \\
    &= \sum_{j=1}^k \frac{1!}{j!} (Ah)^{j-1} \v{\beta}^{(m-1,p_{m-1}+1)}  = S_k(Ah) h\v{b}.
\end{align}
Therefore, Eq.~(\ref{eqn:m-iterate}) implies that
\begin{equation}
  \v{\gamma}^{(m,0)} =  T_k(Ah) \v{\gamma}^{(m-1,0)} +  S_k(Ah) h \v{b},
\end{equation}
for $m=1, \dots , M$ and the initial condition $\v{\gamma}^{(0,0)} = \v{x}^0$. This recursion relation coincides with the one for the discrete dynamics and therefore $\v{\gamma}^{(m,0)}= \v{x}^m$ for all $m=0, \dots , M$, as required. Moreover, idling implies $\v{\gamma}^{(m,r)}= \v{x}^m$ for all $m=0, \dots , M$ and all $r =0, \dots, p_m$. The remaining indices correspond to junk data. This shows that $L^{-1}\v{c}$ returns the claimed form of discrete solution.
     \end{rmk}

 We can now specialize the above theorem to idling only at the end (or not at all),  and consider both the stable $\alpha(A) <0$ case and the case $\alpha(A) \ge 0$ separately.
\begin{theorem}[Condition number bounds: stability cases]
\label{thm:conditionnumbernoidling}
Consider a complex matrix $A\in \mathbb{C}^{N \times N}$, a Taylor truncation order $k =k_{+,\times}$ associated to a discretization error $\epsilon_{\mbox{\tiny TD}}$, and the associated embedding of $A$ into a matrix $L$ defined by Eq.~(\ref{eq:berrymatrix}), for discretized dynamics from $t=0$ up to $t=T=Mh$, via $M+1$ uniform steps of size $h$. Consider the case of idling parameters given by $p_m =0$ for all $0 \le m<M$ and $p_M = p$. Let $I_0(x)$ be the modified Bessel function of order zero, and $
	    g(k) :=  \sum_{s=1}^k \left(s! \sum_{j=s}^k 1/j!\right)^2 \le ek$. Also, $c_+ = x_{\mathrm{max}}$ and $c_\times = 1$. Then the following upper bounds on the matrix $L$ hold:

\begin{itemize}
    \item \emph{($A$ is a stable matrix).} For the case where $A$ is a stable matrix, let $P$ and $Q$ be any two positive Hermitian matrices that satisfy the Lyapunov equation $PA + A^\dagger P=-Q$. Denote by $\mu_P(A)$ the log-norm of $A$ with respect to the elliptical norm induced by $P$ and denote by $\kappa_P$ the condition number of $P$. Then we have that
 \footnotesize
 \begin{align}
\! \! \! \! \!  \kappa_L \leq
	   & \left[ \left (1+ \frac{\epsilon_{\mbox{\tiny TD}}}{c_{+,\times}} \right)^2 \left( 1  + g(k) \right )  \kappa_P \left(p \frac{1- e^{2 M\mu_P(A) h+2 \mu_P(A) h}}{ 1- e^{2 \mu_P(A) h}}  +I_0(2) \frac{e^{2\mu_P(A)(M+2)h} + M +1 - e^{2\mu_P(A) h} (2+M)}{(1-e^{2 h \mu_P(A)})^2} \right) \right. \nonumber \\ & \left. + \frac{p(p+1)}{2}  + (p + M k) (I_0(2)-1) \right]^{1/2}  (\sqrt{k+1}+2),
	     \label{eq:conditionnumbernoidling}
	   \end{align}
	   \normalsize 
and therefore we obtain the asymptotic complexity $\kappa_L = O(k\sqrt{\kappa_P T +p^2})$.
 
     \item \emph{($A$ is not a stable matrix).} If $A$ is not stable, then consider any uniform bound $\|e^{At}\|\leq C_{\mathrm{max}}$ for all $t \in [0,T]$. Then we have that 
     \scriptsize
	\begin{equation}
	  \kappa_L \le  (\sqrt{k+1}+2)\sqrt{ C_{\mathrm{max}}^2\left (1+ \frac{\epsilon_{\mbox{\tiny TD}}}{c_{+,\times}} \right)^2 \left( 1  + g(k) \right ) (M+1)  [p +I_0(2)(M/2+1) ]   + \frac{p(p+1)}{2}  + (p + M k) (I_0(2)-1)},
	\end{equation}
	   \normalsize 
and therefore we obtain the asymptotic complexity $\kappa_L = O(k \sqrt{C^2_{\mathrm{max}}T^2+p^2})$.
\end{itemize}

\end{theorem}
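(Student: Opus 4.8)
The plan is to obtain both bounds as direct specializations of the general condition-number estimate in Theorem~\ref{thm:conditionnumbergeneral}, feeding in the idling choice $p_m = 0$ for $0 \le m < M$ and $p_M = p$ (with the convention $p_{-1}=0$), together with the stability estimates on $C(t) = \|e^{At}\|$. The first thing to notice is which terms of the general bound survive this choice. Since $p_{l-1} = 0$ for every $l$ appearing in the inner sum (as $l-1 \le M-1$ and $p_m$ vanishes for $m<M$), the factor $\left(1 + p_{l-1}(1+\epsilon_{\mbox{\tiny TD}}/c_{+,\times})^2 C(h)^2 + g(k)\right)$ collapses to $(1+g(k))$, which pulls out of both sums. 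Likewise $p_m + I_0(2)$ equals $I_0(2)$ for $m<M$ and $p + I_0(2)$ for $m=M$, the term $p_m(p_m+1)/2$ contributes only $p(p+1)/2$, and $\sum_m p_m = p$. This reduces the bracket of Theorem~\ref{thm:conditionnumbergeneral} to $(1+\epsilon_{\mbox{\tiny TD}}/c_{+,\times})^2(1+g(k))\left[I_0(2)\sum_{m=0}^M\sum_{l=0}^m C(lh)^2 + p\sum_{l=0}^M C(lh)^2\right] + p(p+1)/2 + (p+Mk)(I_0(2)-1)$.

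For the stable case I would then insert the Lyapunov transient bound \eqref{eq:stablebound}, i.e.\ $C(lh)^2 \le \kappa_P\, e^{2\mu_P(A)lh}$, which factors $\kappa_P$ out of both sums and leaves pure geometric series in $q := e^{2\mu_P(A)h} \in (0,1)$. The single sum is immediate, $\sum_{l=0}^M q^l = (1-q^{M+1})/(1-q)$, matching the stated $p$-term after writing $q^{M+1} = e^{2\mu_P(A)(M+1)h}$. The double sum is the only genuine computation: summing the single-sum closed form over $m$ gives $\sum_{m=0}^M \frac{1-q^{m+1}}{1-q} = \frac{(M+1) - (M+2)q + q^{M+2}}{(1-q)^2}$, which reproduces exactly the $I_0(2)$-term numerator $e^{2\mu_P(A)(M+2)h} + M + 1 - e^{2\mu_P(A)h}(2+M)$. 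Collecting the two contributions yields \eqref{eq:conditionnumbernoidling}. The non-stable case is even simpler: I substitute the uniform bound $C(lh)^2 \le C_{\mathrm{max}}^2$, so $\sum_{l=0}^m C(lh)^2 \le (m+1)C_{\mathrm{max}}^2$, and then use $\sum_{m=0}^M(m+1) = (M+1)(M+2)/2$ to turn the double sum into $C_{\mathrm{max}}^2(M+1)[p + I_0(2)(M/2+1)]$, which is the claimed bound.

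Finally, for the asymptotic statements I would read off the dominant scalings with $h$ and $\mu_P(A)$ fixed and $M = T/h \to \infty$: inside the bracket the stability contribution is $\kappa_P(1+g(k))$ times $O(p) + O(T)$ (the double sum grows linearly in $M$, since its numerator is linear in $M$ while the denominator is an $O(1)$ constant), while the remaining terms contribute $O(p^2)$ and $O(p+Mk)$. Using $g(k) = O(k)$ and $\|L\| \le \sqrt{k+1}+2 = O(\sqrt{k})$, taking the square root of the bracket and multiplying by this prefactor gives $\kappa_L = O(k\sqrt{\kappa_P T + p^2})$ in the stable case and the analogous $O(k\sqrt{C_{\mathrm{max}}^2 T^2 + p^2})$ otherwise. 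I expect the only delicate point to be the closed-form evaluation of the double geometric sum and the careful bookkeeping of which general-formula terms vanish under the specialized idling; everything else is a mechanical substitution of the bounds already established in Theorem~\ref{thm:conditionnumbergeneral} and \eqref{eq:stablebound}.
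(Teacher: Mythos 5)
Your proposal is correct and takes essentially the same route as the paper's own proof: specialize the general bound of Theorem~\ref{thm:conditionnumbergeneral} to $p_m=0$ for $m<M$, $p_M=p$ (so the inner factor collapses to $1+g(k)$ and the $p$-terms reduce as you describe), insert $\|e^{At}\|\le \sqrt{\kappa_P}\,e^{\mu_P(A)t}$ in the stable case (resp.\ the uniform bound $C_{\mathrm{max}}$ otherwise), evaluate the single and double geometric (resp.\ arithmetic) sums in closed form, and multiply by $\|L\|\le\sqrt{k+1}+2$. Your closed-form double sum $\bigl((M+1)-(M+2)q+q^{M+2}\bigr)/(1-q)^2$ and the non-stable count $(M+1)[p+I_0(2)(M/2+1)]$ coincide exactly with the paper's expressions, and the asymptotic readings are handled at the same level of rigor as in the paper.
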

Note that for all stable matrices we have that the condition number is $O(\sqrt{T} k)$, and so unless $k$ scales polynomially with $T$ (which happens only if $\lambda_{+,\times}(T)$ grows exponentially, i.e. the ODE solution norm decays or grows exponentially),  we have a quadratic improvement in the condition number dependence with $T$ compared to previously known results, which scale as $O(T \mathrm{poly}(k))$~\cite{berry2017quantum,krovi2022improved,berry2022quantum}. A related quadratic speedup~\cite{an2022theory} was obtained for the case of $A=-H^2$ for Hermitian $H$.

\begin{proof}[Proof of Theorem~\ref{thm:conditionnumbernoidling}]

From the proof of Theorem~\ref{thm:conditionnumbergeneral}, setting $p_m =0$ for $m=0,\dots, M-1$ and $p_M= p$, we have that
\small 
\begin{align}
	   \|\v{\gamma}\|^2
	   & \le \sum_{m=0}^{M} \left( \left (1+ \frac{\epsilon_{\mbox{\tiny TD}}}{c_{+,\times}} \right)^2 (p \delta_{m,M}+I_0(2)) \sum_{l=0}^m C(l h)^2 \left( 1  + g(k) \right ) \right ) + \frac{p(p+1)}{2}  + (p + M k) (I_0(2)-1) \nonumber \\
	   & = \left (1+ \frac{\epsilon_{\mbox{\tiny TD}}}{c_{+,\times}} \right)^2 \left( 1  + g(k) \right )\left(  p \sum_{l=0}^M C(l h)^2 + I_0(2)  \sum_{m=0}^M \sum_{l=0}^m C(l h)^2  \right ) + \frac{p(p+1)}{2}  + (p + M k) (I_0(2)-1).
	\end{align}
	\normalsize
	Now consider the different subcases for $A$.
	
	\bigskip
	
  \begin{itemize}
      \item \emph{($A$ is a stable matrix)} 
 We use $\|e^{At}\| \le \sqrt{\kappa_P} e^{\mu_P(A) t}$ for all $t \ge 0$ to obtain geometric sums in the previous inequality.
 This implies that
\begin{align}
  \sum_{l=0}^M C(l h)^2 &= \kappa_P \frac{1- e^{2 M\mu_P(A) h+2 \mu_P(A) h}}{ 1- e^{2 \mu_P(A) h}} \le M+1 =O(M)\nonumber \\
    \sum_{m=0}^{M} \sum_{l=0}^m C(l h)^2 &= \kappa_P \frac{e^{2\mu_P(A)(M+2)h} + M +1 - e^{2\mu_P(A) h} (2+M)}{(1-e^{2 h \mu_P(A)})^2 } \le (M+1)(M/2+1) = O(M^2),
  \end{align}
where in the limit $\mu_P(A) \rightarrow 0^-$, the sums saturate the upper bounds $M+1$ and $(M+1)(1+M/2)$, respectively.  This limit corresponds to the $\alpha(A) \ge 0$ bound upon replacing $\kappa_P$ with $C_{\mathrm{max}}^2$. For $\mu_P(A) < c<0$ for some constant $c$ independent of $N$ we instead get that the first expression scales in time as $O(1)$, whereas the second scales as $O(M)$.
\scriptsize 
\begin{align}
	\! \! \! \! \!    \|\v{\gamma}\|^2
	   & \le   \left (1+ \frac{\epsilon_{\mbox{\tiny TD}}}{c_{+,\times}} \right)^2 \left( 1  + g(k) \right )  \left(p \sum_{l=0}^M C(l h)^2 + I_0(2) \sum_{m=0}^{M} \sum_{l=0}^m C(l h)^2  \right ) + \frac{p(p+1)}{2}  + (p + M k) (I_0(2)-1) \nonumber \\
	   & =  \left (1+ \frac{\epsilon_{\mbox{\tiny TD}}}{c_{+,\times}} \right)^2 \left( 1  + g(k) \right ) \kappa_P  \left(p \frac{1- e^{2 M\mu_P(A) h+2 \mu_P(A) h}}{ 1- e^{2 \mu_P(A) h}} + I_0(2) \frac{e^{2\mu_P(A)(M+2)h} + M +1 - e^{2\mu_P(A) h} (2+M)}{(1-e^{2 h \mu_P(A)})^2 } \right ) \nonumber \\ & + \frac{p(p+1)}{2}  + (p + M k) (I_0(2)-1)
	\end{align}
	\normalsize
	and so
	\scriptsize 
\begin{align}
\kappa_L \leq & \left[   \left (1+ \frac{\epsilon_{\mbox{\tiny TD}}}{c_{+,\times}} \right)^2 \left( 1  + g(k) \right ) \kappa_P  \left(p \frac{1- e^{2 M\mu_P(A) h+2 \mu_P(A) h}}{ 1- e^{2 \mu_P(A) h}} + I_0(2) \frac{e^{2\mu_P(A)(M+2)h} + M +1 - e^{2\mu_P(A) h} (2+M)}{(1-e^{2 h \mu_P(A)})^2 } \right ) \right. \nonumber \\ & \left. + \frac{p(p+1)}{2}  + (p + M k) (I_0(2)-1) \right]^{1/2} (\sqrt{k+1}+2)
	\end{align}
	\normalsize
By inspection, we see that $\kappa_L = O(k\sqrt{\kappa_P T +p^2})$.

\item \emph{($A$ is not a stable matrix)} For this case we use that $C(t) \leq C_{\mathrm{max}}$ for all $t\in [0,T]$.  Then
	\small 
	\begin{align}
	   \|\v{\gamma}\|^2
	   & \le  C_{\mathrm{max}}^2\left (1+ \frac{\epsilon_{\mbox{\tiny TD}}}{c_{+,\times}} \right)^2 \left( 1  + g(k) \right ) (M+1)  [p +I_0(2)(M/2+1) ]   + \frac{p(p+1)}{2}  + (p + M k) (I_0(2)-1).
	\end{align}
	\normalsize
	This implies the following condition number upper bound,
	\footnotesize
	\begin{equation}
	  \kappa_L \le  (\sqrt{k+1}+2)\sqrt{ C_{\mathrm{max}}^2\left (1+ \frac{\epsilon_{\mbox{\tiny TD}}}{c_{+,\times}} \right)^2 \left( 1  + g(k) \right ) (M+1)  [p +I_0(2)(M/2+1) ]   + \frac{p(p+1)}{2}  + (p + M k) (I_0(2)-1)}.
	\end{equation}
	\normalsize
	For this we have $\kappa_L = O(k \sqrt{C^2_{\mathrm{max}}T^2+p^2})$. We note that the previous bound can also be obtained by taking $\kappa_P = C^2_{\mathrm{max}}$ and $\mu_P \rightarrow 0^-$ in the expression for stable $A$. 
 \end{itemize}
	   \end{proof}
	   Note that when outputting the history state the above result will be used setting $p=0$, whereas when outputting the solution state we will set $p=\Theta(\sqrt{T})$, as we shall see. In both cases, $\kappa_L = O(k \sqrt{\kappa_P T})$ for the case of $A$ being stable.
	   
To relate this result to prior works we make use of the following lemma.
	   \begin{lem}\label{lem:kappa-cases}
    Let $A \in \mathbb{C}^{N\times N}$ be a stable, complex matrix, and let $P>0$ be a Hermitian matrix that obeys the Lyapunov condition $PA +A^\dagger P <0 $. Let $\kappa_P$ denote the condition number of $P$ and $\mu_P(A)$ denote the log-norm of $A$ with respect to the ellipsoidal norm defined by $P$. Then the following special cases hold:
\begin{itemize}
    \item If $A$ is diagonalizable via $V^{-1}AV = D$, a diagonal matrix, then we may take $P=V^\dagger V$ and so $\kappa_P= \kappa_V^2$ (with $\kappa_V$ the condition number of $V$) and $\mu_P(A) = \alpha(A)$. In particular, if $A$ is a normal matrix then $\kappa_P = 1$ and $\mu_P(A) = \alpha(A)$.
        \item If $A$ has negative Euclidean log-norm $\mu(A)<0$, then we may take $P=I$ and so $\kappa_P = 1$ and $\mu_P(A) = \mu(A)$.
\end{itemize}
\end{lem}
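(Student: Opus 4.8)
The plan is to prove each bullet by exhibiting an explicit positive matrix $P$ and then verifying three things in turn: that $P$ satisfies the Lyapunov condition $PA+A^\dagger P<0$ so that it is an admissible choice, that the induced log-norm $\mu_P(A)$ takes the claimed value, and that the condition number $\kappa_P=\|P\|\|P^{-1}\|$ equals the claimed value. The second bullet is essentially immediate from the definitions, so the substance lies in the first.

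For the negative Euclidean log-norm case I would take $P=I$. Then $\langle \v{x},\v{y}\rangle_P$ is the standard inner product, and the definition of $\mu_P(A)$ reduces verbatim to the Euclidean log-norm $\mu(A)=\tfrac12\lambda_{\max}(A+A^\dagger)$, giving $\mu_P(A)=\mu(A)$. Trivially $\kappa_I=\|I\|\|I^{-1}\|=1$. Admissibility follows because $\mu(A)<0$ is equivalent to $\lambda_{\max}(A+A^\dagger)<0$, i.e. $PA+A^\dagger P = A+A^\dagger<0$, so the hypothesis $\mu(A)<0$ alone certifies the Lyapunov inequality.

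For the diagonalizable case, write $A=VDV^{-1}$ with $D=\mathrm{diag}(\lambda_i)$ and take $P=V^{-\dagger}V^{-1}=(VV^\dagger)^{-1}$, the positive matrix whose induced inner product is the pullback of the standard one under the diagonalizing map $\v{x}\mapsto V^{-1}\v{x}$. The key computation is the change of variables $\v{y}=V^{-1}\v{x}$: since $V^{-1}A=DV^{-1}$ one gets $\v{x}^\dagger P\v{x}=\|\v{y}\|^2$ and $\v{x}^\dagger PA\v{x}=\v{y}^\dagger D\v{y}$, so that $\mu_P(A)=\max_{\v{y}\neq 0}\mathrm{Re}\,\tfrac{\v{y}^\dagger D\v{y}}{\v{y}^\dagger\v{y}}$. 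Because $D$ is diagonal, $\mathrm{Re}(\v{y}^\dagger D\v{y})=\sum_i |y_i|^2\,\mathrm{Re}\,\lambda_i$, whose normalized maximum is $\max_i\mathrm{Re}\,\lambda_i=\alpha(A)$; hence $\mu_P(A)=\alpha(A)$. The same substitution gives $PA+A^\dagger P=V^{-\dagger}(D+D^\dagger)V^{-1}$, and since stability forces $\mathrm{Re}\,\lambda_i<0$ we have $D+D^\dagger<0$, so congruence by $V^{-1}$ preserves negative-definiteness and the Lyapunov inequality holds. Finally, using $\|B^\dagger B\|=\|B\|^2$, I would compute $\|P\|=\|V^{-1}\|^2$ and $\|P^{-1}\|=\|VV^\dagger\|=\|V\|^2$, giving $\kappa_P=\|V\|^2\|V^{-1}\|^2=\kappa_V^2$. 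When $A$ is normal, $V$ can be chosen unitary, so $\kappa_V=1$ and $\kappa_P=1$ while $\mu_P(A)=\alpha(A)$ still holds.

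The only delicate point — and the one I would flag as the main obstacle — is matching the diagonalization convention to the stated choice $P=V^\dagger V$: depending on whether one writes $V^{-1}AV=D$ or $VAV^{-1}=D$, the admissible $P$ is $(VV^\dagger)^{-1}$ or $V^\dagger V$ respectively. Either way the two congruent families are related by $V\leftrightarrow V^{-1}$, both satisfy $\kappa_P=\kappa_V^2$ (the condition number being invariant under $B\mapsto B^{-1}$), and the choice aligned with the convention yields $\mu_P(A)=\alpha(A)$; so the stated values are correct, and it only remains to fix the convention consistently throughout.
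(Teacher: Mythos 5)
Your proof is correct and takes essentially the same route as the paper's: exhibit an explicit $P$, verify the Lyapunov inequality by a congruence argument reducing it to $D+D^\dagger<0$, compute $\mu_P(A)$ via the change of variables that diagonalizes $A$, and obtain $\kappa_P=\kappa_V^2$ from $\|X^\dagger X\|=\|X\|^2$. The convention issue you flag is genuine: the paper's own proof works with $A=V^{-1}DV$ (i.e.\ $VAV^{-1}=D$), for which $P=V^\dagger V$ is the correct choice, whereas under the statement's convention $V^{-1}AV=D$ the admissible matrix is $P=(VV^\dagger)^{-1}$ exactly as in your argument; since $\kappa_{V^{-1}}=\kappa_V$, the claimed values $\kappa_P=\kappa_V^2$ and $\mu_P(A)=\alpha(A)$ hold either way.
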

 The proof of this lemma is as follows.
\begin{proof} 
If $A$ is diagonalizable and $A = V^{-1} D V $ for some diagonal $D$, then we may take $P = V^\dagger V >0$. In fact,
\begin{align}
    PA +A^\dagger P  &= V^\dagger V A + A^\dagger V^\dagger V \nonumber
    = V^\dagger ( D + D^\dagger ) V.
\end{align}
However, since $A$ cannot have any eigenvalues with nonnegative real parts, it follows that $V^\dagger (D + D^\dagger) V = -Q$ for some Hermitan operator $Q > 0$. We then have that $\kappa_P = \|V^\dagger V\| \|V^{-1} (V^{-1})^{\dagger}\| = \kappa_V^2$, since $\|X^\dagger X \| = \|X\|^2$ for any $X$ (in the particular case of $A$ normal, the operator is diagonalizable via a unitary $V$ for which $\kappa_V =1$, so $\kappa_P=1$). To show that $\mu_P(A) = \alpha(A)$ note that
\begin{equation}
    \max_{\|\v{x}\| \ne 0} \mathrm{Re} \,\frac{ \langle PA \v{x} ,\v{x}\rangle}{\langle P \v{x}, \v{x} \rangle} =    \max_{\|\v{x}\| \ne 0} \mathrm{Re} \,\frac{ \langle VA \v{x} ,V\v{x}\rangle}{\langle V \v{x}, V \v{x} \rangle} = \max_{\|\v{x}\| \ne 0} \mathrm{Re} \,\frac{ \langle VAV^{-1}  V\v{x} ,V\v{x}\rangle}{\langle V \v{x}, V \v{x} \rangle}=\max_{\|\v{x}'\| \ne 0} \mathrm{Re} \,\frac{ \langle D\v{x}', \v{x}'\rangle}{\langle \v{x}',  \v{x}' \rangle} = \alpha(A),
\end{equation}
where we replace $\v{x}' = V\v{x}$.

Suppose now that we choose $P=I$ and have that $PA + A^\dagger P= A+A^{\dagger}$ has strictly negative eigenvalues. This occurs if and only if $A$ has negative log-norm with respect to operator norm. In this case we have $\kappa_P = 1$ and $\mu_P(A) = \mu(A)$.
	\end{proof}

 The previous Lemma, together with Theorem~\ref{thm:conditionnumbernoidling} immediately imply the following corollary. 
 \begin{cor}\label{cor:specialcases}
Suppose $A\in \mathbb{C}^{N \times N}$ is a stable, complex matrix. Let $L$ be the associated embedding of the discretized dynamics induced by $A$ with Taylor truncation order $k$ for a time interval $[0,T]$ and idling parameter $p$, as given in Eq.~\eqref{eq:berrymatrix}. Then the following special cases hold:
\begin{itemize}
    \item If $A$ is diagonalizable via $V^{-1}AV = D$, a diagonal matrix, then $\kappa_L = O\left ( \kappa_V k\sqrt{T + p^2/\kappa_V^2} \right )$, where $\kappa_V$ is the condition number of $V$.
    \item If $A$ has a negative log-norm $\mu(A)<0$, then $\kappa_L = O(k \sqrt{T+p^2})$.
\end{itemize}
 \end{cor}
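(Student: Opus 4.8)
The plan is to obtain this as an immediate consequence of two earlier results, with no new technical content required. Lemma~\ref{lem:kappa-cases} pins down the Lyapunov parameters $(\kappa_P, \mu_P(A))$ in each of the two special cases, while Theorem~\ref{thm:conditionnumbernoidling} supplies, for \emph{stable} $A$, the asymptotic condition-number bound $\kappa_L = O\bigl(k\sqrt{\kappa_P T + p^2}\bigr)$. The whole corollary is therefore just a matter of substituting the values of $\kappa_P$ furnished by the lemma into this formula and simplifying. The only point to check carefully before doing so is that the hypotheses of the \emph{stable} branch of Theorem~\ref{thm:conditionnumbernoidling} are genuinely met, i.e.\ that $\mu_P(A)<0$, so that we are entitled to the $\sqrt{T}$-type bound rather than the weaker $T$-type (unstable) bound.

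First I would dispatch the negative log-norm case. By Lemma~\ref{lem:kappa-cases}, $\mu(A)<0$ allows the choice $P=I$, giving $\kappa_P=1$ and $\mu_P(A)=\mu(A)<0$; in particular the stable branch applies. Substituting $\kappa_P=1$ into $\kappa_L = O\bigl(k\sqrt{\kappa_P T + p^2}\bigr)$ yields $\kappa_L = O\bigl(k\sqrt{T + p^2}\bigr)$ directly, which is the claimed bound. Next I would treat the diagonalizable case. Here Lemma~\ref{lem:kappa-cases} gives $\kappa_P = \kappa_V^2$ and $\mu_P(A) = \alpha(A)$; since $A$ is stable we have $\alpha(A)<0$, so again $\mu_P(A)<0$ and the stable branch of the theorem is in force. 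Substituting $\kappa_P = \kappa_V^2$ gives $\kappa_L = O\bigl(k\sqrt{\kappa_V^2 T + p^2}\bigr)$, and the only remaining manipulation is the elementary factoring
\begin{equation*}
\sqrt{\kappa_V^2 T + p^2} = \kappa_V\sqrt{T + p^2/\kappa_V^2},
\end{equation*}
which turns this into $\kappa_L = O\bigl(\kappa_V\, k\sqrt{T + p^2/\kappa_V^2}\bigr)$, exactly as stated.

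There is essentially no obstacle in this proof: the heavy lifting has already been done in establishing Theorem~\ref{thm:conditionnumbergeneral}, its stability specialization Theorem~\ref{thm:conditionnumbernoidling}, and the parameter identifications of Lemma~\ref{lem:kappa-cases}. The closest thing to a subtlety is bookkeeping, namely verifying that stability of $A$ forces $\mu_P(A)<0$ under both parameter choices so that the sublinear-in-$T$ (stable) estimate is the relevant one, and then the one-line algebraic rearrangement that exposes $\kappa_V$ as an overall prefactor in the diagonalizable case. Accordingly I would present the argument in two short bullet-free paragraphs mirroring the two cases, invoking the two prior results by reference and carrying out only the trivial substitution and factorization.
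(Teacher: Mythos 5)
Your proposal is correct and follows exactly the paper's own route: the paper states that Lemma~\ref{lem:kappa-cases} together with the stable-case bound $\kappa_L = O(k\sqrt{\kappa_P T + p^2})$ of Theorem~\ref{thm:conditionnumbernoidling} ``immediately imply'' the corollary, which is precisely your substitution of $\kappa_P = \kappa_V^2$ (resp.\ $\kappa_P = 1$) followed by the factoring $\sqrt{\kappa_V^2 T + p^2} = \kappa_V\sqrt{T + p^2/\kappa_V^2}$. Your verification that stability forces $\mu_P(A)<0$ in both cases, so the stable branch applies, is the right bookkeeping check and matches the paper's hypotheses.
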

	 The first result in Corollary~\ref{cor:specialcases} gives a quadratic improvement in $T$ over~\cite{berry2017quantum}, which had $\kappa_L =O(\kappa_V k T )$, in the regime $\alpha(A)<0$.  The second result in Corollary~\ref{cor:specialcases} gives a similar improvement over~\cite{krovi2022improved}, which had a complexity $O(k^{3/2} T)$. In fact, for $A$ having a negative log-norm we improve both the dependence on $k$ and $T$ for the condition number\footnote{Note that an extra factor of $k$ must be included to account for the implicit matrix inversion required in Ref.~\cite{krovi2022improved} so the effective scaling is $O(T k^{3/2})$.} over Ref.~\cite{krovi2022improved}. We provide more detailed analysis for the case of negative log-norm in the next section.
  
	\subsection*{Special case: \texorpdfstring{$A$}{A} with negative log-norm}

We can analyse the behaviour of these bounds in some more detail for the case where $P=I$, which corresponds to the case of $A$ having negative log-norm. Firstly, the transition from the $O(\sqrt{M})$ scaling to the $O(M)$ scaling is determined by the square root of
\begin{equation}\label{eqn: double-sum}
    \frac{e^{2\mu(A)(M+2)h} + M +1 - e^{2\mu(A) h} (2+M)}{(1-e^{2 h \mu(A)})^2 }.
\end{equation}
In the regime $\mu<0$ the scaling is $O(M)$, but one can wonder how small the constant prefactor can be. We can in fact derive a lower bound for it. Note that for any matrix $A$ we have $-\|A\| \le \mu(A) \le \|A\|$, and since stability requires $\|A\|h \le 1$ this implies that $e^{-1} \le e^{\mu h} \le 1$. The expression in Eq.~\eqref{eqn: double-sum} is equal to the double sum $\sum_{m=0}^M \sum_{r=0}^m (e^{\mu h})^r$. Letting $x=e^{\mu h}$ this sum can be written as
\begin{equation}
    \sum_{m=0}^M \sum_{r=0}^m (e^{\mu h})^r = \sum_{s=0}^M c_s x^s,
\end{equation}
where $c_s >0$. A sum of strictly positive monomials $c_s x^s$ with $c_s>0$, is monotone increasing for $x >0$, and therefore on the interval $[e^{-1},1]$ the sum attains its minimum when $x=e^{\mu h} = e^{-1}$. Replacing in Eq.~\eqref{eqn: double-sum}, we get
\small
\begin{align}
      \frac{ \left(e^{4} (M+1)-e^{2} (M+2)+e^{-2M}\right)}{\left(e^2-1\right)^2}  = \frac{e^4-2e^2 + e^{-2M}}{(e^2-1)^2}+ \frac{e^2}{e^2-1} M.
\end{align}
\normalsize
The smallest value the double sum can take among all systems with negative log-norm is hence smaller than
\begin{equation}
    1 + \frac{e^2}{e^2-1}M.
\end{equation}
Overall we have that in this  best-case scenario
\begin{align}
   \kappa_L &  \leq  \sqrt{\left (1+ \frac{\epsilon_{\mbox{\tiny TD}}}{c_{+,\times}} \right)^2 \left( 1  + g(k) \right ) I_0(2) \left(  \frac{e^4-2e^2 + e^{-2M}}{(e^2-1)^2}+ \frac{e^2}{e^2-1} M \right) + k (I_0(2) -1) M}  \\ 
   & \approx \sqrt{\left (1+ \frac{\epsilon_{\mbox{\tiny TD}}}{c_{+,\times}} \right)^2 \left( 1  + g(k) \right ) I_0(2) \frac{e^2}{e^2-1} + k (I_0(2) -1) } \times \sqrt{M}.
\end{align}
For example, for small $\epsilon_{\mbox{\tiny TD}}$, $k=19$ and $c_{+,\times} = 1$, the pre-factor constant before $\sqrt{M}$ is approximately $9.8$.
This optimal case occurs when $\mu(A) = -\|A\|$, which in turn occurs if 
\begin{equation}
    \max_{\|\v{x}\| =1} \mathrm{Re}(\v{x}^\dagger A \v{x}) = -\max_{\|\v{x} \| =1} \| A \v{x} \|.
\end{equation}
We can show this optimal bound occurs if and only if $A = -\|A\| I$ in the case of negative log-norm. 
\begin{lem}
Under the conditions of Corollary~\ref{cor:specialcases}, assume that $A$ has a negative log-norm. The optimal linear scaling $ Me^2/(e^2-1)$ in $\sum_{m=0}^{M} \sum_{l=0}^m C(l h)^2$ occurs if and only if $A = -\|A\|I$.
\end{lem}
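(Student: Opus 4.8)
The plan is to reduce the claim to the purely algebraic characterization $\mu(A) = -\|A\|$ if and only if $A = -\|A\|I$, since the monotonicity analysis preceding the Lemma already does the analytic work. Indeed, for the negative-log-norm case ($P=I$) we have $C(lh)^2 = e^{2\mu(A)lh}$, and the double sum $\sum_{m=0}^M\sum_{l=0}^m C(lh)^2$ has a leading linear-in-$M$ coefficient equal to $1/(1-e^{2\mu(A)h})$, which is strictly decreasing as $\mu(A)h$ decreases on its feasible interval. The constraints $-\|A\| \le \mu(A) < 0$ and $\|A\|h \le 1$ force $\mu(A)h \in [-1,0)$, so the minimal coefficient $e^2/(e^2-1)$ is attained exactly at the left endpoint $\mu(A)h = -1$; since $\mu(A) \ge -\|A\| \ge -1/h$, equality there requires simultaneously $\|A\|h = 1$ and $\mu(A) = -\|A\|$. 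Thus the whole Lemma hinges on proving the equivalence $\mu(A) = -\|A\| \Leftrightarrow A = -\|A\|I$.

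The reverse implication is immediate: if $A = -\|A\|I$ then $A + A^\dagger = -2\|A\|I$, so $\mu(A) = \tfrac12\lambda_{\max}(A+A^\dagger) = -\|A\|$. For the forward implication I would first record the elementary bound that for every unit vector $\v{x}$ one has $\mathrm{Re}(\v{x}^\dagger A \v{x}) \ge -|\v{x}^\dagger A\v{x}| \ge -\|A\v{x}\| \ge -\|A\|$, hence $\v{x}^\dagger(A+A^\dagger)\v{x} \ge -2\|A\|$ for all unit $\v{x}$. On the other hand, $\mu(A) = -\|A\|$ means $\lambda_{\max}(A+A^\dagger) = -2\|A\|$, i.e.\ $A+A^\dagger \preceq -2\|A\|I$. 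Combining the two inequalities gives $\v{x}^\dagger(A+A^\dagger)\v{x} = -2\|A\|$ for every unit $\v{x}$, and since a Hermitian matrix is determined by its quadratic form we conclude $A + A^\dagger = -2\|A\|I$.

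The remaining step, which I expect to be the crux, is to upgrade this constraint on the Hermitian part of $A$ into full rigidity using the singular-value (operator-norm) condition. Writing $A = -\|A\|I + K$, the relation $A+A^\dagger = -2\|A\|I$ forces $K^\dagger = -K$, so $K = iS$ for some Hermitian $S$. A direct expansion then gives $A^\dagger A = (-\|A\|I - K)(-\|A\|I + K) = \|A\|^2 I - K^2 = \|A\|^2 I + S^2$, with $S^2 \succeq 0$. Taking largest eigenvalues yields $\|A\|^2 = \lambda_{\max}(A^\dagger A) = \|A\|^2 + \|S\|^2$, which forces $\|S\| = 0$, hence $K = 0$ and $A = -\|A\|I$. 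The main obstacle is precisely this last move: the hypothesis $\mu(A) = -\|A\|$ only constrains the symmetric part $A+A^\dagger$, so one must separately certify that the anti-Hermitian part $K$ vanishes. The key observation making this work is that an anti-Hermitian perturbation can only \emph{enlarge} the spectrum of $A^\dagger A$ away from $\|A\|^2$, so it must be zero for the operator norm to remain $\|A\|$.
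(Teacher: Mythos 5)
Your proof is correct. The reduction to the algebraic equivalence $\mu(A)=-\|A\|\Leftrightarrow A=-\|A\|I$ (monotonicity of the linear-in-$M$ coefficient $1/(1-e^{2\mu(A)h})$, with the endpoint $\mu(A)h=-1$ forcing both $\mu(A)=-\|A\|$ and $\|A\|h=1$) matches the discussion preceding the lemma, and the ``if'' direction is immediate in both treatments; but your proof of the forward implication takes a genuinely different route from the paper's. The paper argues eigenvector-by-eigenvector: writing $\mu_i$ for the (all negative) eigenvalues of $(A+A^\dagger)/2$ with unit eigenvectors $\v{v}_i$, the ordering $|\mu_N|\ge\cdots\ge|\mu_1|=\|A\|$ combined with $|\mu_i|\le\|A\|$ forces $\mu_i=-\|A\|$ for every $i$, and saturation of the chain $\|A\|=|\mathrm{Re}(\v{v}_i^\dagger A\v{v}_i)|\le|\v{v}_i^\dagger A\v{v}_i|\le\|A\v{v}_i\|\le\|A\|$ then gives $\v{v}_i^\dagger\,\mathrm{Im}(A)\,\v{v}_i=0$, from which the paper concludes $\mathrm{Im}(A)=0$. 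You instead pin down the Hermitian part by a two-sided operator inequality (the unconditional bound $A+A^\dagger\succeq-2\|A\|I$ against the hypothesis $A+A^\dagger\preceq-2\|A\|I$), and then eliminate the anti-Hermitian remainder $K=iS$ through the identity $A^\dagger A=\|A\|^2I+S^2$, whose largest eigenvalue is $\|A\|^2+\|S\|^2$, forcing $S=0$. Your route buys something concrete: it sidesteps the paper's terse step in which vanishing of the diagonal of $\mathrm{Im}(A)$ in a single orthonormal basis is used to conclude $\mathrm{Im}(A)=0$ (as literally stated that inference is incomplete; it is rescued by noting that once $(A+A^\dagger)/2=-\|A\|I$ every unit vector is an eigenvector, or by invoking equality in Cauchy--Schwarz to get $A\v{v}_i\propto\v{v}_i$), and it isolates the quantitative rigidity fact that an anti-Hermitian perturbation of $-\|A\|I$ strictly inflates the operator norm. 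What the paper's argument buys in exchange is that it needs no decomposition of $A$ and exhibits the saturation structure at each eigenvector directly.
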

\begin{proof}The `if' direction is clear. Suppose conversely that $\mu(A) = -\|A\|$. Let $\lambda_i(X)$ denote the $i$'th largest eigenvalue of the Hermitian matrix $X$. Consider $\mu_i:=\lambda_{i} (Re(A)) := \lambda_{i}((A+A^\dagger)/2)$, and let $\v{v}_i$ be a corresponding unit eigenvector for this eigenvalue, with the eigenvalues sorted in decreasing order. By definition of log-norm, $\mu(A) = \mu_1$. Furthermore, $\mu_N \leq \dots \leq \mu_2 \leq \mu_1 <0$ and so 
\begin{equation}
    | \mu_N | \geq \dots \geq | \mu_1|. 
\end{equation}
Also,
\begin{equation}
    |\mu_i| = \left| \lambda_i\left( \frac{A+A^\dag}{2} \right)\right| \leq \frac{\| A\| + \|A^\dag\|}{2} = \|A\|
\end{equation}
Hence, we have the following chain of inequalities for all $i =1, \dots, N$
\begin{align}
\label{eq:bestconstantproof}
  \|A\| \geq | \mu_i| =  |\lambda_{i} (Re(A))| &= |\v{v}_i^\dagger Re(A) \v{v}_i| = |Re(\v{v}_i^\dagger A \v{v}_i)| \nonumber \\
    &\leq |\v{v}_i^\dagger A \v{v}_i| = \sqrt{\v{v}_i^\dagger (A^\dagger A) \v{v}_i} \le \|A\|.
\end{align}
These inequalities must be saturated. The equality $|Re(\v{v}_i^\dagger A \v{v}_i)| = |\v{v}_i^\dagger A \v{v}_i|$ implies $\v{v}_i^\dagger Im(A)\v{v}_i=0$ for all $i$. Since $(A+A^\dagger)/2$ is a Hermitian matrix, it follows that the vectors $\v{v}_i$ form a complete basis, and so we find that $(A - A^\dag)/2 =0$. The equality $|\mu_i| = \|A\|$ implies $Re(A)\v{v}_i = - \|A\| \v{v}_i$ for all $i$, since we are given that $\mu_i <0$. Since $\v{v}_i$ are a basis, this implies $(A+A^\dag)/2 = - \|A\| I$. Hence
\begin{equation}
    A = \frac{A+A^\dag}{2} + \frac{A-A^\dag}{2} = - \|A\| I,
\end{equation}
which completes the proof.
\end{proof}
This lemma implies that the bounds are near-optimal when $A$ has a relatively uniform spectrum with small imaginary components.

\section{Block-encoding of the linear system matrix}\label{sec:block-encoding}

For quantum linear solver algorithms it is typically assumed that the defining matrix has operator norm bounded above by $1$. As shown in the previous section, for a Taylor truncation order $k=k_{+,\times}$ we have that \mbox{$\| L\| \leq \sqrt{k+1} +2$}, and therefore we rescale the matrix to put it into a canonical form. We therefore define 
\begin{align}
    \tilde{L} &:= \frac{1}{\sqrt{k+1} +2}L\\
    \tilde{\v{c}} &:= \frac{1}{\sqrt{k+1} +2}\v{c}, 
\end{align}
and solve the rescaled problem $\tilde{L} \v{y} = \tilde{\v{c}}$, which has $\| \tilde{L} \| \leq 1$, as normally required. Note that the coherent solution $\ket{y}$ to this system coincides with the coherent solution of the original system. Moreover, the condition number $\kappa_{\tilde{L}}$ of the matrix $\tilde{L}$ equals $\kappa_L$ since the expression $\|L\|\|L^{-1}\|$ is invariant under rescaling.

For the linear-solver algorithm we require both a unitary $U_{\tilde{c}}$ that prepares the normalized state $\ket{\tilde{c}}$ as well as $U_{\tilde{L}}$, a block-encoding of $\tilde{L}$. For the former we note that $\ket{\tilde{c}}= \ket{c}$, since the rescaling is removed under the normalization of the state. It is readily seen that $U_{\tilde{c}}$ can be constructed via a single call to $U_0$ and a single call to $U_b$.

For $U_{\tilde{L}}$, we recall that an $(\omega_X, a_X, \epsilon_X)$ block-encoding of a matrix $X$ is defined as a unitary $U_X$ such that
\begin{equation}
   \|  (\bra{0^{a_X}} \otimes I) U_X (\ket{0^{a_X}} \otimes I) - X/\omega_X \| \leq \epsilon_X.
\end{equation}
Here we denote by $\ket{0^a}$ the state $\ket{0}^{\otimes a}$, for any integer $a$.
We assume oracle access to $A$ via an $(\omega, a, 0)$ block-encoding of $A$. From this we need to construct a block-encoding $U_{\tilde{L}}$ of the matrix $\tilde{L}$. The following result shows this can be readily done.
\begin{theorem} \label{thm:omegavalue} Let $\tilde{L} := \frac{1}{\sqrt{k+1} +2}L$, where $L$ is defined in Eq.~\eqref{eq:L}. Take $\omega h \geq 1$. Given a Taylor truncation scale $k=k_{+,\times}$
 we can construct an $(\tilde{\omega}, a+6,0)$-block-encoding $U_{\tilde{L}}$ of $\tilde{L}$ via a single call to an $(\omega, a, 0)$ block-encoding $U_{A}$ of the matrix $A$, with
 \begin{equation}
     \tilde{\omega} = \frac{ \omega h +1 + \sqrt{k+1}}{2 + \sqrt{k+1}} = O\left (1+\frac{ \omega h }{\sqrt{k}} \right),
 \end{equation}
 being the rescaling factor of the block-encoding for $\tilde{L}$.
\end{theorem}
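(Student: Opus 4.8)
The plan is to build $U_{\tilde L}$ by a linear-combination-of-block-encodings (LCU) construction applied to the decomposition $L = L_1 + L_2 + L_3$ of Eq.~\eqref{eq:L}. I would block-encode each of the three pieces separately, with subnormalizations $\beta_1 = 1$, $\beta_2 = \omega h$ and $\beta_3 = \sqrt{k+1}$, and then combine them with a top-level PREPARE/SELECT pair. Since LCU of block-encodings of $B_i := L_i/\beta_i$ (each with $\|B_i\|\le 1$) yields a block-encoding of $\sum_i \beta_i B_i = L$ with scale factor $\sum_i \beta_i$, this produces a block-encoding of $L$ with scale $1 + \omega h + \sqrt{k+1}$. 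Dividing by $\omega_k := \sqrt{k+1}+2$ (the rescaling that defines $\tilde L$) then gives exactly $\tilde\omega = (\omega h + 1 + \sqrt{k+1})/(\sqrt{k+1}+2)$, with the claimed $O(1 + \omega h/\sqrt{k})$ asymptotics.

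Then I would construct each piece. The term $L_1 = I$ is block-encoded trivially with $\beta_1 = 1$. For $L_2$, note that the within-block terms $\ketbra{m,p_m+j}{m,p_m+j-1}$ (and the idling terms) act on orthogonal source/target indices, so $L_2$ is realized as a single controlled shift on the clock/Taylor registers: conditioned on a source with $m<M$ we apply $A$ through one call to $U_A$ (contributing the block value $A/\omega$) together with a $j$-controlled single-qubit rotation implementing the factor $1/j \le 1$, so that the resulting block entry is $\tfrac{Ah}{j} = (\omega h)\cdot\tfrac{1}{j}\cdot\tfrac{A}{\omega}$ within the subnormalization $\beta_2 = \omega h$; conditioned on an idling source ($m \ge M$) we instead apply $I$ with a rotation producing the factor $1/(\omega h) \le 1$, giving block entry $I$ within the same subnormalization. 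The hypothesis $\omega h \ge 1$ enters precisely here: it guarantees both $1/j$ and $1/(\omega h)$ lie in $[0,1]$, so a single subnormalization $\omega h$ (and hence a single controlled call to $U_A$) suffices for all of $L_2$, while $\|A\|h \le 1$ ensures $\|A/(\omega j)\| \le 1$ so that $B_2$ is a legitimate block-encoded operator. The term $L_3 = -\sum_{m,j}\ketbra{m+1,0}{m,j}\otimes I$ is the fan-in map of norm $\sqrt{k+1}$; it is realized as the partial isometry $L_3/\sqrt{k+1}$ by preparing a uniform superposition $\tfrac{1}{\sqrt{k+1}}\sum_{j=0}^k\ket{j}$ on the Taylor register (e.g.\ with Hadamards), collapsing the $k+1$ indices at clock $m$ onto the $j=0$ slot at clock $m+1$, giving $\beta_3 = \sqrt{k+1}$.

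Finally I would combine the three block-encodings. A top-level PREPARE prepares a $3$-outcome ancilla in amplitudes proportional to $(\sqrt{1},\sqrt{\omega h},\sqrt{\sqrt{k+1}})$, and SELECT applies the three block-encodings above controlled on this ancilla, with the overall minus signs of $L_2,L_3$ absorbed as phases into PREPARE. Because $U_A$ is only invoked in the $L_2$ branch, the whole construction makes a single (controlled) call to $U_A$, and since every step---the shifts, the rotations realizing $h/j$ and $1/(\omega h)$, the fan-in, and the $(\omega,a,0)$ encoding itself---is exact, the resulting encoding has error $0$. Counting ancillas gives $a$ (from $U_A$) plus a constant overhead of $6$ qubits (the $2$-qubit $3$-way select register together with the rotation, fan-in and flag qubits used to set the coefficients), yielding the claimed $(\tilde\omega, a+6, 0)$ block-encoding. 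I expect the main obstacle to be the coefficient bookkeeping of the second step: verifying that the $j$-dependent Taylor weights $h/j$, the identity idling weights, and the $\sqrt{k+1}$ fan-in can all be folded into the three clean subnormalizations $\{1,\omega h,\sqrt{k+1}\}$---in particular that the idling $I$-terms are absorbed into the $\omega h$ branch rather than contributing an extra $+1$---which is exactly what the assumption $\omega h \ge 1$ secures.
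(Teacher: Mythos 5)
Your proposal is correct and follows essentially the same route as the paper's own proof: the same decomposition $L = L_1 + L_2 + L_3$ with subnormalizations $1$, $\omega h$, $\sqrt{k+1}$; the same use of $\omega h \ge 1$ to fold the Taylor weights $1/j$ and the idling identities into a single $\omega h$ branch containing the lone (controlled) call to $U_A$; the same uniform-superposition-plus-projection realization of the fan-in term $L_3$; and the same top-level LCU yielding scale $1+\omega h+\sqrt{k+1}$, hence $\tilde\omega = (\omega h + 1 + \sqrt{k+1})/(\sqrt{k+1}+2)$, with ancilla count $a+6$.
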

\begin{proof}
We construct a block-encoding for $L = L_1 + L_2 + L_3$, which in turn provides a block-encoding for the rescaled matrix $\tilde{L}$. We consider the problem separately for $L_1$, $L_2$ and $L_3$ and then combine them with linear combination of unitaries.

First, $L_1$: this is just the identity matrix, hence the block-encoding is trivial -- we have a $(1,0,0)$-block-encoding of $L_1$. 

Second, $L_2$:
\begin{align}
L_{2} := & - \sum_{m=0}^{M-1} \sum_{j=1}^{k}  \ketbra{m,j}{m,j-1} \otimes \frac{Ah}{j} - \theta(p) \sum_{m=M}^{ M+ \frac{p}{k+1}-1} \sum_{j=1}^{k} \ketbra{m,j}{m,j-1} \otimes I,
\end{align}
which we rewrite as
\begin{align}
 L_2 = \omega h L'_2 L''_2 L'''_2,
\end{align}
where
\begin{align}
L'_{2} := - \left[\sum_{m=0}^{M-1} \ketbra{m}{m} \otimes I \otimes \frac{A}{\omega} + \sum_{m=M}^{ M+ \frac{p}{k+1}} \ketbra{m}{m} \otimes I \otimes I\right]
\end{align}
\begin{align}
L_2'' = \left[- \sum_{m=0}^{M-1} \sum_{j=1}^{k}  \ketbra{m,j}{m,j} \otimes \frac{I}{j}  - \theta(p) \sum_{m=M}^{ M+ \frac{p}{k+1}-1} \sum_{j=1}^{k} \ketbra{m,j}{m,j} \otimes \frac{I}{ \omega h} \right] \\
    L'''_2 = I \otimes \Delta^\dag \otimes I, \quad \quad \Delta^\dag = \sum_{j=0}^{k-1} \ketbra{j+1}{j}.
\end{align}
$L_2'$ can be $(1,a,0)$-block-encoded by $U_{L'_2}$,  where
\begin{equation}
     U_{L'_2} = \sum_{m=0}^{M-1} \ketbra{m}{m} \otimes I \otimes U_{A} + \sum_{m=M}^{M+p/(k+1)} \ketbra{m}{m} \otimes I \otimes I
\end{equation}
   $L''_2$ can be $(1,1,0)$-block-encoded as
    
\begin{equation}
    U_{L''_2} \ket{0} \ket{mj} = \begin{cases}
      \left(-\frac{1}{j} \ket{0} + \sqrt{1-\frac{1}{j^2}} \ket{1}\right) \ket{m j} \quad \mbox{ if }  (m < M \mbox{ and } j> 0) , \\ 
     \left(-\frac{1}{\omega h} \ket{0} + \sqrt{1-\frac{1}{\omega^2 h^2}} \ket{1}\right) \ket{m j} \quad \mbox{ if }  (m \geq M \mbox{ and } j> 0), \\
      \ket{1} \ket{mj} \quad \mbox{ if }  j=0,
    \end{cases}
\end{equation}
 where we assumed that $\omega h \ge 1$.
This gives a $(1,1,0)$-block-encoding of $L_2'$.
Finally, $L'''_2$ can be block-encoded as
$U_{L_2'''} = I \otimes U_{\Delta^\dag} \otimes I$, where $U_{\Delta^\dag}$ acts on the space where $\Delta^\dag$ acts augmented by an extra qubit ancilla:
\begin{equation}
    U_{\Delta^\dag} = \Delta^\dag \otimes \ketbra{0}{0} + \Delta \otimes \ketbra{1}{1} + \ketbra{0}{0} \otimes \ketbra{0}{1}.   
\end{equation}
Using $\Delta^\dag \Delta = I - \ketbra{0}{0}$ and $\Delta \Delta^\dag  = I$, one can check that $U_{\Delta^\dag}$ is unitary. Hence, we have a $(1,1,0)$-block-encoding of $I \otimes \Delta^\dag \otimes I$. 

Overall, we set $U_{L_2}:=U_{L'_2}U_{L_2''} U_{L_2'''}$, where each one of the three block-encoding unitaries has implicit identity operators on the auxiliary qubits of the other block-encodings. This gives an $(\omega h, a+ 2,0)$-block-encoding of $L_2$ involving a single query to~$U_{A}$.

Third, $L_3$: we rewrite
\begin{equation}
    L_3 = - \sqrt{k+1} (\tilde{\Delta}^\dag \otimes I \otimes I) (I \otimes \ketbra{0}{0} \otimes I) (I \otimes U^\dag_{USP} \otimes I),
\end{equation}
where $\tilde{\Delta}^\dag = \sum_{m=0}^{M-1} \ketbra{m+1}{m}$,
\begin{equation}
    U_{USP} \ket{0} = \ket{+_k}
\end{equation}
and $\ket{+_k} = \frac{1}{\sqrt{k+1}} \sum_{j=0}^k \ket{j}$.
Here $U_{USP}$ is a uniform state preparation circuit~\cite{lee2021even} (if $k$ is a power of $2$, this is just the tensor product of $\log_2 k$ Hadamards). Furthermore, the projector onto $\ketbra{0}{0}$ can be decomposed as
\begin{equation}
    \ketbra{0}{0} = \frac{1}{2} I + \frac{i}{2} e^{i \pi Z/2},
\end{equation}
and so one can obtain a $(1,1,0)$-block-encoding of $I \otimes \ketbra{0}{0} \otimes I$ via Linear Combination of Unitaries (LCU)~\cite{lin2022lecture}. Finally, we can construct a $(1,1,0)$-block-encoding of $\tilde{\Delta}^\dag$ exactly in the same way as we constructed one for $\Delta^\dag$ above. Overall, we obtain a $(\sqrt{k+1},2,0)$-block-encoding of $L_3$. Finally, consider
\begin{equation}
\frac{1}{1+\omega h + \sqrt{1+k}}L =  \frac{1}{1+\omega h + \sqrt{1+k}} \left[ L_1 + \omega h \left( \frac{L_2}{ \omega h}\right) + \sqrt{1+k} \left( \frac{L_3}{\sqrt{1+k}} \right) \right].  
\end{equation}

This can be realized by the LCU 
\begin{equation}
 \frac{1}{1+\omega h + \sqrt{1+k}} (U_{L_1} + \omega h U_{L_2} + \sqrt{1+k} U_{L_3})    
\end{equation}
using $2$ extra qubits. We hence constructed a $(1+\omega h + \sqrt{1+k}, a+6,0)$-block-encoding of $U_L$ using a single call to $U_{A}$. This implies that
\begin{equation}
  ( \bra{0^{a+6}} \otimes I ) U_L (\ket{0^{a+6}}\otimes I) = \frac{1}{\omega h +1 + \sqrt{1+k}} L =\frac{2+\sqrt{k+1}}{\omega h+1 + \sqrt{1+k}}\tilde{L} =: \frac{1}{\tilde{\omega}} \tilde{L},
\end{equation}
 as claimed.
\end{proof} Therefore, we have all the ingredients needed for the linear-solver component. The optimal asymptotic query complexity of the quantum linear solver algorithm~\cite{costa2022optimal} invokes the oracles $U_{\tilde{L}}$ and $U_{\tilde{c}}$ for a target error $\epsilon_L$ a number of times
\begin{equation}
O\left (\tilde{\omega} \kappa_{\tilde{L}} \log (1/\epsilon_L) \right ) = O( (\kappa_L + \omega h\kappa_{L}/\sqrt{k} ) \log (1/\epsilon_L)).
\end{equation}
Since these oracles are in turn constructed from single calls to $U_A$, $U_0$ and $U_b$ the above expression also gives the query complexity to the defining oracles of the linear ODE problem.

We next turn to post-processing of the state that is output from the linear solver subroutine.

\section{Success probabilities for post-selection} 
\label{sec:successprobabilityanalysis}

Recall that we denote by $\ket{y}$ a quantum state proportional to the solution to the linear system problem, \mbox{$\v{y} = L^{-1} \v{c}$}. Furthermore, $\ket{x_H^{\epsilon_{\mbox{\tiny TD}}}}$ and $\ket{x_T}$ are proportional to
\begin{equation}
\v{x}_H^{\epsilon_{\mbox{\tiny TD}}} =  [\v{x}^0, 0, \dots, 0, \v{x}^1, 0, \dots, 0, \v{x}^{M}]
\end{equation}
and
\begin{equation}
\v{x}_T = [0, \dots, 0, \v{x}^{M}],
\end{equation}
respectively. We then have,

\begin{theorem}[Success probability of ODE-solver -- history state]\label{thm: successprobhistorystate}
	Let $h>0$ be such that  $\|A\| h \leq 1$. Let $\ket{y}$ be the normalized solution to the linear system $L \v{y} = \v{c}$ with $p_m =0$ for all $m$ and
  \begin{equation}
    \v{c} = [ \overbrace{\v{x}^0, \underbrace{\v{b}, 0, \dots, 0}_{k}}^{m=0}, \dots, \overbrace{0,\underbrace{\v{b}, 0, \dots, 0}_{k}}^{m=M-1},0].
\end{equation}

We define $\ket{x_H^{\epsilon_{\mbox{\tiny TD}}}}$ as the state obtained by successful post-selection on the first outcome of the projective measurement $\{\Pi_{H}, I-\Pi_{H} \}$ on the auxiliary register:
\begin{equation}
    \Pi_{H} = I \otimes \ketbra{0}{0} \otimes I, \quad  \ket{x_H^{\epsilon_{\mbox{\tiny TD}}}} := \frac{\Pi_H \ket{y}}{\| \Pi_H \ket{y}\|}.
\end{equation}
For the case of multiplicative errors $\|\v{x}(mh) - \v{x}^m\| \le \epsilon_{\mbox{\tiny TD}} \|\v{x}(mh)\|$, we have that the success probability is lower bounded as
		\begin{equation}
			\label{eq:successprobabilityODE}
			\mathrm{Pr}_H  \geq \max \left\{ \left( 1 + \left(1+\frac{ \lambda_\times(T)}{(1-\epsilon_{\mbox{\tiny TD}})\|A\|}\right)^2 (I_0(2)-1)  \right)^{-1} , \left(1+ \frac{ (I_0(2)-1)}{K}\right)^{-1} \right\} 
		\end{equation}
		where $K = (3-e)^2$ if $\v{b} \neq 0$ and $K=1$ for $\v{b}=0$ and $\lambda_\times(T) = \max\{ \max_{t\in [0,T]} \|\v{b}\|/\|\v{x}(t)\|, T\}$. For the case of additive errors $\|\v{x}(mh) - \v{x}^m\| \le \epsilon_{\mbox{\tiny TD}}$, we have $\mathrm{Pr}_H \ge \left(1+ \frac{ (I_0(2)-1)}{K}\right)^{-1} $.
  Therefore, $Pr_H =\Omega(1)$ in all cases.
\end{theorem}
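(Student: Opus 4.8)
The plan is to compute $\mathrm{Pr}_H$ exactly as a ratio of ``good'' to total weight in the solution vector $\v{y}=L^{-1}\v{c}$, and then upper bound the unwanted ``junk'' weight in two complementary ways, corresponding to the two arguments of the maximum. Using the explicit solved form of $\v{y}$ established in Remark~\ref{rmk:solutionlinearsystem} (with $p_m=0$ the component at clock-label $(m,0)$ equals the discretized solution $\v{x}^m$, while the $(m,j)$ components for $j=1,\dots,k$ are junk), and noting that $\Pi_H$ keeps exactly the $j=0$ sector, I would write
\[
\mathrm{Pr}_H=\frac{\|\Pi_H\v{y}\|^2}{\|\v{y}\|^2}=\frac{G}{G+J},\qquad G:=\sum_{m=0}^M\|\v{x}^m\|^2,\quad J:=\sum_{m=0}^{M-1}\sum_{j=1}^k\|\v{\gamma}^{(m,j)}\|^2 ,
\]
so the whole problem reduces to bounding $J/G$. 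From \eqref{eq:jsumproof} specialized to this $\v{c}$ the junk has the closed form $\v{\gamma}^{(m,j)}=\tfrac{1}{j!}(Ah)^j\v{x}^m+\tfrac{1}{j!}(Ah)^{j-1}h\v{b}$, and I would record the elementary identity $\sum_{j\ge1}1/(j!)^2=I_0(2)-1$.

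For the first (multiplicative) bound I would apply the triangle inequality and $\|Ah\|\le 1$ to get $\|\v{\gamma}^{(m,j)}\|\le\tfrac{1}{j!}(\|\v{x}^m\|+h\|\v{b}\|)$, then convert the forcing term using the multiplicative promise $\|\v{x}^m\|\ge(1-\epsilon_{\mbox{\tiny TD}})\|\v{x}(mh)\|$ together with $h\le 1/\|A\|$ and the definition of $\lambda_\times(T)$, giving $h\|\v{b}\|/\|\v{x}^m\|\le \lambda_\times(T)/((1-\epsilon_{\mbox{\tiny TD}})\|A\|)$. Summing over $j$ and $m$ yields $J\le\bigl(1+\lambda_\times(T)/((1-\epsilon_{\mbox{\tiny TD}})\|A\|)\bigr)^2(I_0(2)-1)\,G$, which is the first term in the max.

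For the second bound I would exploit the factorization $\v{\gamma}^{(m,j)}=\tfrac{(Ah)^{j-1}}{j!}\,h(A\v{x}^m+\v{b})$, so that the entire junk block is governed by the single discrete-flow vector $\v{w}_m:=h(A\v{x}^m+\v{b})$ via $\|\v{\gamma}^{(m,j)}\|\le\tfrac{1}{j!}\|\v{w}_m\|$, whence $J\le(I_0(2)-1)\sum_m\|\v{w}_m\|^2$. When $\v{b}=\v{0}$ this is immediate with $\|\v{w}_m\|\le\|\v{x}^m\|$, giving $K=1$. For $\v{b}\neq\v{0}$ the key identity is $\v{x}^{m+1}-\v{x}^m=S_k(Ah)\,\v{w}_m$; since $S_k(Ah)=I+\sum_{j\ge 2}(Ah)^{j-1}/j!$ obeys $\|S_k(Ah)-I\|\le e-2<1$, a Neumann-series estimate gives $\|S_k(Ah)^{-1}\|\le 1/(3-e)$ and hence $\|\v{w}_m\|\le\tfrac{1}{3-e}\|\v{x}^{m+1}-\v{x}^m\|$, which is where $K=(3-e)^2$ is born. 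Since this argument never invokes $\lambda_\times$, the resulting bound holds verbatim for both the multiplicative and additive schemes, which is exactly why the additive case inherits only the $K$-term; in all cases this gives $\mathrm{Pr}_H=\Omega(1)$.

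The main obstacle I anticipate is the final step of the second bound: promoting the per-block estimate $\|\v{w}_m\|\le\tfrac{1}{3-e}\|\v{x}^{m+1}-\v{x}^m\|$ to the clean uniform constant $\tfrac{I_0(2)-1}{(3-e)^2}$ multiplying $G$. Controlling $\sum_m\|\v{x}^{m+1}-\v{x}^m\|^2$ by $G$ cannot be done by a naive triangle inequality, which would cost an extra factor and destroy the stated constant, so the delicate part is the bookkeeping that associates each junk block $m$ with the succeeding good component $\v{x}^{m+1}$ and accounts for the fact that there are $M$ junk blocks but $M+1$ good terms; once this association is made sharp, everything else is elementary norm estimation.
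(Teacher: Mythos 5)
Your decomposition $\mathrm{Pr}_H = G/(G+J)$, your multiplicative bound (converting the forcing term via $h\|\v{b}\|/\|\v{x}^m\| \le \lambda_\times(T)/\bigl((1-\epsilon_{\mbox{\tiny TD}})\|A\|\bigr)$), and your homogeneous case are all correct and coincide with the paper's proof: the paper's auxiliary function $\zeta$ is exactly your conversion factor, and its $\v{y}^{(m,1)}$ is your $\v{w}_m := Ah\,\v{x}^m + h\v{b}$. The genuine gap is precisely the one you flagged: the inhomogeneous $K=(3-e)^2$ bound is never established. From your (correct) identity $\v{x}^{m+1}-\v{x}^m = S_k(Ah)\v{w}_m$ and $\|S_k(Ah)^{-1}\|\le 1/(3-e)$, the only unconditional completion is
\begin{equation*}
\|\v{w}_m\|^2 \le \frac{\bigl(\|\v{x}^{m+1}\|+\|\v{x}^m\|\bigr)^2}{(3-e)^2}\le \frac{2\bigl(\|\v{x}^{m+1}\|^2+\|\v{x}^m\|^2\bigr)}{(3-e)^2},
\end{equation*}
and summing over $m$ double counts every $\|\v{x}^m\|^2$, so you lose a factor of $4$: you obtain $J\le 4(I_0(2)-1)(3-e)^{-2}G$, i.e.\ $\mathrm{Pr}_H \ge 1/(1+64.5)\approx 0.015$ rather than the stated $29/500\approx 0.058$. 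Moreover, the ``sharp association'' you hope for does not exist along this route: with the correct recursion, $\v{x}^{m+1}$ can vanish exactly while $\v{x}^{m+1}-\v{x}^m$ (hence $\v{w}_m$, hence the junk of block $m$) stays large, so block $m$ cannot be charged to $\|\v{x}^{m+1}\|$ alone, and the factor $4$ in the chain above is tight (take all $\|\v{x}^m\|$ equal). As written, your proposal proves $\mathrm{Pr}_H=\Omega(1)$ with a weaker constant, not the inequality with $K=(3-e)^2$.

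You should also know how the paper avoids this, because your analysis exposes its weak point. The published proof uses the recursion $\v{y}^{(m+1,0)}=\sum_{j=1}^k\frac{(Ah)^{j-1}}{j!}\v{y}^{(m,1)}$, i.e.\ $\v{x}^{m+1}=S_k(Ah)\v{w}_m$, which charges the junk of block $m$ directly to the succeeding good component, $\|\v{x}^{m+1}\|\ge(3-e)\|\v{w}_m\|$, with no difference and no factor loss. But $L_3$ couples $\ketbra{m+1,0}{m,j}$ for \emph{all} $j=0,\dots,k$, so the correct relation is $\v{x}^{m+1}=\v{x}^m+S_k(Ah)\v{w}_m$ --- exactly your identity; the paper's recursion drops the $\v{y}^{(m,0)}$ term. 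In other words, the step you could not make sharp is also the unsound step of the published proof. If you want to close the gap (and in fact do better than the stated constant), keep your identity but replace the Neumann estimate $\|S_k(Ah)^{-1}\|\le 1/(3-e)\approx 3.55$ with the sharper bound coming from $S_\infty(X)^{-1}=X(e^X-I)^{-1}=\sum_{n\ge 0}\frac{B_n}{n!}X^n$ (Bernoulli series, radius of convergence $2\pi$), which gives $\|S_\infty(X)^{-1}\|\le\sum_{n\ge0}|B_n|/n!\approx 1.59$ for $\|X\|\le 1$, plus a small perturbative correction for the finite-$k$ truncation. Even with the factor-$4$ loss this yields $J\le 4\,(1.59)^2(I_0(2)-1)\,G\approx 13\,G$, which is below the claimed $(3-e)^{-2}(I_0(2)-1)\,G\approx 16\,G$ for the truncation orders the algorithm actually uses ($k\ge 3$), so the theorem's inequality follows by a route that is sound where the published one is not.
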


\begin{proof}

The desired data is given by
\begin{equation}
    \v{x}_H^{\epsilon_{\mbox{\tiny TD}}} = [\v{y}^{(0,0)},0, \dots, 0,\v{y}^{(1,0)}, 0, \cdots, 0, \v{y}^{(M,0)}  ]
\end{equation}
which is obtained by post-selecting success (first outcome) in the projective measurement $\{\Pi_{H}, I-\Pi_{H} \}$ (as it follows from the expression for $\v{y}$, Eq.~\eqref{eq:vectorberry2}).
Moreover the success probability is given by $\mathrm{Pr}_H = \|\v{x}_H^{\epsilon_{\mbox{\tiny TD}}}\|^2/\|\v{y}\|^2$.

From Eq.~\eqref{eq:berrymatrix} with $p_m \equiv 0$,
\begin{align}
    \v{y}^{(m,j)} &= \frac{(Ah)^{j-1}}{j!} \v{y}^{(m,1)}, \nonumber \\
     \v{y}^{(m,1)} &= Ah \v{y}^{(m,0)}+ h \v{b}, \nonumber \\
	    \v{y}^{(m+1,0)} &= \sum_{j=1}^k \frac{(A h)^{j-1}}{j!} \v{y}^{(m,1)}.
\end{align}
	
 We first show that	$\mathrm{Pr}_H  \geq \left(1+ \frac{ (I_0(2)-1)}{K}\right)^{-1}$.

We compute
	\begin{equation}
		\| \v{y} \|^2 = \sum_{m=0}^M \| \v{y}^{(m,0)}\|^2 + \sum_{m=0}^{M-1} \sum_{j=1}^{k} \| \v{y}^{(m,j)}\|^2 := \| \v{x}_H^{\epsilon_{\mbox{\tiny TD}}}\|^2 + \sum_{m=0}^{M-1} \sum_{j=1}^{k} \| \v{y}^{(m,j)}\|^2.
	\end{equation}
We start with the simple case where $\v{b} = 0$. Then
\begin{align*}
   \sum_{m=0}^{M-1}  \sum_{j=1}^k \| \v{y}^{(m,j)}\|^2 &=  \sum_{m=0}^{M-1}  \sum_{j=1}^k \left\| \frac{(Ah)^{j-1}}{j!} \v{y}^{(m,0)}\right\|^2  \leq  \sum_{m=0}^{M-1}  \sum_{j=1}^k \frac{1}{j!^2} \| \v{y}^{(m,0)}\|^2 \\ & \leq (I_0(2) -1)  \sum_{m=0}^M   \| \v{y}^{(m,0)}\|^2 = (I_0(2) -1) \|\v{x}_H^{\epsilon_{\mbox{\tiny TD}}}\|^2.
\end{align*}
Hence, $\| \v{y} \| \leq I_0(2) \|\v{x}_H^{\epsilon_{\mbox{\tiny TD}}}\|^2$ and the result follows.

We next consider the case $\v{b} \neq 0$. Using $\|A\| h \leq 1$ we have that
	\begin{align}
	    \| \v{y}^{(m+1,0)} \| & \geq \|\v{y}^{(m,1)}\| - \left\|\sum_{j=2}^k \frac{(A h)^j}{j!} \v{y}^{(m,1)} \right\| \geq \|\v{y}^{(m,1)}\| - \sum_{j=2}^k \frac{1}{j!} \left\| \v{y}^{(m,1)} \right\|\\ & = \left(1- \sum_{j=2}^k \frac{1}{j!}\right) \| \v{y}^{(m,1)} \| \geq (3-e) \| \v{y}^{(m,1)} \|.
	\end{align}
	So, $\sum_{m=1}^{M} \| \v{y}^{(m,0)} \|^2 \geq (3-e)^2 \sum_{m=0}^{M-1} \| \v{y}^{(m,1)}\|^2$. From that and $\| \v{y}^{(m,j)}\| \leq \frac{1}{j!} \| \v{y}^{(m,1)}\|$, we get
	\begin{align}
	    \sum_{m=0}^{M-1} \sum_{j=1}^k \| \v{y}^{(m,j)}\|^2 & \leq \sum_{m=0}^{M-1} \sum_{j=1}^k \left(\frac{1}{j!} \right)^2 \| \v{y}^{(m,1)}\|^2  \leq (I_0(2)-1) \sum_{m=0}^{M-1}\| \v{y}^{(m,1)}\|^2 \\ & \leq \frac{I_0(2)-1}{(3-e)^2} \sum_{m=1}^{M} \| \v{y}^{(m,0)} \|^2 \leq \frac{I_0(2)-1}{(3-e)^2} \|\v{x}_H^{\epsilon_{\mbox{\tiny TD}}}\|^2.
	\end{align}
 We conclude that
	\begin{equation}
	    \| \v{y} \|^2 \leq \left(1 + \frac{I_0(2)-1}{(3-e)^2} \right)  \|\v{x}_H^{\epsilon_{\mbox{\tiny TD}}}\|^2,
	\end{equation}
and hence,
\begin{equation}
    \mathrm{Pr}_H \ge \left(1 + \frac{I_0(2)-1}{(3-e)^2} \right)^{-1},
\end{equation}
which holds for both additive and multiplicative error conditions. We next show that for the multiplicative error case $\mathrm{Pr}_H  \geq\left( 1 + \left(1+\frac{ \lambda_\times(T)}{(1-\epsilon_{\mbox{\tiny TD}})\|A\|}\right)^2 (I_0(2)-1)  \right)^{-1} $.

We choose a function $\zeta(t)$ such that
\begin{equation}
    \|\v{b}\| \le \zeta(mh) \|A\| \|\v{x}^m\|,
\end{equation}
for all $m=0,1,2, \dots , M$, and define a constant $\zeta_* $ to be an upper bound on $\zeta(mh)$ for all $m$, which we specify shortly. Then,
\begin{equation}
    \|\v{y}^{(m,1)}\| \le \|Ah \v{y}^{(m,0)}\|+ \|h \v{b}\| \le \|\v{y}^{(m,0)}\|+ \zeta(mh)\| \v{x}^m\| = (1+ \zeta(mh)) \|\v{x}^m\|.
\end{equation}
It follows that
\begin{align}
     \sum_{m=0}^{M-1} \sum_{j=1}^k \| \v{y}^{(m,j)}\|^2 &\le  \sum_{m=0}^{M-1} \sum_{j=1}^k \frac{1}{(j!)^2}\| \v{y}^{(m,1)}\|^2 \nonumber \\
     &\le  (I_0(2) -1) \sum_{m=0}^{M} \| \v{y}^{(m,1)}\|^2 \nonumber \\
      &\le  (I_0(2) -1)(1+\zeta_*)^2 \sum_{m=0}^{M} \| \v{x}^{m}\|^2 =  (I_0(2) -1)(1+\zeta_*)^2  \|\v{x}_H^{\epsilon_{\mbox{\tiny TD}}}\|^2.
\end{align}
This upper bounds the norm of the unwanted part of $\v{y}$. 
The probability of success is hence lower bounded as
\begin{equation}
    \mathrm{Pr}_H \ge \frac{1}{1 + (I_0(2)-1)(1+\zeta_*)^2}.
 \end{equation}
Now recall that $\lambda_\times(T)$ is required to obey
\begin{equation}
    \lambda_\times (T) \ge \frac{\|\v{b}\|}{\|\v{x}(t)\|},
\end{equation}
for all $t\in [0,T]$. We also have that $\|\v{x}(mh) - \v{x}^m\| \le \epsilon_{\mbox{\tiny TD}} \|\v{x} (mh)\| $ for all $m$ by (reverse) triangle inequality. This implies that $\|\v{x}^m\| \ge (1-\epsilon_{\mbox{\tiny TD}} ) \|\v{x}(mh)\|$ for all $m$. Therefore,
\begin{equation}
\frac{1}{(1-\epsilon_{\mbox{\tiny TD}})} \frac{\lambda_\times(T)}{\|A\|} \ge  \frac{1}{(1-\epsilon_{\mbox{\tiny TD}})} \frac{\|\v{b}\|}{\|A\| \|\v{x}(mh)\|} \ge \frac{1}{\|A\|} \frac{\|\v{b}\|}{\|\v{x}^m\|}  
\end{equation}
for all $m=0,1,\dots , M$. Therefore we can take
\begin{equation}
    \zeta_* = \frac{1}{(1-\epsilon_{\mbox{\tiny TD}})} \frac{\lambda_\times(T)}{\|A\|}.
\end{equation}
This implies,
\begin{equation}
    \mathrm{Pr}_H  \geq  \left( 1 + \left(1+\frac{ \lambda_\times(T)}{(1-\epsilon_{\mbox{\tiny TD}})\|A\|}\right)^2 (I_0(2)-1)  \right)^{-1},
\end{equation}
which completes the proof.
\end{proof} 

A comment about the above bounds. Recall that we have
 \begin{equation}
    \v{x}(t) = e^{At} \v{x}(0) + \int_0^t dt' e^{At'} \v{b},
 \end{equation}
 and so for non-zero $\v{b}$ we can have $\v{x}(t) =0$ whenever
 \begin{equation}
\v{x}(0) = - \int_0^t dt' e^{A(t'-t)} \v{b}.
 \end{equation}
 Therefore, in the case of multiplicative errors, $\lambda_\times(T)$ is unbounded from above, and depends on the relations between $\v{b}, \v{x}(0)$ and $A$. Despite this, it can be shown that we can still bound the success probability above $0.05$, which in fact holds for both additive and multiplicative errors.

 We now turn to the solution state problem:

\begin{theorem}[Success probability of ODE-solver -- solution state]\label{thm:successprobfinalstate}
	Let $h>0$ be such that  $\|A\| h \leq 1$. Let $\ket{y}$ be the normalized solution to the linear system $L \v{y} = \v{c}$ with $L$ as in Eq.~\eqref{eq:L}, $p_m =0$ for all $m <M$, $k=k_{+,\times}$ an integer that gives a multiplicative or additive error $\epsilon_{\mbox{\tiny TD}}$, $p_M = p$ a positive multiple of $k+1$ and \begin{equation}
    \v{c} = [ \overbrace{\v{x}^0, \underbrace{\v{b}, 0, \dots, 0}_{k}}^{m=0}, \dots, \underbrace{\overbrace{\underbrace{0, 0, \dots, 0}_{k+1}}^{m=M}, \dots,  \overbrace{\underbrace{0, 0, \dots, 0}_{k+1}}^{m=M+\frac{p}{k+1}}}_{1+p} ].
\end{equation}

 We define $\ket{x_T^{\epsilon_{\mbox{\tiny TD}}}}$ as the state obtained by successful post-selection on the first outcome of the projective measurement $\{\Pi_{T}, I-\Pi_{T} \}$ on the clock register:
\begin{equation}
\label{eq:pif}
   \Pi_T = \sum_{m=M}^{M+p/(k+1)} \ketbra{m}{m} \otimes I \otimes I, \quad  \ket{y_T^{\epsilon_{\mbox{\tiny TD}}}} := \frac{\Pi_T \ket{y}}{\| \Pi_T \ket{y}\|}  =  \ket{x_T^{\epsilon_{\mbox{\tiny TD}}}} \otimes 
   \frac{1}{\sqrt{p}} \sum_{m=M}^{M+\frac{p}{k+1}} \sum_{j=0}^k \ket{mj}.
\end{equation}
		The success probability is lower bounded as
		\begin{equation}
			\label{eq:successprobabilityODEfinal}
			\mathrm{Pr}_T 
	 \geq \frac{1 }{ (1-\frac{I_0(2)-1}{(p+1)K})  + (M+1) \frac{I_0(2)-1}{(p+1)K} \left(\frac{1+\epsilon_{\mbox{\tiny TD}}}{1-\epsilon_{\mbox{\tiny TD}}}\right)^2  \bar{g}_{+,\times}^2}.
		\end{equation}
		where
		\begin{equation}
		   \bar{g}_\times (T)^2 =  \frac{1}{M+1} \sum_{m=0}^M \frac{\| \v{x}(mh)\|^2}{\|\v{x}(Mh)\|^2},
		\end{equation}
		with $K=1$ in the homogeneous case, $K=(3-e)^2$ in the general case. Therefore, $Pr_T = \Omega(\frac{p}{T\bar{g}_{+,\times}^2})$ in all cases.
\end{theorem}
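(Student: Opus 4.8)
The plan is to compute $\mathrm{Pr}_T$ directly as the ratio $\|\Pi_T\v{y}\|^2/\|\v{y}\|^2$, exploiting the explicit block form of the linear-system solution. First I would invoke Remark~\ref{rmk:solutionlinearsystem}: with $p_m=0$ for $m<M$ and $p_M=p$, the post-selected register $m=M$ carries exactly the $p+1$ idling copies $\v{y}^{(M,j)}=\v{x}^M$, $j=0,\dots,p$, so the numerator is $\|\Pi_T\v{y}\|^2=(p+1)\|\v{x}^M\|^2$, while the full norm splits as
\begin{equation}
\|\v{y}\|^2 = (p+1)\|\v{x}^M\|^2 + \sum_{m=0}^{M-1}\|\v{x}^m\|^2 + \sum_{m=0}^{M-1}\sum_{j=1}^k\|\v{y}^{(m,j)}\|^2 ,
\end{equation}
the last double sum being the unwanted Taylor ``junk''. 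Thus $1/\mathrm{Pr}_T=\|\v{y}\|^2/\big((p+1)\|\v{x}^M\|^2\big)$, and the whole task reduces to an upper bound on the two non-selected contributions relative to $(p+1)\|\v{x}^M\|^2$.

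Second, I would bound the junk exactly as in the proof of Theorem~\ref{thm: successprobhistorystate}. Using $\v{y}^{(m,j)}=\tfrac{(Ah)^{j-1}}{j!}\v{y}^{(m,1)}$ with $\|Ah\|\le 1$ gives $\|\v{y}^{(m,j)}\|\le\tfrac{1}{j!}\|\v{y}^{(m,1)}\|$, and $\sum_{j\ge 1}1/(j!)^2=I_0(2)-1$, so $\sum_{j=1}^k\|\v{y}^{(m,j)}\|^2\le(I_0(2)-1)\|\v{y}^{(m,1)}\|^2$. In the homogeneous case $\v{y}^{(m,1)}=Ah\,\v{x}^m$ gives $\|\v{y}^{(m,1)}\|\le\|\v{x}^m\|$ and $K=1$; in the inhomogeneous case the forward estimate $\|\v{x}^{m+1}\|\ge(3-e)\|\v{y}^{(m,1)}\|$ supplies $K=(3-e)^2$. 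Either way the junk is controlled by $\tfrac{I_0(2)-1}{K}\sum_{m}\|\v{x}^m\|^2$, which together with the coefficient-one main slots $\v{y}^{(m,0)}=\v{x}^m$ must be reorganised into the single running sum appearing in the statement.

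Third, I would pass from discretized to true solution norms. The multiplicative guarantee (Lemma~\ref{lem:timediscretization}) gives $\|\v{x}^m\|\le(1+\epsilon_{\mbox{\tiny TD}})\|\v{x}(mh)\|$ and, by the reverse triangle inequality, $\|\v{x}^M\|\ge(1-\epsilon_{\mbox{\tiny TD}})\|\v{x}(Mh)\|$; since $\|\v{x}^M\|^2$ sits in the \emph{denominator} of $1/\mathrm{Pr}_T$ it needs the lower bound while every $\|\v{x}^m\|^2$ needs the upper bound, so combining them yields the factor $\big(\tfrac{1+\epsilon_{\mbox{\tiny TD}}}{1-\epsilon_{\mbox{\tiny TD}}}\big)^2$ in front of each ratio $\|\v{x}(mh)\|^2/\|\v{x}(Mh)\|^2$. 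Recognising $\tfrac{1}{M+1}\sum_{m=0}^M \|\v{x}(mh)\|^2/\|\v{x}(Mh)\|^2=\bar{g}_{\times}(T)^2$, dividing by $(p+1)\|\v{x}^M\|^2$ and isolating the $m=M$ term reproduces the stated denominator $\big(1-\tfrac{I_0(2)-1}{(p+1)K}\big)+(M+1)\tfrac{I_0(2)-1}{(p+1)K}\big(\tfrac{1+\epsilon_{\mbox{\tiny TD}}}{1-\epsilon_{\mbox{\tiny TD}}}\big)^2\bar{g}_{+,\times}^2$; the additive scheme is handled identically, replacing the Lemma~\ref{lem:timediscretization} bounds by their additive counterparts and $\bar{g}_\times$ by $\bar{g}_+$. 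The final $\Omega$ claim then follows because the $\bar g^2$ term dominates the constant offset, giving $\mathrm{Pr}_T=\Omega\!\big(\tfrac{p+1}{(M+1)\bar g^2}\big)=\Omega\!\big(\tfrac{p}{T\bar g^2}\big)$ after substituting $T=Mh$.

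The main obstacle I anticipate is extracting the sharp constant $\tfrac{I_0(2)-1}{(p+1)K}$ rather than a looser $\tfrac{I_0(2)}{(p+1)K}$: the subtlety is that the coefficient-one main slots $\v{y}^{(m,0)}=\v{x}^m$ and the coefficient-$(I_0(2)-1)$ junk must be merged, and the $m=M$ term of the running sum (which carries an extra factor $(1-\epsilon_{\mbox{\tiny TD}})^{-2}$ relative to the others) must be peeled off so that it cancels against the $-\tfrac{I_0(2)-1}{(p+1)K}$ offset. Keeping the two-sided $(1\pm\epsilon_{\mbox{\tiny TD}})$ estimates on the correct side of each inequality is the delicate bookkeeping; by contrast, the asymptotic conclusion is completely insensitive to these constants.
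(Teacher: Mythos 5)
Your route is structurally the same as the paper's: express $\mathrm{Pr}_T$ as the ratio $\|\Pi_T\v{y}\|^2/\|\v{y}\|^2$, bound the Taylor junk through $I_0(2)-1$ and the constant $K$ exactly as in the history-state theorem, then pass from discrete to continuous norms with the $(1\pm\epsilon_{\mbox{\tiny TD}})$ factors and recognize $\bar g_{+,\times}^2$. The one place you diverge is also the crux. Your decomposition of $\|\v{y}\|^2$ correctly retains the intermediate-time solution slots $\sum_{m=0}^{M-1}\|\v{x}^m\|^2$, whereas the paper's proof writes $\|\v{y}\|^2=(p+1)\|\v{y}^{(M,0)}\|^2+\sum_{m=0}^{M-1}\sum_{j=1}^{k}\|\v{y}^{(m,j)}\|^2$, silently dropping the $(m,0)$ components with $m<M$ (which are certainly present in $\v{y}$, by Eq.~\eqref{eq:vectorberry2}). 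That omission is the only reason the paper lands on the constant $\frac{I_0(2)-1}{(p+1)K}$.

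Consequently, the ``main obstacle'' you flag at the end is not delicate bookkeeping you can finesse: it is unbridgeable by your (correct) accounting. Dividing your decomposition by $(p+1)\|\v{x}^M\|^2$ leaves an irreducible extra term $\frac{1}{p+1}\sum_{m=0}^{M-1}\|\v{x}^m\|^2/\|\v{x}^M\|^2$, whose coefficient is exactly $1$ and cannot be merged into, or cancelled against, the junk coefficient $\frac{I_0(2)-1}{K}$. Writing $S:=\sum_{m=0}^{M}\|\v{x}^m\|^2/\|\v{x}^M\|^2$, your argument yields
\begin{equation*}
\frac{1}{\mathrm{Pr}_T}\;\le\; 1-\frac{1}{p+1}+\frac{K+I_0(2)-1}{K(p+1)}\,S\;\le\;1-\frac{1}{p+1}+\frac{K+I_0(2)-1}{K(p+1)}\,(M+1)\left(\frac{1+\epsilon_{\mbox{\tiny TD}}}{1-\epsilon_{\mbox{\tiny TD}}}\right)^2\bar g_{+,\times}^2,
\end{equation*}
i.e.\ inequality~\eqref{eq:successprobabilityODEfinal} with $\frac{I_0(2)-1}{K(p+1)}$ replaced by the larger $\frac{K+I_0(2)-1}{K(p+1)}$. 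The sharper constant cannot hold in general: take $\v{b}=\v{0}$ and $Ah=-iI$, so that every junk slot has norm exactly $\frac{1}{j!}\|\v{x}^m\|$ and all $\|\v{x}^m\|$ coincide up to $O(\epsilon_{\mbox{\tiny TD}})$; then $1/\mathrm{Pr}_T\approx 1+ I_0(2)\,M/(p+1)$, which exceeds the ceiling $\approx 1+(I_0(2)-1)M/(p+1)$ asserted by~\eqref{eq:successprobabilityODEfinal}. So your proposal, carried through honestly, proves the theorem only with a weaker constant --- which is entirely sufficient for the final claim $\mathrm{Pr}_T=\Omega\bigl(p/(T\bar g_{+,\times}^2)\bigr)$ and for every downstream use in the paper --- but it does not, and cannot, reproduce the displayed constant; the discrepancy is a flaw in the paper's own proof, not in your reasoning. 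You should state and prove the bound with the honest constant rather than chase the cancellation you anticipate.
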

\begin{proof}
The equality $\ket{y_T^{\epsilon_{\mbox{\tiny TD}}}} =  \ket{x_T^{\epsilon_{\mbox{\tiny TD}}}} \otimes 
	\frac{1}{\sqrt{p}} \sum_{m=M}^{M+\frac{p}{k+1}} \sum_{j=0}^k \ket{mj}$ follows from the expression for $\v{y}$ (Eq.~\eqref{eq:vectorberry2}), together with the form of the projector in Eq.~\eqref{eq:pif}.

The rest of the proof is similar to the previous theorem. First,
\begin{align}
\| \v{y} \|^2 = \sum_{m=M}^{M+p/(k+1)} \sum_{j=0}^k \| \v{y}^{(m,j)}\|^2 + \sum_{m=0}^{M-1} \sum_{j=1}^k \| \v{y}^{(m,j)}\|^2 = (p+1) \| \v{y}^{(M,0)}\|^2 + \sum_{m=0}^{M-1} \sum_{j=1}^k \| \v{y}^{(m,j)}\|^2.
\end{align}
For the unwanted part (the second piece), using the same bound as in the previous theorem we get
\begin{align}
\sum_{m=0}^{M-1} \sum_{j=1}^k \| \v{y}^{(m,j)}\|^2 \leq \frac{I_0(2)-1}{K} \sum_{m=1}^{M} \| \v{y}^{(m,0)}\|^2 
\end{align}
with $K=(3-e)^2$ for the inhomogeneous case and $K=1$ if $\v{b} =0$.

For the multiplicative case $(\times)$, the success probability is then lower bounded as in the general case as
\begin{align}
    \mathrm{Pr}_T =\frac{ (p+1)\| \v{y}^{(M,0)}\|^2}{\|\v{y}\|^2} & = \frac{ (p+1)\| \v{y}^{(M,0)}\|^2}{  (p+1)\| \v{y}^{(M,0)}\|^2 + \frac{I_0(2)-1}{K} \sum_{m=1}^M \| \v{y}^{(m,0)}\|^2} \\ &= \frac{1 }{ (1-\frac{I_0(2)-1}{K(p+1)})  + \frac{I_0(2)-1}{K (p+1)} \sum_{m=0}^M \frac{\| \v{y}^{(m,0)}\|^2}{\| \v{y}^{(M,0)}\|^2}} \\ 
    & \geq \frac{1 }{ (1-\frac{I_0(2)-1}{K(p+1)})  + \frac{I_0(2)-1}{K(p+1)} \left(\frac{1+\epsilon_{\mbox{\tiny TD}}}{1-\epsilon_{\mbox{\tiny TD}}}\right)^2 \sum_{m=0}^M \frac{\| \v{x}(mh)\|^2}{\| \v{x}(Mh)\|^2}} \\ & = \frac{1 }{ (1-\frac{I_0(2)-1}{K(p+1)})  + (M+1) \frac{I_0(2)-1}{K(p+1)} \left(\frac{1+\epsilon_{\mbox{\tiny TD}}}{1-\epsilon_{\mbox{\tiny TD}}}\right)^2 \bar{g}_\times^2}.
\end{align}

Note that the inequality in the above sequence follows from the time-discretization lemma (Lemma~\ref{lem:timediscretization}):
\begin{align}
\frac{\| \v{y}^{(m,0)}\|^2}{\| \v{y}^{(M,0)}\|^2} & = \frac{\| \v{x}^{m}\|^2}{\| \v{x}^{M}\|^2} \leq \frac{(\| \v{x}(mh)\| + \| \v{x}^{m} - \v{x}(mh)\|)^2}{(\| \v{x}(Mh)\| - \| \v{x}^{M} - \v{x}(Mh)\|)^2} \leq \frac{(\| \v{x}(mh)\| + \epsilon_{\mbox{\tiny TD}} \| \v{x}(mh)\| )^2}{(\| \v{x}(Mh)\| - \epsilon_{\mbox{\tiny TD}} \| \v{x}(Mh)\|)^2} \\ & = \left(\frac{1+\epsilon_{\mbox{\tiny TD}}}{1-\epsilon_{\mbox{\tiny TD}}}\right)^2 \frac{\| \v{x}(mh)\|^2}{\| \v{x}(Mh)\|^2}.
\end{align}
The result then follows.
For additive errors Lemma~\ref{lem:timediscretizationadditive} ensures that $\|\v{x}(mh)- \v{x}^m\| \le \epsilon_{\mbox{\tiny TD}}$, so the only difference is that
\begin{equation}
    \frac{\|\v{x}^m\|^2}{\|\v{x}^M\|^2} \le \left [ \frac{\|\v{x}(mh)\|+ \epsilon_{\mbox{\tiny TD}}}{\|\v{x}(Mh)\| - \epsilon_{\mbox{\tiny TD}}} \right]^2.
\end{equation}
We can analogously define:
\begin{equation}
		   \bar{g}_+ ^2 = \frac{1}{M+1} \sum_{m=0}^M \left(\frac{1-\epsilon_{\mbox{\tiny TD}}}{1+\epsilon_{\mbox{\tiny TD}}}\right)^2 \left [ \frac{\|\v{x}(mh)\|+ \epsilon_{\mbox{\tiny TD}}}{\|\v{x}(Mh)\| - \epsilon_{\mbox{\tiny TD}}} \right]^2.
		\end{equation}
  and obtain the same result with the replacement $\bar{g}_+ \mapsto \bar{g}_\times $.
\end{proof}

Note that the success probability for the history state behaves quite differently from that of the solution state output. One way to understand this is in terms of the changes in norm of the solution. For the solution state output we have that if $\v{x}(t)$ gets very close to zero then we have a much lower success probability. But this does not affect the history state in the same way: it simply means that the amplitudes within the history state decay, but the norm of the history state does not  -- in particular we have that $\|\v{x}_{H} \| \ge \|\v{x}(0)\|$.

As we discuss below, we do not want the choice of $p$ that we make to affect the scaling of $\kappa$ with time. Therefore if $A$ is stable we have a complexity $\kappa = O(\sqrt{M})$, otherwise we have $\kappa = O(M)$. Since $\kappa = O(p)$ we can choose either $p=\sqrt{M}$ or $p=M$, respectively, without affecting the time complexity of $\kappa$.  

If we set $p=M^{1/2}$,
\begin{equation}
    	\mathrm{Pr}_T 
	 \geq \frac{1 }{ \left (1-\frac{I_0(2)-1}{M^{1/2} K} \right)  + \frac{(M+1)}{M^{1/2}} \frac{I_0(2)-1}{K} \left(\frac{1+\epsilon_{\mbox{\tiny TD}}}{1-\epsilon_{\mbox{\tiny TD}}}\right)^2 \bar{g}_{+,\times} ^2}.
\end{equation}
With amplitude amplification, we need $O(1/\sqrt{	\mathrm{Pr}_T })$ queries to the whole ODE linear solver protocol to obtain success probability $O(1)$. This number is $O(\bar{g}_{+,\times} M^{1/4})$. 

Instead, if we set $p=M$
\begin{align}
    \mathrm{Pr}_T \geq \frac{1 }{ \left (1-\frac{I_0(2)-1}{K(M+1)} \right )  + \frac{I_0(2)-1}{K} \left(\frac{1+\epsilon_{\mbox{\tiny TD}}}{1-\epsilon_{\mbox{\tiny TD}}}\right)^2 \bar{g}_{+,\times} ^2}.
\end{align}
With amplitude amplification, we need $O(\bar{g}_{+,\times} )$ queries to the whole ODE linear solver protocol to obtain success probability $O(1)$.  The same considerations apply if we run the algorithm with additive errors per time-step. The same relations apply with the replacement $\bar{g}_\times \mapsto \bar{g}_+$.

\section{Error propagation}
\label{sec:errorpropagation}

The quantum linear solver applied to $L$ returns a mixed quantum state $\rho$ such that
\begin{equation}
\| \rho - \ketbra{y}{y} \|_1 \leq \epsilon_L,
\end{equation}
  where $\epsilon_L$ is the linear solver error, $\ket{y} = \v{y}/ \| \v{y}\|$ and $\v{y}=L^{-1} \v{c}$. The solution of the linear system is then projectively measured. Upon success the target (history or final) state is returned. The success probability in the ideal case has been analyzed in Sec.~\ref{sec:successprobabilityanalysis}. 
  
  In the non-ideal case, however, both the success probability and the post-measurement state are perturbed due to the finite precision of the QLSA. This is quantified by the following continuity statement with respect to the $\ell_1$ norm.
\begin{lem}[Continuity of success probability and post-selected state]
	\label{lemma:successpostmeasurementstateperturbation}
Let $\Pi$ be a Hermitian projector, $\rho \geq 0 $, $\tilde{\rho} \geq 0$ with $\| \tilde{\rho} - \rho \| _1 \leq \epsilon$ and $\tr{\Pi \rho} >\epsilon$. Then
\begin{equation}
	\left\| \frac{\Pi \rho \Pi}{\tr{\rho \Pi} } -  \frac{\Pi \tilde{\rho} \Pi}{\tr{\tilde{\rho} \Pi} } \right\|_1 \leq \frac{2\epsilon}{\tr{ \rho \Pi}-\epsilon}  
\end{equation}
and $\tr{\tilde{\rho} \Pi} \geq \tr{\rho \Pi} - \epsilon >0$.
\end{lem}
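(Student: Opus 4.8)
The plan is to write $p := \mathrm{tr}(\rho\Pi)$ and $\tilde p := \mathrm{tr}(\tilde\rho\Pi)$, and to control the perturbation of the denominators and of the (unnormalized) numerators separately, combining them through the triangle inequality with a carefully chosen intermediate operator. The only tools needed are H\"older's inequality for the trace pairing, submultiplicativity of the trace norm under multiplication by bounded operators, and the fact that a Hermitian projector has operator norm one.

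First I would dispatch the auxiliary claim on the trace. Since $\Pi$ is a Hermitian projector, $\|\Pi\|_\infty = 1$, so H\"older's inequality gives
\[
|\tilde p - p| = \left| \mathrm{tr}\big((\tilde\rho - \rho)\Pi\big) \right| \leq \|\tilde\rho - \rho\|_1 \, \|\Pi\|_\infty \leq \epsilon .
\]
In particular $\tilde p \geq p - \epsilon$, and combined with the hypothesis $p > \epsilon$ this yields $\tilde p > 0$, so the post-selected state $\Pi\tilde\rho\Pi/\tilde p$ is well defined; this is the second assertion of the lemma.

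For the main inequality I would insert the intermediate operator $\Pi\rho\Pi/\tilde p$ and split
\[
\left\| \frac{\Pi\rho\Pi}{p} - \frac{\Pi\tilde\rho\Pi}{\tilde p} \right\|_1 \leq \left\| \frac{\Pi\rho\Pi}{p} - \frac{\Pi\rho\Pi}{\tilde p} \right\|_1 + \left\| \frac{\Pi\rho\Pi}{\tilde p} - \frac{\Pi\tilde\rho\Pi}{\tilde p} \right\|_1 .
\]
For the first term, since $\rho \geq 0$ and $\Pi$ is a Hermitian projector, $\Pi\rho\Pi \geq 0$, so $\|\Pi\rho\Pi\|_1 = \mathrm{tr}(\Pi\rho\Pi) = \mathrm{tr}(\rho\Pi) = p$; hence it equals $p\,\big|\tfrac1p - \tfrac1{\tilde p}\big| = |\tilde p - p|/\tilde p \leq \epsilon/\tilde p$. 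For the second term I would use submultiplicativity, $\|\Pi X \Pi\|_1 \leq \|\Pi\|_\infty^2 \|X\|_1 = \|X\|_1$, with $X = \rho - \tilde\rho$, giving $\|\Pi(\rho-\tilde\rho)\Pi\|_1/\tilde p \leq \epsilon/\tilde p$. Summing and using $\tilde p \geq p - \epsilon$ produces $2\epsilon/\tilde p \leq 2\epsilon/(p-\epsilon)$, as claimed.

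There is no real obstacle: this is a routine continuity estimate, and the only points requiring care are using the correct operator inequalities and choosing the intermediate operator so that the denominator appearing naturally is $\tilde p$, which is then lower-bounded by $p-\epsilon$ to phrase everything in terms of the \emph{ideal} success probability $p = \mathrm{tr}(\rho\Pi)$ — the quantity one actually controls when tuning $\epsilon_L$. (Inserting $\Pi\tilde\rho\Pi/p$ instead would give the marginally tighter $2\epsilon/p$, which of course also implies the stated bound.)
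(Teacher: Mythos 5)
Your proof is correct and follows essentially the same route as the paper's: you insert the same intermediate operator $\Pi\rho\Pi/\tr{\tilde{\rho}\Pi}$, bound the two resulting terms by the same pair of estimates (H\"older for $|\tr{\rho\Pi}-\tr{\tilde{\rho}\Pi}|\leq\epsilon$ together with $\|\Pi\rho\Pi\|_1=\tr{\rho\Pi}$, and submultiplicativity for $\|\Pi(\rho-\tilde{\rho})\Pi\|_1\leq\epsilon$), and conclude via $2\epsilon/\tr{\tilde{\rho}\Pi}\leq 2\epsilon/(\tr{\rho\Pi}-\epsilon)$ exactly as the paper does. Your closing remark that the alternative splitting yields the marginally tighter $2\epsilon/\tr{\rho\Pi}$ is also accurate.
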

	\begin{proof}
		First note that
\begin{equation}
	\| \Pi (\rho - \tilde{\rho}) \Pi \|_1 \leq \| \Pi\|^2 \| \rho - \tilde{\rho} \|_1 \leq \epsilon,
\end{equation}
and
\begin{equation}
	| \tr{\rho \Pi} - \tr{ \tilde{\rho} \Pi} |  = |\tr{(\rho- \tilde{\rho}) \Pi}| \leq \| \rho - \tilde{\rho} \|_1 \| \Pi \| \leq \epsilon.
\end{equation}
		It follows that
		\begin{equation}
			\tr{\tilde{\rho} \Pi} \geq \tr{\rho \Pi} - \epsilon. 
		\end{equation}
	Note that $\| \Pi \rho \Pi \| _1 = \tr{\Pi \rho}$. Then,
		\begin{align}
			\left\| \frac{\Pi \rho \Pi}{\tr{\Pi \rho}} - \frac{\Pi \tilde{\rho} \Pi}{\tr{\Pi \tilde{\rho}}} \right\|_1 & \leq \left\|  \frac{\Pi \rho \Pi}{\tr{\Pi \rho}} - \frac{\Pi \rho \Pi}{\tr{P\tilde{\rho}}} \right\|_1 + \left\| \frac{\Pi \rho \Pi}{\tr{\Pi\tilde{\rho}}} - \frac{\Pi \tilde{\rho} \Pi}{\tr{\Pi \tilde{\rho}}} \right\|_1 \\ & =  \| \Pi \rho \Pi \|_1 \frac{| \tr{\Pi \rho} - \tr{\Pi \tilde{\rho}}|}{\tr{ \Pi\rho} \tr{ \Pi\tilde{\rho}} } +  \frac{\left\|\Pi (\rho- \tilde{\rho}) \Pi\right\|_1}{\tr{\Pi\tilde{\rho}}}
			\\ & \leq \frac{\epsilon}{\tr{ \Pi\tilde{\rho}} } +  \frac{\epsilon}{\tr{\Pi\tilde{\rho}}} \leq \frac{2 \epsilon}{\tr{\Pi \rho}-\epsilon}.
		\end{align} 
	\end{proof}
	
Applied to our setting this implies the following result:
\begin{lem}
\label{lem:totalerror}
 Assume we run a QLSA algorithm returning $\rho$ with $\| \rho - \ketbra{y}{y} \|_1 \leq \epsilon_L$, with $\ket{y}$ the normalized solution of $L \v{y} = \v{c}$. Moreover, assume that $\epsilon_L$ is chosen sufficiently small so that $Pr_{H,T} > \epsilon_L$. Then denote by 
\begin{equation}
     \sigma_{H} := \frac{\Pi_{H} \rho \Pi_{H}}{\tr{\Pi_{H} \rho} }, \quad \ketbra{x_{H}^{\epsilon_{\mbox{\tiny TD}}}}{x_{H}^{\epsilon_{\mbox{\tiny TD}}}} := \frac{\Pi_{H} \ketbra{y}{y} \Pi_{H}}{\tr{\Pi_{H} \ketbra{y}{y} }}.
 \end{equation}
 
 \begin{equation}
     \sigma_{T} := \mathrm{tr}_{CA}\frac{\Pi_{T} \rho \Pi_{T}}{\tr{\Pi_{T} \rho }}, \quad \ketbra{x_{T}^{\epsilon_{\mbox{\tiny TD}}}}{x_{T}^{\epsilon_{\mbox{\tiny TD}}}} := \mathrm{tr}_{CA} \frac{\Pi_{T} \ketbra{y}{y} \Pi_{T}}{\tr{\Pi_{T} \ketbra{y}{y} }}= \ketbra{x^M}{x^M},
 \end{equation}
 where the trace is over the clock $\{ \ket{m}\}$ and ancillary $\{ \ket{j} \}$ registers. Then,
\begin{equation}
\label{eq:junkerror}
	\| \sigma_{H,T} - \ketbra{x_{H,T}^{\epsilon_{\mbox{\tiny TD}}}}{x_{H,T}^{\epsilon_{\mbox{\tiny TD}}}} \|_1 \leq \frac{2 \epsilon_L}{\mathrm{Pr}_{H,T} - \epsilon_L},
\end{equation}
where $\mathrm{Pr}_{H,T} = \tr{\Pi_{H,T} \ketbra{y}{y} \Pi_{H,T}}$.
\end{lem}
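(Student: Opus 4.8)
The plan is to derive both stated bounds as direct corollaries of the continuity statement just established in Lemma~\ref{lemma:successpostmeasurementstateperturbation}, the only additional ingredient being the contractivity of the partial trace under the trace norm, which is needed in the solution-state case.

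First I would treat the history state. I would invoke Lemma~\ref{lemma:successpostmeasurementstateperturbation} with the substitutions $\Pi \mapsto \Pi_H$, $\epsilon \mapsto \epsilon_L$, taking the \emph{ideal} pure solution $\ketbra{y}{y}$ in the role of the reference state (the one whose success probability appears in the denominator) and the QLSA output $\rho$ in the role of the perturbed state. The continuity-lemma hypothesis $\|\tilde{\rho} - \rho\|_1 \le \epsilon$ is then precisely the QLSA guarantee $\|\rho - \ketbra{y}{y}\|_1 \le \epsilon_L$, and the nondegeneracy requirement is exactly the standing assumption $\mathrm{Pr}_H > \epsilon_L$, since $\mathrm{Pr}_H = \tr{\Pi_H \ketbra{y}{y}}$. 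The lemma delivers at once
\[
\left\| \frac{\Pi_H \ketbra{y}{y} \Pi_H}{\mathrm{Pr}_H} - \frac{\Pi_H \rho \Pi_H}{\tr{\Pi_H \rho}} \right\|_1 \le \frac{2\epsilon_L}{\mathrm{Pr}_H - \epsilon_L},
\]
which is the claimed inequality for $\sigma_H$ and $\ketbra{x_H^{\epsilon_{\mbox{\tiny TD}}}}{x_H^{\epsilon_{\mbox{\tiny TD}}}}$.

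Next I would handle the solution state. The identical application of Lemma~\ref{lemma:successpostmeasurementstateperturbation}, now with $\Pi \mapsto \Pi_T$, yields the same bound for the \emph{un-traced} normalized post-selected operators $\Pi_T \ketbra{y}{y} \Pi_T / \mathrm{Pr}_T$ and $\Pi_T \rho \Pi_T / \tr{\Pi_T \rho}$. The states $\sigma_T$ and $\ketbra{x_T^{\epsilon_{\mbox{\tiny TD}}}}{x_T^{\epsilon_{\mbox{\tiny TD}}}}$ in the statement are obtained from these by the partial trace $\mathrm{tr}_{CA}$ over the clock and ancillary registers. Since $\mathrm{tr}_{CA}$ is a trace-preserving completely positive map it is non-increasing in trace norm, $\|\mathrm{tr}_{CA}(X)\|_1 \le \|X\|_1$; applying this to the difference of the two un-traced operators preserves the bound and gives $\|\sigma_T - \ketbra{x_T^{\epsilon_{\mbox{\tiny TD}}}}{x_T^{\epsilon_{\mbox{\tiny TD}}}}\|_1 \le 2\epsilon_L / (\mathrm{Pr}_T - \epsilon_L)$.

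There is no substantive obstacle: the whole content is a repackaging of Lemma~\ref{lemma:successpostmeasurementstateperturbation}. The only point demanding a moment's care is keeping straight which argument of the continuity lemma is the reference state. It must be the ideal $\ketbra{y}{y}$ rather than the QLSA output $\rho$, so that the denominator of the final bound is the ideal success probability $\mathrm{Pr}_{H,T}$ minus the error $\epsilon_L$, and not the perturbed success probability. The contractivity of the partial trace used in the solution-state case is the standard data-processing inequality for the trace distance and can be cited or dispatched in one line.
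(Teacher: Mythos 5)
Your proposal is correct and follows essentially the same route as the paper's own proof: both cases are dispatched by Lemma~\ref{lemma:successpostmeasurementstateperturbation}, with the solution-state case additionally using contractivity of the trace norm under the partial trace $\mathrm{tr}_{CA}$. Your care about which argument is the reference state is well placed and in fact sharpens the paper's one-line proof, which assigns $\tilde{\rho} = \ketbra{y}{y}$ and thus, read literally, would put $\tr{\Pi_{H,T}\rho}-\epsilon_L$ rather than $\mathrm{Pr}_{H,T}-\epsilon_L$ in the denominator; your assignment (ideal state $\ketbra{y}{y}$ as the reference whose success probability enters the bound) is the one that yields the stated inequality verbatim.
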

\begin{proof}
The statement for the history state follows from Lemma~\ref{lemma:successpostmeasurementstateperturbation} with $\tilde{\rho} = \ketbra{y}{y}$. The statement for the solution state instead follows using the contractivity of the trace-norm under partial trace, followed by an application of Lemma~\ref{lemma:successpostmeasurementstateperturbation} with $\tilde{\rho} = \ketbra{y}{y}$.
\end{proof}
Now we analyze the total error in terms of the $1$-norm distance between the output state (after QLSA, measurement and discarding ancillas) and the ideal target.

\begin{theorem}[Total error bound]
\label{lem:totalerrorbound}
 Let $L$ be the matrix encoding of the discretized dynamics  within the linear system $L \ket{y} = \ket{c}$, where $L$ is defined in Eq.~\eqref{eq:L},  and $\v{c}$ in Eq.~\eqref{eq:cdefinition}. We take the truncation $k=k_{+,\times}$ to be an integer that gives a multiplicative or additive error $\epsilon_{\mbox{\tiny TD}}$ in the time-discretization. Assume we run a QLSA algorithm returning a normalized quantum state $\rho$ with $\| \rho - \ketbra{y}{y} \|_1 \leq \epsilon_L$. Denote by $P_{H,T}$ the success probabilities defined in Theorems~\ref{thm: successprobhistorystate} and \ref{thm:successprobfinalstate}, respectively, and $\ket{x_{H,T}}$ the ideal history and solution state defined as in Eq.~\eqref{eq:finalstate}, \eqref{eq:historystate}. Finally, assume that $\epsilon_L$ is chosen sufficiently small so that $Pr_{H,T} > \epsilon_L$. Then we have that
    \begin{equation}
    	\| \sigma_{H,T} - \ketbra{x_{H,T}}{x_{H,T}} \|_1 \leq \frac{2 \epsilon_L}{\mathrm{Pr}_{H,T} - \epsilon_L}+ 4 \epsilon_{\mbox{\tiny TD}} f^{\times,+}_{H,T},
\end{equation}
where $f^\times_{H} := f^\times_{T} := 1$, $f^+_H :=\frac{1}{ x_{\mathrm{rms}} }$, $f^+_T :=\frac{1}{\|\v{x}(T)\|},$ and where we define 
\begin{equation}
  x_{\mathrm{rms}} := \sqrt{\sum_{m=0}^M \|\v{x}(mh)\|^2/M}. 
\end{equation}
Hence, for the multiplicative $(\times)$ case, if $\epsilon_L = O(\epsilon Pr_{H,T})$ and $\epsilon_{\mbox{\tiny TD}} = O(\epsilon)$ then 
\begin{equation}
\| \sigma_{H,T} - \ketbra{x_{H,T}}{x_{H,T}} \|_1 = O(\epsilon).
\end{equation} 
For the additive $(+)$ case, to obtain the latter error bound, we instead require $\epsilon_L = O(\epsilon Pr_{H,T})$, and $\epsilon_{\mbox{\tiny TD}} = O(\epsilon \|\v{x}(T)\|)$ in the case of the solution state output and $\epsilon_{\mbox{\tiny TD}} = O(\epsilon  x_{\mathrm{rms}} )$ in the case of the history state output. 
\end{theorem}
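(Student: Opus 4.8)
The plan is to decompose the total error by inserting the exact post-selected solution of the \emph{discretized} linear system as an intermediate state and applying the triangle inequality for the trace norm. For each choice of output ($H$ or $T$) and error scheme ($+$ or $\times$) I would write
\begin{align}
\| \sigma_{H,T} - \ketbra{x_{H,T}}{x_{H,T}} \|_1 &\leq \| \sigma_{H,T} - \ketbra{x^{\epsilon_{\mbox{\tiny TD}}}_{H,T}}{x^{\epsilon_{\mbox{\tiny TD}}}_{H,T}} \|_1 \nonumber \\
&\quad + \| \ketbra{x^{\epsilon_{\mbox{\tiny TD}}}_{H,T}}{x^{\epsilon_{\mbox{\tiny TD}}}_{H,T}} - \ketbra{x_{H,T}}{x_{H,T}} \|_1 ,
\end{align}
so that the first term isolates the error coming purely from the finite QLSA precision $\epsilon_L$ (the distance between the actual post-selected output and the ideal post-selected solution of the discretized system), while the second isolates the error coming purely from the time discretization (the distance between the discretized target and the true continuous-time target).

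For the first term I would invoke Lemma~\ref{lem:totalerror} directly. It already establishes, both for the history-state projector $\Pi_H$ and the solution-state projector $\Pi_T$ (with the partial trace over the clock and ancilla registers handled via contractivity of the trace norm under partial trace), that whenever $\mathrm{Pr}_{H,T} > \epsilon_L$ one has $\| \sigma_{H,T} - \ketbra{x^{\epsilon_{\mbox{\tiny TD}}}_{H,T}}{x^{\epsilon_{\mbox{\tiny TD}}}_{H,T}} \|_1 \leq 2\epsilon_L / (\mathrm{Pr}_{H,T} - \epsilon_L)$. This is exactly the first summand of the claimed bound, so beyond checking the assumed hypothesis $\mathrm{Pr}_{H,T} > \epsilon_L$ no further work is needed here.

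For the second term the key step is to convert the \emph{vector} $2$-norm bounds of Corollary~\ref{cor:unitvectorerrors} into a \emph{trace-norm} bound between the corresponding rank-one projectors, using the standard inequality $\| \ketbra{\psi}{\psi} - \ketbra{\phi}{\phi} \|_1 \leq 2 \, \| \ket{\psi} - \ket{\phi} \|$ valid for any two normalized vectors. Corollary~\ref{cor:unitvectorerrors} gives $\| \ket{x^{\epsilon_{\mbox{\tiny TD}}}_H} - \ket{x_H} \| \leq 2\epsilon_{\mbox{\tiny TD}} f^{+,\times}_H$ and $\| \ket{x^M} - \ket{x(Mh)} \| \leq 2\epsilon_{\mbox{\tiny TD}} f^{+,\times}_T$, where for the solution state I recall from Theorem~\ref{thm:successprobfinalstate} that tracing out the clock and ancilla registers leaves precisely $\ketbra{x^M}{x^M}$. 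Combining with the projector inequality yields $\| \ketbra{x^{\epsilon_{\mbox{\tiny TD}}}_{H,T}}{x^{\epsilon_{\mbox{\tiny TD}}}_{H,T}} - \ketbra{x_{H,T}}{x_{H,T}} \|_1 \leq 4\epsilon_{\mbox{\tiny TD}} f^{+,\times}_{H,T}$, which is the second summand; adding the two bounds gives the inequality. The asymptotic statements then follow by substitution: in the multiplicative case $f^\times_{H,T}=1$, so $\epsilon_{\mbox{\tiny TD}}=O(\epsilon)$ makes the discretization term $O(\epsilon)$ and $\epsilon_L = O(\epsilon\,\mathrm{Pr}_{H,T})$ (small enough that $\mathrm{Pr}_{H,T}-\epsilon_L=\Omega(\mathrm{Pr}_{H,T})$) makes the precision term $O(\epsilon)$; in the additive case $f^+_H = 1/ x_{\mathrm{rms}} $ and $f^+_T = 1/\|\v{x}(T)\|$, so one must instead take $\epsilon_{\mbox{\tiny TD}} = O(\epsilon\, x_{\mathrm{rms}} )$ for the history state and $\epsilon_{\mbox{\tiny TD}} = O(\epsilon \|\v{x}(T)\|)$ for the solution state to absorb these factors.

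I do not expect any single step to be genuinely hard, since the proof is essentially an assembly of the two previously established results. The one point requiring care is the projector-versus-vector norm conversion, together with correctly tracking the four label combinations $(+,\times)\times(H,T)$ and their associated $f$-factors; the partial trace in the solution-state case must also be dealt with before the continuity lemma is applied, which is the reason Lemma~\ref{lem:totalerror} was phrased to accommodate it.
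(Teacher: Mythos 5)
Your proposal is correct and follows essentially the same route as the paper's own proof: the identical triangle-inequality decomposition through the intermediate state $\ket{x^{\epsilon_{\mbox{\tiny TD}}}_{H,T}}$, Lemma~\ref{lem:totalerror} for the QLSA-precision term, and the conversion of the vector-norm bounds of Corollary~\ref{cor:unitvectorerrors} into a trace-norm bound via $\| \ketbra{\psi}{\psi} - \ketbra{\phi}{\phi} \|_1 \leq 2 \| \ket{\psi} - \ket{\phi}\|$ (which the paper derives in-line from $\|\ketbra{\psi}{\psi}-\ketbra{\phi}{\phi}\|_1 = 2\sqrt{1-|\braket{\psi|\phi}|^2}$ rather than citing it as standard). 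The handling of the partial trace for the solution-state output and the final parameter substitutions for the $(+,\times)\times(H,T)$ cases also match the paper's argument.
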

\begin{proof} 
To show the claimed error bound, we first bound the vector norm error between the discrete and continuous dynamics. Recall that from Corollary~\ref{cor:unitvectorerrors}, with the notation $\ket{x_{T}^{\epsilon_{\mbox{\tiny TD}}}}:= \ket{x^M}$
\begin{equation}
    \| \ket{x^{\epsilon_{\mbox{\tiny TD}}}_{H,T}} - \ket{x_{H,T}} \| \leq 2 \epsilon_{\mbox{\tiny TD}} f^{+,\times}_{H,T},
\end{equation} 
where we introduce the factors $f^{+,\times}_{H,T}$ so as to handle together the four cases additive/multiplicative, history/final state. We have $f^{\times}_{H,T}=1$ for the case of multiplicative errors, and in the case of additive errors these factors are given by
\begin{align}
    f^+_{H} &:= \frac{1}{\sqrt{ \sum_{m=0}^M \|\v{x}(mh))\|^2/M}} = \frac{1}{ x_{\mathrm{rms}} }\nonumber \\
    f^+_T &:= \frac{1}{\|\v{x}(T)\|} = \frac{1}{\|\v{x}_T\|}.
\end{align}	
Now we have that
  \begin{equation}
    	\| \sigma_{H,T} - \ketbra{x_{H,T}}{x_{H,T}} \|_1 \leq 	\| \sigma_{H,T} - \ketbra{x_{H,T}^{\epsilon_{\mbox{\tiny TD}}}}{x_{H,T}^{\epsilon_{\mbox{\tiny TD}}}} \|_1  + 	\| \ketbra{x_{H,T}^{\epsilon_{\mbox{\tiny TD}}}}{x_{H,T}^{\epsilon_{\mbox{\tiny TD}}}} - \ketbra{x_{H,T}}{x_{H,T}} \|_1. 
  \end{equation}
  We can bound the first term using Eq.~\eqref{eq:junkerror}. 
  
  For the second term, we first note that 
  \begin{equation}
    \| \ketbra{x_{H,T}^{\epsilon_{\mbox{\tiny TD}}}}{x_{H,T}^{\epsilon_{\mbox{\tiny TD}}}} - \ketbra{x_{H,T}}{x_{H,T}} \|_1 = 2 \sqrt{1- \left|\braket{x_{H,T}^{\epsilon_{\mbox{\tiny TD}}}|x_{H,T}}\right|^2}. 
  \end{equation}
  Now, 
    \begin{equation}
      \| \ket{x_{H,T}^{\epsilon_{\mbox{\tiny TD}}}} - \ket{x_{H,T}} \|^2 = 2 ( 1- \mathrm{Re} \braket{x_{H,T}^{\epsilon_{\mbox{\tiny TD}}}|x_{H,T}}) \geq 2 \left( 1- \left|\braket{x_{H,T}^{\epsilon_{\mbox{\tiny TD}}}|x_{H,T}}\right|\right)
  \end{equation}
  Hence,
  \begin{align}
       \left( 1- \left|\braket{x_{H,T}^{\epsilon_{\mbox{\tiny TD}}}|x_{H,T}}\right|^2\right) &=   \left( 1- \left|\braket{x_{H,T}^{\epsilon_{\mbox{\tiny TD}}}|x_{H,T}}\right|\right)  \left( 1+ \left|\braket{x_{H,T}^{\epsilon_{\mbox{\tiny TD}}}|x_{H,T}}\right|\right) \\
     & \le \frac{1}{2} \| \ket{x_{H,T}^{\epsilon_{\mbox{\tiny TD}}}} - \ket{x_{H,T}} \|^2 \left( 1+ \left|\braket{x_{H,T}^{\epsilon_{\mbox{\tiny TD}}}|x_{H,T}}\right|\right) \\
     &\leq  \| \ket{x_{H,T}^{\epsilon_{\mbox{\tiny TD}}}} - \ket{x_{H,T}} \|^2.
  \end{align}
    We conclude
    \begin{equation}
          \| \ketbra{x_{H,T}^{\epsilon_{\mbox{\tiny TD}}}}{x_{H,T}^{\epsilon_{\mbox{\tiny TD}}}} - \ketbra{x_{H,T}}{x_{H,T}} \|_1 \leq 2 \sqrt{   \| \ket{x_{H,T}^{\epsilon_{\mbox{\tiny TD}}}} - \ket{x_{H,T}} \|^2 }  = 2 \| \ket{x_{H,T}^{\epsilon_{\mbox{\tiny TD}}}} - \ket{x_{H,T}} \| \leq 4 \epsilon_{\mbox{\tiny TD}} f^{+,\times}_{H,T}.
  \end{equation}
 Putting the two bounds together we get the claimed result. Moreover, if we set $\epsilon_L = \epsilon Pr_{H,T}$ then the first fractional term becomes
 \begin{equation}
     \frac{2 \epsilon_L}{\mathrm{Pr}_{H,T} - \epsilon_L} = 2\epsilon = O(\epsilon).
 \end{equation}
 and therefore choosing $\epsilon_L = O(\epsilon Pr_{H,T})$ and $\epsilon_{\mbox{\tiny TD}} = O(\epsilon)$ gives the claimed $O(\epsilon)$ scaling.
\end{proof}

\section{Fast-Forwarding, No Fast-Forwarding, and relation to linear solvers}
\label{sec:fastforwarding}

We now discuss our results in relation to asymptotic complexities and the relation between our ODE solver and both the problem of Hamiltonian simulation and quantum linear solvers.

\subsection{Fast-Forwarding of stable systems}

The ODE solver presented in this work relies upon a quantum linear solver algorithm (QLSA). The best query complexity is $O(\kappa \log (1/\epsilon))$, which we use to establish the claimed fast-forwarding result. For the explicit application to a concrete system we will however use the recent explicit query counts upper bounds given in~\cite{jennings2023efficient}.

We now establish the asymptotic fast-forwarding of stable systems, claimed in the main text. 
\begin{theorem}[Fast-forwarding of stable systems]
    Let $A\in \mathbb{C}^{N\times N}$ be an $N\times N$ matrix, and $\v{b} \in \mathbb{C}^N$. Choose a timestep $h$ such that $\|A\|h \le 1$. Consider the following linear ODE system, 
\begin{equation}
\label{eq:ODE3}
    \dot{\v{x}}(t) = A\v{x}(t) + \v{b}, \quad \v{x}(0) = \v{x}^0,
\end{equation} 
where $t\in [0,T]$, $T=Mh$ and the $\v{x}^0 \in \mathbb{C}^N$ specifying the initial conditions.
The four parameters  $\lambda_{+,\times}(T)$ and $\bar{g}_{+,\times} (T)$ are defined as follows:
    \begin{equation}
        \lambda_+(T) = \max \left\{ \|\v{b}\| T^2,  x_{\max}  T  \right\}, \quad \bar{g}_+(T)^2 = \frac{1}{M+1}\sum_{m=0}^M \left(\frac{1-\epsilon'}{1+\epsilon'}\right)^2 \frac{(\| \v{x}(mh) \|+\epsilon')^2}{  (\|\v{x}(Mh)\|-\epsilon')^2},
    \end{equation}
      \begin{equation}
        \lambda_\times(T) = \max \left\{ T^2 \frac{\|\v{b}\|}{ x_{\min} },  T  \right\}, \quad \bar{g}_\times(T)^2 = \frac{1}{M+1}\sum_{m=0}^M \frac{\| \v{x}(mh) \|^2}{  \|\v{x}(Mh)\|^2}.
    \end{equation}
    
The terms $\kappa_P$ and $\mu_P(A)$ are stability parameters satisfying 
    \begin{equation}
        \|e^{At}\| \leq \sqrt{\kappa_P} e^{\mu_P(A)t} \quad \mbox{ for all } t \in [0,T].
    \end{equation}
    The matrix $A$ is said to be stable precisely when such a $\mu_P(A)$ exists satisfying the previous condition with $T\rightarrow \infty$, and where $\mu_P(A) <0$. If $A$ is not stable, then we assume a constant upper bound $\|e^{At}\|\le C_{\max}$ for all $t \in [0,T]$.
    
Assume we have oracle access to an $(\omega, a, 0)$ block-encoding $U_A$  of the matrix $A$, a unitary $U_b$ that prepares the normalized state $\ket{b}$, and a unitary $U_0$ that prepares the normalized state $\ket{x^0}$. Then a density operator $\rho_\epsilon$ that is $\epsilon$-close in $1$-norm to either the history state $\ket{x_H}$ or to the solution state $\ket{x_T}$ can be output, with constant success probability, with query complexity to the oracles $U_A$, $U_0$ and $U_b$ as summarized by the following table of complexities:
\scriptsize
\begin{center}
\begin{tabular}{|l|c|c|} 
  \hline 
  \textbf{Case} & \textbf{Solution State}& \textbf{History State} \\ 
  \hline
  Stable $A$, $(\times)$ & $O\left ( \omega \sqrt{\kappa_P} \bar{g}_\times (T) T^{3/4} \log[\lambda_\times(T)/\epsilon]\log(T\bar{g}_\times/\epsilon) \right )$ & $O\left (  \omega\sqrt{\kappa_P} \sqrt{T} \log [\lambda_\times(T)/\epsilon]\log(1/\epsilon) \right )$ \\ 
  Stable $A$, $(+)$& $O\left ( \omega \sqrt{\kappa_P} \bar{g}_+ (T) T^{3/4} \log[\lambda_+(T)/(\|\v{x}(T)\|\epsilon)]\log(T\bar{g}_+/\epsilon) \right )$ &$O\left (  \omega\sqrt{\kappa_P} \sqrt{T} \log [\lambda_+(T)/ (x_{\mathrm{rms}}\epsilon)]\log(1/\epsilon) \right )$ \\ 
  Not stable $A$, $(\times)$ & $ O \left (\omega\bar{g}_\times (T)C_{\mathrm{max}} T \log [\lambda_\times(T)/\epsilon]\log(\bar{g}_\times /\epsilon) \right )$  & $O \left ( \omega C_{\mathrm{max}} T \log [\lambda_\times(T)/\epsilon]\log(1/\epsilon) \right)$\\ 
  Not stable $A$, $(+)$ & $ O \left ( \omega \bar{g}_+ (T)C_{\mathrm{max}} T \log [\lambda_+(T)/(\|\v{x}(T)\|\epsilon)]\log(\bar{g}_+/\epsilon) \right )$  &  $O \left (  \omega C_{\mathrm{max}} T \log [\lambda_+ (T) /(x_{\mathrm{rms}} \epsilon)]\log(1/\epsilon) \right)$ \\ 
  \hline
\end{tabular}
\end{center}
\normalsize
In the above, $(\times)$ denotes the choice of multiplicative error scheme, and $(+)$ denotes the choice of additive error scheme.
\end{theorem}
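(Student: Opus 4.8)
The plan is to obtain the total query count as the product $\mathcal{Q}=\mathcal{A}\,\mathcal{Q}_{QLSA}$, where $\mathcal{Q}_{QLSA}$ is the cost of the quantum linear solver applied to $L\v{y}=\v{c}$ and $\mathcal{A}$ is the amplitude-amplification overhead of post-selecting the relevant register. I would begin from the asymptotically optimal QLSA count $\mathcal{Q}_{QLSA}=O(\tilde{\omega}\,\kappa_L\log(1/\epsilon_L))$ of Ref.~\cite{costa2022optimal}, and then substitute the three resource parameters established in the preceding appendices. The block-encoding scale factor comes from Theorem~\ref{thm:omegavalue}, $\tilde{\omega}=O(1+\omega h/\sqrt{k})$, which (treating $h$ as the normalization unit, with $\omega h\ge 1$) carries an overall linear factor of $\omega$. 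The condition number comes from Theorem~\ref{thm:conditionnumbernoidling}: $\kappa_L=O(k\sqrt{\kappa_P T+p^2})$ in the stable case and $\kappa_L=O(k\sqrt{C_{\mathrm{max}}^2 T^2+p^2})$ otherwise. One factor of $k$ from $\kappa_L$ will become the first logarithm in each table entry.

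Next I would fix the free precision parameters. Theorem~\ref{lem:totalerrorbound} prescribes $\epsilon_L=O(\epsilon\,\mathrm{Pr}_{H,T})$ together with $\epsilon_{\mbox{\tiny TD}}=O(\epsilon)$ for the $\times$ scheme, and $\epsilon_{\mbox{\tiny TD}}=O(\epsilon\, x_{\mathrm{rms}})$ (history) or $O(\epsilon\,\|\v{x}(T)\|)$ (solution) for the $+$ scheme. Feeding these into the sufficient truncation order of Lemma~\ref{lem:sufficientk}, $k=\Theta(\log[\lambda_{+,\times}(T)/\epsilon_{\mbox{\tiny TD}}])$, reproduces exactly the arguments $\log[\lambda_\times(T)/\epsilon]$, $\log[\lambda_+(T)/(x_{\mathrm{rms}}\epsilon)]$ and $\log[\lambda_+(T)/(\|\v{x}(T)\|\epsilon)]$ appearing in the table. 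The remaining logarithm is $\log(1/\epsilon_L)=\log(1/(\epsilon\,\mathrm{Pr}_{H,T}))$, which collapses to $\log(1/\epsilon)$ when $\mathrm{Pr}_{H,T}=\Omega(1)$ and to $\log(T\bar{g}_{+,\times}/\epsilon)$ when $\mathrm{Pr}_T=\Omega(\bar{g}_{+,\times}^{-2}T^{-1/2})$.

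The crux is the idling parameter $p$ and the resulting cost $\mathcal{A}=O(1/\sqrt{\mathrm{Pr}_{H,T}})$. For the history state one sets $p=0$; Theorem~\ref{thm: successprobhistorystate} gives $\mathrm{Pr}_H=\Omega(1)$, so $\mathcal{A}=O(1)$ and the quoted $\sqrt{T}$ (stable) and $T$ (unstable) scalings follow immediately from $\kappa_L$. For the solution state, Theorem~\ref{thm:successprobfinalstate} gives $\mathrm{Pr}_T=\Omega(p/(T\bar{g}_{+,\times}^2))$, so enlarging $p$ improves the success probability but simultaneously inflates $\kappa_L$ through its $p^2$ term. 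The fast-forwarding is extracted precisely by choosing $p$ as large as possible without worsening the time-scaling of $\kappa_L$: in the stable case $\kappa_L\sim k\sqrt{\kappa_P T}$ already scales as $\sqrt{T}$, so taking $p=\Theta(\sqrt{T})$ (a multiple of $k+1$) leaves $\sqrt{\kappa_P T+p^2}=O(\sqrt{\kappa_P T})$ intact while boosting $\mathrm{Pr}_T$ to $\Omega(\bar{g}_{+,\times}^{-2}T^{-1/2})$, hence $\mathcal{A}=O(\bar{g}_{+,\times}T^{1/4})$ and $\mathcal{Q}=O(\omega\sqrt{\kappa_P}\,\bar{g}_{+,\times}T^{3/4}\cdots)$. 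In the unstable case $\kappa_L\sim k\,C_{\mathrm{max}}T$ already scales as $T$, so $p=\Theta(T)$ is affordable, yielding $\mathrm{Pr}_T=\Omega(\bar{g}_{+,\times}^{-2})$, $\mathcal{A}=O(\bar{g}_{+,\times})$, and the linear-in-$T$ entries.

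Finally I would verify each of the eight cells by multiplying $\mathcal{A}$, $\tilde{\omega}\kappa_L$, and the two logarithms with the scheme-dependent substitutions; the $\times\!\leftrightarrow\!+$ and history$\leftrightarrow$solution distinctions reduce to swapping $\bar{g}_\times\!\leftrightarrow\!\bar{g}_+$, $\lambda_\times\!\leftrightarrow\!\lambda_+$, and the $\epsilon_{\mbox{\tiny TD}}$ normalizations, so no fresh estimates are needed. The main obstacle is getting the $p$-optimization right: one must confirm that $p=\Theta(\sqrt{T})$ is simultaneously small enough to preserve $\kappa_L=O(k\sqrt{\kappa_P T})$ and large enough to realize $\mathrm{Pr}_T=\Omega(\bar{g}_{+,\times}^{-2}T^{-1/2})$. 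This single trade-off is what converts the prior $O(T)$ scaling into the sublinear $O(T^{3/4})$ (solution) and $O(T^{1/2})$ (history) complexities, and it is the only place where stability, through the geometric-sum collapse $\sum_l \|e^{Alh}\|^2=O(1)$ in Theorem~\ref{thm:conditionnumbernoidling}, is genuinely used.
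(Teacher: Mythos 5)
Your proposal is correct and follows essentially the same route as the paper's own proof: the decomposition $\mathcal{Q}=\mathcal{A}\,\mathcal{Q}_{QLSA}$ with the QLSA bound of Ref.~\cite{costa2022optimal}, the substitutions from Theorem~\ref{thm:omegavalue}, Theorem~\ref{thm:conditionnumbernoidling}, Lemma~\ref{lem:sufficientk}, Theorem~\ref{lem:totalerrorbound}, and Theorems~\ref{thm: successprobhistorystate}--\ref{thm:successprobfinalstate}, and the identical idling-parameter trade-off ($p=\Theta(\sqrt{T})$ stable, $p=\Theta(T)$ otherwise) that preserves the $\kappa_L$ scaling while boosting $\mathrm{Pr}_T$. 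You also correctly isolate the one place stability enters, namely the geometric-sum collapse in the condition-number bound, which is exactly the paper's mechanism for fast-forwarding.
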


\begin{proof} 
We proceed as in the main text, and produce a solution by first embedding the dynamical problem into a linear system of equations $L \v{y} = \v{c}$ and then solving this with a quantum linear solver. Recall that to solve $L \v{y} = \v{c}$ we first rescale by a global constant to $\tilde{L} \v{y}=\tilde{\v{c}}$ so that $\|\tilde{L}\| \le 1$. This puts the problem in a canonical form for a quantum linear solver algorithm to solve $\tilde{L}\ket{y} = \ket{\tilde{c}}$. This was detailed in Appendix~\ref{sec:block-encoding}, and it was shown that we may obtain a state preparation unitary $U_{\tilde{c}}$ that prepares $\ket{\tilde{c}}$ via a single call to $U_b$ and a single call to $U_0$. We also showed that we may obtain an $(\tilde{\omega},a+6,0)$ unitary block-encoding $U_{\tilde{L}}$ for $\tilde{L}$ with a scaling factor $\tilde{\omega} = O(1+\omega h/\sqrt{k})$, with a single query to the $(\omega, a, 0)$ block-encoding $U_A$ for $A$. Given access to $U_{\tilde{L}}$ and $U_{\tilde{c}}$, this linear system can therefore be solved to error $\epsilon_L$ by an asymptotically optimal quantum linear solver such as in~\cite{costa2022optimal} with a query complexity
\begin{equation}
O\left (\tilde{\omega} \kappa_{\tilde{L}} \log (1/\epsilon_L) \right ) = O\left ( \left (1+\frac{\omega h}{\sqrt{k}} \right) \kappa_{L} \log (1/\epsilon_L) \right),
\end{equation}
to the oracles $U_{A}$, $U_{b}$ and $U_0$. This outputs the coherent encoding solution $\ket{y}$ with bounded probability. We continue the proof, by splitting up the case when $A$ is stable and when $A$ is not stable.

\begin{itemize}
    \item \emph{($A$ is a stable matrix)}. From Theorem~\ref{thm:conditionnumbernoidling}, we have that for both the multiplicative error assumption and the additive error assumption that  $\kappa_L = O(k_{+,\times} \sqrt{\kappa_P T+p^2})$, and we choose $p=\Theta(\sqrt{T})$ so that $\kappa_L = O(k_{+,\times} \sqrt{\kappa_P T})$. From Lemma~\ref{lem:sufficientk} we find that to obtain a discretization error $\epsilon_{\mbox{\tiny TD}}$ we take $k_{+,\times} = \Theta(\log(\lambda_{+,\times}(T)/\epsilon_{\mbox{\tiny TD}}))$, where $+, \times$ refer to the choice of setting additive or multiplicative errors per step. 
    This implies that,
    \begin{align}
    \mathcal{Q}_{QLSA}&= O\left (\sqrt{\kappa_P}  \sqrt{T}\left (\log[\lambda_{+,\times}(T)/\epsilon_{\mbox{\tiny TD}}] +\omega h\log^{1/2}[\lambda_{+,\times}(T)/\epsilon_{\mbox{\tiny TD}}] \right )\log(1/\epsilon_L) \right) \nonumber \\
    &= O\left (\sqrt{\kappa_P}  \omega \sqrt{T }\log[\lambda_{+,\times}(T)/\epsilon_{\mbox{\tiny TD}}] \log(1/\epsilon_L) \right),
    \end{align} 
    suffices to output the required approximation $\tilde{\rho}$ to $\ket{y}$ with probability of success bounded from below by a constant. We next perform the 2--outcome projective measurement $\{\Pi_{H,T}, I -\Pi_{H,T}\}$ on $\tilde{\rho}$, where $H$ and $F$ denote whether we want to realize the history state or solution state respectively. This gives rise to the normalized quantum state $\sigma_{H,T}$ that approximates the exact, continuous dynamics solution $\ket{x_{H,T}}$. We assume that $\epsilon_L < Pr_{H,T}$, and so from Theorem~\ref{lem:totalerrorbound}, we have that
 \begin{equation}
    	\| \sigma_{H,T} - \ketbra{x_{H,T}}{x_{H,T}} \|_1 \leq \frac{2 \epsilon_L}{\mathrm{Pr}_{H,T} - \epsilon_L}+ 4 \epsilon_{\mbox{\tiny TD}}  f^{+,\times}_{H,T}.
\end{equation}
To make the trace-distance $O(\epsilon)$, it suffices to take $\epsilon_L= O(\epsilon \mathrm{Pr}_{H,T})$ and $\epsilon_{\mbox{\tiny TD}} = O(\epsilon/f^{+,\times}_{H,T})$ from Theorem~\ref{lem:totalerrorbound}. This also implies, from Lemma~\ref{lemma:successpostmeasurementstateperturbation}, that the success probability on the approximate state $\tilde{\rho}$ is $\Theta (\mathrm{Pr}_{H,T})$. Therefore, to obtain an $\epsilon$--approximation to the exact solution with probability $O(1)$, it suffices to use amplitude amplification with $\Omega(1/\sqrt{\mathrm{Pr}_{H,T}})$ queries to the circuit. From Theorem~\ref{thm:successprobfinalstate} we have that $\mathrm{Pr}_T =  \Omega(p/(T\bar{g}_{+,\times}^2))= \Omega( 1/(\sqrt{T}\bar{g}_{+,\times}^2))$ in the stable case where we choose $p=\Theta(\sqrt{T})$. Also, from Theorem~\ref{thm: successprobhistorystate} we have that $\mathrm{Pr}_H = \Omega(1)$ in all cases. Thus, in the case of a stable system we have $O(T^{1/4} \bar{g}_{+,\times})$ and $O(1)$ calls to the circuit within amplitude amplification for the solution state/history state cases respectively. This implies that for the case where $A$ is a stable matrix and we have a target error $\epsilon$ for the exact dynamics, we have an overall query complexity 
\begin{equation}
 O \left (\bar{g}_{+,\times}(T)\sqrt{\kappa_P}  \omega (T)^{3/4}\log[\lambda_{+,\times}(T)f^{+,\times}_T/\epsilon]\log(\sqrt{T}\bar{g}_{+,\times}/\epsilon) \right),
\end{equation}
in the case of outputting the solution state $\ket{x_T}= \ket{x(T)}$, and
\begin{equation}
 O \left (  \sqrt{\kappa_P} \omega\sqrt{T}\log[\lambda_{+,\times}(T)f^{+,\times}_H/\epsilon]\log(1/\epsilon) \right),
\end{equation}
in the case of outputting the history state $\ket{x_H}$. This establishes the claimed fast-forwarding for stable systems. 
 
\item \emph{($A$ is not a stable matrix)}. From Theorem~\ref{thm:conditionnumbernoidling}, we have that $\kappa_L = O(k_{+,\times} \sqrt{C^2_{\mathrm{max}}T^2+p^2})$, where $C_{\mathrm{max}}$ is an upper bound on $\|e^{At}\|$ over the interval $t \in [0,T]$. 
In the case of outputting the solution state we can choose the idling parameter to be $p=\Theta(T)$ which implies firstly that $\kappa_L = O( C_{\mathrm{max}}k_{+,\times} T)$, and secondly that $\mathrm{Pr}_T = \Omega(p/(T\bar{g}_{+,\times}^2)) = \Omega (\bar{g}_{+,\times}^2)$. The choice of Taylor truncation order $k_{+,\times}$ is as in the case of $A$ being stable, and we again choose $\epsilon_L = O(\epsilon Pr_{H,T})$ and $\epsilon_{\mbox{\tiny TD}} = O(\epsilon/f^{+,\times}_{H,T})$. This implies
  \begin{equation}
    \mathcal{Q}_{QLSA}= O\left (  \omega C_{\mathrm{max}}T\log[\lambda_{+,\times}f^{+,\times}_{H,T}(T)/\epsilon]\log(\bar{g}_{+,\times}/\epsilon) \right),
    \end{equation} 
queries for outputting with bounded probability an $\epsilon_L$ approximation to $\ket{y}$. For outputting the solution state, we then require $O(\bar{g}_{+,\times})$ rounds within amplitude amplification to obtain, with bounded probability, the state $\sigma_{T}$ within $O(\epsilon)$ trace-norm distance from the exact solution $\ket{x_{T}}$, while for the history state it suffices to repeat $O(1)$ times to obtain an $O(\epsilon)$ approximation to $\ket{x_H}$. Therefore, we have an overall query complexity 
\begin{equation}
 O \left (  \omega \bar{g}_{+,\times} C_{\mathrm{max}} T\log[\lambda_{+,\times}f^{+,\times}_{T}(T)/\epsilon]\log(\sqrt{T}\bar{g}_{+,\times}/\epsilon) \right),
\end{equation}
in the case of outputting the solution state $\ket{x_T}= \ket{x(T)}$, and
\begin{equation}
 O \left ( \omega C_{\mathrm{max}}T \log[\lambda_{+,\times}(T)f^{+,\times}_{H}/\epsilon]\log(1/\epsilon) \right),
\end{equation}
in the case of outputting the history state $\ket{x_H}$.
\end{itemize}
This completes the proof.
\end{proof}

\begin{table*}
\resizebox{\textwidth}{!}{
      \begin{tabular}{ | p{0.66cm} | p{8cm}  | p{7.1cm} | p{2.39cm} |}
    \hline 
    \textbf{Ref.} & \textbf{Solution state} & \textbf{History state} & \textbf{Assumptions} \\ \hline
    \cite{berry2017quantum} & $\tilde{O}( \kappa_V g T \log [\frac{1}{\epsilon} \max\{\frac{\|\v{b}\|}{ x_{\min} },T\}] \log\frac{g T }{\epsilon})$ & $\tilde{O}( \kappa_V T \log [\frac{1}{\epsilon} \max\{\frac{\|\v{b}\|}{ x_{\min} },T\}] \log\frac{1}{\epsilon})$ &  $\alpha(A) \le 0$  \\ 
     \cite{krovi2022improved} & $\tilde{O}( \kappa_V g  T \log^{3/2} [\frac{1}{\epsilon} \max\{\frac{\|\v{b}\|}{ x_{\min} },T\}] \log\frac{g T}{\epsilon})$ & $\tilde{O}( \kappa_V T \log^{3/2} [\frac{1}{\epsilon} \max\{\frac{\|\v{b}\|}{ x_{\min} },T\}] \log\frac{1}{\epsilon})$ & $A$ diag. via~$V$ \\
    Here &  $O( \kappa_V \v{\bar{g}_\times} T \log [\frac{1}{\epsilon} \max\{\frac{\|\v{b}\|}{ x_{\min} },T\}]\log\frac{\bar{g}_\times T}{\epsilon})$ &  $O( \kappa_V  T \log [\frac{1}{\epsilon} \max\{\frac{\|\v{b}\|}{ x_{\min} },T\}] \log\frac{1}{\epsilon})$ &  \small $\kappa_V \! \! =\! \|V\| \|V^{-1}\|$ \normalsize  \\ 
    \hline \hline 
     
     \cite{berry2017quantum} & beyond scope & beyond scope &  $A$ stable, \\
     
     \cite{krovi2022improved} & $\tilde{O}( C_{\mathrm{max}} g  T \log^{3/2} [\frac{1}{\epsilon} \max\{\frac{\|\v{b}\|}{ x_{\min} },T\}] \log\frac{g T}{\epsilon})$ & $\tilde{O}( C_{\mathrm{max}} T \log^{3/2} [\frac{1}{\epsilon} \max\{\frac{\|\v{b}\|}{ x_{\min} },T\}]\log\frac{1}{\epsilon})$ &  i.e., $\alpha(A) <0$
     \\
    Here & $O(\v{\sqrt{\kappa_P} \bar{g}_\times   T^{3/4} \log [\frac{1}{\epsilon} \max\{\frac{\|\v{b}\|}{ x_{\min} },T\}] \log\frac{\bar{g}_\times T} {\epsilon}})$ & $O(\v{\sqrt{\kappa_P}  \sqrt{T} \log [\frac{1}{\epsilon} \max\{\frac{\|\v{b}\|}{ x_{\min} },T\}] \log\frac{1 }{\epsilon}})$ & 
    \\ 
       & $O(\v{\sqrt{\kappa_P} \bar{g}_+   T^{3/4} \log \frac{\max \left\{ \|\v{b}\|,  x_{\max}    \right\} T }{\epsilon \| \v{x}(T)\|} \log \frac{\bar{g}_+ T}{\epsilon}})$ & $O(\v{\sqrt{\kappa_P}  \sqrt{T} \log \frac{ \max \left\{ \|\v{b}\|,  x_{\max}   \right\} T }{\epsilon \, x_{{\rm rms}}} \log \frac{1}{\epsilon}})$ &
       \\ \hline 
       \cite{an2022theory} &  $\frac{\| \v{x}^0\| + \| \v{b}\|/|\alpha(A)|}{\| \v{x}(T)\|} \frac{\sqrt{T}}{|\alpha(A)|} \log \frac{1}{|\alpha(A)|\epsilon} \log \frac{T \log(1/(|\alpha(A)|\epsilon))}{|\alpha(A)|\epsilon} $ & - &  $A<0$
       \\
       \hline \hline 
      \cite{berry2017quantum} & beyond scope & beyond scope &  $A$ general \\
\cite{krovi2022improved} & $\tilde{O}( C_{\mathrm{max}} g  T \log^{3/2} [\frac{1}{\epsilon} \max\{\frac{\|\v{b}\|}{ x_{\min} },T\}] \log\frac{g T}{\epsilon})$ & $\tilde{O}( C_{\mathrm{max}} T \log^{3/2} [\frac{1}{\epsilon} \max\{\frac{\|\v{b}\|}{ x_{\min} },T\}]\log\frac{1}{\epsilon})$ & 
     \\ 
Here &  $O  (C_{\mathrm{max}} \v{\bar{g}_{\times}} T\v{\log}[\frac{1}{\epsilon}\max\{\frac{\|\v{b}\|}{ x_{\min} },T\}]\log \frac{\bar{g}_{\times}T}{\epsilon} )$ & $O (C_{\mathrm{max}}T \log[\frac{1}{\epsilon}\max\{ \frac{\| \v{b}\|}{ x_{\min} }, T\} ]\log\frac{1}{\epsilon} )$ & \\
&  $\v{O (C_{\mathrm{max}} \bar{g}_{+}  T\log[\frac{\max\{\| \v{b}\|,  x_{\max}   \}T}{\| \v{x}(T)\|\epsilon }]\log \frac{\bar{g}_{+} T}{\epsilon} )}$ & $\v{O (C_{\mathrm{max}}T \log[\frac{\max\{ \| \v{b}\|,  x_{\max}  \}T}{\epsilon \, x_{{\rm rms}}}   ]\log\frac{1}{\epsilon} )}$   & \\
    \hline 
    \end{tabular}  
    }
   \caption{\textbf{Asymptotic query complexities for linear ODEs with optimal quantum linear solvers for general~$\v{b}$}. Comparison of asymptotic complexities of the  algorithm in Ref.~\cite{berry2017quantum}, its new analysis (this work) and the recent improved algorithm in Ref.~\cite{krovi2022improved}. A simplified table (Table~\ref{table:comparisonssimplified}) was presented in the main text.  Note that the parameter $\omega$ enters linearly in all algorithmic complexities in the table above.} \label{table:comparisons}
\end{table*}

The table of complexities \ref{table:comparisons} is obtained by expanding the definitions of $\lambda_+(T)$ and $\lambda_\times (T)$. With this we can use Lemma~\ref{lem:kappa-cases} for the special cases for $\kappa_P$, to obtain the following corollary. 
\begin{cor}
Let $A\in \mathbb{C}^{N \times N}$ be a stable, complex matrix, and $\v{b} \in \mathbb{C}^N$. Consider the following linear ODE system, 
\begin{equation}
    \dot{\v{x}}(t) = A\v{x}(t) + \v{b}, \quad \v{x}(0) = \v{x}^0,
\end{equation} 
where $t\in [0,T]$, and the $\v{x}^0 \in \mathbb{C}^N$ specifying the initial conditions and the parameters $\bar{g}_{+,\times} (T)$ and $\lambda_{+,\times}(T)$ defined as in the previous theorem. Assume we have oracle access to an $(\omega, a, 0)$ block-encoding $U_A$  of the matrix $A$, a unitary $U_b$ that prepares the normalized state $\ket{b}$, and a unitary $U_0$ that prepares the normalized state $\ket{x^0}$. Then the following special cases hold:
\begin{itemize}
    \item If $A$ is diagonalizable as $V^{-1}AV = D$ for some diagonal matrix $D$, then we may output an $\epsilon$--approximation to the final state $\ket{x_T}= \ket{x(T)}$ using
\begin{equation}
        O\left ( \omega \kappa_V \bar{g}_{+,\times} (T) T^{3/4} \log[\lambda_{+,\times}(T)f^{+,\times}_{H,T}/\epsilon]\log(\sqrt{T}\bar{g}_{+,\times}/\epsilon) \right ),
    \end{equation}
    queries to the oracles. Moreover, we may output an $\epsilon$--approximation to the history state $\ket{x_H}$ using
     \begin{equation}
        O\left (  \omega \kappa_V \sqrt{T} \log [\lambda_{+,\times}(T)f^{+,\times}_{H,T}/\epsilon]\log(1/\epsilon) \right ),
    \end{equation}
    queries to the oracles.
    \item If $A$ has negative log-norm, then we may output an $\epsilon$--approximation to the final state $\ket{x_T}= \ket{x(T)}$ using
\begin{equation}
        O\left ( \omega \bar{g}_{+,\times} (T) T^{3/4} \log[\lambda_{+,\times}(T)f^{+,\times}_{H,T}/\epsilon]\log(\sqrt{T}\bar{g}_{+,\times}/\epsilon) \right ),
    \end{equation}
    queries to the oracles. Moreover, we may output an $\epsilon$--approximation to the history state $\ket{x_H}$ using
     \begin{equation}
        O\left (  \omega \sqrt{T} \log [\lambda_{+,\times}(T)/\epsilon]\log(1/\epsilon) \right ),
    \end{equation}
    queries to the oracles.
\end{itemize}
\end{cor}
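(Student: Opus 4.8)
The plan is to obtain the Corollary as a direct specialization of the general fast-forwarding Theorem proved immediately above, whose complexity bounds are stated in terms of the abstract stability parameters $(\kappa_P, \mu_P(A))$, simply by evaluating $\kappa_P$ in the two structured cases via Lemma~\ref{lem:kappa-cases}. Because $A$ is assumed stable, the stable branch of the Theorem applies as soon as an admissible $P$ is exhibited, and the only quantity that differs between the generic statement and the two special cases is the prefactor $\sqrt{\kappa_P}$. All the remaining norm-dependent quantities $\lambda_{+,\times}(T)$, $\bar{g}_{+,\times}(T)$ and $f^{+,\times}_{H,T}$ are defined purely through the solution norms $\|\v{x}(t)\|$, $\|\v{b}\|$, and the bounds $ x_{\min} ,  x_{\max} , x_{\mathrm{rms}} $; they are therefore intrinsic to the dynamics, independent of the auxiliary matrix $P$ used in the condition-number analysis, and carry over verbatim.

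First, for the diagonalizable case, I would invoke the first bullet of Lemma~\ref{lem:kappa-cases}: taking $P = V^\dagger V$ produces a strictly positive $P$ obeying the Lyapunov inequality, since $\alpha(A)<0$ forces $D + D^\dagger < 0$, and yields $\kappa_P = \kappa_V^2$ together with $\mu_P(A) = \alpha(A) < 0$. This certifies that the stable branch of the Theorem applies with $\sqrt{\kappa_P} = \kappa_V$, and substituting this prefactor into the ``Stable $A$'' rows of the complexity table (both the $\times$ and $+$ schemes, and both the solution-state and history-state columns) produces exactly the two displayed bounds.

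Second, for the negative-log-norm case, I would apply the second bullet of Lemma~\ref{lem:kappa-cases}: choosing $P = I$ gives $\kappa_P = 1$ and $\mu_P(A) = \mu(A) < 0$, so $\sqrt{\kappa_P} = 1$. Substituting this into the same stable-branch expressions collapses the $\kappa_V$ prefactor and yields the claimed $O(\omega \bar{g}_{+,\times}(T) T^{3/4} \cdots)$ and $O(\omega \sqrt{T}\cdots)$ scalings. The one bookkeeping item is that in the history-state bound the factor $f^{+,\times}_{H,T}$ appears inside a logarithm, so in the multiplicative scheme $f^\times_{H,T}=1$ it vanishes outright, and in the additive scheme it can be folded into the $\log$ argument as written.

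I do not anticipate a genuine obstacle here, since this is a substitution corollary rather than a fresh argument. The only point warranting an explicit line is the \emph{admissibility} of the two choices of $P$: one must confirm that each proposed $P$ is strictly positive and satisfies $PA + A^\dagger P < 0$, so that the Theorem's hypothesis $\mu_P(A)<0$ is genuinely met rather than merely $\alpha(A)<0$. Both verifications are precisely the content of Lemma~\ref{lem:kappa-cases}, so once that Lemma is invoked the Corollary follows with no further estimation.
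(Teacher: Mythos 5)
Your proposal is correct and matches the paper's own route exactly: the paper also obtains this corollary by invoking Lemma~\ref{lem:kappa-cases} to set $\kappa_P=\kappa_V^2$ (via $P=V^\dagger V$) in the diagonalizable case and $\kappa_P=1$ (via $P=I$) in the negative-log-norm case, then substituting $\sqrt{\kappa_P}$ into the stable branch of the fast-forwarding theorem. Your extra remark on verifying admissibility of $P$ is precisely what that lemma certifies, so nothing further is needed.
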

These complexities improve on prior results, as detailed and summarized in Table~\ref{table:comparisons}.

	\subsection{The No Fast-Forwarding theorem and quantum linear solvers}\label{sec:No-FF}
	
	It should be noted that we may recast any linear ODE into another that is stable with the solutions of each simply related. More precisely, given $\dot{\v{x}} = A \v{x} +\v{b}$ we may choose $z \in \mathbb{R}$ such that $\mu(A-zI)= \mu(A) -z <0$. In particular $z = \|A\| +\delta$ suffices for any $\delta >0$. Then defining $A' = A-zI$ we have that
	\begin{align} 
	    \v{x}(t) &= e^{zt} e^{A' t} \v{x}(0)+ \int_0^t \!\! ds\,\, e^{A's} e^{zs}\v{b}\nonumber \\
	    &=e^{zt} \left ( e^{A' t} \v{x}(0)+ \int_0^t \!\! ds\,\, e^{A's} e^{z(s-t)}\v{b} \right) .
	\end{align}
	In the homogeneous case the solution to the original ODE can be obtained from $\dot{\v{x}} = A' \v{x}$ by rescaling by a scalar, while in the inhomogeneous case $\v{b} \ne \v{0}$ this generates an effective time-dependent driving term, and to account for this we can realize the integrated scaling $e^{-zt}$ term if in the algorithm we increment $\v{b}$ by a factor $e^{-zh}$ at each time-step. In particular, we can consider $A = -iH$, for Hermitian $H$, and transform into an ODE with negative log-norm. Given negative log-norm systems have $O(\sqrt{T})$ complexity, this would seem to conflict with No Fast-Forwarding, which requires that simulation of general quantum dynamics has complexity $\Omega(T)$. However, the error constraint on the $\dot{x} = A'\v{x}$ system is exponential in $T$ since we require $\epsilon'_{\mbox{\tiny TD}} = e^{-zT} \epsilon_{\mbox{\tiny TD}}$, where $\epsilon'_{\mbox{\tiny TD}}$ is the error in the system with negative log-norm. This restores the $\Omega(T)$ complexity.

Finally, we note that since stable systems have $\|e^{At}\|$ upper bounded by an exponentially decreasing function in $t$ we can extend the results of~\cite{an2022theory} in a straightforward manner to obtain a query complexity lower bound for the dynamics in terms of the associated complexity of inverting the stable matrix $A$. More, precisely, as observed in~\cite{an2022theory} the asymptotic dynamics of the inhomogeneous system $\dot{\v{x}} = A \v{x} - \v{b}$  tends to an equilibrium for which $A \v{x} - \v{b} = \v{0}$. This implies that $\v{x}(t) \approx  A^{-1} \v{b}$ for $t$ large and therefore constitutes a linear solver algorithm. This leads to the following complexity lower bound for simulating the dynamics of stable systems.
\begin{theorem}[Extended from~\cite{an2022theory}, Proposition 19.]
    Consider $\dot{\v{x}} = A \v{x} - \v{b}$, and assume that $A$ is stable. Any quantum algorithm outputting the normalized solution state $\ket{x_T}$ up to error $\epsilon$ must have worst-case complexity $\Omega (\min (\log^\gamma (1/\epsilon), T^\gamma))$ assuming the worst-case query complexity for a linear-solver for the equation $A\v{x} = \v{b}$ has complexity $\Omega (\log^\gamma (1/\epsilon))$. 
\end{theorem}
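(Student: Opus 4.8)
The plan is to run the reduction in the direction opposite to the upper-bound analysis: rather than building an ODE solver out of a linear solver, I will exploit the fact that a \emph{fast} ODE solver for a stable system would immediately furnish a \emph{fast} linear solver, and then invoke the assumed linear-solver lower bound to obtain a contradiction with being too fast. The starting point is the closed form of the solution of $\dot{\v{x}} = A\v{x} - \v{b}$ about its fixed point. Since stability guarantees $A$ is invertible, one has $\int_0^t e^{As}\,ds = A^{-1}(e^{At}-I)$, so that $\v{x}(t) = e^{At}(\v{x}(0) - A^{-1}\v{b}) + A^{-1}\v{b}$ and the displacement from equilibrium is exactly $\v{x}(t) - A^{-1}\v{b} = e^{At}(\v{x}(0) - A^{-1}\v{b})$. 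The stability estimate $\|e^{At}\| \le \sqrt{\kappa_P}\,e^{\mu_P(A)t}$ with $\mu_P(A) < 0$ then yields exponential convergence, $\|\v{x}(t) - A^{-1}\v{b}\| \le \sqrt{\kappa_P}\,e^{\mu_P(A)t}\,\|\v{x}(0) - A^{-1}\v{b}\|$.

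The second step converts this into a statement about the \emph{normalized} solution state. Applying the elementary normalized-vector estimate inside the proof of Lemma~\ref{lem:relativetounitvectorerror} (in the single-block case $a=0$, $J=1$), I obtain $\big\|\,\ket{x(T)} - \ket{A^{-1}b}\,\big\| \le \tfrac{2}{\|A^{-1}\v{b}\|}\,\|\v{x}(T)-A^{-1}\v{b}\| \le C\,e^{\mu_P(A)T}$, where $C := 2\sqrt{\kappa_P}\,\|\v{x}(0)-A^{-1}\v{b}\|/\|A^{-1}\v{b}\|$ is a system-dependent constant and $\ket{A^{-1}b}$ is the normalized solution of $A\v{x}=\v{b}$. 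Consequently, any routine that outputs a state $\epsilon$-close to $\ket{x_T}=\ket{x(T)}$ outputs, by the triangle inequality, a state that is $(\epsilon + C e^{\mu_P(A)T})$-close to $\ket{A^{-1}b}$; that is, it is a valid linear solver for $A\v{x}=\v{b}$ at precision $\epsilon' := \epsilon + C e^{\mu_P(A)T}$, realized from the same oracle calls (the block-encoding of $A$ and the state preparation of $\v{b}$, with $\v{x}(0)$ fixed to any easily-preparable reference).

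The third step feeds this into the hypothesis. If the worst-case query complexity of inverting $A$ is $\Omega(\log^\gamma(1/\epsilon'))$, then the ODE-solver query count obeys $\mathcal{Q}(T,\epsilon) \ge \Omega\!\big(\log^\gamma(1/(\epsilon + C e^{\mu_P(A)T}))\big)$. I then simplify the argument of the logarithm: since $\epsilon + Ce^{\mu_P(A)T}$ lies within a factor $2$ of $\max\{\epsilon,\,Ce^{\mu_P(A)T}\}$, one has $\log\frac{1}{\epsilon + Ce^{\mu_P(A)T}} = \Theta\big(\min\{\log(1/\epsilon),\,|\mu_P(A)|\,T\}\big) = \Theta\big(\min\{\log(1/\epsilon),\,T\}\big)$, the last equality absorbing the constant $|\mu_P(A)|$. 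Raising to the power $\gamma>0$ and using $\min(a,b)^\gamma = \min(a^\gamma,b^\gamma)$ for $a,b\ge 0$ gives $\mathcal{Q}(T,\epsilon) = \Omega(\min(\log^\gamma(1/\epsilon),\,T^\gamma))$, as claimed. Intuitively, the two branches of the minimum correspond to the two regimes flagged in the main text: for $T \gtrsim |\mu_P(A)|^{-1}\log(1/\epsilon)$ the dynamics has effectively equilibrated and the task is genuinely a linear-solve to precision $\epsilon$, while for smaller $T$ the achievable proximity to equilibrium is only $\sim e^{\mu_P(A)T}$, capping the effective linear-solve precision.

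The main obstacle I anticipate is not the calculation but the bookkeeping that makes the reduction respect the \emph{worst-case} nature of the hypothesis: I must ensure the hard instances underlying the assumed $\Omega(\log^\gamma(1/\epsilon))$ linear-solver bound are realizable by stable generators $A$ (so they are legitimate inputs to the stable-ODE problem), and that the oracle models for the two problems coincide so that one ODE-solver query translates into one linear-solver query at no hidden cost. A secondary point is to verify that $C$ and all suppressed constants are independent of $\epsilon$ and $T$, which is what legitimizes the uniform collapse of $\log\frac{1}{\epsilon + Ce^{\mu_P(A)T}}$ to $\min\{\log(1/\epsilon),\,T\}$ across the regime of interest.
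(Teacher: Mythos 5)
Your proposal is correct and takes essentially the same route as the paper: the paper's proof is a one-line deferral to Proposition 19 of the cited reference, whose argument is precisely your reduction — exponential equilibration of the stable dynamics toward $A^{-1}\v{b}$, with the Euclidean log-norm estimate replaced by the Lyapunov bound $\|e^{At}\| \le \sqrt{\kappa_P}\,e^{\mu_P(A)t}$, so that an $\epsilon$-accurate ODE solver at time $T$ doubles as a linear solver at precision $\epsilon + C e^{\mu_P(A)T}$, and the assumed linear-solver lower bound then collapses to the claimed $\Omega(\min(\log^\gamma(1/\epsilon), T^\gamma))$. The caveats you flag at the end (realizability of the hard instances by stable $A$, matching oracle models, instance-independent constants, and implicitly $\v{b}\neq\v{0}$) are equally implicit in the paper's and the reference's treatment, so they do not constitute a gap relative to the paper's own proof.
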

The proof of this is identical to the one presented in~\cite{an2022theory} with the Euclidean log-norm replaced with the more general log-norm that arises from the Lyapunov analysis.

 We also note that in the context of Hamiltonian simulation, various fast-forwarding results are known, see for example Ref.~\cite{gu2021fast} and references therein. As we mentioned before, stability analysis cannot help fast-forwarding Hamiltonian simulation, as this is only marginally stable. However, conversely, one could analyse for what class of stable dynamical systems the stability-induced fast-forwarding can be combined with other fast-forwarding techniques inspired by Hamiltonian simulation to give rise to even faster simulations. We leave this as an interesting line for future investigation.

\section{Cost of quantum linear system solver}
\label{sec:costQLSA}

All the methods analyzed in this work rely on a quantum linear solver algorithm as a subroutine. The best query count upper bounds are currently those presented in Ref.~\cite{jennings2023efficient}, which read as follows
		
		\begin{theorem}[Theorem 1 \cite{jennings2023efficient}]
		\label{thm:querycountsQLSA}
		Consider a system of linear equations $L\v{y}=\v{c}$, with $L$ an $N \times N$ matrix, and assume $L$ and $\v{c}$ have been scaled so that the singular values of $L$ lie in $[1/\kappa, 1]$.  Denote by $\ket{c}$ the normalized state that is proportional to $\sum_i c_i\ket{i}$, and $\ket{L^{-1}c}$ the normalized state proportional to $A^{-1}\ket{c}$.  Assume access to
		\begin{enumerate}
\item A  unitary $U_L$ that encodes the matrix $L/\omega_L$ in its top-left block, for some constant $\omega_L \ge 1$, using a number of qubits equal to $a$,
\item An oracle $U_c$ preparing $\ket{c}$.
\end{enumerate}
Then there is a quantum algorithm that outputs a quantum state $\epsilon$-close in $1$--norm to $\ket{A^{-1}c}$, using an expected $ Q^*_{QLSA}$ calls to $U_{L}$ or $U_{L}^\dag$ and $4Q^*_{QLSA}$  calls to $U_c$ or $U_c^\dag$, where
		\footnotesize
		\begin{align*}
		 Q^*_{QLSA} &\le  \left\{ \frac{ 581\omega_L e}{250}  \sqrt{\kappa^2 +1} \left(  \left[\frac{ 133}{125} + \frac{4}{25\kappa^{1/3}}\right]  \pi \log (2\kappa +3) +1 \right) \right. \\ & \left. + \frac{ 117}{50} \log ^2(2 \kappa +3) \left(\log \left(\frac{ 451 \log ^2(2 \kappa +3)}{\epsilon_L }\right)+1\right)+ \omega_L \kappa \log \frac{32}{\epsilon_L}\right\},
		\end{align*}
		\normalsize
		 for any $\epsilon_L \leq 0.2$ and $\kappa \geq \sqrt{12}$. The success probability is lower bounded by $0.39 - 0.201 \epsilon_L$. If the block-encoding of the $N\times N$ dimensional matrix $L$ requires $a$-auxiliary qubits, then the full QLSA requires $a+ 7 + \lceil \log_2 N\rceil $ logical qubits.
	\end{theorem}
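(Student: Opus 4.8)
The plan is to reconstruct the explicit-constant cost bound by following the optimal-scaling linear-solver pipeline of Ref.~\cite{costa2022optimal}, which attains $O(\kappa\log(1/\epsilon))$ queries, and then to bolt on an eigenstate-filtering refinement plus success-probability accounting, tracking every multiplicative constant rather than discarding it into $O(\cdot)$. The three summands of $Q^*_{QLSA}$ correspond, respectively, to (i) a discrete adiabatic phase that prepares a state at constant trace distance from $\ket{L^{-1}c}$, (ii) a polylogarithmic reflection/amplitude correction, and (iii) an eigenstate-filtering phase that drives the error down to $\epsilon_L$. I would derive each separately and sum them, then propagate the two error budgets through the $1$-norm triangle inequality.

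First I would Hermitian-dilate the problem: from the given $(\omega_L,a,0)$ block-encoding $U_L$ of $L$, one extra qubit yields a block-encoding of the Hermitian dilation $\begin{pmatrix}0 & L\\ L^\dagger & 0\end{pmatrix}/\omega_L$, and from this together with $U_c$ one constructs the interpolating family of qubitized walk operators $W(s_j)$ used in Ref.~\cite{costa2022optimal}, whose instantaneous null eigenvector traces a path from an easily prepared state at $s=0$ to a state proportional to $\ket{L^{-1}c}$ at $s=1$. Each $W(s_j)$ costs a single call to $U_L$ (or $U_L^\dagger$) and a bounded number of calls to $U_c$. The singular-value promise $[1/\kappa,1]$ lower-bounds the relevant spectral gap along the path by an explicit function of $s$ and $\kappa$; substituting this gap into the quantitative discrete adiabatic theorem of Ref.~\cite{costa2022optimal} bounds the number of walk steps needed to track the null state to constant fidelity, and collecting the adiabatic prefactors while integrating the gap-dependent schedule is what produces the $\tfrac{581\,\omega_L e}{250}\sqrt{\kappa^2+1}(\ldots)$ term with its $\pi\log(2\kappa+3)$ factor.

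Next I would apply eigenstate filtering. The adiabatic output is a superposition of the target null state and orthogonal excited components separated by a gap $\Omega(1/\kappa)$, so applying the Lin--Tong minimax filtering polynomial via QSVT on the block-encoding projects onto the kernel; its degree is $\Theta(\kappa\log(1/\epsilon_L))$ with an explicit constant, and since each QSVT step costs one query to the block-encoding (scaled by $\omega_L$), this yields the $\omega_L\kappa\log(32/\epsilon_L)$ summand, while the $\tfrac{117}{50}\log^2(2\kappa+3)(\ldots)$ term arises from the auxiliary polynomial implementing the reflection/amplitude step. Lower-bounding the post-filter overlap with $\ket{L^{-1}c}$ and combining the constant adiabatic error with the filtering error $\epsilon_L$ gives the stated success probability $0.39-0.201\epsilon_L$; the factor-of-$4$ on the $U_c$ calls is a direct consequence of each walk/filter step invoking the state-preparation oracle a bounded number of times. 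The qubit count $a+7+\lceil\log_2 N\rceil$ follows by adding the dilation qubit, the QSVT/phase ancillas, and the amplification flag to the $a$ block-encoding ancillas and the $\lceil\log_2 N\rceil$ system qubits.

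The hardest part will be the explicit-constant bookkeeping inside the discrete adiabatic theorem: the abstract $O(\kappa\log(1/\epsilon))$ bound hides several schedule-dependent integrals and norm-of-finite-difference bounds on $W(s_j)$, and turning these into the exact rational prefactors $581/250$, $133/125$, $4/25$, $117/50$, $451$ requires pinning down the boundary-cancelling schedule, bounding the first and second differences of the walk operators uniformly in $\kappa$, and verifying that the worst-case constants are attained precisely in the regime $\kappa\ge\sqrt{12}$, $\epsilon_L\le 0.2$ where the theorem is stated. By contrast, once the minimax-polynomial degree is quoted with its constant, the filtering and $1$-norm error-propagation steps are comparatively routine.
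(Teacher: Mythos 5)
First, a framing point: the paper you are trying to match does not prove this statement at all. It is Theorem~1 of Ref.~\cite{jennings2023efficient}, imported verbatim as a black box in the appendix on the cost of the quantum linear solver; the present paper's only use of it is to substitute its own parameters $(\omega_{\tilde{L}},\kappa_L,\epsilon_L)$ into the closed formula. So the honest comparison is with the proof in that cited reference, and at the level of architecture your plan does track it: Hermitian dilation of $L$ from the block-encoding, the interpolating qubitized walk family of Ref.~\cite{costa2022optimal} whose null eigenvector connects an easily prepared state to one proportional to $L^{-1}\ket{c}$, a gap schedule controlled by the singular-value promise $[1/\kappa,1]$, a discrete adiabatic theorem with explicit constants to reach constant fidelity, and Lin--Tong eigenstate filtering via QSVT of degree $\Theta(\kappa\log(1/\epsilon_L))$ to reach the target error, with the success probability and the factor of $4$ on $U_c$ handled by repeat-until-success accounting.

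Nevertheless, as a proof your proposal has a genuine gap, one you flag yourself: everything that makes this particular theorem a theorem --- the rational prefactors $581/250$, $133/125$, $4/25$, $117/50$, the constants $451$ and $32$, the validity regime $\kappa\ge\sqrt{12}$, $\epsilon_L\le 0.2$, the success probability $0.39-0.201\epsilon_L$, the \emph{expected} (not worst-case) nature of the count, and the exact qubit budget $a+7+\lceil\log_2 N\rceil$ --- is deferred to ``explicit-constant bookkeeping.'' That bookkeeping is not a routine substitution: the generic constants in the discrete adiabatic theorem of Ref.~\cite{costa2022optimal} are far too large to yield these numbers, and the substantive content of Ref.~\cite{jennings2023efficient} is precisely a sharpened re-derivation (tightened difference bounds on the walk operators, optimized schedule integrals, a careful filtering-overlap and error-budget split) that turns the asymptotic statement into these prefactors. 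Executing your plan ``naively'' would produce a valid but much weaker inequality, not the stated one. Relatedly, your assignment of the $\tfrac{117}{50}\log^2(2\kappa+3)\bigl(\log(451\log^2(2\kappa+3)/\epsilon_L)+1\bigr)$ summand to a ``reflection/amplitude correction'' is a guess rather than a derivation, and the claimed success probability and the $4Q^*_{QLSA}$ calls to $U_c$ are asserted rather than obtained from an overlap computation. In short: correct reconstruction of the cited work's strategy, but the quantitative core of the statement remains unproven in your proposal.
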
 

	The expected query complexity including the failure probability is
	\begin{equation}
		\mathcal{Q}_{QLSA} = \mathcal{Q}^*_{QLSA}/(0.39-0.204 \epsilon_L).
	\end{equation}

  \section{Explicit query counts for ODE-solver}. 
We now present the proof of Theorem~\ref{thm:detailedODEcounts}, stated in the main text, and repeated here for convenience.
\begin{theorem}[Explicit query counts for ODE-solver]
Consider any $N$--dimensional ODE system of the form
\begin{equation}
\dot{\v{x}} = A \v{x} + \v{b},\quad \v{x}(0) = \v{x}^0,
\end{equation}
 specified by a complex matrix $A\in \mathbb{C}^{N\times N}$, vectors $\v{b}, \v{x}^0 \in \mathbb{C}^N$ and $t\in [0,T]$. Choose a timestep $h$ such that $\|A\|h \le 1$ and $Mh=T$ for some integer $M$. Let $(\kappa_P, \mu_P(A),  x_{\max} ,  x_{\min} , x_{\mathrm{rms}} )$ be the associated ODE parameters, as defined in Table~\ref{table:parameters}. 

We assume that we have oracle access to a unitary $U_A$, which is an $(\omega, a, 0)$ block-encoding of $A$, a unitary $U_0$ that prepares the normalized state $\ket{x^0}$ and, if $\v{b} \neq \v{0}$, a unitary $U_b$ that prepares the normalized state $\ket{b}$. Then a quantum state $\epsilon$-close in $1$-norm to the solution state $\ket{x_T}=\ket{x(T)}$, or to the history state $\ket{x_H}$ in Eq.~\eqref{eq:historystate}, 
can be outputted with bounded probability of success, using an expected number
$\mathcal{Q}_{H,T}$ queries to $U_A$ and $4\mathcal{Q}_{H,T}$ queries to $U_0$. In the case where we have $\v{b} \neq \v{0}$ then $4\mathcal{Q}_{H,T}$ queries to $U_b$ are also made. The query count parameter $\mathcal{Q}_{H,T}$ is given as follows:  
\begin{enumerate}[label={(\arabic*)}]
 \item Set the time discretization error $\epsilon_{\mbox{\tiny TD}}=\epsilon/(8f^{+,\times}_{H,T})$, where $f^{\times}_{H,T} = 1$, $f^+_H = \frac{1}{ x_{\mathrm{rms}} }$, $f^+_T =\frac{1}{\|\v{x}(T)\|}$. 
 \item Compute the Taylor truncation order: 
        \begin{equation}
        \label{eq:k}
            k_{+,\times} = \left \lceil \frac{3\log(s_{+,\times})/2+1}{\log[1+ \log(s_{+,\times})/2]}-1 \right \rceil,
        \end{equation} 
        with
        \small
        \begin{align}
    s_\times \geq & \frac{M e^3}{\epsilon_{\mbox{\tiny TD}}} \left(1+ T e^2 \frac{\| \v{b}\|}{ x_{\min} }\right) \nonumber \\
    s_+ \geq & \frac{Me^3  x_{\max} }{\epsilon_{\mbox{\tiny TD}}} \left( 1 +  T e^2 \frac{\|\v{b}\|}{ x_{\max} } \right),
    \end{align}
    \normalsize
\item In the case of the history state output, set $p =0$. In the case of the solution state output and $A$ being stable, compute $M=A/h$ and set 
\small
\begin{align*}
     p =\lceil \sqrt{M}/(k_{+,\times}+1)\rceil (k_{+,\times}+1)
\end{align*}
\normalsize and otherwise
\begin{align}
    p=\lceil M/(k_{+,\times}+1)\rceil (k_{+,\times}+1),
\end{align}
 in the case when $A$ is not stable.
 \item Compute the scaling parameter $\omega_{\tilde{L}} = \frac{1+\sqrt{k_{+,\times}+1}+\omega h}{\sqrt{k_{+,\times}+1} +2}$.
 \item Compute the upper bound on the condition number of~$L$:
            \footnotesize
        \begin{align}
         \nonumber 
            \! \! \! \!   \! \! \! \!   \! \! \!
 \kappa_L \! = \! & \left[   (1+\epsilon_{\mbox{\tiny TD}})^2 \left( 1  + g(k_{+,\times}) \right ) \kappa_P  \left(p \frac{1- e^{2 T\mu_P(A) +2 h\mu_P(A) }}{ 1- e^{2h \mu_P(A) }} \right. \right. \\ \nonumber & \left. \left. + I_0(2) \frac{e^{2\mu_P(A)(T+2h)} + M +1 - e^{2h\mu_P(A) } (2+M)}{(1-e^{2  h\mu_P(A)})^2 } \right) \right. \\ & \left. + \frac{p(p+1)}{2} \! + (p + M k_{+,\times}) (I_0(2)-1) \right]^{\frac{1}{2}} \! \!\! (\sqrt{k_{+,\times}+1}+2)   \label{eq:kappaLbound}
        \end{align}  
        \normalsize
        where $g(k_{+,\times}) = \sum_{s=1}^{k_{+,\times}} \left(s! \sum_{j=s}^{k_{+,\times}} 1/j!\right)^2 \leq ek_{+,\times}$, $I_0(2) \approx 2.2796$. 
          \item Fix $K=(3-e)^2$ for the inhomogeneous case ($\v{b} \neq \v{0})$ and $K=1$ for the homogeneous case $(\v{b} = \v{0})$. For outputting the history state compute the success probability
\begin{equation}
\label{eq:prH}
  \! \! \!  \!\mathrm{Pr}^{+,\times}_H \ge \frac{K}{K-1+I_0(2)} \geq \begin{cases}
    29/500 \quad \v{b}\neq \v{0} \\
     219/500 \quad \v{b} = \v{0},
    \end{cases}
\end{equation}
and for outputting the solution state compute
\begin{align}
  \! \! \!  \! \mathrm{Pr}^{+,\times}_T 
	 \geq \frac{1 }{ (1-\frac{I_0(2)-1}{(p+1)K})  + (M+1) \frac{I_0(2)-1}{(p+1)K} \left(\frac{1+\epsilon_{\mbox{\tiny TD}}}{1-\epsilon_{\mbox{\tiny TD}}}\right)^2 \bar{g}_{+,\times}^2}.
  \end{align}
 \item Compute the error parameter $\epsilon_L = \epsilon\frac{\mathrm{Pr}^{+,\times}_{H,T} }{4+\epsilon}$.
  \item Compute
  \begin{align}
      \mathcal{Q}_{QLSA} = \mathcal{Q}_{QLSA}(\omega_{\tilde{L}}, \kappa_L, \epsilon_L),
  \end{align} the best available query counts for QLSA given the parameters $\omega_{\tilde{L}}$, $\kappa_L$, $\epsilon_L$. In this work we use the closed formula for $\mathcal{Q}_{QLSA}$ from Ref.~\cite{jennings2023efficient},
 $\mathcal{Q}_{QLSA} = \mathcal{Q}^*/{(0.39-0.204 \epsilon_L)}$, where $\mathcal{Q}^*(\omega_{\tilde{L}}, \kappa_L, \epsilon_L) $ equals 
		\scriptsize
		\begin{align*}
		   \left\{ \frac{ 581\omega_L e}{250}  \sqrt{\kappa_L^2 +1} \left(  \left[\frac{ 133}{125} + \frac{4}{25\kappa_L^{1/3}}\right]  \pi \log (2\kappa_L +3) +1 \right)  + \frac{ 117}{50} \log ^2(2 \kappa_L +3) \left(\log \left(\frac{ 451 \log ^2(2 \kappa_L +3)}{\epsilon_L }\right)+1\right)+ \omega_L \kappa_L \log \frac{32}{\epsilon_L}\right\},
		\end{align*}
		\normalsize
  \item Compute $\mathcal{A}(\mathrm{Pr}^{+,\times}_{H,T})$, the query counts for amplitude amplification from success probability $\mathrm{Pr}_{H,T}$ (see, e.g., Ref.~\cite{yoder2014fixed}), or $\mathcal{A}(\mathrm{Pr}^{+,\times}_{H,T}))= 1/\mathrm{Pr}^{+,\times}_{H,T})$ if we sample till success.
  \item Finally, the query count parameter is given as $\mathcal{Q}_{H,T} = \min_{+,\times} (\mathcal{A}\mathcal{Q}_{QLSA} )$, with the minimization taken over the choice of either $+$ or $\times$ for the discretization error scheme.
\end{enumerate}
 The total logical qubit count for the algorithm is $a + 1 3 + \lceil \log_2 [((M+1)(k_{+,\times}+1) +p )N] \rceil$.
\end{theorem}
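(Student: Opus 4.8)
The plan is to prove this theorem as a synthesis argument that threads the explicit results of the preceding appendices through the algorithmic pipeline of Fig.~\ref{fig:cost_diagram}, checking at each node that the stated formula is exactly the instantiation of the corresponding lemma. First I would fix the reduction: by the linear-system embedding of Appendix~\ref{sec:linearsystemembedding}, the discretized dynamics $\{\v{x}^m\}$ is encoded in $\v{y}=L^{-1}\v{c}$ with $L$ as in Eq.~\eqref{eq:L}, and the target (history or solution) data is recovered by the projective measurement $\{\Pi_{H,T},\,I-\Pi_{H,T}\}$. This fixes steps (3)--(4): the idling choice $p$ is dictated by whether we output the history state ($p=0$) or the solution state, with $p=\Theta(\sqrt{M})$ in the stable case and $p=\Theta(M)$ otherwise, the multiple of $k+1$ being forced by the block structure, while $\omega_{\tilde{L}}$ is read off directly from Theorem~\ref{thm:omegavalue}.

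Next I would instantiate the condition number and the success probabilities. Step (5) is precisely Theorem~\ref{thm:conditionnumbernoidling} with $C(lh)$ bounded via $\|e^{At}\|\le\sqrt{\kappa_P}\,e^{\mu_P(A)t}$ and the specialization $p_m=0$ for $m<M$, $p_M=p$; the geometric sums in that proof produce the $\xi$-type expression in Eq.~\eqref{eq:kappaLbound}, and the non-stable case is recovered in the limit $\mu_P(A)\to 0^-$, $\kappa_P\to C_{\max}^2$, so the single displayed formula covers both regimes. Step (6) quotes Theorem~\ref{thm: successprobhistorystate} and Theorem~\ref{thm:successprobfinalstate} essentially verbatim, with the numerical lower bounds $29/500$ and $219/500$ obtained by substituting $I_0(2)\approx 2.2796$ together with $K=(3-e)^2$ or $K=1$.

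The crux is the error budget, steps (1), (2) and (7). I would invoke Theorem~\ref{lem:totalerrorbound}, which bounds the total $1$-norm error by $\tfrac{2\epsilon_L}{\mathrm{Pr}_{H,T}-\epsilon_L}+4\epsilon_{\mbox{\tiny TD}}f^{+,\times}_{H,T}$. Setting $\epsilon_{\mbox{\tiny TD}}=\epsilon/(8f^{+,\times}_{H,T})$ makes the discretization contribution exactly $\epsilon/2$, where Corollary~\ref{cor:unitvectorerrors} is what converts the vector-level bound into the normalized-state error via the factor $f^{+,\times}_{H,T}$; and setting $\epsilon_L=\epsilon\,\mathrm{Pr}_{H,T}/(4+\epsilon)$ makes the linear-solver contribution exactly $\epsilon/2$, since that value is the unique solution of $2\epsilon_L/(\mathrm{Pr}_{H,T}-\epsilon_L)=\epsilon/2$ and in particular satisfies $\epsilon_L<\mathrm{Pr}_{H,T}$ as Theorem~\ref{lem:totalerrorbound} requires. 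The two halves sum to $\epsilon$. The Taylor order $k_{+,\times}$ of step (2) is then the explicit sufficient truncation from Lemma~\ref{lem:sufficientk} for this $\epsilon_{\mbox{\tiny TD}}$, with $s_\times$ and $s_+$ supplied by Lemmas~\ref{lem:timediscretization} and~\ref{lem:timediscretizationadditive}.

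Finally I would close the count. The QLSA of Theorem~\ref{thm:querycountsQLSA} run on $(\omega_{\tilde{L}},\kappa_L,\epsilon_L)$ outputs a state $\epsilon_L$-close to $\ket{y}$ using $\mathcal{Q}_{QLSA}$ calls to $U_{\tilde{L}}$ and $4\mathcal{Q}_{QLSA}$ to $U_{\tilde{c}}$, which by Theorem~\ref{thm:omegavalue} and the construction of $U_{\tilde{c}}$ become the same numbers of calls to $U_A$ and to $U_0,U_b$. By Lemma~\ref{lemma:successpostmeasurementstateperturbation} the success probability on the perturbed output is $\Theta(\mathrm{Pr}_{H,T})$, so amplitude amplification (or repeat-until-success) contributes the factor $\mathcal{A}(\mathrm{Pr}_{H,T})$ of step (9), yielding $\mathcal{Q}^{+,\times}_{H,T}=\mathcal{A}\,\mathcal{Q}_{QLSA}$ and, after minimizing over the two error schemes, the claimed $\mathcal{Q}_{H,T}$. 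The qubit count adds the $a+6$ block-encoding ancillas of Theorem~\ref{thm:omegavalue}, the $7$ QLSA ancillas of Theorem~\ref{thm:querycountsQLSA}, and $\lceil\log_2[((M+1)(k+1)+p)N]\rceil$ for the system register. I expect the main obstacle to be precisely this error-propagation bookkeeping: verifying that the perturbed success probability stays bounded below so that $\mathcal{A}$ costs what Theorem~\ref{thm:successprobfinalstate} predicts, and confirming that the two independently tuned error terms genuinely close at $\epsilon/2$ each under the stated parameter choices rather than merely at $O(\epsilon)$.
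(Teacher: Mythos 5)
Your proposal is correct and takes essentially the same approach as the paper's own proof: a step-by-step instantiation of Theorem~\ref{thm:omegavalue}, Theorem~\ref{thm:conditionnumbernoidling}, Theorems~\ref{thm: successprobhistorystate} and~\ref{thm:successprobfinalstate}, Lemma~\ref{lem:sufficientk}, Theorem~\ref{lem:totalerrorbound}, and Theorem~\ref{thm:querycountsQLSA}, with the identical $\epsilon/2+\epsilon/2$ error budget yielding $\epsilon_{\mbox{\tiny TD}}=\epsilon/(8f^{+,\times}_{H,T})$ and $\epsilon_L=\epsilon\,\mathrm{Pr}_{H,T}/(4+\epsilon)$, and the same amplitude-amplification and qubit-count bookkeeping. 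The verification points you flag as the main obstacle (that $\epsilon_L<\mathrm{Pr}_{H,T}$ holds and that the perturbed success probability remains $\Theta(\mathrm{Pr}_{H,T})$ via Lemma~\ref{lemma:successpostmeasurementstateperturbation}) are precisely how the paper closes the argument.
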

The proof of this is as follows.
\begin{proof}
    To obtain an explicit query count for the ODE system we make use of the QLSA query count given in Theorem~\ref{thm:querycountsQLSA}. We now provide the analysis for each of the steps in the theorem.
    
 Steps~$(1)$ and $(7)$: The error parameters set in step~$(1)$ and step~$ (7)$ follow from Theorem~\ref{lem:totalerrorbound}. Explicitly, the output $\sigma_{H,T}$ of the quantum algorithm satisfies 
  \begin{align}
  \| \sigma_{H,T} - \ketbra{x_{H,T}}{x_{H,T}} \|_1 \leq    \frac{2 \epsilon_L}{\mathrm{Pr}_{H,T} - \epsilon_L}+ 4 \epsilon_{\mbox{\tiny TD}} f^{+,\times}_{H,T}.
  \end{align}
  In particular, requiring each term on the right-hand side to be equal to $\epsilon/2$ implies that
  \begin{align}
      \epsilon_{\mbox{\tiny TD}}  = \epsilon f^{+,\times}_{H,T}/ 8, \quad \epsilon_L = \frac{\mathrm{Pr}^{\times, +}_{H,T}\epsilon}{4 + \epsilon},
  \end{align}
  and the trace-distance to the ideal state is bounded by $\epsilon$.
  
 Step~$(2)$: The sufficient Taylor truncation order so as to achieve $\epsilon_{\mbox{\tiny TD}}$ is given by Lemma~\ref{lem:sufficientk}.

 Steps~$(3)$ and $(9)$: The idling parameter $p$ is set to zero for the history state output case. For the solution state output, and when $A$ is a stable matrix we see from Theorem~\ref{thm:conditionnumbernoidling} that setting $p$ to scale as $\sqrt{T}$ does not change the $O(\sqrt{T})$ complexity of $\kappa_L$. Similarly, setting $p$ to scale linearly in $T$ does not change the complexity of $\kappa_L$ in the case when $A$ is not stable. We choose the idling parameter such that the complexity of the condition number does not change. Requiring it to be an integer-valued multiple of $k_{+,\times}+1$ means that it suffices to choose $p$ as stated in step~$(3)$, for the two cases of $A$ being stable or not. From Theorem~\ref{thm:successprobfinalstate} we have that $Pr^{+,\times}_T = \Omega (p/(T\bar{g}^2_{+,\times})$ In particular, if in the case of stable $A$ we set $p=M^{1/2}$,
\begin{equation}
    	\mathrm{Pr}^{+,\times}_T 
	 \geq \frac{1 }{ \left (1-\frac{I_0(2)-1}{M^{1/2} K} \right)  + \frac{(M+1)}{M^{1/2}} \frac{I_0(2)-1}{K} \left(\frac{1+\epsilon_{\mbox{\tiny TD}}}{1-\epsilon_{\mbox{\tiny TD}}}\right)^2 \bar{g}_{+,\times} ^2}.
\end{equation}
In the case of $A$ not being stable, we set $p=M$ to obtain
\begin{align}
    \mathrm{Pr}^{+,\times}_T \geq \frac{1 }{ \left (1-\frac{I_0(2)-1}{K(M+1)} \right )  + \frac{I_0(2)-1}{K} \left(\frac{1+\epsilon_{\mbox{\tiny TD}}}{1-\epsilon_{\mbox{\tiny TD}}}\right)^2 \bar{g}_{+,\times} ^2}.
\end{align}
Irrespective of how $\mathrm{Pr}_T$ scales with time, with amplitude amplification, we need $O(1/\sqrt{\mathrm{Pr}^{+,\times}_T })$ queries to the whole ODE linear solver protocol to obtain success probability $O(1)$. We denote the exact number by $\mathcal{A}(\mathrm{Pr}^{+,\times}_T)$, alternatively, repeating until sucess requires an expected number $1/\mathrm{Pr}^{+,\times}_T$ of rounds.

Step~$(4)$: The stated block-encoding scaling parameter $\tilde{\omega}_L $ is obtain from Theorem~\ref{thm:omegavalue}.

Step~$(5)$: The stated upper bound on the condition number of the linear system is obtained from Theorem~\ref{thm:conditionnumbernoidling}.

Step~$(6)$: The stated lower bounds on the success probability for outputting either the history state or the solution state is obtained from Theorem~\ref{thm: successprobhistorystate} and Theorem~\ref{thm:successprobfinalstate}.

Steps~$(8)$ and $(10)$: The expected query count for the quantum linear-solver component is obtained from Theorem~\ref{thm:querycountsQLSA}. The inputs to this are the block-encoding scale factor $\omega_{\tilde{L}}$ from step $(4)$, the condition number upper bound $\kappa_L$ from step $(5)$ and the target error parameter $\epsilon_L$ for the linear solver, given in step $(7)$. This gives the query count to the block-encoding of the linear system matrix, and from Theorem~\ref{thm:omegavalue} this equals the number of queries to the block-encoding $U_A$ for the ODE matrix $A$. However, the post-processing of the output from the QLSA has a finite success of probability, and so using amplitude amplification or repeating until success requires $\mathcal{A}$ rounds of the QLSA before the desired output is obtained. We then minimize the product $\mathcal{A} \mathcal{Q}_{QLSA}$ over the two choices of discretization error conditions, namely the multiplicative case $\times$ and the additive case $+$. This gives a total expected query count to $U_A$ of $\mathcal{Q}_{H,T}$ as stated in step~$(10)$. As stated in Section~\ref{sec:block-encoding}, the block-encodings $U_0$ and $U_b$ are called once each in order to realize the QLSA vector $\ket{c}$, and so from Theorem~\ref{thm:querycountsQLSA} we call them $4$ times as often as $U_A$, which implies $4\mathcal{Q}_{H,T}$ times in total.

To see that the total logical qubit count is $a+12 + \lceil \log_2 [((M+1)(k_{+,\times}+1) +p )N] \rceil$ note that the QLSA uses a total of $\tilde{a}+  7 + \lceil \log_2 \tilde{N} \rceil$ logical qubits, where $\tilde{a}$ is the number of auxiliary qubits needed for the matrix block-encoding of $\tilde{L}$ and $\tilde{N}$ is the dimension of $\tilde{L}$. From Theorem~\ref{thm:omegavalue} we have that $\tilde{a} = a+6$. The dimension of $\tilde{L}$ is obtained from the dimension of $A$ times the total number of time-steps involved, including idling. This involves $(M+1) (k_{+,\times}+1) +p )$ time-steps, and so $\tilde{N} =((M+1)(k_{+,\times}+1) +p )N $. These parameters imply we require  $a+13 + \lceil \log_2 [((M+1)(k_{+,\times}+1) +p )N] \rceil$ logical qubits in total for the ODE-solver, where the choice of $k_{+,\times}$ is determined from the earlier minimization over $+,\times$.
 \end{proof}

\end{document}